\documentclass[prx,amsmath,amssymb, notitlepage, onecolumn,
nofootinbib,
superscriptaddress,
longbibliography,11pt,tightenlines
]{revtex4-1}

\usepackage{amsmath}
\usepackage{tikz-cd}
\usepackage[intoc]{nomencl}
\makenomenclature

\usetikzlibrary{cd, arrows.meta}
\usepackage{tabularx,graphicx}
\usepackage{epstopdf}
\usepackage{makecell}
\usepackage{graphicx}
\usepackage{latexsym}
\usepackage{amssymb}
\usepackage{amsthm}
\usepackage{color, colortbl}
\usepackage{psfrag}
\usepackage{bbm}
\usepackage{bm}

\usepackage{dsfont}
\usepackage{feynmp}
\usepackage{slashed}
\usepackage[utf8]{inputenc}
\usepackage[symbols,nogroupskip,sort=none]{glossaries-extra}

\usepackage{multirow}
\usepackage{url}
\usepackage{xr}
\usepackage{xcite}
\usepackage{tikz}
\usepackage{microtype}
\usetikzlibrary{shapes.geometric, arrows}
\tikzstyle{arrow} = [thick,->,>=stealth]
\tikzstyle{startstop} = [rectangle, rounded corners, minimum width=3cm, minimum height=1cm,text centered, draw=black, fill=white!30]
\tikzstyle{process} = [rectangle, minimum width=3cm, minimum height=1cm, text centered, draw=black, fill=orange!30]
\tikzstyle{decision} = [diamond, minimum width=3cm, minimum height=1cm, text centered, draw=black, fill=green!30]

\makeatletter
\newcommand*{\addFileDependency}[1]{
  \typeout{(#1)}
  \@addtofilelist{#1}
  \IfFileExists{#1}{}{\typeout{No file #1.}}
}
\makeatother

\newcommand{\beq}{\begin{eqnarray}}
	\newcommand{\eeq}{\end{eqnarray}}

\newcommand{\la}{\langle}
\newcommand{\ra}{\rangle}

\newcommand{\tr}{{\rm tr}}

\newcommand{\bsp}{\begin{aligned}}
	\newcommand{\esp}{\end{aligned}}
\newcommand{\const}{{\rm const}}

\newcommand{\ie}{{i.e., }}
\newcommand{\eg}{{e.g., }}

\newcommand{\rH}{\mathrm{H}}
\newcommand{\R}{\mathbb{R}}
\newcommand{\Aut}{\mathrm{Aut}}

\newcommand{\dist}{\mathrm{dist}}

\newcommand{\cS}{\mathcal{S}}
\newcommand{\cK}{\mathcal{K}}
\newcommand{\im}{\mathrm{im}}

\newcommand{\oA}{\overline{\A}}
\newcommand{\ovl}{\overline}
\newcommand{\cB}{\mathscr{B}}

\newcommand{\rd}{\mathrm{d}}

\newcommand{\Inn}{\mathrm{Inn}}
\newcommand{\Out}{\mathrm{Out}}
\newcommand{\N}{\mathbb{N}}

\newcommand{\bra}[1]{\langle #1|}
\newcommand{\ket}[1]{| #1 \rangle}

\usepackage{xcolor}
\definecolor{darkblue}{rgb}{0.,0.,0.4}
\definecolor{darkred}{rgb}{0.5,0.,0.}
\definecolor{BlueViolet}{RGB}{138,43,226}
\definecolor{SkyBlue}{RGB}{30,144,255}
\definecolor{DarkGreen}{RGB}{0,100,0}

\usepackage[colorlinks=true,linkcolor=darkblue,citecolor=blue,urlcolor=darkred]{hyperref}
\usepackage[normalem]{ulem}

\usepackage{mathrsfs}

\usepackage{comment}

\newcommand{\z}{\mathbb{Z}}

\newcommand{\A}{\mathscr{A}}
\newcommand{\B}{\mathcal{B}}
\newcommand{\G}{\mathscr{G}}
\newcommand{\cH}{\mathcal{H}}
\newcommand{\cU}{\mathscr{U}}
\newcommand{\Ad}{\mathrm{Ad}}

\newcommand{\bbC}{\mathbb{C}}

\newcommand{\rHom}{\mathrm{Hom}}
\newcommand{\QCA}{\mathrm{QCA}}

\newcommand{\id}{\mathrm{id}}
\newcommand{\cM}{\mathscr{M}}

\newcommand{\SOT}{\mathrm{SOT}}
\newcommand{\WOT}{\mathrm{WOT}}

\newcommand{\sC}{\mathscr{C}}

\newcommand{\E}{\mathscr{E}}
\newcommand{\odgr}{\mathbin{\widehat\odot}}
\newcommand{\grten}{\mathbin{\widehat\otimes}}
\newcommand{\Min}{\mathrm{gr},\min}
\newcommand{\Max}{\mathrm{gr},\max}

\newcommand{\diam}{\mathrm{diam}}
\def\U{\mathrm{U}(1)}

\newtheorem{corollary}{Corollary}
\newtheorem{theorem}{Theorem}
\newtheorem{lemma}{Lemma}
\newtheorem{definition}{Definition}
\newtheorem{example}{Example}
\newtheorem{proposition}{Proposition}
\newtheorem{remark}{Remark}

\newtheorem{question}{Question}

\newtheorem*{theorem*}{Theorem}

\numberwithin{equation}{section}
\numberwithin{corollary}{section}
\numberwithin{theorem}{section}
\numberwithin{lemma}{section}
\numberwithin{definition}{section}
\numberwithin{example}{section}
\numberwithin{proposition}{section}
\numberwithin{remark}{section}
\numberwithin{conjecture}{section}
\numberwithin{question}{section}
\numberwithin{claim}{section}

\usepackage{pst-all}

\usepackage{leftidx}

\usepackage[off]{auto-pst-pdf} 
\usepackage{booktabs}
\usepackage{yfonts}
\makeatletter

\usepackage[caption=false]{subfig}
\usepackage{enumitem}

\begin{document}
\title{A local description of strong symmetries and strong-to-weak symmetry breaking in quantum many-body systems}

\author{Ruizhi Liu}
\affiliation{Department of Mathematics and Statistics, Dalhousie University, Halifax, Nova Scotia, Canada, B3H 4R2}
\affiliation{Perimeter Institute for Theoretical Physics, Waterloo, Ontario, Canada N2L 2Y5}
\author{Jinmin Yi}
\affiliation{Perimeter Institute for Theoretical Physics, Waterloo, Ontario, Canada N2L 2Y5}
\affiliation{Department of Physics and Astronomy, University of Waterloo, Waterloo, Ontario, Canada N2L 3G1}
\author{Dominic V. Else}
\affiliation{Perimeter Institute for Theoretical Physics, Waterloo, Ontario, Canada N2L 2Y5}
\begin{abstract}

In mixed states of quantum systems, symmetries come in two types: \emph{strong} and \emph{weak}. Furthermore, it has been argued that in quantum \emph{many}-body systems, strong symmetries can be ``spontaneously broken'' down to weak symmetries. An issue is that as previously formulated, such ``strong-to-weak symmetry breaking'' appears to be a fairly non-local effect. In this paper, we show how to understand and diagnose strong symmetries and strong-to-weak symmetry breaking in an explicitly local way. Our main technical tool is a rigorous definition of strong symmetry in the limit of infinite volume, which generalizes the conventional finite-volume definitions, and for which we give several equivalent formulations, including one involving the concept of ``local charge coherence''. Finally, we introduce von Neumann symmetries, which in infinite-volume systems are intermediate between strong and weak symmetries. We derive a Lieb–Schultz–Mattis type anomaly constraint for von Neumann symmetries (and therefore, in particular, strong symmetries) in quantum spin chains.
\end{abstract}
\maketitle
\tableofcontents

\nomenclature{$\Aut$}{Automorphism group}

\nomenclature{$\Out$}{Outer-automorphism group}

\nomenclature{$\Inn$}{Inner-automorphism group}

\nomenclature{$\rH^{k}(G;A)$}{$k$-th group cohomology of $G$ with coefficient $A$, Def.~\ref{def:group_cohomology}.}

\nomenclature{$\otimes$}{The spatial (a.k.a minimal) tensor product between $C^*$-algebras.}

\nomenclature{$\Lambda$}{The infinite lattice $\Lambda\subseteq\R^{d}$}

\nomenclature{$\A$}{The algebra of quasi-local operators Eq.~\eqref{eq:quasi_local_algebra}}

\nomenclature{$\cS(\A)$}{The space of states of $\A$, Def.~\ref{def:states}}

\nomenclature{$\G^{\QCA}$}{The group of quantum cellular automata, Def.~\ref{def:QCA}}

\nomenclature{$\G^{lp}$}{The group of locality preserving automorphisms, Def.~\ref{def:LPA}}

\nomenclature{$\pi_{\psi}$}{The GNS representation of a positive linear functional $\psi$, Def.~\ref{def:GNS_triple}}

\nomenclature{$\cH_{\psi}$}{The GNS Hilbert space of $\psi$, Def.~\ref{def:GNS_triple}}

\nomenclature{$|\psi\ra$}{The cyclic vector of $\pi_{\psi}$ in $\cH_{\psi}$, Def.~\ref{def:GNS_triple}.}

\nomenclature{$\B(\cH)$}{The algebra of bounded operators on $\cH$.}

\nomenclature{$\sC'$}{The commutant of an algebra $\sC$, Lemma \ref{lemma:Schur}.}

\nomenclature{$\cM_{\psi}$}{The von Neumann algebra generated by state $\psi$, Def.~\ref{def:vN_algebra}}

\nomenclature{SOT}{strong operator topology, Def.~\ref{def:SOT_WOT}}

\nomenclature{WOT}{weak operator topology, Def.~\ref{def:SOT_WOT}}

\nomenclature{$\ovl{\sC}^{\SOT},\ovl{\sC}^{\WOT}$}{The closure of $\sC$ with repsect to $\SOT$ and $\WOT$, Def.~\ref{def:SOT_WOT}}

\nomenclature{$\psi_{1}\simeq\psi_{2}$}{A pure state $\psi_{1}$ is unitarily equivalent to $\psi_{2}$, Lemma \ref{lemma:unitary_equivalence}.}

\nomenclature{$\psi_{1}\sim\psi_{2}$}{Quasi-equivalence of states, Def.~\ref{def:quasi-equivalence}.}

\nomenclature{$\rho\leqslant\psi$}{Majorization of linear functionals: $\rho(a^{*}a)\leqslant \psi(a^{*}a),\quad \,\forall a\in\A$}

\nomenclature{$\Gamma^{c}$}{The complement of $\Gamma$ in the lattice $\Lambda$.}

\nomenclature{$\oA$}{The complex conjugate of $\A$.}

\nomenclature{$\tau$}{The tracial (a.k.a) maximally mixed state, Corollary.~\ref{coro:tracial_strong}.}

\nomenclature{$\cU_{\psi},\cU_{\psi}'$}{The group of unitary elements in $\cM_{\psi},\cM_{\psi}'$.}

\nomenclature{$F(\cdot,\cdot)$}{The fidelity between two positive linear functionals, Def.~\ref{def:fidelity}.}

\nomenclature{$I(\Gamma;\Gamma^{c})$}{The mutual information between a region $\Gamma$ and its complement $\Gamma^{c}$}

\nomenclature{$S(\cdot\|\cdot)$}{Araki's relative entropy.}

\nomenclature{iff}{ if and only if}

\section{Introduction}\label{sec:intro}
A central goal of quantum many-body physics is to classify and understand the phases of quantum matter \cite{Hastings2005,Chen2010,Bachmann2012automorphic,Chen2013,Senthil_2015, Gaiotto2017,Xiong_2018,moon2019automorphicequivalencegappedphases,Ogata2019split,Ogata2019TRS,Ogata2019reflection, Kapustin2020invertible,carvalho2024classificationsymmetryprotectedstates,deroeck2025puregappedgroundstates}. Traditionally, phases of matter have been studied in thermal equilibrium, at zero or nonzero temperature. However, it has more recently been appreciated that the concept of phases of matter is also applicable to more general non-equilibrium states of quantum many-body systems \cite{Buča_2012,de_Groot_2022,Lee_2023,xue2024tensornetworkformulationsymmetry,Lessa:2024wcw,sala2024spontaneousstrongsymmetrybreaking,lu2024bilayerconstructionmixedstate,Lee_2025,Ma_2025,Lessa_2025,Xu_2025,sang2025mixedstatephaseslocalreversibility,ogata2025mixedstatetopologicalorder,ogata2026noteinvariantsmixedstatetopological}. These states could occur, for example, in open systems in which the system undergoes some noise process or interaction with its environment. ``Phase transitions'' in this context can correspond to sharp transitions in quantum information-theoretic properties of the quantum many-body system such as decodability.

When classifying phases of matter, symmetries play an essential role. For mixed (i.e.\ non-pure) quantum states (which non-equilibrium states will generally be), an important distinction has emerged between so-called \emph{weak} and \emph{strong} symmetries. Strong symmetries are generally the appropriate concept when a system interacts with a bath but does not exchange charge with it. Moreover, in the presence of strong symmetry, it has been argued that there can be phases of matter characterized by \emph{strong-to-weak spontaneous symmetry breaking (SW-SSB)} \cite{Lee_2023,Chen:2024ofr,Ma_2025, Lessa_2025,ziereis2025strongtoweaksymmetrybreakingphases, Gu:2024wgc, Feng_2025}.

Nevertheless, compared to conventional spontaneous symmetry-breaking, in some ways SW-SSB remains a mysterious and elusive concept. The diagnostics of SW-SSB that have been proposed are not easy to interpret physically, and moreover are such that implementing them experimentally could in general could require a number of measurements that scales in a bad way (e.g.\ exponentially) with the system size. This had led to arguments that SW-SSB might in fact be \emph{impossible} to diagnose in an efficient way as the system size is increased \cite{Feng_2025}.

In this paper, we show that these issues can be overcome by a ``back-to-basics'' approach to quantum many-body phases of matter. Our central contention is that phases of matter should ideally be defined in terms of the expectation values of \emph{local} operators in the thermodynamic limit. We will show that this leads to a more robust and mathematically rigorous way to distinguish and diagnose phases of matter that have previously been described in terms of ``SW-SSB'', while also providing clearer physical interpretations.

Our key technical tool will be the formulation of several equivalent definitions that generalize the concept of ``strong symmetry'' to infinite systems. Our work is based on the framework of local quantum physics\footnote{More precisely, we work with its lattice version.} \cite{haag2012local, bratteli2013operator1, bratteli2013operator2, Naaijkens_2017, Landsman:2017hpa}. In this approach, the algebra of (quasi-)local observables is taken as the primary object of study.

We will also show that by invoking our definition of strong symmetries, it is possible to prove generalizations of the Lieb-Schultz-Mattis theorem \cite{Lieb1961,Oshikawa1999, Hastings2003, Cheng2015, Po2017, Jian2017,  Cho2017,Watanabe2018LSM,Metlitski2018,Cheng2018a, Kobayashi2018, Ogata2019LSM,Else2020, Jiang2019,Yao2021twisted, Ogata_2021, Aksoy2021, Ye2021a,Ma2022a, Cheng2022, Tasaki2022topological,Kawabata2023, Aksoy2023, Seifnashri2023, Zhou2023, kapustin2024anomalous,Garre_Rubio2024anomalous, Pace2024,Liu2024LRLSM,liu2025twistedlocalitypreservingautomorphismsanomaly,zou2026liebschultzmattisanomaliesanomalymatching} (in its modern formulation as a constraint from anomalous symmetries -- see also Refs.~\cite{Else_2014,rubio2024classifyingsymmetricsymmetrybrokenspin,kapustin2024anomalous,Kapustin2025higher,kawagoe2025anomalydiagnosissymmetryrestriction, Tu2025}) that are applicable to mixed states with strong symmetries. Finally, we will show that in infinite systems, there is a kind of symmetry which is intermediate between strong and weak symmetries, which we refer to as ``von Neumann symmetry''.

\subsection{Preview: local operators, strong symmetries and ``strong-to-weak symmetry breaking''}

The key 
physical assumption that we want to make is that it should always be possible to distinguish different phases of matter via the expectation values of \emph{local} operators\footnote{By local operator, we mean more precisely that the size of the support of the operator does not need to scale with the system size.}. Indeed, in pure state phases of matter this appears to always be the case. For example, topologically ordered states on an infinite plane (or disk) can be studied in terms of local operators; see, e.g., Ref.~\cite{Naaijkens_2017,Shi_2020,Ogata_2022_tensorcat}. 

Focusing on local operators has a practical advantage in that it might be much easier in experiment (or even numerics) to measure local operators than non-local operators. However, it also has a significant theoretical advantage, as follows. The concept of a ``phase of matter'' is strictly speaking only defined in the thermodynamic limit. In finite systems, phase transitions always become smooth crossovers. If one demands that all systems that we consider are strictly finite, then it is necessary to carefully consider various finite-size effects in order to define phase of matter. Therefore, it is conceptually more elegant (and certainly much easier to make mathematically rigorous) to define phase of matter first in a strictly infinite system, and address finite-size effects afterwards. This is indeed the approach we will take in this paper.

Later in Sec.~\ref{sec:alg_states}, we will review some basic notions in the standard mathematical formalism of infinite quantum many-body systems \cite{bratteli2013operator1,bratteli2013operator2,Naaijkens_2017}. In this formalism, states are generally \emph{defined} through the expectation values they assign to local operators (Def.~\ref{def:states}). Therefore, if we wish to formulate a notion of phase of matter that applies naturally to infinite systems, the definition must refer only to local operators.

As we will see shortly, the existing formulations of strong symmetry and strong-to-weak symmetry breaking (SWSSB)  cannot be expressed purely in terms of local operators and therefore cannot be directly applied to infinite systems.  In this paper we aim to close this gap. Doing so will lead to a clearer conceptual foundation for strong symmetries and strong-to-weak symmetry breaking that does not rely on subtle and hard-to-measure global properties.

One of the lessons that will emerge is that formulating the problem purely in terms of local operators requires a reorganization of how we think about strong symmetries and ``strong-to-weak symmetry breaking'' . To illustrate this point, consider a finite (but large) spin-$1/2$ chain with a $\mathbb{Z}_2$ symmetry generated by the unitary operator
\begin{equation}
    X = \prod_i \sigma_i^x .
\end{equation}
Now consider the following two states\footnote{In this discussion, we suppress the overall normalization of density matrices when it is inessential.
}:
\begin{equation}
\label{eq:example_states}
    \rho_1 = \mathbb{I}, 
    \qquad 
    \rho_2 = \frac{1}{2}(\mathbb{I} + X).
\end{equation}
The first state $\rho_1$ is the maximally mixed state. It has the weak symmetry $X$ (since $X\rho_1 X^\dagger=\rho_1$) but no strong symmetry. The second state $\rho_2$, on the other hand, has strong symmetry $X$ in the usual sense since $\rho_2 X = X \rho_2 = \rho_2$.

From the perspective of local operators, however, $\rho_1$ and $\rho_2$ are completely indistinguishable. Indeed, they produce the same state upon tracing out even a single spin. In this paper we therefore take the viewpoint that after we take the thermodynamic limit, \textit{both} $\rho_1$ and $\rho_2$ should be regarded as \textit{lacking} strong symmetry. Formulating a definition of strong symmetry in the thermodynamic limit in terms of local operators will be one of the main goals of this paper, see Def.~\ref{def:strong_sym_main}.

By contrast, an example of a state that \emph{does} possess strong symmetry in the thermodynamic limit, according to our definition, is the pure product state 
\begin{equation}
    \ket{\Psi} = \ket{+}^{\otimes N},
\end{equation}
where $\ket{\pm }$ is the $\pm 1$ eigenstate of $\sigma^x$. 
For another example that is not pure, we can group the spins into pairs and then form the product state
\begin{equation}
    \rho_3 = \left(p \ket{\mbox{++}} \bra{\mbox{++}} + 
(1-p) \ket{\mbox{-- --}} \bra{\mbox{-- --}}\right)^{\otimes N/2}
\end{equation}

What is the physical distinction between $\rho_1,\rho_2$ on the one hand, and $\ket{\Psi}$ and $\rho_3$ on the other hand? Intuitively, the difference is clear: $\rho_1$ and $\rho_2$ exhibit incoherent local fluctuations of $\mathbb{Z}_2$ charge, whereas $\ket{\Psi}$ and $\rho_3$ do not. The definition of strong symmetry that we develop in this paper will make this intuition precise.

This example also reveals that, from the local perspective, the terminology of SWSSB must be reconsidered. Indeed, the state $\rho_2$ has been presented as the prototypical example of a state exhibiting SWSSB \cite{Lessa_2025}, yet from the local viewpoint it possesses only weak symmetry in the thermodynamic limit. Thus, strictly speaking, it should \textbf{not} be regarded as displaying SWSSB. This is primarily a matter of terminology. Phases of matter that, in the conventional terminology, are distinguished by the presence or absence of SWSSB, will from our perspective instead be distinguished by the presence or absence of strong symmetry in the infinite system limit.

This does not mean that the notion of ``SWSSB'' is irrelevant for us. Rather, in our view SWSSB is more naturally interpreted as a property of \emph{dynamical processes} than of states. For example, ordinary spontaneous symmetry breaking refers to the situation in which a Hamiltonian has a symmetry, while its ground states or thermal equilibrium states do not. By analogy, SWSSB can be understood as the situation where a quantum channel or Lindbladian possesses a strong symmetry, but the steady state exhibits only weak symmetry. This appears to be the notion of SWSSB implicit  in Ref.~\cite{Huang_2407}, where it was shown that in the presence of SWSSB of a $\mathrm{U}(1)$ symmetry, the hydrodynamic mode associated with the $\mathrm{U}(1)$ charge can be interpreted as the ``Goldstone mode'' of the spontaneous symmetry breaking.

Finally, let us highlight one key feature of our definition of strong symmetry. For infinite-size systems, we will show in Sec.~\ref{sec:bath_evo} that a strongly symmetric channel can convert a state with strong symmetry into one in which the symmetry becomes weak, but the reverse process can never occur (Prop.~\ref{prop:weak_to_weak}). In other words, once local charge coherence is destroyed, it cannot be recovered. This reflects a form of macroscopic irreversibility—an emergent ``arrow of time'' that appears only in the thermodynamic limit.

\section{Basic setup and summary of main results}\label{sec:setup}

\subsection{Algebra and states}\label{sec:alg_states}
In this section, we introduce some basic notions in local quantum physics, which are necessary to formulate our main results. Details can be found in Ref.~\cite{Blackadar:2006OA,bratteli2013operator1,bratteli2013operator2,Naaijkens_2017,Liu2024LRLSM}. For convenience, we summarize our notations in a list placed after the references. Readers familiar with operator algebras may safely skip this section and consult it as needed. For simplicity, in the main text we will only consider ``bosonic'' spin systems, as opposed to fermionic lattice systems. However,  our results can be easily extended to fermonic systems once the appropriate formalism has been introduced, see Appendix \ref{sec:fermionic_strong_sym}.

In the present work, we work on an infinite lattice $\Lambda \subseteq \mathbb{R}^d$, where on each lattice site $i\in\Lambda$ is assigned with a Hilbert space $\cH_{i}$ with $\dim_{\bbC}(\cH_{i})={n_i}$ for some $2\leqslant n_i\in\mathbb{N}$. We will work with general spatial dimension $d\geqslant 1$ except when otherwise stated.

It is tempting to declare that this system is described by a Hilbert space $\cH\stackrel{?}{=}\bigotimes_{i\in\Lambda}\cH_{i}$. However, as noted in \cite{vonNeumann1939}, such infinite tensor product of Hilbert spaces is ill-defined. Concretely, the inner product between $|\xi\ra:=\bigotimes_{i\in\Lambda}|\xi_{i}\ra$ and $|\eta\ra:=\bigotimes_{i\in\Lambda}|\eta_{i}\ra$,
\begin{equation}
    \la\xi|\eta\ra\stackrel{?}{=}\prod_{i\in\Lambda}\la\xi_{i}|\eta_{i}\ra
\end{equation}
typically diverges and there is no obvious way to regularize it. Thus, the Hilbert-space formalism is not an appropriate mathematical description of infinite systems. Nevertheless, the Heisenberg picture of quantum mechanics, which focuses on operators rather than states, is still applicable even in the presence of infinitely many degrees of freedom, such as spin systems on infinite lattices.

Specifically, for each site $i\in\Lambda$, we associate it with an algebra $\A_{i}:=\mathrm{M}_{n_i}(\bbC)$ (the $n_{i}\times n_{i}$-matrix algebra over complex numbers). For each finite subset $\Gamma\subseteq\Lambda$, local operators supported on $\Gamma$ are defined as
\beq
\A_{\Gamma}:=\bigotimes_{i\in\Gamma}\A_{i}
\eeq
We then define the algebra of local operators $\A^\ell$ to contain all operators supported on any finite subset $\Gamma$. Formally, we can write
\beq
\A^{\ell}:=(\bigcup_{\Gamma\subseteq\Lambda}\A_{\Gamma})/\cong,
\eeq
where we have identified $a\cong a\otimes 1_{\Gamma'\setminus\Gamma}$ if $\Gamma\subseteq\Gamma'$.

The matrix adjoint map $a \mapsto a^\dagger$ on $\A_\Gamma$ defines an involutive anti-linear anti-automorphism on $\A^\ell$, i.e. an anti-linear map $* : \A^\ell \to \A^\ell, a \mapsto a^*$, such that
\beq\label{eq:*-op}
\begin{split}
    (ab)^*&= b^{*}a^{*},\\
    (a^*)^*&=a.
\end{split}
\eeq
Any algebra with $*$-operation satisfying Eq.~\eqref{eq:*-op} is called a $*$-algebra.

A \emph{state} $\psi$ associated to the $*$-algebra $\A^{\ell}$ is defined as follows:
\begin{definition}\label{def:states}
    A linear functional $\psi:\A^{\ell}\to \bbC$ is called a state if it satisfies:
    \begin{enumerate}
        \item Normalization: $\psi(1)=1$.
        \item Positivity: $\psi(a^{*}a)\geqslant0$ for any $a\in\A^{\ell}$.
    \end{enumerate}
    The space of all states on $\A^{\ell}$ is denoted by $\cS(\A^{\ell})$.
\end{definition}

It is readily checked that $\cS(\A^{\ell})$ is a convex set, \ie a convex combination of states is again a state. This leads to the purity of states
\begin{definition}\label{def:pure_states}
    The extremal points of $\cS(\A^{\ell})$, denoted by $\partial\cS(\A^{\ell})$, are called pure states. A state is mixed if it is not pure.
\end{definition}

We will be interested in purification, or more generally, the extension of a mixed state $\psi$ on $\A^{\ell}$. 
\begin{definition}\label{def:extension}
    Let $\cB$ be the algebra of local operators for another spin system. Then, a state $\psi'$ on $\cB\otimes \A^{\ell}$ is an extension of $\psi$ if $\psi'|_{1_{\cB}\otimes\A^{\ell}}=\psi$. In addition, if $\psi'$ is a pure state on $\cB\otimes \A^{\ell}$, it is called a purification of $\psi$.
\end{definition}
We note that given a state $\psi$, its purification may \textbf{not} exist at all once the locality is imposed. Indeed, Lemma~\ref{lemma:clustering} shows that any pure state in the sense of Def.~\ref{def:pure_states} must satisfy the clustering property (i.e., decay of connected correlation functions with the spatial separation; see Lemma~\ref{lemma:clustering} for the precise definition). If a state fails to be clustering, there is no way to purify it while preserving locality.

\subsection{The symmetry actions}\label{sec:sym_action}
Let us now discuss how to formulate symmetry actions in the local language. Specifically, a physically relevant symmetry should map local operators to local operators. Thus, a symmetry operation will correspond to an \emph{automorphism} $\alpha \in \mathrm{Aut}(\A^\ell)$. This means that $\alpha$ is a linear map $\alpha : \A^\ell \to \A^\ell$ such that $\alpha(ab) = \alpha(a) \alpha(b)$ and $\alpha(a^*) = \alpha(a)^*$, and $\alpha$ has an inverse $\alpha^{-1}$ satisfying the same conditions. For example, one could consider an ``on-site'' symmetry where one chooses a unitary action $u_i$ on each site and then conjugates operators supported on $\Gamma \subseteq \Lambda$ by $\bigotimes_{i \in \Gamma} u_i$. Non-on site symmetries such as translation symmetry also correspond to automorphisms of $\A^\ell$.

For some of our results (Theorem \ref{thm:LSM_strong_main}), it will be helpful to impose a stricter notion of locality 
\cite{Gross_2012, Farrelly_2020}:
\begin{definition}\label{def:QCA}
    Let $\alpha\in\Aut(\A^{\ell})$, one says $\alpha$ is a QCA if there exists $r_{\alpha}>0$  such that for any local operator $x\in \A_{X}$, we have $\alpha(x)\in\A_{B(X,r_{\alpha})}$, where $B(X,r_{\alpha}):=\{p\in \Lambda:\dist(p,X)\leqslant r_{\alpha}\}$. The constant $r_{\alpha}$ is called the radius of $\alpha$.
\end{definition}

Given a symmetry group $G$, by slightly abusing the notations{\footnote{Previously the notation $\alpha$ is used to represent an operation acting on operators, but here we use it to represent a map from the symmetry group $G$ to all possible QCA operations $\G^{\QCA}$.}}, the symmetry action can be represented by a group homomorphism $\alpha: G\to \G^{\QCA}$ (or more generally $\alpha : G \to \mathrm{Aut}(\A^\ell)$).
The image of $g\in G$ under $\alpha$ is written as $\alpha_{g}$.
This symmetry may contain internal and/or translation symmetry, and the internal symmetry, which acts as a finite-depth quantum circuits. This general type of symmetry actions covers many physically relevant cases. 

Let us mention that for 1d quantum spin chains, given $\alpha:G\to\G^{\QCA}$, one can define its \textit{anomaly index} $\omega\in\rH^{3}(G;\U)$ \cite{Else_2014,kapustin2024anomalous}, which we review in Appendix \ref{sec:anomaly_index}. In fact, this index can be defined in a similar manner for more general symmetry actions with tails). Anomaly indices in higher dimensions have also been studied in \cite{kapustin2025highersymmetriesanomaliesquantum,Kawagoe2025} but we will not need such generalizations in the present paper.

\subsection{Taking thermodynamic limit}
\label{subsec:taking_thermo}

In this paper we work directly in the infinite-volume setting. One might question the physical relevance of this, since any realistic system will have finite  size. The point is that, in order to study phases of matter, it is at the very least necessary to consider the limiting behavior as the system size approaches infinity. The infinite-system framework can be interpreted simply as a convenient way to package this limit.

Specifically, suppose that for each $m$ we have a finite-size system that occupies a finite subset $\Gamma_m \subseteq \Lambda$ of the infinite-system lattice $\Lambda \subseteq \mathbb{R}^d$. We demand the following properties of the sequence $\{\Gamma_m \}_{m \in \mathbb{N}^*}$:
\begin{enumerate}
    \item it is increasing: $\Gamma_{m}\subseteq\Gamma_{m+1}$, $\forall\,m\in\N^*$.
    \item it is exhausting: $\bigcup_{m=1}^{\infty}\Gamma_{m}=\Lambda$.
\end{enumerate}
Let $\A_{\Gamma_{m}}$ be the algebra of operators on $\Gamma_{m}$, consider and consider a sequence of states $\{\psi_{m}\}_{m\in\N^*}$, where each $\psi_{m}$ is a state of $\A_{\Gamma_{m}}$. We say that this sequence of states converges to a limit state $\psi$ as $m\to\infty$ if \footnote{Mathematically, we work with weak-* topology of states.}
\beq
\lim_{m\to\infty}\psi_{m}(a)=\psi(a),\quad\forall\,a\in\A^{\ell}
\eeq
Therefore, when we talk about a state in the infinite system framework, the reader is always free, if they choose, to interpret it as a limit of finite-size system states.

Similarly, if we have a sequence of automorphisms $\{\alpha^{(m)}\}_{m\in\N^*}$ where each $\alpha^{(m)}\in\Aut(\A_{\Gamma_{m}})$, then we say they converge to a limit automorphism $\alpha\in\Aut(\A^{\ell})$ if
\beq\label{eq:strong_limit}
\lim_{m\to\infty}\alpha^{(m)}(a)=\alpha(a),\quad\forall\,a\in\A^{\ell}.
\eeq
Mathematically, this amounts to saying that we work with the strong topology of automorphisms, and \eqref{eq:strong_limit} describes ``strong convergence'' of the sequence $\{\alpha^{(m)}\}_{m\in\mathbb{N}}$ to $\alpha$. It will be easy to check that
\begin{lemma}[Lemma 5.6 of \cite{Bachmann2012automorphic}]\label{lemma:strong_weak-*}
    Let $\{\alpha^{(m)}\}_{m\in\mathbb{N}}$ be a strongly convergent sequence of automorphisms and $\{\psi_{m}\}_{m\in\mathbb{N}}$ be a sequence of states. Then $\lim_{m\to\infty}\psi_{m}=\psi$ if and only if $\lim_{m\to\infty}\psi_{m}\circ\alpha^{(m)}=\psi\circ\alpha$.
\end{lemma}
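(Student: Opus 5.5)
The plan is a direct $\varepsilon/2$ argument. It rests on two standard facts. First, every state is contractive: $|\varphi(x)|\leqslant\|x\|$, where $\|\cdot\|$ is the $C^*$-norm on $\A^\ell$. This holds because on each finite matrix algebra $\A_\Gamma$ a positive normalized functional has norm $\varphi(1)=1$. Second, every $*$-automorphism of a finite matrix algebra $\A_{\Gamma_m}$ is isometric.

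Throughout, an expression such as $\psi_m(a)$ or $\alpha^{(m)}(a)$ for $a\in\A^\ell$ is understood for $m$ large enough that $a\in\A_{\Gamma_m}$. Such an $m$ always exists because $\{\Gamma_m\}$ is increasing and exhausting and $a$ has finite support. Since $\alpha\in\Aut(\A^\ell)$, the image $\alpha(a)$ is again local, so it too lies in $\A_{\Gamma_m}$ for all large $m$.

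For the forward direction, assume $\psi_m\to\psi$ and fix $a\in\A^\ell$. I would split
\begin{equation*}
|\psi_m(\alpha^{(m)}(a))-\psi(\alpha(a))|\leqslant |\psi_m(\alpha^{(m)}(a)-\alpha(a))|+|\psi_m(\alpha(a))-\psi(\alpha(a))|.
\end{equation*}
By contractivity of $\psi_m$, the first term is at most $\|\alpha^{(m)}(a)-\alpha(a)\|$, which tends to $0$ by the strong convergence \eqref{eq:strong_limit}. The second term tends to $0$ because $\alpha(a)$ is a fixed local operator and $\psi_m\to\psi$. Hence $\psi_m\circ\alpha^{(m)}\to\psi\circ\alpha$.

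For the converse, I would reduce to the forward direction applied to the states $\varphi_m:=\psi_m\circ\alpha^{(m)}$, the limit $\varphi:=\psi\circ\alpha$, and the inverse automorphisms $(\alpha^{(m)})^{-1}$. Since $\psi_m=\varphi_m\circ(\alpha^{(m)})^{-1}$ and $\psi=\varphi\circ\alpha^{-1}$, it suffices to show that $(\alpha^{(m)})^{-1}\to\alpha^{-1}$ strongly. This is the only step needing care, and I expect it to be the main obstacle, since strong convergence is not a priori preserved by inversion. In this setting isometry handles it. Given $b\in\A^\ell$, put $a=\alpha^{-1}(b)\in\A^\ell$. Then
\begin{equation*}
\|(\alpha^{(m)})^{-1}(b)-\alpha^{-1}(b)\|=\|(\alpha^{(m)})^{-1}\big(b-\alpha^{(m)}(a)\big)\|=\|\alpha(a)-\alpha^{(m)}(a)\|\to0.
\end{equation*}
The first equality uses $\alpha^{-1}(b)=a=(\alpha^{(m)})^{-1}(\alpha^{(m)}(a))$. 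The second uses that $(\alpha^{(m)})^{-1}$ is isometric and that $b=\alpha(a)$. With this in hand, the forward argument applied verbatim to $\varphi_m\to\varphi$ gives $\psi_m\to\psi$.
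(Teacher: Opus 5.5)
Your proof is correct. It is also the standard argument behind the cited result of Bachmann et al.; the paper gives no proof of its own beyond that citation. The forward direction is the $\varepsilon/2$ estimate using $|\psi_m(x)|\leqslant\|x\|$. For the converse, you correctly isolate and settle the one nontrivial point, namely that $(\alpha^{(m)})^{-1}\to\alpha^{-1}$ strongly, via isometry of the $\alpha^{(m)}$ and surjectivity of $\alpha$ on $\A^\ell$.

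Both directions use $\alpha(\A^\ell)=\A^\ell$, which the paper's convention grants. If $\alpha$ were only an automorphism of the quasi-local algebra $\A$ (as for limits of Lieb--Robinson-bounded dynamics), you would need an extra density step. That step approximates $\alpha(a)$ or $\alpha^{-1}(b)$ by local operators and uses the uniform bounds $\|\psi_m\|\leqslant 1$ and $\|\alpha^{(m)}\|\leqslant 1$.
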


\subsection{Summary of main results}\label{sec:sum_main_results}
In this section, we summarize our main results in terms of the framework described above.
Let us begin with defining strong symmetries in the infinite-volume systems.
 Recall that, in finite systems, a strong symmetry of a state described by a density matrix $\rho$ is defined by the condition $U\rho\propto\rho$. In infinite systems, we instead make the following definition:
\begin{definition}\label{def:strong_sym_main}
For example, in a
    Let $\psi\in\cS(\A^{\ell})$ and $\alpha\in\Aut(\A^{\ell})$. We say that $\psi$ is (weakly) symmetric under $\alpha$ if $\psi\circ\alpha=\psi$. In addition, we say that $\psi$ is \textbf{strongly symmetric} if for every spin system $\cB$, and every extension $\psi'$ of $\psi$ on $\cB\otimes\A^{\ell}$, we have $\psi'\circ(1_{\cB}\otimes \alpha)=\psi'$.
\end{definition}
Recall the definition of ``extension'' from Section \ref{sec:alg_states}.
It is straightforward to show that in \emph{finite} systems, this agrees with the usual definition. For example, the canonical purification of a mixed state $\rho$ is symmetric under $\mathbb{I} \otimes U$ if and only if $\rho$ is strongly symmetric under $U$. Moreover, one can show that for \emph{pure} states, strong and weak symmetry are equivalent; specifically, this follows from the fact that for a pure state $\psi$ on $\A$, any extension $\psi'$ on $\cB \otimes \A$ is simply a tensor product between $\cB$ and $\A$. See Appendix \ref{appendix:extension_of_pure_state} for the proof of this fact.

Next, we discuss how this definition is related to notions of ``charge coherence''.
  Let $\alpha \colon G \to \Aut(\A^{\ell})$ be a group homomorphism of a compact group\footnote{This includes finite groups as special cases.} $G$ into automorphisms of the algebra $\A^{\ell}$. Note that we do \textbf{not} have to assume $G$ is represented by QCA's here.

An operator $0\not=O \in \A^{\ell}$ is called a \textbf{charged local operator} if it satisfies
\beq\label{eq:ord_para}
\int_G \alpha_g(O)\, \rd g = 0
\eeq
where $\rd g$ denotes the normalized Haar measure on $G$. An equivalent way to formulate this is that if one decomposes $O$ into a sum over irreps of $G$, there is no weight on the trivial irrep. For notational simplicity, we will henceforth refer to such operator $O$ simply as a \emph{charged operator}.

Let us recall that in quantum mechanics, the fidelity between two density operators $\rho,\sigma$ is defined as
\[
F(\rho,\sigma):=\!\left(\tr\sqrt{\sqrt{\rho}\sigma\sqrt{\rho}}\right)^{2},
\]
which generalizes the (squared) inner product to mixed states. As shown in \cite{Alberti1983}, the notion of fidelity can be extended to positive linear functionals in infinite-size systems (see Def.~\ref{def:fidelity}). One can concretely compute this fidelity between two states $\psi,\psi'$ on an infinite  system according to
\begin{equation}
    F(\psi,\psi') = \lim_{n \to \infty} F(\rho_{A_n}, \rho_{A_n}'),
\end{equation}
where $\rho_{A_n}$ and $\rho_{A_n}'$ are the density matrices describing the reduced state of $\psi$ and $\psi'$ respectively on the finite subsystem $A_n$, and $\{ A_n\}$ is a sequence of subsystems of increasing size that eventually grow to encompass the entire system (i.e. they satisfy the ``increasing'' and ``exhausting'' conditions from Section \ref{subsec:taking_thermo}).
We then propose the following criterion for diagnosing the strong-symmetry breaking based on fidelity.
\begin{definition}\label{def:charge_coherence}
    A state $\psi$ is said to be \textbf{(locally) charge-coherent} if $F(\psi, \psi_{O})=0$ for any charged operator $O$, where $\psi_{O}$ is defined as $\psi_{O}(a):=\psi(O^*\,a\,O)$.
\end{definition}
We note that the parenthetical ``(locally)'' is arguably redundant, since in this paper we only ever consider (quasi-)local operators; therefore, we will omit it in the rest of this paper. However, we included it here in order to emphasize that we are not referring to \emph{global} charge coherence, which is not even well-defined in infinite systems anyway.
In words, the charge-coherence condition is saying that if we act on the state with an operator that locally creates or destroys charge, then the resulting state has \emph{exactly zero} overlap with the original state. We note that this ``fidelity correlator'' was proposed as a diagnostic of SW-SSB in Ref.~\cite{divi2026localstrongtoweakspontaneoussymmetry}.

It turns out that this charge-coherence condition is equivalent to strong symmetries.

\begin{theorem*}[Theorem \ref{thm:strong_sym=cha_coh}]\label{thm:strong_sym=charge_coherence_main}
    A state $\psi$ is strongly symmetric under a compact symmetry group $G$ if and only if it is charge-coherent.
\end{theorem*}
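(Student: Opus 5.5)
We will prove the equivalence by first translating both conditions into statements about the GNS representation $(\pi_\psi,\cH_\psi,|\psi\ra)$ and the von Neumann algebra $\cM_\psi=\pi_\psi(\A)''$. The guiding idea is that extensions of $\psi$ to $\cB\otimes\A^{\ell}$ are, up to the usual Radon--Nikodym machinery, parametrized by the commutant $\cM_\psi'$: for any positive $\rho\leqslant\lambda\psi$ there is a positive $T\in\cM_\psi'$ with $\rho(a)=\la\psi|T\pi_\psi(a)|\psi\ra$. Conversely, any such $T$ can be realized inside some extension, after embedding a finite-dimensional (or, by a limiting argument, quasi-local) system $\cB$. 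Using this, we first reformulate strong symmetry as follows: $\psi$ is strongly symmetric iff $\la\psi|T\pi_\psi(\alpha_g(a))|\psi\ra=\la\psi|T\pi_\psi(a)|\psi\ra$ for all $T\in\cM_\psi'$, all $a$ and all $g$. Equivalently, the implementing unitary $U_g$ of the weak symmetry (which exists since $\psi\circ\alpha_g=\psi$, via $U_g\pi_\psi(a)|\psi\ra=\pi_\psi(\alpha_g(a))|\psi\ra$) acts trivially on the cyclic subspace $\ovl{\cM_\psi'|\psi\ra}$. For the direction ``extension $\Rightarrow$ $T$'' we take an extension $\psi'$, restrict to $b\otimes\A$ with $0\leqslant b\leqslant1$, and note that this functional is dominated by $\psi$. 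For ``$T\Rightarrow$ extension'' we build $\psi'$ on $\mathrm{M}_2(\bbC)\otimes\A^{\ell}$ using the vector $|\psi\ra$ and $0\leqslant T\leqslant1$ in a block construction.

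Next we relate fidelity to the same structure. By Alberti's formula (Def.~\ref{def:fidelity}), $F(\psi,\psi_O)=\sup|\la\psi|\,V\,\pi_\psi(O)|\psi\ra|^2$ over unitaries or contractions $V$ in $\cM_\psi'$, both states living in the GNS space of $\psi$ since $\psi_O$ is a vector state there. Hence charge coherence says exactly that $\la\psi|V\pi_\psi(O)|\psi\ra=0$ for every $V\in\cM_\psi'$ and every charged $O$.

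Both directions now follow by decomposing into isotypic components. Given strong symmetry and a charged $O$, we use the $G$-invariance of the functional $a\mapsto\la\psi|V\pi_\psi(a)|\psi\ra$ (obtained by polarizing the positive $T$ case) and average over $G$. This gives $\la\psi|V\pi_\psi(O)|\psi\ra=\la\psi|V\pi_\psi(\int\alpha_g(O)\,\rd g)|\psi\ra=0$, so $F(\psi,\psi_O)=0$. Conversely, assume charge coherence. Any $a$ splits as $a_0+(a-a_0)$ with $a_0=\int\alpha_g(a)\,\rd g$. Here $a_0$ is $\alpha$-invariant, and $a-a_0$ is charged (or zero), and likewise $\alpha_h(a)-a_0$ is charged. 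Therefore $\la\psi|V\pi_\psi(\alpha_h(a))|\psi\ra=\la\psi|V\pi_\psi(a_0)|\psi\ra=\la\psi|V\pi_\psi(a)|\psi\ra$ for all $V\in\cM_\psi'$, which is the reformulated strong symmetry. The conversion from contractions to positive $T$ is just linearity.

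The main obstacle is the first step: establishing rigorously that every element of the commutant is realized by some extension on a genuine spin system $\cB$ (with locality), and conversely. The subtlety is that $\cB$ must be a local spin system rather than an arbitrary $C^*$-algebra. We would handle this by using a single qubit $\cB=\mathrm{M}_2(\bbC)$ with the $2\times2$ block state $\begin{pmatrix}\la\psi|\cdot|\psi\ra & \la\psi|\cdot\,\sqrt{T}|\psi\ra\\ \cdots&\cdots\end{pmatrix}$, or simply $\psi'(e_{ij}\otimes a)=\la\psi|X_{ij}\pi_\psi(a)|\psi\ra$ with $(X_{ij})$ a positive matrix in $\mathrm{M}_2(\cM_\psi')$ having $X_{11}+X_{22}=1$. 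Positivity then follows from the commutation of $X_{ij}$ with $\pi_\psi(\A)$. Extending from $\A^{\ell}$ to its norm completion, where Alberti's fidelity formula is stated, is routine.
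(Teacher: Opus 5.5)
Your argument is correct. Its core matches the paper's proof of Theorem~\ref{thm:strong_sym=cha_coh}. The Uhlmann-type formula (Lemma~\ref{lemma:Uhlmann}, not the definition itself) reduces $F(\psi,\psi_O)=0$ to $\la\psi|R\,\pi_\psi(O)|\psi\ra=0$ for all $R\in\cM_\psi'$. Haar averaging then shows this is equivalent to $G$-invariance of every functional $a\mapsto\la\psi|R\,\pi_\psi(a)|\psi\ra$, which the paper phrases as $G$-invariance of the sesquilinear form $B_R$.

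The routes differ at the other end. The paper works with strong symmetry in the form $U_g\in\cM_\psi$ (Def.~\ref{def:strong_sym}) and closes with $U_g^{\dagger}RU_g=R$ for all $R\in\cM_\psi'$ plus the bicommutant theorem. It reaches the extension definition (Def.~\ref{def:strong_sym_main}) only through the separate Theorem~\ref{thm:equi_def}, which uses the canonically doubled state, its exactness, and the quasi-equivalence of extensions (Lemma~\ref{lemma:purification}).

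You instead tie the extension definition directly to the twisted functionals. Restricting an extension to $b\otimes\A$ with $0\leqslant b\leqslant 1$ and invoking Lemma~\ref{lemma:majorized} shows that invariance of the functionals forces every extension to be invariant. The explicit extension $\psi'(e_{ij}\otimes a)=\la\psi|X_{ij}\pi_\psi(a)|\psi\ra$ with $X=\mathrm{diag}(T,1-T)$ gives the converse; it is positive because $X\geqslant 0$ and $X_{ij}\in\pi_\psi(\A)'$.

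What your route buys:
\begin{itemize}
\item It is a self-contained, elementary proof of the main-text statement that never needs the implementing unitary or the bicommutant theorem.
\item Nothing hinges on exactness of the doubled state, which Lemma~\ref{lemma:purify} supplies only for factor states.
\item It shows that a single-qubit ancilla already detects any failure of strong symmetry.
\item Taking $R=1$, it shows that charge coherence implies weak symmetry.
\end{itemize}
The paper's route, in exchange, locates the symmetry concretely as $U_g\in\cM_\psi$, which is the form reused for von Neumann symmetries and the LSM arguments.

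One small fix: your first block-matrix guess, with $\la\psi|\cdot|\psi\ra$ in the $(1,1)$ slot, does not give a normalized extension. Use the $X=\mathrm{diag}(T,1-T)$ construction instead.
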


It is useful to know whether the property of having a strong symmetry is invariant under dynamical processes. To this end,
we introduce a class of unital completely positive (UCP) maps, called bath evolutions, which describe general dynamical processes in quantum many-body systems and generalize finite-depth quantum channels. We then analyze their symmetry properties, showing in particular that a strongly symmetric bath evolution can never map a state which is not strongly symmetric symmetric state to a strongly symmetric state; see Proposition~\ref{prop:weak_to_weak}.

Finally, we will also discuss anomaly constraints for strong symmetries, as well as their mixed strong--weak variants. The generalization to strong-weak mixed anomalies is necessary because, as we show in Proposition~\ref{prop:translation}, if a state is clustering\footnote{That is, it exhibits no long-range order; see Lemma~\ref{lemma:clustering} for a precise definition.} and strongly symmetric under translation, then it must be pure. Therefore, for mixed states, imposing strong translation symmetry is often too restrictive.

Let us now spell out our version of anomaly constraints.
 Recall that a state $\psi$ on 1d spin chain is said to satisfy the area law of mutual information if $I(\Gamma;\Gamma^{c}):=S(\psi\|\psi_{\Gamma}\otimes \psi_{\Gamma^{c}})<\const$ for any finite interval $\Gamma\subseteq\Lambda=\z$, where $S(\cdot\|\cdot)$ stands for the relative entropy \cite{Araki1976entropyI,Araki1977entropyII,ohya2004quantum} and $\psi_{\Gamma}$ means the restriction of $\psi$ on $\Gamma$ (a.k.a the reduced density matrix).
\begin{theorem}[Informal version of Theorem \ref{thm:LSM_strong}]\label{thm:LSM_strong_main}
    Let $\psi$ be a state on a quantum spin chain and $\alpha:G\to\G^{\QCA}$ be a symmetry action, then the following conditions are incompatible:
    \begin{enumerate}
        \item $\psi$ is clustering (a.k.a cluster decomposition).
        \item $\psi$ has $\alpha_{g}$ as strong symmetry for all $g\in G$.
        \item $\psi$ satisfies the area law of mutual information.
        \item $\alpha$ has nontrivial anomaly index $1\not=\omega\in\rH^{3}(G;\U)$.
    \end{enumerate}
\end{theorem}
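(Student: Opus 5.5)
Our plan is to reduce the statement to the known pure-state Lieb--Schultz--Mattis theorem for spin chains, which says that a pure, split state cannot be invariant under a QCA symmetry action $\alpha$ with nontrivial anomaly index. To get there, we will construct a \emph{local} purification of $\psi$ that is pure, split, and symmetric under a symmetry action carrying the same anomaly index $\omega$. Throughout we argue by contradiction: assume conditions 1--4 all hold.

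The first step is to produce a pure extension. Using the GNS triple $(\pi_\psi,\cH_\psi,|\psi\ra)$, we will build a purification on $\oA\otimes\A^\ell$ (the conjugate/commutant system). Concretely, the canonical purification via the natural cone of $\cM_\psi$ gives a vector state on $\cM_\psi'\vee\cM_\psi$.
\begin{itemize}
\item Clustering (condition 1) makes $\cM_\psi$ a factor, and the commutant side can be identified with a copy $\oA$ of the chain.
\item The area law of mutual information (condition 3) implies the split property for $\psi$ across any cut. Together with the factor property, this is what allows the canonical purification to be realized as a genuine pure state $\psi'$ on the doubled spin chain $\oA\otimes\A^\ell$.
\item We also need $\psi'$ to be split across a cut of the doubled chain. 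We expect this to follow from the finite mutual information via the standard Araki relative-entropy estimates: finite $I(\Gamma;\Gamma^c)$ should bound the entanglement of the purification across the cut.
\end{itemize}

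The second step is to transfer the symmetry. Since $\psi'$ is an extension of $\psi$, Definition~\ref{def:strong_sym_main} with strong symmetry (condition 2) gives $\psi'\circ(1\otimes\alpha_g)=\psi'$ for every $g\in G$. So $\psi'$ is a pure, split state invariant under $1\otimes\alpha$. This action is again a QCA action on the doubled chain, and its anomaly index equals $\omega$, because the index is computed by restricting to a half-line and the trivial factor on $\oA$ contributes nothing. Applying the pure-state LSM theorem for QCA symmetries then forces $\omega=1$, contradicting condition 4.

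We expect the main obstacle to be the first step. It requires showing that the purification is a state on a \emph{local} spin-system algebra, in the sense of Definition~\ref{def:extension}, and is still split. In general the commutant $\cM_\psi'$ need not come with a local structure. Our approach is to use the split property to decompose $\cH_\psi\cong\cH_L\otimes\cH_R$, purify each half by a type~I factor argument, and then define the local structure on the auxiliary system through the modular conjugation $J_\psi$, setting $\oA_X:=J_\psi\pi_\psi(\A_X)J_\psi$. If this identification turns out to be only quasi-local, we would fall back on the version of the LSM theorem that allows almost-local symmetry actions with tails.
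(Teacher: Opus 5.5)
Your overall route is the paper's: Woronowicz's canonical purification $\tilde\psi$ on $\oA\otimes\A$, strong symmetry giving $\tilde\psi\circ(\id_{\oA}\otimes\alpha_g)=\tilde\psi$, equality of the anomaly indices of $\id_{\oA}\otimes\alpha$ and $\alpha$, and then the pure-state LSM theorem (Lemma~\ref{lemma:KS}). The genuine gap is the split property of $\tilde\psi$, because the mechanism you propose for it does not work. The entanglement of $\tilde\psi$ across the cut of the doubled chain is the reflected entropy of $\psi$, and mutual information does not control it. Take the antisymmetric Werner state $\rho_{\mathrm{W}}=P_{\mathrm{anti}}/\binom{d}{2}$ on $\bbC^d\otimes\bbC^d$, and let $|\Omega\ra=\sum_i|ii\ra$. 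Then $I(A:B)=\log\frac{2d}{d-1}\leqslant\log 4$. However, the reduced state of $|\sqrt{\rho_{\mathrm{W}}}\ra$ on $AA^*$ is $\frac{1}{2d(d-1)}\bigl[(d-2)|\Omega\ra\la\Omega|+\mathbb{I}\bigr]$, whose entropy grows like $\log(2d)$. One can go further and stack independent blocks $(1-\epsilon_n)|00\ra\la00|\oplus\epsilon_n\rho_{\mathrm{W}}^{(d_n)}$ across the cut, with $\rho_{\mathrm{W}}^{(d_n)}$ supported orthogonally to $|0\ra$ on each side. Choosing $\sum_n h(\epsilon_n)<\infty$ ($h$ the binary entropy) but $\sum_n\epsilon_n\log d_n=\infty$ gives a split factor state with finite mutual information across the cut whose canonical purification has infinite entanglement entropy there. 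So no relative-entropy estimate can supply this step. In any case, splitting of a pure state never required finite entropy.

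The missing ingredient is the qualitative transfer in Lemma~\ref{lemma:split}. You correctly get $\psi\sim\psi_L\otimes\psi_R$ from the area law (Prop.~\ref{prop:MI_split}). Woronowicz's theorem (Lemma~\ref{lemma:purify}) says quasi-equivalent factor states have unitarily equivalent canonical purifications, so $\tilde\psi\simeq\widetilde{\psi_L\otimes\psi_R}$. By uniqueness of the $j$-positive exact purification, $\widetilde{\psi_L\otimes\psi_R}=\tilde\psi_L\otimes\tilde\psi_R$, which is a product across the cut of the doubled chain. Hence $\tilde\psi$ splits. Your last paragraph gestures at this (``purify each half''), but aims it at the wrong problem and uses the wrong tool. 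The halves need not be type~I: the tracial state is split, yet its half-chain von Neumann algebras are type~II$_1$. Only factoriality of $\psi_L,\psi_R$ is needed. Relatedly, the locality ``obstacle'' is not one, and the area law plays no role in purity. Lemma~\ref{lemma:purify} needs only that $\psi$ is a factor state, and it yields a pure state on $\oA\otimes\A$, where $\oA$ is itself a spin chain whose local algebras $\overline{\A_X}$ act as $J\pi(\A_X)J$ in a standard form of $\cM_\psi$. So $\id_{\oA}\otimes\alpha_g$ is a genuine QCA, and no fallback to symmetry actions with tails is needed.
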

Later, we will generalize above theorem to a more general symmetry conditions, called the von Neumann symmetry Def.~\ref{def:temp_vN_sym}.

\subsection{Relation with previous definitions of SW-SSB}

In this paper, we have argued that in infinite systems, the property of a state that was previously referred to as SW-SSB should simply be interpreted as the absence of strong symmetry (see Section \ref{sec:intro}). Nevertheless, this still raises the question of how the diagnostics for SW-SSB that have previously been proposed \cite{Lessa_2025, Weinstein2025} relate to our criteria for having strong symmetry in the thermodynamic limit.

An important point is that the previously proposed diagnostics cannot be directly applied in infinite systems. To see this, note that when applied in finite systems, they turn out to be very sensitive to global data that is not reflected in the expectation values of local operators. For example, both the fidelity correlators and Renyi 2-correlators of Ref.~\cite{Lessa_2025} give completely different results when applied to the states $\rho_1$ and $\rho_2$ of \eqref{eq:example_states}, despite their local indistinguishability. Thus, they cannot correspond to any well defined quantity in the thermodynamic limit.

Nevertheless, we will argue that diagnostics such as those of Refs.~\cite{Lessa_2025, Weinstein2025} do correspond to some interesting aspects of how the the thermodynamic limit is achieved starting from finite systems, as we will now describe. The discussion will also clarify how the ``SW-SSB'' defined via such diagnostics is related to the absence of strong symmetry in the thermodynamic limit.

Let us first recall how the thermodynamic limit works for regular spontaneous symmetry breaking (not SW-SSB). We can begin with spin chains of size $m$, and then approach the thermodynamic limit by sending $m \to \infty$ (Recall our precise characterization of this limit as described in Section \ref{subsec:taking_thermo}). Let $G$ be a finite group,
and consider a representation $\alpha^{(m)}$ of $G$ on the system of size $m$, which we assume converges to a homomorphism $\alpha : G \to \Aut(\A^\ell)$ as $m \to \infty$ (in the sense described in Section \ref{subsec:taking_thermo}).
Now consider a family of local Hamiltonians $H_{m}$ on the systems of size $m$ that are symmetric under $\alpha^{(m)}$, and consider the Gibbs state described the density matrix
\begin{equation}
    \rho_m= \frac{1}{Z} e^{-\beta H_m}.
\end{equation}
Under suitable assumptions, one expects that $\rho_m$ will converge (in the sense described in Section \ref{subsec:taking_thermo}) to an infinite-volume state $\psi$ that is symmetric under $\alpha$.
However, when we have SSB, one finds that $\psi$ has long-range order -- that is, it has two-point correlation functions that do not decay with distance (in other words, it fails to satisfy the ``clustering property''). However, generally we can write
\begin{equation}
    \psi = \frac{1}{|G|} \sum_{g \in G} \psi_{\mathrm{SSB} }\circ \alpha_{g}
\end{equation}
where $\psi_{\mathrm{SSB}}$ is a state satisfying the clustering property that, however, fails to be invariant under $\alpha$.

Now we turn to the analogous statements for SW-SSB.
Consider a family of states $\psi_{m}$ with strong symmetry $\alpha^{(m)}$.
Furthermore, let us assume that $\psi_m$ converges to some state $\psi$ as $m \to \infty$. We assume this time that there is no regular SSB, in which case it is reasonable to assume that $\psi$ satisfies the clustering property.

We can also consider the canonical purifications $\widetilde{\psi}_{m}$ of $\psi_{m}$, which by the assumption of strong symmetry are invariant under $\id \otimes \alpha^{(m)}$. We expect under suitable conditions that $\widetilde{\psi}_m$ will converge to a state $\widetilde{\psi}'$ on the doubled version of the infinite chain\footnote{We denote this limiting state $\widetilde{\psi}'$ rather than $\widetilde{\psi}$ to distinguish it from the canonical purification of $\psi$ (using the infinite-system version of canonical purification defined in later sections), with which it does not necessarily coincide.}, in which case $\widetilde{\psi}'$ must be invariant under $\id \otimes \alpha$.  However, in general $\widetilde{\psi}'$ might not obey the clustering property. Thus, by analogy to the SSB case, one might expect that it can be decomposed as a sum
\begin{equation}
    \widetilde{\psi}' = \frac{1}{|G|} \sum_{g \in G} \widetilde{\psi}_{\mathrm{SWSSB}} \circ (\id \otimes \alpha_{g}), \label{eq:swssb_decomposition}
\end{equation}
where the state $\widetilde{\psi}_{\mathrm{SWSSB}}$ obeys the clustering property but might not be invariant under $\id \otimes \alpha$. Note that by restricting to the original system, \eqref{eq:swssb_decomposition} implies that
\begin{equation}
\label{eq:swssb_restricted_sum}
    \psi = \frac{1}{|G|}  \sum_{g \in G} (\widetilde{\psi}_{\mathrm{SWSSB}})_R \circ \alpha_{g},
\end{equation}
where $(\widetilde{\psi}_{\mathrm{SWSSB}})_R$ is the restriction to the original system.
In order for $\psi$ to obey the clustering property, it must be the case that all the states appearing in the sum \eqref{eq:swssb_restricted_sum} are equal\footnote{Technically there could be another possibility, which is that the states appearing in the sum \eqref{eq:swssb_restricted_sum} could differ in a manner that goes to zero at spatial infinity. However, this would not be compatible with translation invariance, for example.}, and hence that
$(\widetilde{\psi}_{\mathrm{SWSSB}})_R$ is invariant under $\alpha$ and $(\widetilde{\psi}_{\mathrm{SWSSB}})_R = \psi$.
Thus, $\widetilde{\psi}_{\mathrm{SWSSB}}$ is an extension of $\psi$. Therefore, invoking the results on extensions outlined in Section \ref{sec:sum_main_results}, we conclude that \emph{if} $\psi$ has the strong symmetry, then $\widetilde{\psi}_{\mathrm{SWSSB}}$ is in fact invariant under $\id \otimes \alpha$, and therefore $\widetilde{\psi}' = \widetilde{\psi}_{\mathrm{SWSSB}}$ obeys the clustering property. Therefore, the violation of the clustering property for $\widetilde{\psi}'$ is a signature of the \emph{absence} of strong symmetry in the state $\psi$. 
Indeed, this is precisely the diagnostic for ``SW-SSB'' proposed in Ref.~\cite{Weinstein2025}. (We remark that Ref.~\cite{Weinstein2025} proved that their criterion for SW-SSB is also equivalent to the original one proposed in Ref.~\cite{Lessa_2025} in terms of a fidelity correlator.) 
Thus we see how the ``SW-SSB'' of the previous literature is related to the absence of strong symmetry in the thermodynamic limit.

Note that we have not given an argument for the converse statement, i.e.\ that the strong symmetry being absent in $\psi$ necessarily implies failure of clustering for $\widetilde{\psi}'$. This converse statement seems to hold in the most physically reasonable cases, but it is false in general, if one allows the finite-size states to be sufficiently ``wild'', as the following counterexample demonstrates.
Suppose that we consider an on-site symmetry, and for each $L$ we choose a ``typical'' pure state (i.e.\ choosen from a Haar-random distribution) within the subspace of the Hilbert space comprising the +1 eigenstates of $U^{(m)}(g)$ (where $U^{(m)}(g)$ is the unitary action of the symmetry on the Hilbert space). Such typical states look like the maximally mixed state with respect to local operators, so in this case $\psi_m$ will converge to the maximally mixed state $\psi$, On the other hand, since each $\psi_m$ is pure, its canonical purification is simply $\psi_m \otimes \psi_m^*$  (where $\psi_m^*$ is the complex conjugation of $\psi_m$), which converges to $\widetilde{\psi}' = \psi \otimes \psi^*$. This state indeed obeys the clustering property, despite the fact that $\psi$ does not have the strong symmetry.

This example demonstrates the limitations of the previous diagnostics of SW-SSB, since in some cases they fail to correctly identify the lack of strong symmetry in the infinite system limit. Thus, in general it is better to use the new criteria put forward in the present paper. 

Finally, let us remark that a similar example to the counter-example discussed above was previously considered in Ref.~\cite{Feng_2025}. However, they drew quite different conclusions from it. They considered a family of states for increasing system sizes $m$ that fails to satisfy SW-SSB according to the diagnostics of Ref.~\cite{Lessa_2025}, and therefore characterized it as being ``in a different phase'' compared to another family of states that does satisfy SW-SSB. However, in our view this is not the best interpretation. Rather, both families in fact converge to the \emph{same} infinite-volume state in the thermodynamic limit, and therefore should be viewed as corresponding to the \emph{same} phase, characterized by the absence of strong symmetry in the thermodynamic limit. With this interpretation, the conclusions of Ref.~\cite{Feng_2025} regarding the computational difficulty of distinguishing between phases of matter are no longer applicable. We feel this example also serves to illustrate the danger of focusing on non-local quantities that do not have any well-defined extension to the thermodynamic limit, as it tends to lead to making distinctions that arguably have very little practical relevance (as illustrated by the results of Ref.~\cite{Feng_2025} regarding the computational difficulty of distinguishing between the two families of states).

\subsection{The general decay of conditional mutual information}
\label{sec:cmi_decay}

In this subsection, we wish to make a further remark about the relation between finite-system ``SWSSB'' and the infinite-size results. One feature of finite-system ``SWSSB'' states is that they typically exhibit infinitely long-ranged behavior of the conditional mutual information (CMI) \cite{Sang_2025,Lessa_2025}. It is clear that this does not survive in the infinite-system limit; or in other words, the long-rangedness of the CMI in finite system is probing very non-local correlations. To see this, note that the states $\rho_1$ and $\rho_2$ in \eqref{eq:example_states} are locally indistinguishable, i.e.\ they give the same infinite-volume state, and $\rho_1$ is a product state in which the CMI is completely trivial.

What we want to argue in this subsection is that more generally, in the infinite-volume limit there is \emph{no such thing} as infinitely long-ranged CMI for \emph{any} state. First let us
 define CMI for infinite-size systems. Given a state $\psi$, consider a partition of the lattice as in Fig.~\ref{fig:partition},

\begin{figure}[h!]
    \centering
    \includegraphics[width=0.5\linewidth]{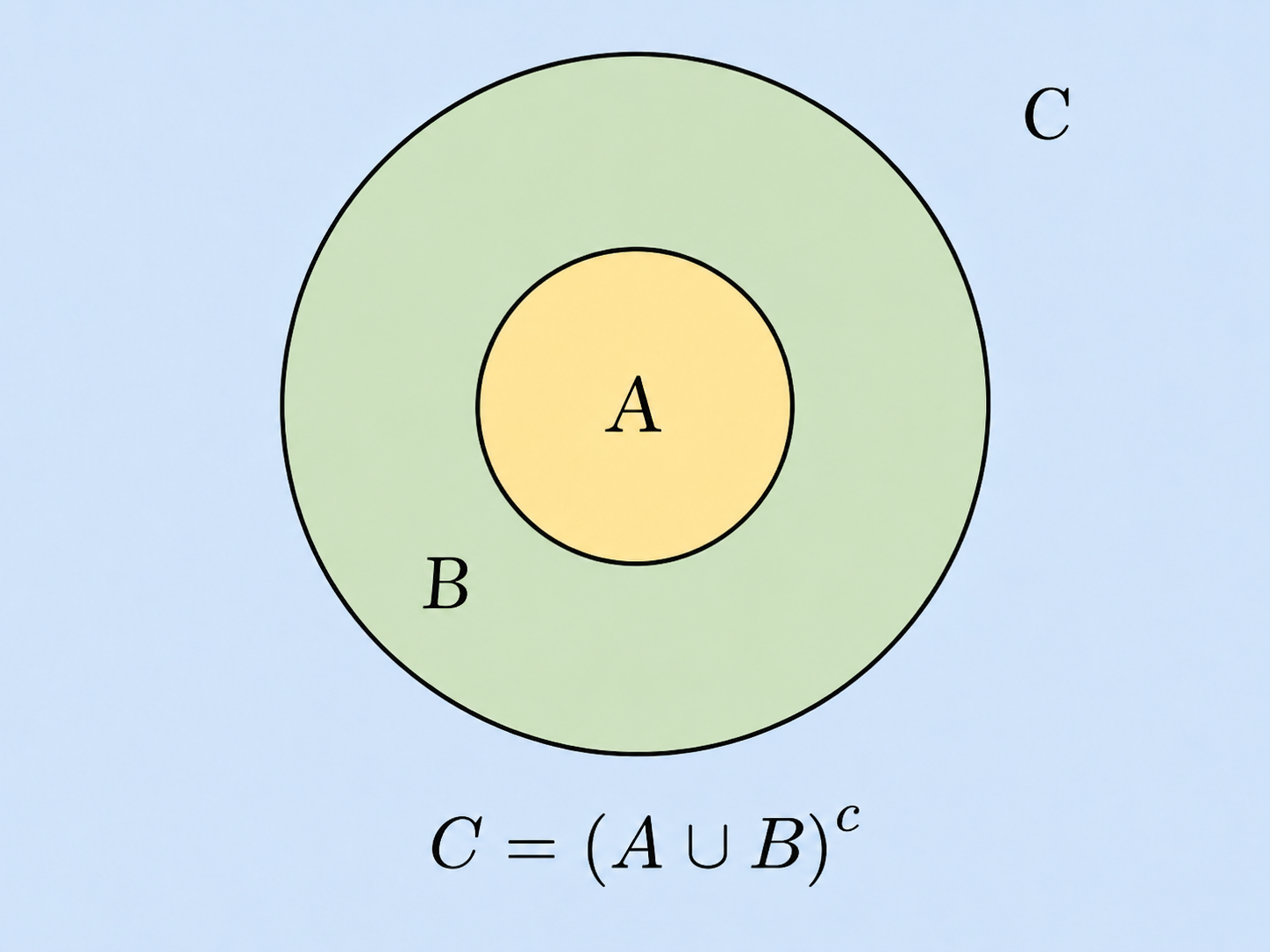}
    \caption{Partition of the lattice}
    \label{fig:partition}
\end{figure}
The CMI of $\psi$ (for this paritition) is defined as
\beq\label{eq:CMI}
I_{\psi}(A:C|B):=S(\psi||\psi_{A}\otimes \psi_{A^{c}})-S(\psi_{AB}||\psi_{A}\otimes \psi_{B})
\eeq
where $\psi_{A}$ denotes the restriction of $\psi$ to the region $A$; the notation $\psi_{B}$, $\psi_{AB}$, etc. is understood similarly. Here $S(\cdot||\cdot)$ denotes the relative entropy. We emphasize that, in our partition, the region $A$ is always bounded, and hence the right-hand side of Eq.~\eqref{eq:CMI} is finite. We will be interested in the behavior of $I_{\psi}(A:C|B)$ in the limit where one increases the size of $B$ while keeping $A$ fixed. We show that this quantity \textit{must vanish in this limit}. Specifically, we have
\begin{lemma}
\label{lemma:cmi_decay}
    Let $B_n$ be any sequence of regions $B_n \subseteq A^c$ such that $B_{0}:=B$, $B_{n} \subseteq B_{n+1}$ and such that every point in $C$ is eventually contained in some $B_n$. Define $C_n = (A \cup B_n)^c$. Then
    \begin{equation}
        \lim_{n \to \infty} I(A:C_n|B_n) = 0.
    \end{equation}
\end{lemma}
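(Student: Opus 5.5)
The plan is to rewrite the CMI as a difference of two mutual informations between the fixed finite region $A$ and growing subalgebras, and then invoke lower semicontinuity and monotonicity of Araki's relative entropy. By the definition \eqref{eq:CMI}, with $B$ replaced by $B_n$,
\begin{equation}
I(A:C_n|B_n)=I(A;A^{c})-I(A;B_n),\qquad I(A;B_n):=S(\psi_{AB_n}\|\psi_A\otimes\psi_{B_n}).
\end{equation}
Here $I(A;A^c)=S(\psi\|\psi_A\otimes\psi_{A^c})$ is a fixed number, and it is finite because $A$ is a bounded region. One way to see the finiteness is the standard bound $I(A;A^c)\leqslant 2\log\dim\A_A$. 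So it suffices to show $I(A;B_n)\to I(A;A^c)$ as $n\to\infty$.

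The first step is monotonicity. The map $\A_{AB_n}\hookrightarrow\A_{AB_{n+1}}$ is an inclusion of algebras, and the restriction of $\psi_A\otimes\psi_{B_{n+1}}$ to $A\cup B_n$ is $\psi_A\otimes\psi_{B_n}$. Uhlmann monotonicity of relative entropy under restriction to subalgebras therefore gives that $I(A;B_n)$ is nondecreasing in $n$ and bounded above by $I(A;A^c)$. Consequently the CMI is nonnegative and nonincreasing, and the limit $L:=\lim_n I(A;B_n)\leqslant I(A;A^c)$ exists.

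The second step is the reverse inequality, which I expect to be the main point. The algebras $\A_{A\cup B_n}$ form an increasing net whose union is norm-dense in the quasi-local algebra of the whole lattice, since $A\cup\bigcup_n B_n=\Lambda$. The relevant result is the martingale property of Araki's relative entropy: for an increasing net of subalgebras generating the full algebra, $S(\phi|_{\A_n}\|\omega|_{\A_n})\nearrow S(\phi\|\omega)$. This follows from Kosaki's variational formula, which expresses $S$ as a supremum of quantities that are continuous in the states. As a consequence, relative entropy is weak-* lower semicontinuous, and the supremum over local subalgebras recovers the full value. I would apply this to $\phi=\psi$ and $\omega=\psi_A\otimes\psi_{A^c}$. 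The restrictions to $\A_{A\cup B_n}$ are exactly $\psi_{AB_n}$ and $\psi_A\otimes\psi_{B_n}$, so $L=I(A;A^c)$.

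The delicate point is justifying the martingale convergence when $B_n$ itself may be infinite, so that $\A_{B_n}$ is not a matrix algebra. If needed, I would reduce to finite regions. Choose finite $F_k\uparrow\Lambda$ and note that
\begin{equation}
S(\psi\|\psi_A\otimes\psi_{A^c})=\sup_k S\bigl(\psi_{F_k}\big\|(\psi_A\otimes\psi_{A^c})_{F_k}\bigr).
\end{equation}
By monotonicity, each term with $F_k\subseteq A\cup B_n$ is bounded by $I(A;B_n)$. Each finite $F_k$ is contained in $A\cup B_n$ for some large $n$: every point of $C$ eventually lies in some $B_n$, the sets $B_n$ are increasing, and $F_k$ is finite. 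Hence $L\geqslant\sup_k(\cdot)=I(A;A^c)$. Together with the first step this gives $\lim_n I(A:C_n|B_n)=0$.

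The only nontrivial input is therefore the equality of the full relative entropy with the supremum over finite-region restrictions. This is standard (Araki; see also Ohya--Petz) and matches the finite-volume computation of fidelity and relative entropy quoted earlier.
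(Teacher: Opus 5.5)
Your proof is correct and follows the paper's route: write the CMI as $I(A;A^{c})-I(A;B_n)$ and apply the martingale property of relative entropy (Lemma~\ref{lemma:martingale}) along the increasing regions $A\cup B_n$, whose union is $\Lambda$. Your explicit monotonicity step and your reduction to finite regions $F_k$ fill in details the paper leaves implicit; in particular, the reduction handles the case where $B_n$ is infinite.
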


This easily follows from the following ``martingale" property of relative entropy:
\begin{lemma}[Prop.~5.23 of \cite{ohya2004quantum}]\label{lemma:martingale}
    Let $\{\Gamma_{n}\}_{n\in\N}$ be any increasing sequence of subsets of $\Lambda$, such that $\bigcup_{n=1}^{\infty}\Gamma_{n}=\Lambda$, then for any two states $\omega,\varphi$ of $\A^{\ell}$,
    \beq
    \lim_{n\to\infty}S(\omega_{n}||\varphi_{n})=S(\omega||\varphi)
    \eeq
    where $\omega_{n}:=\omega|_{\Gamma_{n}}$ and $\varphi_{n}$ is defined similarly.
\end{lemma}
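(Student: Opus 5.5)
The plan is a two-sided squeeze. \textbf{Monotonicity} gives the upper bound and the existence of the limit, and a \textbf{variational formula} for Araki's relative entropy gives the matching lower bound.

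\emph{Upper bound.} For $m\leqslant n$ we have $\Gamma_m\subseteq\Gamma_n$, so $\A_{\Gamma_m}\hookrightarrow\A_{\Gamma_n}\hookrightarrow\A$, where $\A$ is the norm completion of $\A^{\ell}$ (states on $\A^{\ell}$ extend uniquely by continuity). Each inclusion is a unital $*$-homomorphism, in particular a UCP map. Uhlmann monotonicity of Araki's relative entropy under restriction along such maps then gives
\begin{equation}
S(\omega_m\|\varphi_m)\leqslant S(\omega_n\|\varphi_n)\leqslant S(\omega\|\varphi).
\end{equation}
So the sequence $S(\omega_n\|\varphi_n)$ is nondecreasing and bounded by $S(\omega\|\varphi)$, possibly $+\infty$. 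Hence the limit exists and $\lim_n S(\omega_n\|\varphi_n)\leqslant S(\omega\|\varphi)$.

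\emph{Lower bound.} For the reverse inequality I will use Kosaki's variational expression. For states on a unital $C^*$-algebra $\mathscr{C}$,
\begin{equation}
S(\omega\|\varphi)=\sup_{k\in\N}\,\sup_{x}\Big[\log k-\int_{1/k}^{\infty}\Big(\omega(x(t)^*x(t))+t^{-1}\varphi(y(t)y(t)^*)\Big)\frac{\rd t}{t}\Big],
\end{equation}
where $y(t)=1-x(t)$. The inner supremum runs over step functions $x:[1/k,\infty)\to\mathscr{C}$ taking finitely many values and equal to $1$ for large $t$. The key observation is that each bracketed expression depends on $x$ only through finitely many elements $x_1,\dots,x_N\in\mathscr{C}$ and finitely many interval lengths. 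It is also jointly norm-continuous in $(x_1,\dots,x_N)$, since the integrand is a polynomial in the $x_j,x_j^*$ integrated over bounded intervals, plus a tail where $x=1$ contributes nothing.

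\emph{Approximation step.} Fix $k$ and an admissible step function $x$ with values in $\A$, and take $\epsilon>0$. Because $\A^{\ell}=\bigcup_n\A_{\Gamma_n}$ is norm dense in $\A$ (this uses that the $\Gamma_n$ exhaust $\Lambda$ and are increasing), I can replace each value $x_j$ by some $\tilde x_j\in\A_{\Gamma_m}$ for a single large $m$, changing the bracket by less than $\epsilon$. The modified step function is admissible for the pair $(\omega_m,\varphi_m)$ on $\A_{\Gamma_m}$, and there $\omega,\varphi$ agree with $\omega_m,\varphi_m$. Kosaki's formula applied on $\A_{\Gamma_m}$ then bounds the bracket above by $S(\omega_m\|\varphi_m)+\epsilon\leqslant\lim_nS(\omega_n\|\varphi_n)+\epsilon$. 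Taking suprema over $x$ and $k$, and letting $\epsilon\to0$, gives $S(\omega\|\varphi)\leqslant\lim_nS(\omega_n\|\varphi_n)$. This argument also covers the case $S(\omega\|\varphi)=+\infty$.

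\emph{Main obstacle.} The delicate step is the approximation. I must check that the bracket in Kosaki's formula is genuinely continuous under norm perturbation of the finitely many step values, uniformly over the relevant $t$-range. I must also confirm that Kosaki's formula holds on an arbitrary unital $C^*$-algebra, rather than needing a von Neumann algebra with normal states, so that I may apply it both on $\A$ and on each $\A_{\Gamma_m}$ directly. If that is problematic, the fallback is to obtain the lower bound from weak-$*$ lower semicontinuity of $S$. For this I would extend $\omega_n,\varphi_n$ to states $\tilde\omega_n=\omega_n\otimes\tau_{\Gamma_n^c}$ and $\tilde\varphi_n=\varphi_n\otimes\tau_{\Gamma_n^c}$ on $\A$, where $\tau$ is the tracial state; then $S(\tilde\omega_n\|\tilde\varphi_n)=S(\omega_n\|\varphi_n)$ and $\tilde\omega_n\to\omega$, $\tilde\varphi_n\to\varphi$ weak-$*$. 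The difficulty is then shifted to establishing that semicontinuity, which itself is usually derived from the same variational formula.
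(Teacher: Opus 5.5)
Your proof is correct and is essentially the standard argument behind the citation. The paper gives no proof of its own and defers to Ohya--Petz, where this martingale property comes from monotonicity for one inequality, and from Kosaki's variational formula together with norm density of $\bigcup_n\A_{\Gamma_n}$ for the other; Kosaki's formula does hold for arbitrary unital $C^*$-algebras, so your fallback is not needed. One slip to fix: with the integrand written as $\omega(x(t)^*x(t))+t^{-1}\varphi(y(t)y(t)^*)$, admissible step functions must have $x(t)=0$ (i.e.\ $y(t)=1$) for large $t$, not $x(t)=1$, since otherwise the $\omega$-term integrates to $+\infty$; the tail then contributes the fixed constant $1/T$, unaffected by your approximation, rather than nothing.
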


Equivalently, for quantum spin systems, relative entropy may be defined as the limit of relative entropies over increasing finite regions. Lemma~\ref{lemma:martingale} shows that this definition agrees with Araki's relative entropy. Under this formulation, Lemma \ref{lemma:cmi_decay} is automatic.

We emphasize that Lemma \ref{lemma:cmi_decay} also holds for finite-size systems. The only subtlety is that $I_{\psi}(A:C|B)$ may remain nonzero (or does not decay at all) for intermediate choices of $B$, and vanish only when $B=A^{c}$, in which case $C$ is empty. This is the case for the $\mathbb{I}+X$-state, for example. The difference in infinite systems is that one can choose $A$ and $B_n$ to be bounded regions, so that $C_n$ still contains ``most'' of the system (i.e. all except a bounded region) for each value of $n$; nevertheless the zero value of the limit still holds.

We also emphasize that Lemma \ref{lemma:cmi_decay} does not put any bounds on \emph{how fast} the CMI decays; it does not imply, for example, that it must decay exponentially. Generally, non-equilibrium phase transitions between different mixed state phases of matter are expected to have slow (e.g.\ power law) decay of CMI \cite{Sang_2404}.

\section{More background on operator algebras}\label{sec:backgrounds}
In this section we will review some more mathematical properties of operator algebras that we we will require in order to give the rigorous proofs of our results. Readers familiar with operator algebras may skip directly to Sec.~\ref{sec:strong_sym} and refer back to this section as needed.

\subsection{Quasi-local operators}\label{sec:quasi_local}
We will often need to work with operators with tails. To this end, we need to introduce the algebra of quasi-local operators. This algebra is defined to be the completion of $\A^{\ell}$ with respect to the operator norm:
\beq\label{eq:quasi_local_algebra}
\A:=\overline{\A^{\ell}}^{\|\cdot\|}
\eeq
The completion procedure is similar to how one obtains $\R$ from $\mathbb{Q}$ by adding the limits of Cauchy sequences (see \eg Ref.~\cite{Tao2022}). We recall that the operator-norm satisfies:
\beq\label{eq:Banach-inequality}
\| ab\|\leqslant \|a\|\cdot \|b\|
\eeq
Any norm-complete algebra satisfying Eq.~\eqref{eq:Banach-inequality} is called a Banach algebra.

Moreover, $\A$ inherits the $*$-operation from $\A^{\ell}$, and this $*$-operation is compatible with the operator norm:
\beq\label{eq:C*-identity}
\|a^{*}a\|=\|a\|^{2}.
\eeq
In general, a Banach $*$-algebra satisfying Eq.~\eqref{eq:C*-identity} is called a $C^*$-algebra. 
The algebra $\A$ provides an example of such a $C^*$-algebra.

It is important to note that there are many physically relevant objects that are \textbf{not} in $\A$. A prominent example is a finite-range Hamiltonian, which is given by the following formal sum:
\beq
H=\sum_{i\in\Lambda}h_{i}
\eeq
where each $h_{i}$ is supported on a neighborhood $N_{i}$ of $i$, with diameter $\diam(N_{i})<r$ for some constant $r$.
Surely $H$ does not converge in the operator norm. However, we note that $H$ makes sense as a derivation on local operators:
\beq\label{eq:derivation}
\delta_{H}(a)=\sum_{i\in\Lambda}[h_{i},a],\quad\forall a\in\A^{\ell}
\eeq
With the help of Lieb-Robinson bound, it is shown that $\delta_{H}$ can be exponentiated into a $*$-automorphism $\alpha_{t}\in\Aut(\A)$ (called the time evolution of $\delta_{H}$), see \eg Ref.~\cite{Nachtergaele_2006} for details. In general, $\alpha_{t}$ is an \textbf{outer} automorphism of $\A$, meaning that $\alpha_{t}\not=\Ad_{u}$ for any unitary $u\in\A$. Similarly, finite-depth quantum circuits (FDQC) are also \textbf{outer} automorphisms of $\A$.

States on $\A$, as well as pure states, are defined in the same way as in Def.~\ref{def:states}, so we do not repeat the definition here.

Let us also briefly clarify our convention for tensor products of $C^*$-algebras. In general, there are several inequivalent $C^*$-tensor products, depending on the choice of completion procedure (see, e.g., Chap.~6 of \cite{murphy2014c}). In the present paper, however, we work exclusively with spin systems, whose quasi-local algebras are uniformly hyperfinite (UHF). For UHF algebras, the $C^*$-tensor product is unique; see Theorem 6.3.11 of \cite{murphy2014c}. To be specific, throughout this paper, the notation $\otimes$ always refers to the spatial tensor product, both for $C^*$-algebras and for von Neumann algebras (to be introduced in Sec.~\ref{sec:GNS_vN} below).

\subsection{GNS representations and von Neumann algebras}\label{sec:GNS_vN}
In Sec.~\ref{sec:quasi_local}, we have defined the quasi-local algebra $\A$. We also defined the states as positive linear functionals (see Def.~\ref{def:states}) as well as their classical mixture.
However, we are still missing one of the most important ingredients of quantum mechanics, \ie the coherent superposition.
This can be cured by introducing the so-called Gelfand-Naimark-Segal (GNS) construction.

Recall that in perturbative quantum field theory, the Hilbert space is constructed by applying all creation operators to a reference state (i.e.\ the vacuum of the theory). 
In the operator-algebraic setting, the GNS construction replaces the role of creation and annihilation operators with elements of the $C^*$-algebra $\A$.

We first define the GNS ideal associated with a state $\psi$ on $\A$:
\beq
N_{\psi} := \{ a \in \A \mid \psi(a^{*}a) = 0 \}.
\eeq
Elements of $N_{\psi}$ may be regarded as ``annihilation operators'' for $\psi$. 
Accordingly, vector states in the GNS representation can be identified with equivalence classes in the quotient space $\A / N_{\psi}$. If $a\in\A$, we write $[a]$ for its equivalence class.

The quotient $\A / N_{\psi}$ carries a natural inner product given by
\beq\label{eq:GNS_inner_product}
\langle [a], [b] \rangle := \psi(a^{*} b).
\eeq
One can readily verify that this inner product is well defined. 
However, $\A / N_{\psi}$ is not complete with respect to the norm induced by \eqref{eq:GNS_inner_product}; hence, it is only a pre-Hilbert space. The GNS Hilbert space $\cH_{\psi}$ is obtained by completing $\A/N_{\psi}$ with respect to the inner product Eq.~\eqref{eq:GNS_inner_product}. Besides, we have a representation $\pi_{\psi}$ of $\A$ defined by:
\beq\label{eq:inner_prod}
\pi_{\psi}(a)([b]):=[ab], \quad\forall\,a,b\in\A
\eeq
Lastly, $|\psi\ra:=[1_{\A}]$ represents $\psi$ in $\cH_{\psi}$:
\beq
\la\psi|\pi_{\psi}(a)|\psi\ra = \psi(a),\quad\forall\,a\in\A
\eeq
\begin{definition}\label{def:GNS_triple}
    Given a state $\psi$ on $\A$, $(\pi_{\psi},\cH_{\psi},|\psi\ra)$ is called the GNS triple of $\psi$.
\end{definition}
Sometimes, it will be convenient to talk about the GNS representation of a general positive linear functional $\rho$, \ie it may not be normalized. Note that we always have\footnote{To see this, note $a^{*}a\leqslant \|a\|^{2}\times 1_{\A}$, if $\rho(1_{\A})=0$ we would have $\rho(a^{*}a)\leqslant \|a\|^{2}\rho(1_{\A})=0$ hence $\rho$ is identically 0.} $\rho(1)>0$ as long as $\rho\not=0$. The GNS triple of $\rho$ is defined to be the GNS triple of $\rho(1_{\A})^{-1}\rho$.

The vector $|\psi\rangle$ is distinguished in the above construction because $\pi_{\psi}(\A)|\psi\rangle$ is dense in $\cH_{\psi}$. Equivalently, $\cH_{\psi}$ is the completion of $\pi_{\psi}(\A)|\psi\rangle$ with respect to the inner product in Eq.~\eqref{eq:inner_prod}:
\beq
\cH_{\psi}=\overline{\pi_{\psi}(\A)|\psi\rangle}^{\langle\cdot\,|\,\cdot\rangle}.
\eeq
When no confusion can arise, we suppress explicit reference to the inner product from now on. Such a vector is called \emph{cyclic}, and a representation admitting a cyclic vector is called a \emph{cyclic representation}. More generally:
\begin{definition}\label{def:cyclic}
    Let $\pi:\A\to\B(\cH)$ be a representation. A vector $|\phi\rangle\in\cH$ is called \emph{cyclic} if $\pi(\A)|\phi\rangle$ is dense in $\cH$. In this case, $\pi$ is called a \emph{cyclic representation}.
\end{definition}

\begin{remark}\label{remark:uniqueness_GNS}
We remark that the GNS triple is unique only up to unitary equivalence. 
If $(\pi', \cH',|\psi'\ra)$ is another cyclic representation with
\beq
\la\psi'|\pi'(a)|\psi'\ra=\psi(a),\quad\forall\,a\in\A
\eeq
then there exists a unique unitary operator 
$U : \cH_{\psi} \to \cH'$ such that
\beq
\begin{split}
U |\psi\ra &= |\psi'\ra, \\
\pi'(a) &= U \pi_{\psi}(a) U^{\dagger}, \qquad \forall\, a \in \A .
\end{split}
\eeq
\end{remark}

This uniqueness of GNS representations plays an essential role in the discussion of the symmetries of states. 
For instance, let $\alpha \in \Aut(\A)$ be a $*$-automorphism and let $\psi$ be a state satisfying $\psi \circ \alpha = \psi$. 
A natural question is how $\alpha$ acts on the GNS Hilbert space $\cH_{\psi}$. 
By the uniqueness of the GNS triple, there exists a unique unitary operator 
$U_{\alpha} \in \B(\cH_{\psi})$ implementing $\alpha$, i.e.,
\beq
\begin{split}
U_{\alpha} |\psi\ra &= |\psi\ra, \\
\pi_{\psi} \circ \alpha (a) &= U_{\alpha} \pi_{\psi}(a) U_{\alpha}^{\dagger}, 
\qquad \forall\, a \in \A .
\end{split}
\eeq

\begin{remark}\label{remark:inner1}
    We note that although $U_{\alpha}\in\B(\cH_{\psi})$, in general $U_{\alpha}\not\in\pi_{\psi}(\A)$ unless $\alpha$ is an inner-automorphism.
\end{remark}
Given a representation, it is natural to ask if it is irreducible. The reducibility is closely related to the notion of invariant subspace and subrepresentations, which we will define now.
\begin{definition}\label{def:subrep}
    Let $\pi:\A\to\B(\cH)$ be a representation of $\A$ on $\cH$. If $\cK\subseteq\cH$ be a subspace such that $\pi(\A)\cK=\cK$, then $\cK$ is called an invariant subspace while $\pi(\A)|_{\cK}$ is called a subrepresentation of $\pi$. The representation $\pi$ is \textbf{irreducible} if there is no nontrivial invariant subspace (\ie $\cK\not=\{0\},\cH$).
\end{definition}
We have the following classical result
\begin{lemma}[Theorem 2.3.19 of Ref.~\cite{bratteli2013operator1}]\label{lemma:Schur}
    Given a state $\psi$ on $\A$,
 the following three statements are equivalent:
    \begin{enumerate}
        \item $\pi_{\psi}$ is irreducible.
        \item The Schur's lemma holds for $\pi_{\psi}$, \ie $\pi_{\psi}(\A)':=\{x\in\B(\cH_{\psi}):[x,\pi_{\psi}(\A)]=0\}=\bbC\cdot1_{\cH_{\psi}}$.
        \item The state $\psi$ is pure.
    \end{enumerate}
\end{lemma}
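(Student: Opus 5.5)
The plan is to prove the three implications (1)$\Rightarrow$(2)$\Rightarrow$(3)$\Rightarrow$(1) directly in the GNS representation $(\pi_{\psi},\cH_{\psi},|\psi\ra)$. The tool throughout is the correspondence between positive functionals dominated by $\psi$ and positive operators in the commutant $\pi_{\psi}(\A)'$.

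For (1)$\Rightarrow$(2), take a self-adjoint $x\in\pi_{\psi}(\A)'$; a general element of the commutant splits into self-adjoint parts, and the commutant is a $*$-algebra. The spectral projections $P$ of $x$ lie in the von Neumann algebra generated by $x$, hence in the commutant. Therefore $P\cH_{\psi}$ is an invariant subspace in the sense of Def.~\ref{def:subrep}, and irreducibility forces $P\in\{0,1\}$. So $x$ has a single point of spectrum and is a scalar. Conversely, if the commutant is trivial, any invariant closed subspace $\cK$ has an orthogonal projection commuting with $\pi_{\psi}(\A)$, because $\pi_{\psi}$ is a $*$-representation. This projection must be $0$ or $1$, which gives the equivalence (1)$\Leftrightarrow$(2).

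For (2)$\Rightarrow$(3), suppose $\psi=\lambda\psi_1+(1-\lambda)\psi_2$ with $0<\lambda<1$. Then $\rho:=\lambda\psi_1$ satisfies $\rho\leqslant\psi$. The key step is a Radon–Nikodym type lemma: there is a unique $0\leqslant T\leqslant 1$ in $\pi_{\psi}(\A)'$ with $\rho(a)=\la\psi|T\pi_{\psi}(a)|\psi\ra$. To build it, define the sesquilinear form $B([a],[b]):=\rho(a^{*}b)$ on the dense subspace $\pi_{\psi}(\A)|\psi\ra$. By Cauchy–Schwarz and $\rho\leqslant\psi$ it is bounded by $\|[a]\|\,\|[b]\|$, and it is well defined on classes mod $N_{\psi}$ since $\psi(a^*a)=0$ implies $\rho(a^*a)=0$. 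Riesz representation then gives $T$ with $0\leqslant T\leqslant1$. The identity $B([ca],[b])=B([a],[c^{*}b])$ yields $T\pi_{\psi}(c)=\pi_{\psi}(c)T$ on a dense set, hence everywhere. By (2), $T=t\cdot 1$, so $\rho=t\psi$. Normalization gives $t=\lambda$, so $\psi_1=\psi$, and similarly $\psi_2=\psi$. Thus $\psi$ is pure.

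For (3)$\Rightarrow$(2), take $0\leqslant T\leqslant 1$ in the commutant. The functional $\rho_T(a):=\la\psi|T\pi_{\psi}(a)|\psi\ra$ is positive, since $\la\psi|T\pi(a^*a)|\psi\ra=\|T^{1/2}\pi(a)\psi\|^2$ using commutation. It is also dominated by $\psi$. So $\psi=\rho_T+\rho_{1-T}$ is a convex decomposition once both pieces are normalized. Purity forces $\rho_T=t\psi$, which gives $\la\psi|(T-t)\pi_{\psi}(a)|\psi\ra=0$ for all $a$. Using commutation with $\pi_{\psi}(b^{*})$ and cyclicity, this becomes $\la\pi_{\psi}(b)\psi|(T-t)|\pi_{\psi}(a)\psi\ra=0$, so $T=t$. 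Every element of the commutant is a combination of such positive contractions, so the commutant is $\bbC\cdot1$.

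I expect the main obstacle to be the Radon–Nikodym step: checking that $T$ is well defined on classes mod $N_{\psi}$, bounded, and genuinely in the commutant. The remaining steps are formal consequences of spectral theory and cyclicity.
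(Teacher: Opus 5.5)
Your proof is correct and follows the standard argument behind this result. The paper does not prove the lemma itself but cites Bratteli--Robinson (Theorem 2.3.19, together with their earlier equivalence of irreducibility and a trivial commutant), whose proof is exactly your Radon--Nikodym correspondence between positive functionals majorized by $\psi$ and positive contractions in $\pi_{\psi}(\A)'$ (the same fact the paper later invokes in Lemma~\ref{lemma:majorized}), plus spectral projections for (1)$\Leftrightarrow$(2). The one detail to add is the degenerate case in (3)$\Rightarrow$(2) where $\rho_T=0$ or $\rho_{1-T}=0$, so normalization is impossible; this is trivial, since then $\rho_T=0\cdot\psi$ or $\rho_T=\psi$.
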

We remark that $\pi_{\psi}(\A)'$ in this lemma is called the \emph{commutant} of $\pi_{\psi}(\A)$ in $\mathcal B(\mathcal H_{\psi})$.
In Remark~\ref{remark:inner1}, we observed that $U_{\alpha}\notin \pi_{\psi}(\A)$ in general.
If $\psi$ is a pure state, then the \emph{double} commutant satisfies \footnote{Note that $\pi_{\psi}(\A)'''=\pi_{\psi}(\A)'$ automatically holds, so there will be nothing new by taking further commutants.}:
\[
\pi_{\psi}(\A)''=\mathcal B(\mathcal H_{\psi}).
\]
Consequently, as long as $\psi$ is pure, we always have $U_{\alpha}\in \pi_{\psi}(\A)''$.

This suggests that the double commutant $\pi_{\psi}(\A)''$ may be an interesting object in its own right.
Indeed, as discussed before, the time evolution $\alpha_t$ is in general an outer automorphism of $\A$.
However, for a ground state\footnote{Given a local Hamiltonian $H$ (viewed as a derivation), a state $\psi$ is called a ground state if $-i\,\psi(a^*\delta_H(a))\ge 0$ for all $a$ in the domain of $\delta_H$. Such a state can be mixed in general.} $\psi$, one can show that $\alpha_t$ can always be implemented by a unitary $U_t\in \pi_{\psi}(\A)''$, even when $\psi$ is mixed.
See Proposition~5.3.19 of Ref.~\cite{bratteli2013operator2} for more details.

Based on above discussion,
\begin{definition}\label{def:vN_algebra}
    Given a state $\psi$ of $\A$, we define the double commutant $\cM_{\psi}:=\pi_{\psi}(\A)''$ to be the von Neumann (vN) algebra generated by $\psi$.
\end{definition}
vN algebras play an essential role in the discussion of symmetries of mixed-state phases, as we will see later.

Before ending this subsection, we mention a powerful result in the theory of von Neumann algebras, called Kaplansky's density theorem. To this end, let us define some useful topologies on $\B(\cH)$.
\begin{definition}\label{def:SOT_WOT}
    The strong operator topology (SOT) and weak operator topology (WOT) are defined as follows:
    \begin{enumerate}
        \item A sequence $x_{n}\in\B(\cH)$ SOT-converges to $x\in\B(\cH)$  if for any $|\psi\ra\in\cH$, we have
        \beq
        \|(x-x_{n})|\psi\ra\|\stackrel{n\to\infty}{\longrightarrow}0
        \eeq
        We simply write $x_{n}\stackrel{\SOT}{\longrightarrow}x$ in this case.
        \item A sequence $x_{n}\in\B(\cH)$ WOT-converges to $x\in\B(\cH)$  if for any $|\phi\ra,|\psi\ra\in\cH$, we have
        \beq
        \la\phi|(x-x_{n})|\psi\ra\stackrel{n\to\infty}{\longrightarrow}0
        \eeq
        We simply write $x_{n}\stackrel{\WOT}{\longrightarrow}x$.
    \end{enumerate}
    Given a $*$-subalgebra $\mathscr{C}\subseteq\B(\cH)$, its SOT closure $\ovl{\sC}^{\SOT}$ is defined to be the SOT limits of all SOT-convergent sequences in $\sC$. The WOT closure is defined similarly.
\end{definition}
To relate von Neumann algebras and these two topologies, we have the following famous result, due to von Neumann himself.
\begin{theorem}[von Neumann's double commutant theorem]\label{theorem:double_commutant}
    Let $\sC\subseteq\B(\cH)$ be a *-subalgebra, then
    \beq
    \ovl{\sC}^{\WOT}=\ovl{\sC}^{\SOT} = \sC''
    \eeq
\end{theorem}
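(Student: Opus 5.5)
The plan is to prove the chain of inclusions
\[
\ovl{\sC}^{\SOT}\subseteq\ovl{\sC}^{\WOT}\subseteq\sC''\subseteq\ovl{\sC}^{\SOT}.
\]
I will assume, as is implicit in the intended applications (e.g.\ $\sC=\pi_{\psi}(\A)$), that $\sC$ contains $1_{\cH}$, or at least acts nondegenerately. I will also work with nets, which is the standard meaning of the closures; for separable $\cH$ and bounded sets the sequential and net closures agree. The first inclusion is immediate, since SOT convergence implies WOT convergence.

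For the second inclusion, I will show that any commutant $\mathscr{S}'$ is WOT-closed. Suppose $x_\lambda\to x$ in WOT and $y\in\mathscr{S}$. For all $\phi,\eta\in\cH$,
\[
\la\phi|(xy-yx)|\eta\ra=\lim_\lambda \la\phi|(x_\lambda y-yx_\lambda)|\eta\ra.
\]
This holds because $\la\phi|x_\lambda y|\eta\ra=\la\phi|x_\lambda|y\eta\ra$ and $\la\phi|yx_\lambda|\eta\ra=\la y^{*}\phi|x_\lambda|\eta\ra$, and both converge. If each $x_\lambda$ commutes with $y$, the right-hand side is zero, so $x$ commutes with $y$. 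Applying this with $\mathscr{S}=\sC'$ shows that $\sC''$ is a WOT-closed set containing $\sC$, which gives $\ovl{\sC}^{\WOT}\subseteq\sC''$.

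The main step is the third inclusion: every $x\in\sC''$ is an SOT limit of elements of $\sC$. A basic SOT neighbourhood of $x$ is specified by finitely many vectors $\xi_1,\dots,\xi_n$ and an $\varepsilon>0$. So it suffices to find $c\in\sC$ with $\|(x-c)\xi_i\|<\varepsilon$ for all $i$.

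I first treat a single vector $\xi$. Let $\cK:=\ovl{\sC\xi}$; it contains $\xi$ because $1\in\sC$. The subspace $\cK$ is invariant under $\sC$, and since $\sC$ is $*$-closed, $\cK^\perp$ is invariant too. Hence the orthogonal projection $P$ onto $\cK$ commutes with $\sC$, so $P\in\sC'$. Since $x\in\sC''$, we get $xP=Px$. Therefore $x\xi=xP\xi=Px\xi\in\cK$, which means $x\xi$ is a norm limit of vectors $c\xi$ with $c\in\sC$.

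For $n$ vectors, I will use the amplification trick. Work on $\cH^{\oplus n}$ with the diagonal algebra $\sC^{(n)}=\{c\oplus\cdots\oplus c : c\in\sC\}$. A direct matrix computation shows $(\sC^{(n)})'=\mathrm{M}_n(\sC')$, the $n\times n$ matrices with entries in $\sC'$. Consequently $x^{(n)}=x\oplus\cdots\oplus x$ commutes with $(\sC^{(n)})'$, i.e.\ $x^{(n)}\in(\sC^{(n)})''$. Applying the single-vector case to $\xi_1\oplus\cdots\oplus\xi_n$ yields the required $c\in\sC$.

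I expect this step to be the only real obstacle. It has two parts: checking that the projection $P$ lies in $\sC'$, which uses both $*$-closure and the presence of the unit, and verifying the commutant identity $(\sC^{(n)})'=\mathrm{M}_n(\sC')$. If $\sC$ is only nondegenerate rather than unital, I will instead show $\xi\in\ovl{\sC\xi}$ using an approximate unit.
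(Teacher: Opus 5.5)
Your argument is correct. It is exactly the standard proof: the chain of inclusions, the projection onto $\ovl{\sC\xi}$ lying in $\sC'$, and the $n$-fold amplification with $(\sC^{(n)})'=\mathrm{M}_n(\sC')$. This is the proof behind Theorem I.9.1.1 of Ref.~\cite{Blackadar:2006OA}, which the paper cites instead of giving its own proof. You are also right to add the unital or nondegenerate hypothesis that the paper's statement omits, since for $\sC=\{0\}$ one has $\sC''=\bbC\cdot 1$.

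The one caveat concerns your aside on sequences. Every SOT-convergent sequence is norm-bounded by uniform boundedness, and $\sC$ itself is unbounded. So metrizability of bounded sets does not by itself recover the paper's sequential definition of $\ovl{\sC}^{\SOT}$. For separable $\cH$ you additionally need Kaplansky's density theorem, to approximate $x\in\sC''$ by a bounded net from $\sC$. For non-separable $\cH$ the sequential version genuinely fails.
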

See \eg theorem I.9.1.1 of Ref.~\cite{Blackadar:2006OA} for a proof. This result is remarkably convenient because the double commutant is defined in a purely algebraic way.

The power of SOT and WOT can be seen from the following two results.
\begin{lemma}[Ref.~\cite{tao2009notes11}]\label{lemma:WOTcompact}
    The unit ball (of the operator norm) in $\B(\cH)$ is compact in WOT.
\end{lemma}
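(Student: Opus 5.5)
The plan is to reduce the statement to two classical facts: Tychonoff's theorem, and the fact that the closed unit ball of a Hilbert space is bounded. Let $B:=\{x\in\B(\cH):\|x\|\leqslant 1\}$. For each pair of vectors $|\phi\ra,|\psi\ra\in\cH$ we take the closed disk $D_{\phi,\psi}:=\{z\in\bbC:|z|\leqslant\|\phi\|\,\|\psi\|\}$. This disk is compact. By Tychonoff, the product $P:=\prod_{(\phi,\psi)\in\cH\times\cH}D_{\phi,\psi}$ is compact in the product topology.

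Next define the map $\iota:B\to P$ by $\iota(x):=(\la\phi|x|\psi\ra)_{(\phi,\psi)}$. This map lands in $P$ because $|\la\phi|x|\psi\ra|\leqslant\|x\|\,\|\phi\|\,\|\psi\|$. It is injective, since an operator is determined by its matrix elements. By the very definition of WOT (Def.~\ref{def:SOT_WOT}, understood with nets rather than sequences), $\iota$ is a homeomorphism onto its image when the image carries the product topology. Indeed, WOT convergence of a net is exactly coordinatewise convergence of all $\la\phi|x_\lambda|\psi\ra$.

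It therefore suffices to show that $\iota(B)$ is closed in $P$. Take a point $f\in P$ in the closure of $\iota(B)$, and a net $x_\lambda\in B$ with $\iota(x_\lambda)\to f$. Each coordinate map is continuous, so the following properties pass to the limit:
\begin{itemize}
\item $f(\phi,\psi)$ is linear in $\psi$ and conjugate-linear in $\phi$, because each $\la\phi|x_\lambda|\psi\ra$ is;
\item $|f(\phi,\psi)|\leqslant\|\phi\|\,\|\psi\|$, because each $\iota(x_\lambda)$ lies in $P$.
\end{itemize}
So $f$ is a bounded sesquilinear form of norm at most $1$. The Riesz representation theorem then gives a unique $x\in\B(\cH)$ with $\|x\|\leqslant 1$ and $f(\phi,\psi)=\la\phi|x|\psi\ra$. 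Hence $f=\iota(x)\in\iota(B)$. A closed subset of a compact space is compact, and $\iota^{-1}$ is continuous, so $B$ is WOT-compact.

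The only real subtlety is topological rather than analytic. Compactness must be phrased with nets, or open covers, rather than sequences, since for non-separable $\cH$ the WOT on the ball is not metrizable. I would add a remark that the sequential definition in Def.~\ref{def:SOT_WOT} is meant in the net sense here. With that convention fixed, the remaining steps are routine.
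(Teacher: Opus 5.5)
Your proof is correct. It is essentially the standard argument: the paper proves nothing here itself and defers to Tao's notes, where WOT-compactness of the unit ball is obtained exactly your way, by adapting the Tychonoff proof of Banach--Alaoglu (embed into $\prod D_{\phi,\psi}$, then note that pointwise limits are bounded sesquilinear forms of norm $\leqslant 1$, hence again contractions), and your remark that compactness must be read with nets rather than the sequences of Def.~\ref{def:SOT_WOT} (sequences suffice only when $\cH$ is separable and the ball is WOT-metrizable) is well taken.
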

Although SOT and WOT are very powerful, one has to be careful when dealing with them.
\begin{remark}\label{remark:SOT_WOT}
    In general, operator multiplication in $\B(\cH)$ is \textbf{not} jointly continuous in either the strong or weak operator topology. That is, $x_n \to x$ and $y_n \to y$ (in SOT or WOT) do not in general imply $x_n y_n \to xy$.

    Besides, the $*$-operation is non-continuous in SOT. See Ref.~\cite{tao2009notes11} for more discussions.
\end{remark}

\subsection{Equivalences}
Given two representations of $\A$, it is useful to know if and when they are unitarily equivalent. The following lemma can be helpful in this situation.
\begin{lemma}\label{lemma:unitary_equivalence}
    Consider two \textbf{pure} states $\psi_{1},\psi_{2}$ of $\A$, write $(\pi_{i},\cH_{i},|\psi_{i}\ra),i=1,2$ for their corresponding GNS triples. Then the following conditions are equivalent:
    \begin{enumerate}
        \item The representations $\pi_{1}$ is unitarily equivalent to $\pi_{2}$, \ie there exists a unitary $U:\cH_{1}\to \cH_{2}$ such that
        \beq
        U\pi_{1}(a)U^{\dagger}=\pi_{2}(a),\quad\forall\,a\in\A
        \eeq
        \item (Kadison transitivity) there exists a unitary $u\in\A$ such that $\psi_{1}=\psi_{2}\circ \Ad_{u}$.
        \item For any $\epsilon>0$, there exists a finite region $\Gamma_{\epsilon}\subseteq\Lambda$, such that
        \beq
        |\psi_{1}(a)-\psi_{2}(a)|<\epsilon\|a\|,\quad\forall\,a \in\A_{\Gamma_{\epsilon}^{c}}
        \eeq
        where $\Gamma_{\epsilon}^{c}:=\Lambda\setminus\Gamma_{\epsilon}$. This condition means $\psi_{1}$ and $\psi_{2}$ are asymptotically equal near the spatial infinity.
    \end{enumerate}
    We say $\psi_{1}$ is unitarily equivalent to $\psi_{2}$ and write $\psi_{1}\simeq\psi_{2}$ if any of the above three conditions are satisfied. We also say they fall into the same superselection sector in this case.
\end{lemma}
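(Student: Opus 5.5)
The plan is to prove the cycle $(2)\Rightarrow(1)\Rightarrow(2)$ and then $(2)\Rightarrow(3)$ and $(3)\Rightarrow(1)$. Purity of $\psi_1,\psi_2$ (equivalently, by Lemma~\ref{lemma:Schur}, irreducibility of $\pi_1,\pi_2$) is used throughout.

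\emph{$(2)\Rightarrow(1)$.} Suppose $\psi_1=\psi_2\circ\Ad_u$. Then
\[
\psi_1(a)=\la\psi_2|\pi_2(u)\pi_2(a)\pi_2(u^*)|\psi_2\ra=\la\xi|\pi_2(a)|\xi\ra,\qquad \xi:=\pi_2(u^*)|\psi_2\ra .
\]
Since $\pi_2(u^*)$ is unitary, $\pi_2(\A)\xi=\pi_2(\A)|\psi_2\ra$ is dense, so $(\pi_2,\cH_2,\xi)$ is a cyclic representation for $\psi_1$. Remark~\ref{remark:uniqueness_GNS} then gives a unitary $U:\cH_1\to\cH_2$ intertwining $\pi_1$ and $\pi_2$.

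\emph{$(1)\Rightarrow(2)$.} Given the intertwiner $U$, the unit vector $\eta:=U|\psi_1\ra\in\cH_2$ satisfies $\la\eta|\pi_2(a)|\eta\ra=\psi_1(a)$. Since $\pi_2$ is irreducible, Kadison's transitivity theorem produces a unitary $v\in\A$ with $\pi_2(v)|\psi_2\ra=\eta$. Then $\psi_1(a)=\psi_2(v^*av)$, so $u:=v^*$ satisfies (2).

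\emph{$(2)\Rightarrow(3)$.} Since $\A^\ell$ is norm dense, choose a finite $\Gamma$ and $x\in\A_\Gamma$ with $\|u-x\|<\delta$. Because $\A_\Gamma$ is a finite-dimensional matrix algebra, for $\delta$ small the element $x$ is invertible and $v:=x(x^*x)^{-1/2}\in\A_\Gamma$ is a unitary with $\|u-v\|=O(\delta)$. For $a\in\A_{\Gamma^c}$ the unitary $v$ commutes with $a$, so $\psi_2(vav^*)=\psi_2(a)$. Hence
\[
|\psi_1(a)-\psi_2(a)|=|\psi_2(uau^*)-\psi_2(vav^*)|\le 2\|u-v\|\,\|a\| ,
\]
which gives (3) once $\delta$ is chosen so that $2\|u-v\|<\epsilon$.

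\emph{$(3)\Rightarrow(1)$, the main obstacle.} This is Powers' criterion, and I would argue by contraposition. Suppose $\pi_1\not\simeq\pi_2$. Irreducible representations that are inequivalent are disjoint, so $\psi_1\perp\psi_2$ and $\|\psi_1-\psi_2\|=2$. The key claim is that disjointness survives restriction to the complement of any finite region:
\[
\|(\psi_1-\psi_2)|_{\A_{\Gamma^c}}\|=2\qquad\text{for every finite }\Gamma .
\]
To prove the claim I would use $\A=\A_\Gamma\otimes\A_{\Gamma^c}$ with $\A_\Gamma$ a full matrix algebra. This gives $\pi_i(\A)''=\B(\cH_i)\cong\B(\bbC^{n})\otimes\pi_i(\A_{\Gamma^c})''$, so $\pi_i|_{\A_{\Gamma^c}}$ is a multiple of a factor representation quasi-equivalent to $\pi_{\psi_i|_{\A_{\Gamma^c}}}$. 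If the restrictions $\psi_i|_{\A_{\Gamma^c}}$ were not disjoint, their (factor) GNS representations would be quasi-equivalent. Tensoring with the unique irreducible representation of $\A_\Gamma$ would then make $\pi_1$ and $\pi_2$ quasi-equivalent, and for irreducible representations quasi-equivalence is unitary equivalence, a contradiction. Disjoint states have difference of norm $2$, which proves the claim. The claim then contradicts (3): choosing $\epsilon<2$ in (3) gives $\|(\psi_1-\psi_2)|_{\A_{\Gamma_\epsilon^c}}\|\le\epsilon<2$.

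The delicate points to check carefully in this last step are:
\begin{enumerate}
\item the factor decomposition of $\pi_i|_{\A_{\Gamma^c}}$, i.e.\ that $\pi_i(\A_{\Gamma^c})''$ is exactly the relative commutant of the finite type~I factor $\pi_i(\A_\Gamma)$;
\item the identity $\|\varphi_1-\varphi_2\|=2$ for disjoint states, which follows from the central support projections separating them.
\end{enumerate}
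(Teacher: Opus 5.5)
Your proof is correct. The paper gives no argument of its own. It defers to Kadison transitivity (Kadison--Ringrose) for (1)$\Leftrightarrow$(2), which you use in exactly the same way. For (1)$\Leftrightarrow$(3) it defers to Powers' criterion for factor states (Corollary 2.6.11 of \cite{bratteli2013operator1}, i.e.\ Lemma~\ref{lemma:quasi_equiv} specialized to pure states, where quasi-equivalence coincides with unitary equivalence).

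The genuine difference is that you prove the second equivalence by hand and route one half of it through (2):
\begin{itemize}
\item For (2)$\Rightarrow$(3) you replace $u$ by a nearby local unitary $v\in\A_\Gamma$. It commutes with $\A_{\Gamma^c}$ and gives the explicit bound $2\|u-v\|\,\|a\|$.
\item For (3)$\Rightarrow$(1) you show that inequivalence survives restriction to $\A_{\Gamma^c}$, and then use orthogonality of disjoint states. The restriction step works because $\A_\Gamma\cong\mathrm{M}_n(\bbC)$ forces $\pi_i\cong\id\otimes\sigma_i$.
\end{itemize}
Your first ``delicate point'' is simpler than you feared: $\sigma_i$ is in fact irreducible. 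Any $T$ commuting with $\sigma_i(\A_{\Gamma^c})$ gives $1\otimes T\in\pi_i(\A)'=\bbC$.

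Your route is more elementary and self-contained than the citation. However, the (2)$\Rightarrow$(3) half relies on transitivity, so it does not extend to factor states. There, as the paper itself remarks after Lemma~\ref{lemma:quasi_equiv}, no implementing unitary $u\in\A$ need exist, and the cited criterion is what covers that general case.

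Your (3)$\Rightarrow$(1) half, by contrast, generalizes directly. For factor states one has $\pi_i(\A)''\cong\mathrm{M}_n(\bbC)\otimes\sigma_i(\A_{\Gamma^c})''$ with $\sigma_i$ a factor representation. A quasi-equivalence $f$ between the $\sigma_i$ then yields the quasi-equivalence $\id\otimes f$ between the $\pi_i$.
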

See Corollary 2.6.11 of Ref.~\cite{bratteli2013operator1} and Theorem 10.2.6 of Ref.~\cite{kadison1997fundamentals2} for a proof.

For two pure states $\psi_{1}\not\simeq\psi_{2}$, we say that they are disjoint or in different superselection sectors. The following result characterizes disjoint states.
\begin{lemma}[Corollary 10.3.8 of Ref.~\cite{kadison1997fundamentals2}]\label{lemma:disjoint}
    For disjoint pure states $\psi_{1},\psi_{2}$, we have
    \beq
    \|\psi_{1}-\psi_{2}\|:=\sup_{a\in\A,\|a\|=1}|\psi_{1}(a)-\psi_{2}(a)|=2
    \eeq
\end{lemma}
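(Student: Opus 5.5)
The plan is to get the upper bound for free and then produce, for the lower bound, self-adjoint contractions in $\A$ that approximately take the value $+1$ on $\psi_{1}$ and $-1$ on $\psi_{2}$. The upper bound is trivial: for states, $|\psi_{1}(a)-\psi_{2}(a)|\leqslant |\psi_{1}(a)|+|\psi_{2}(a)|\leqslant 2\|a\|$. The elements for the lower bound will come from the von Neumann algebra of the direct sum representation, via Theorem~\ref{theorem:double_commutant} and Kaplansky's density theorem.

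Concretely, let $(\pi_{i},\cH_{i},|\psi_{i}\ra)$ be the GNS triples. Consider $\pi:=\pi_{1}\oplus\pi_{2}$ on $\cH_{1}\oplus\cH_{2}$, and let $P_{1},P_{2}$ be the two coordinate projections. The first step is to compute the commutant $\pi(\A)'$. An element of $\B(\cH_{1}\oplus\cH_{2})$ is a $2\times2$ operator matrix $\begin{pmatrix} x_{11} & T\\ S & x_{22}\end{pmatrix}$. It commutes with $\pi(\A)$ if and only if:
\begin{itemize}
\item $x_{ii}\in\pi_{i}(\A)'$, which equals $\bbC\cdot 1$ by Lemma~\ref{lemma:Schur} since $\psi_{i}$ is pure;
\item $T:\cH_{2}\to\cH_{1}$ and $S:\cH_{1}\to\cH_{2}$ intertwine $\pi_{2}$ and $\pi_{1}$.
\end{itemize}
The key point is that disjointness kills all such intertwiners. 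Suppose $T\neq0$ satisfies $T\pi_{2}(a)=\pi_{1}(a)T$. Then $T^{*}T\in\pi_{2}(\A)'=\bbC\cdot 1$ and $TT^{*}\in\pi_{1}(\A)'=\bbC\cdot1$. Hence, after rescaling, $T$ is a unitary from $\cH_{2}$ to $\cH_{1}$ implementing a unitary equivalence $\pi_{1}\simeq\pi_{2}$. This contradicts Lemma~\ref{lemma:unitary_equivalence} and the assumption $\psi_{1}\not\simeq\psi_{2}$. The same argument applies to $S$. Thus $\pi(\A)'=\bbC P_{1}\oplus\bbC P_{2}$, and so
\[
\pi(\A)''=\B(\cH_{1})\oplus\B(\cH_{2}).
\]
In particular, $\pi(\A)''$ contains the self-adjoint element $z:=P_{1}-P_{2}$ with $\|z\|=1$.

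Next, by Theorem~\ref{theorem:double_commutant}, $z$ lies in the strong closure of $\pi(\A)$. I would then invoke Kaplansky's density theorem to obtain a net of \emph{self-adjoint} $a_{\lambda}\in\A$ with $\|a_{\lambda}\|\leqslant1$ and $\pi(a_{\lambda})\to z$ strongly (a sequence suffices when the Hilbert spaces are separable). Since SOT convergence implies WOT convergence, taking vector states of $|\psi_{1}\ra\oplus0$ and $0\oplus|\psi_{2}\ra$ gives
\[
\psi_{1}(a_{\lambda})\to\la\psi_{1}|z|\psi_{1}\ra=1,\qquad \psi_{2}(a_{\lambda})\to-1 .
\]
Hence $|\psi_{1}(a_{\lambda})-\psi_{2}(a_{\lambda})|\to2$ with $\|a_{\lambda}\|\leqslant1$. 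Since the elements with $0<\|a\|\leqslant 1$ give $\sup_{\|a\|\leqslant 1}|\psi_{1}(a)-\psi_{2}(a)|=\sup_{\|a\|=1}|\psi_{1}(a)-\psi_{2}(a)|$ by rescaling, the norm equals $2$.

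The main obstacle is the norm control. The double commutant theorem alone only supplies approximants of $z$ with no bound on $\|a_{\lambda}\|$, and such a bound is essential for the supremum. This is exactly what Kaplansky's density theorem provides, in its self-adjoint unit-ball form. The intertwiner-vanishing argument is the other substantive step, but it is a direct consequence of Lemma~\ref{lemma:Schur} together with Lemma~\ref{lemma:unitary_equivalence}.
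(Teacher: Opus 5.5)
Your proof is correct and is essentially the standard argument behind the Kadison--Ringrose result, which the paper cites without giving a proof of its own: disjointness kills the off-diagonal intertwiners in $\pi_{1}\oplus\pi_{2}$, so $P_{1}-P_{2}$ is a central self-adjoint unitary in $\pi(\A)''$, and Kaplansky's density theorem approximates it strongly by self-adjoint contractions, which forces $\psi_{1}(a_{\lambda})-\psi_{2}(a_{\lambda})\to 2$. One detail is worth adding: Kaplansky bounds $\|\pi(a_{\lambda})\|$, not $\|a_{\lambda}\|$, but this is harmless because $\A$ is simple, so $\pi$ is faithful and hence isometric (alternatively, lift a self-adjoint preimage through the functional calculus $t\mapsto\max(-1,\min(t,1))$).
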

Note that $\|\psi_{1}-\psi_{2}\|$ is nothing but twice of the trace-norm distance in the usual quantum mechanics. Thus Lemma \ref{lemma:disjoint} simply says states from different superselection sectors are “orthogonal” to each other.

The unitary equivalence and Lemma \ref{lemma:unitary_equivalence} are useful for pure states. In order to study mixed states, a slightly weaker equivalence, called \textit{quasi-equivalence} can be more helpful. In order to define quasi-equivalences, let us define the notion of subrepresentations.

\begin{definition}\label{def:quasi-equivalence}
    Two states $\psi_{1},\psi_{2}$ of $\A$ are \textbf{quasi-equivalent} if there exists an isomorphism $f:\cM_{1}\to\cM_{2}$ such that
    \beq
    f(\pi_{1}(a))=\pi_{2}(a),\quad\forall\,a\in\A
    \eeq
    where $\cM_{i}$ is the vN algebra generated by $\psi_{i},\,i=1,2$. We write $\psi_{1}\sim\psi_{2}$ in this case.

    Similarly, we say $\psi_{1}$ is disjoint from $\psi_{2}$ if they do not have any quasi-equivalent subrepresentations.
\end{definition}
For pure states, quasi-equivalence is nothing but unitary equivalence defined as in Lemma \ref{lemma:unitary_equivalence}. Thus quasi-equivalence indeed generalizes unitary equivalence between pure states. To give a nontrivial example of quasi-equivalent states which are not unitarily equivalent, consider two equivalent pure states $\omega_{1}\simeq\omega_{2}$, then we have quasi-equivalence $\omega:=\frac{1}{2}(\omega_{1}+\omega_{2})\sim\omega_{1}\sim\omega_{2}$; See Example.~\ref{example:factors} below.

There are other useful characterization for quasi-equivalence between states:
\begin{lemma}[Corollary 10.3.4 of Ref.~\cite{kadison1997fundamentals2}]\label{lemma:quasi_eq_rep}
    Two states $\psi_{1}\sim\psi_{2}$ iff $\pi_{1}$ has \textbf{no} subrepresentation that is disjoint from $\pi_{2}$ and vice versa. In other words, every subrepresentation of $\pi_{1}$ has a subrepresentation which is equivalent to some subrepresentation of $\pi_{2}$ and vice versa.
\end{lemma}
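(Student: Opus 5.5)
The plan is to reduce everything to projections in a single von Neumann algebra by putting both representations side by side. Let $\pi:=\pi_{1}\oplus\pi_{2}$ act on $\cH:=\cH_{1}\oplus\cH_{2}$, and let $P_{1},P_{2}$ be the projections onto the two summands. These lie in the commutant $\pi(\A)'$. I will use three standard dictionary facts.
\begin{enumerate}
\item Subrepresentations of $\pi$ are exactly the compressions to projections $Q\in\pi(\A)'$. Subrepresentations of $\pi_{i}$ correspond to $Q\leqslant P_{i}$.
\item Two subrepresentations $Q,R$ are unitarily equivalent iff $Q$ and $R$ are Murray--von Neumann equivalent in $\pi(\A)'$. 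An intertwiner is a partial isometry $v\in\pi(\A)'$ with $v^{*}v=Q$ and $vv^{*}=R$.
\item Writing $C_{Q}$ for the central support of $Q$, i.e.\ the smallest projection in the center $Z:=\pi(\A)'\cap\pi(\A)''$ dominating $Q$, the subrepresentations $Q$ and $R$ are disjoint iff $C_{Q}C_{R}=0$. This follows from the comparison theorem: if $C_{Q}C_{R}\neq0$, there exist nonzero $Q_{0}\leqslant Q$ and $R_{0}\leqslant R$ with $Q_{0}\sim R_{0}$.
\end{enumerate}

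\textbf{Step 1: rephrase the condition in the statement.} I will show that ``$\pi_{1}$ has no subrepresentation disjoint from $\pi_{2}$'' is equivalent to $C_{P_{1}}\leqslant C_{P_{2}}$.
\begin{itemize}
\item Suppose $C_{P_{1}}(1-C_{P_{2}})\neq0$. Then $Q:=P_{1}(1-C_{P_{2}})$ is nonzero, since otherwise $P_{1}\leqslant C_{P_{2}}$ and so $C_{P_{1}}\leqslant C_{P_{2}}$. Moreover $C_{Q}\leqslant 1-C_{P_{2}}$, so $Q$ gives a subrepresentation of $\pi_{1}$ that is disjoint from $\pi_{2}$.
\item Conversely, suppose $C_{P_{1}}\leqslant C_{P_{2}}$, and take any nonzero $Q\leqslant P_{1}$. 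Then $C_{Q}\leqslant C_{P_{2}}$, hence $C_{Q}C_{P_{2}}=C_{Q}\neq0$. By fact 3, $Q$ is not disjoint from $P_{2}$, and in fact it has a subprojection equivalent to a subprojection of $P_{2}$.
\end{itemize}
This also yields the ``in other words'' reformulation in the statement. By symmetry, the two-sided condition is equivalent to $C_{P_{1}}=C_{P_{2}}$.

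\textbf{Step 2: equal central supports give quasi-equivalence in the sense of Def.~\ref{def:quasi-equivalence}.} Let $\cM:=\pi(\A)''$. The reduction map $\Phi_{i}:\cM\to\cM P_{i}$, $x\mapsto xP_{i}|_{\cH_{i}}$, is a normal surjective $*$-homomorphism onto $\cM_{i}=\pi_{i}(\A)''$. Its kernel is $\cM(1-C_{P_{i}})$, so $\Phi_{i}$ restricts to an isomorphism $\cM C_{P_{i}}\cong\cM_{i}$. The kernel claim holds because $xP_{i}=0$ iff $x$ annihilates $\overline{\pi(\A)'P_{i}\cH}=C_{P_{i}}\cH$. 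If $C_{P_{1}}=C_{P_{2}}=:C$, then $f:=\Phi_{2}\circ(\Phi_{1}|_{\cM C})^{-1}$ is an isomorphism $\cM_{1}\to\cM_{2}$ with $f(\pi_{1}(a))=\pi_{2}(a)$, because $\Phi_{i}(\pi(a)C)=\pi_{i}(a)$.

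\textbf{Step 3: the converse.} Given such an $f$, I expect that it is automatically normal, since $*$-isomorphisms of von Neumann algebras are normal. Then $\ker\Phi_{2}$ and $\ker\Phi_{1}$ coincide as weakly closed ideals of $\cM$, because $\Phi_{2}=f\circ\Phi_{1}$ on $\pi(\A)$ and hence, by normality and density, on all of $\cM$. This forces $1-C_{P_{1}}=1-C_{P_{2}}$. Combined with Step 1, this closes the equivalence.

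I expect the main obstacle to be the kernel computation in Step 2, i.e.\ the identification $\ker\Phi_{i}=\cM(1-C_{P_{i}})$, together with the use of the comparison theorem in Step 1. I would cite these from standard references rather than reprove them, e.g.\ Kadison--Ringrose Chap.~6 and \S10.3.
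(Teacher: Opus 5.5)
Your proof is correct: the central-support dictionary, the kernel identification $\ker\Phi_i=\cM(1-C_{P_i})$ via $\overline{\pi(\A)'P_i\cH}=C_{P_i}\cH$, and the automatic normality of $*$-isomorphisms in Step~3 all hold as you state them. The paper gives no argument of its own and simply cites Kadison--Ringrose Cor.~10.3.4, whose proof is exactly this comparison of the central carriers of $P_1,P_2$ in $(\pi_1\oplus\pi_2)(\A)'$. Also, your Step~3 works verbatim for any two projections in $\pi(\A)'$, so the paper's variant definition of disjointness (no quasi-equivalent subrepresentations) likewise reduces to $C_QC_R=0$, and your Step~1 covers it.
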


In order to detect subrepresentations, the following lemma can be quite convenient.
\begin{lemma}\label{lemma:majorized}
    Let $\psi:\A\to\bbC$ be a state and $\rho$ be another positive linear functional with $\rho\leqslant\psi$, then
    \begin{enumerate}
        \item There exists a positive operator $T\in\pi_{\psi}(\A)'$ with $\|T\|\leqslant 1$, such that $\rho(a)=\la\psi|T\pi_{\psi}(a)|\psi\ra$.
        \item The GNS representation of $\rho$ is equivalent to a subrepresentation of $\pi_{\psi}$.
    \end{enumerate}
\end{lemma}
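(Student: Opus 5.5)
Part 1 is a Radon--Nikodym argument in the GNS space of $\psi$; part 2 then follows from part 1.

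\textbf{Part 1.} First define a sesquilinear form on the dense subspace $\pi_{\psi}(\A)|\psi\ra$ by
\beq
B\big(\pi_{\psi}(a)|\psi\ra,\pi_{\psi}(b)|\psi\ra\big):=\rho(a^{*}b).
\eeq
\emph{Well-definedness and boundedness.} The Cauchy--Schwarz inequality for the positive functional $\rho$ gives
\beq
|\rho(a^{*}b)|^{2}\leqslant\rho(a^{*}a)\,\rho(b^{*}b)\leqslant\psi(a^{*}a)\,\psi(b^{*}b)=\|\pi_{\psi}(a)|\psi\ra\|^{2}\,\|\pi_{\psi}(b)|\psi\ra\|^{2}.
\eeq
This shows that $B$ depends only on the vectors and not on the representatives $a,b$: if $\pi_{\psi}(a)|\psi\ra=0$, then $a\in N_{\psi}$, and the bound forces $B=0$. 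It also shows that $B$ is bounded by $1$ in norm.

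\emph{The operator $T$.} Extend $B$ by continuity to all of $\cH_{\psi}$. The Riesz representation theorem then gives a unique $T\in\B(\cH_{\psi})$ with $\la\xi|T\eta\ra=B(\xi,\eta)$ and $\|T\|\leqslant1$. Positivity of $T$ follows from $B(\xi,\xi)=\rho(a^{*}a)\geqslant0$ on the dense set, and $\rho\leqslant\psi$ gives in addition $0\leqslant T\leqslant 1$.

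\emph{Commutation with $\pi_{\psi}(\A)$.} For all $a,b,c\in\A$,
\beq
\la\pi_{\psi}(a)\psi|T\pi_{\psi}(c)\pi_{\psi}(b)\psi\ra=\rho(a^{*}cb)=\rho\big((c^{*}a)^{*}b\big)=\la\pi_{\psi}(c^{*})\pi_{\psi}(a)\psi|T\pi_{\psi}(b)\psi\ra.
\eeq
Hence $T\pi_{\psi}(c)=\pi_{\psi}(c)T$ on a dense set, and so on all of $\cH_{\psi}$, i.e.\ $T\in\pi_{\psi}(\A)'$.

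\emph{The representation formula.} Taking $a=1$ in the definition of $B$ gives $\rho(b)=\la\psi|T\pi_{\psi}(b)|\psi\ra$, as claimed.

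\textbf{Part 2.} The plan is to exhibit the GNS triple of $\rho$ inside a subrepresentation of $\pi_{\psi}$.

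\emph{The vector.} Since $T\geqslant0$ commutes with $\pi_{\psi}(\A)$, so does $T^{1/2}$, because it is a norm limit of polynomials in $T$ and $\pi_{\psi}(\A)'$ is a von Neumann algebra. Set $|\xi\ra:=T^{1/2}|\psi\ra$. Then
\beq
\la\xi|\pi_{\psi}(a)|\xi\ra=\la\psi|T^{1/2}\pi_{\psi}(a)T^{1/2}|\psi\ra=\la\psi|T\pi_{\psi}(a)|\psi\ra=\rho(a).
\eeq

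\emph{The subrepresentation.} Let $\cK:=\overline{\pi_{\psi}(\A)|\xi\ra}$. This is a closed invariant subspace, and $|\xi\ra$ is cyclic for the restriction $\pi_{\psi}|_{\cK}$.

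\emph{Identification via GNS uniqueness.} We have $\rho(1)=\|\xi\|^{2}$, and $\rho(1)>0$ whenever $\rho\neq0$ (the case $\rho=0$ is trivial). So $\big(\pi_{\psi}|_{\cK},\cK,|\xi\ra/\|\xi\|\big)$ is a cyclic representation reproducing the normalized functional $\rho(1)^{-1}\rho$. By the uniqueness of the GNS triple (Remark~\ref{remark:uniqueness_GNS}), the GNS representation of $\rho$ is unitarily equivalent to $\pi_{\psi}|_{\cK}$, which is a subrepresentation of $\pi_{\psi}$.

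\textbf{Main obstacle.} The only delicate point is the well-definedness of $B$ on the quotient $\A/N_{\psi}$ and its extension to the completion $\cH_{\psi}$. This is exactly what the majorization $\rho\leqslant\psi$, combined with the Cauchy--Schwarz bound above, secures.
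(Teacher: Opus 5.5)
Your proof is correct and follows the paper's route. For part 2 the paper does exactly what you do: it takes $T^{1/2}|\psi\rangle$, passes to the cyclic invariant subspace $\overline{\pi_{\psi}(\A)T^{1/2}|\psi\rangle}$, and invokes uniqueness of the GNS triple. For part 1 the paper only cites Theorem 2.3.19 of Ref.~\cite{bratteli2013operator1}; your sesquilinear-form and Riesz argument is the standard Radon--Nikodym proof given in that reference.
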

\begin{proof}
    The first statement is Theorem 2.3.19 of Ref.~\cite{bratteli2013operator1}, so we will omit its proof here. Below we assume the existence of such positive $T$ and prove the second assertion.

    Defining $|\rho\ra:=T^{1/2}|\psi\ra$. 
    Consider the invariant subspace subspace $\cK_{\rho}:=\ovl{\pi_{\psi}(\A)|\rho\ra}\subseteq\cH_{\psi}$, where the completion is taken with respect to the inner product. This is a subspace in general since $|\rho\ra$ may not be a cyclic vector of $(\pi_{\psi},\cH_{\psi},|\psi\ra)$.
    
    Therefore, we end up with a cyclic subrepresentation on $\cK_{\rho}$, with
    \beq
    \la\rho|\pi_{\psi}(a)|\rho\ra=\rho(a),\quad\forall\,a\in\A
    \eeq
    By Remark.~\ref{remark:uniqueness_GNS}, this subrepresentation must be unitarily equivalent to the GNS representation of $\rho$.
\end{proof}

It is natural to ask whether a GNS representation can be decomposed into simpler constituents (\eg, irreducible representations). We emphasize that, in general, such a naive decomposition is \textbf{impossible}. Nevertheless, there exists a closely related structural result, which we describe below.

To this end, we need the notion of “simple” vN algebras, \ie the factors.
\begin{definition}\label{def:factor}
    A vN algebra $\cM_{\psi}$ is a factor if $\cM_{\psi}\cap\cM_{\psi}'=\bbC\cdot\id$. The state $\psi$ is called a factor state if $\cM_{\psi}$ is a factor.
\end{definition}
The physical importance of factor states can be seen from
\begin{lemma}[Theorem 2.6.10 of Ref.~\cite{bratteli2013operator1}]\label{lemma:clustering}
    Given a state $\psi\in\cS(\A)$, then
    the following two conditions are equivalent:
    \begin{enumerate}
        \item The state $\psi$ is a factor state.
        \item The state $\psi$ is clustering, \ie $\forall\,a\in\A^{\ell},\,\epsilon>0$, there exists a finite region $\Gamma$ (depending on $a$ and $\epsilon$), such that 
        \beq
        |\psi(ab)-\psi(a)\psi(b)|<\epsilon\|a\|\cdot\|b\|,\quad\forall b\in\A_{\Gamma^{c}}
        \eeq
        where $\Gamma^{c}:=\Lambda\setminus\Gamma$ is the complement.
    \end{enumerate}
\end{lemma}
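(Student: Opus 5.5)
The equivalence is between factoriality of $\cM_{\psi}$, i.e.\ $\cM_{\psi}\cap\cM_{\psi}'=\bbC\cdot 1$, and the clustering property. I will prove the two implications separately. In both directions the central tool is the \emph{algebra at infinity}
\[
\mathfrak{Z}_\infty:=\bigcap_{\Gamma\ \mathrm{finite}}\pi_{\psi}(\A_{\Gamma^{c}})'' .
\]
Every local operator $a\in\A_{\Gamma}$ commutes with $\A_{\Gamma^{c}}$, so $\mathfrak{Z}_\infty$ commutes with all of $\pi_\psi(\A^{\ell})$ and hence with $\cM_\psi$. This gives $\mathfrak{Z}_\infty\subseteq\cM_\psi\cap\cM_\psi'$, which is the center.

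For (1)$\Rightarrow$(2), I argue by contradiction. Suppose clustering fails for some $a\in\A^{\ell}$ and $\epsilon>0$. Choose an increasing exhausting sequence $\Gamma_n$ and, for each $n$, some $b_n\in\A_{\Gamma_n^{c}}$ with $\|b_n\|=1$ and
\[
|\psi(ab_n)-\psi(a)\psi(b_n)|\geqslant\epsilon\|a\|.
\]
By Lemma~\ref{lemma:WOTcompact}, the operators $\pi_\psi(b_n)$ have a WOT cluster point $z$, and I pass to a subnet. For each fixed $m$, the tail of the net lies in the WOT-closed algebra $\pi_\psi(\A_{\Gamma_m^{c}})''$, so $z\in\mathfrak{Z}_\infty$. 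Factoriality then forces $z=\lambda 1$. Taking the limit of $\psi(ab_n)=\la\psi|\pi_\psi(a)\pi_\psi(b_n)|\psi\ra$ along the subnet gives $\lambda\psi(a)$, and $\psi(b_n)\to\lambda$. This contradicts the lower bound $\epsilon\|a\|>0$. (If $a$ is a scalar the bound is trivially satisfied, so I may assume $a$ is not a multiple of $1$.)

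For (2)$\Rightarrow$(1), the first step is to show that the center lies inside $\mathfrak{Z}_\infty$. Let $z\in\cM_\psi\cap\cM_\psi'$ and fix a finite region $\Gamma$. Since $\A_\Gamma$ is a full matrix algebra,
\[
\cM_\psi=\pi_\psi(\A_\Gamma)\otimes\pi_\psi(\A_{\Gamma^c})'' ,
\]
and the relative commutant of $\pi_\psi(\A_\Gamma)$ in $\cM_\psi$ is $\pi_\psi(\A_{\Gamma^c})''$. The element $z$ commutes with $\pi_\psi(\A_\Gamma)$, so $z\in\pi_\psi(\A_{\Gamma^c})''$. As $\Gamma$ was arbitrary, $z\in\mathfrak{Z}_\infty$.

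The second step is to show that clustering makes $\mathfrak{Z}_\infty$ trivial. Given $z\in\mathfrak{Z}_\infty$, fix $a$ and $\epsilon$, and take $\Gamma$ from the clustering property. By Kaplansky density, $z$ is a strong limit of $\pi_\psi(b_k)$ with $b_k\in\A_{\Gamma^c}$ and $\|b_k\|\leqslant\|z\|$. Passing to the limit in the clustering inequality gives
\[
|\la\psi|\pi_\psi(a)z|\psi\ra-\psi(a)\la\psi|z|\psi\ra|\leqslant\epsilon\|a\|\|z\|.
\]
Since $\epsilon$ is arbitrary, $\la\psi|\pi_\psi(a)\,(z-\la\psi|z|\psi\ra)|\psi\ra=0$ for all $a\in\A^{\ell}$. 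Hence $(z-\la\psi|z|\psi\ra)|\psi\ra$ is orthogonal to the dense set $\pi_\psi(\A^{\ell})|\psi\ra$ and must vanish. Because $z$ is central, it commutes with $\pi_\psi(a)$, so $z$ acts as the same scalar on the dense set $\pi_\psi(\A)|\psi\ra$. Therefore $z$ is a scalar.

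The step I expect to be the main obstacle is the identification of the relative commutant of $\pi_\psi(\A_\Gamma)$ in $\cM_\psi$ with $\pi_\psi(\A_{\Gamma^c})''$. I would handle it with matrix units $e_{ij}$ of $\A_\Gamma$: an element $x$ commuting with all $e_{ij}$ can be written as
\[
x=\sum_i e_{i1}\,x\,e_{1i} ,
\]
and one then checks that $\sum_k e_{k1}x e_{1k}$ lies in the commutant of $\A_\Gamma$ together with $\A_{\Gamma^c}''$ using the tensor factorization. A secondary point needing care is using nets rather than sequences in the WOT compactness argument, since the GNS space need not be separable.
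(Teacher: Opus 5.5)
Your proof is correct and is essentially the standard argument behind the Bratteli--Robinson theorem that the paper cites without proof: the algebra at infinity always lies in the center, coincides with it for spin systems because $\cM_\psi=\pi_\psi(\A_\Gamma)\otimes\pi_\psi(\A_{\Gamma^c})''$ for finite $\Gamma$, and is trivial exactly when $\psi$ clusters. A few small remarks. The clustering inequality has to be read with $\leqslant$ (or restricted to $b\neq 0$), since the strict inequality fails at $b=0$; your normalization $\|b_n\|=1$ already assumes this reading. Also, because $\A$ is separable and simple, $\cH_\psi$ is separable, so sequences suffice. Finally, $\pi_\psi$ is isometric, which is what justifies the bound $\|b_k\|\leqslant\|z\|$ you take from Kaplansky density.
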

This is one of the most surprising results in the algebraic approach to quantum many-body physics. There is a generalization of Lemma \ref{lemma:unitary_equivalence} for factor states.
\begin{lemma}[Corollary 2.6.11 of Ref.~\cite{bratteli2013operator1}]\label{lemma:quasi_equiv}
    For two \textbf{factor} states $\psi_{1},\psi_{2}$, the following two statements are equivalent:
    \begin{enumerate}
        \item There is a quasi-equivalence $\psi_{1}\sim\psi_{2}$.
        \item $\forall\,\epsilon>0$, there exists a finite region $\Gamma_{\epsilon}\subseteq\Lambda$, such that
        \beq
        \|(\psi_{1}-\psi_{2})|_{\Gamma_{\epsilon}^{c}}\|<\epsilon
        \eeq
        This means they become coincident when approaching the spatial infinity.
    \end{enumerate}
    However, we note that the Kadison transitivity is not true for quasi-equivalence, \ie there may not be a unitary $u\in\A$ such that $\psi_{1}=\psi_{2}\circ\Ad_{u}$.
\end{lemma}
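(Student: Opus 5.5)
The plan is to treat the two directions separately, using the clustering characterization of factor states (Lemma~\ref{lemma:clustering}) as the main tool. The key feature of that lemma is that the clustering estimate is uniform in $b$ once $a$ is fixed. For (1)$\Rightarrow$(2) I would argue directly with normal states. For (2)$\Rightarrow$(1) I would pass through the mixture $\omega:=\tfrac12(\psi_1+\psi_2)$.

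\textbf{(1)$\Rightarrow$(2).} Let $f:\cM_1\to\cM_2$ be the isomorphism of Def.~\ref{def:quasi-equivalence}.
\begin{enumerate}
\item A $*$-isomorphism of von Neumann algebras is automatically normal. Hence $\varphi:=\la\psi_2|f(\cdot)|\psi_2\ra$ is a normal state on $\cM_1\subseteq\B(\cH_1)$, and it satisfies $\varphi(\pi_1(a))=\psi_2(a)$.
\item Every normal state on $\cM_1$ is a countable sum of vector functionals, so $\varphi(x)=\sum_n\la\xi_n|x|\xi_n\ra$ with $\sum_n\|\xi_n\|^2=1$.
\item Given $\epsilon>0$, truncate the sum to finitely many terms. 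Since $|\psi_1\ra$ is cyclic and $\A^\ell$ is norm dense, approximate each remaining $\xi_n$ by $\pi_1(c_n)|\psi_1\ra$ with $c_n\in\A^\ell$. This gives
\beq
\|\psi_2-\textstyle\sum_{n\le N}\psi_1(c_n^*\,\cdot\,c_n)\|<\epsilon
\eeq
together with $|\sum_{n\le N}\psi_1(c_n^*c_n)-1|<\epsilon$.
\item Choose a finite $\Gamma$ containing the supports of all $c_n$. For $b\in\A_{\Gamma^c}$, locality gives $c_n^*bc_n=c_n^*c_n\,b$.
\item Apply Lemma~\ref{lemma:clustering} to the finitely many fixed operators $a_n=c_n^*c_n$, enlarging $\Gamma$ as needed. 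This yields $\psi_1(c_n^*c_nb)\approx\psi_1(c_n^*c_n)\psi_1(b)$ with error at most $\epsilon'\|b\|/N$ for each $n$.
\item Summing over $n$ gives $|\psi_2(b)-\psi_1(b)|\le C\epsilon\|b\|$ for all $b\in\A_{\Gamma^c}$. This is exactly the norm statement in (2).
\end{enumerate}
Only the factor property of $\psi_1$ is used in this direction.

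\textbf{(2)$\Rightarrow$(1).} First show that $\omega$ is clustering. Fix $a\in\A^\ell$ and pick $\Gamma$ such that three things hold on $\A_{\Gamma^c}$: $\psi_1$ clusters with respect to $a$, $\psi_2$ clusters with respect to $a$, and $\|(\psi_1-\psi_2)|_{\Gamma^c}\|$ is small. Then expand, replacing $\psi_i(ab)$ by $\psi_i(a)\psi_i(b)$ and $\psi_2(b)$ by $\psi_1(b)$:
\beq
\omega(ab)-\omega(a)\omega(b)\approx \tfrac12\psi_1(b)\bigl(\psi_1(a)+\psi_2(a)\bigr)-\tfrac12\bigl(\psi_1(a)+\psi_2(a)\bigr)\psi_1(b)=0,
\eeq
with error $O(\epsilon)\|a\|\|b\|$. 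By Lemma~\ref{lemma:clustering}, $\omega$ is therefore a factor state.

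Next, $\psi_i\le2\omega$, so by Lemma~\ref{lemma:majorized} each $\pi_i$ is equivalent to a subrepresentation of $\pi_\omega$. The corresponding projections lie in $\pi_\omega(\A)'$, so the subrepresentations are given by nonzero projections $p_i\in\cM_\omega'$. Because $\cM_\omega$ is a factor, the central supports of $p_1$ and $p_2$ are both $1$. The standard fact that subrepresentations with equal central support are quasi-equivalent then gives $\psi_1\sim\omega\sim\psi_2$. Equivalently, one can use the dichotomy that two factor representations are either quasi-equivalent or disjoint: if they were disjoint, $\pi_\omega$ would split as a direct sum with a nontrivial central projection, contradicting that $\omega$ is a factor state.

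\textbf{Main obstacle.} I expect this last step to be the hard one: justifying cleanly, with only the tools stated so far, that a subrepresentation of a factor representation is quasi-equivalent to the whole. I would handle it via Lemma~\ref{lemma:quasi_eq_rep}. Any subrepresentation of $\pi_\omega$ disjoint from $\pi_i$ would produce a nontrivial central projection in $\cM_\omega$, which is impossible since $\cM_\omega$ is a factor. The approximation step in direction (1)$\Rightarrow$(2) is routine by comparison.
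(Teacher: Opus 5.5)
Your proof is correct. The paper gives no proof of this lemma; it imports it from Bratteli--Robinson. So there is no in-paper argument to compare against, and what you have is a valid self-contained proof built from tools the paper does record.

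In (1)$\Rightarrow$(2), each step is sound:
\begin{enumerate}
\item the normal state $\langle\psi_2|f(\cdot)|\psi_2\rangle$ on $\cM_1$ is a countable sum of vector states;
\item you truncate the sum and approximate each vector by $\pi_1(c_n)|\psi_1\rangle$ with $c_n$ local;
\item you apply the clustering of $\psi_1$ to the finitely many operators $c_n^*c_n$, choosing the clustering tolerance small relative to $N\max_n\|c_n\|^2$.
\end{enumerate}
You are right that only the factor property of $\psi_1$ is used there.

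In (2)$\Rightarrow$(1), your estimate that $\omega=\tfrac12(\psi_1+\psi_2)$ clusters is correct. It needs both $\psi_i$ to cluster together with their asymptotic agreement, and it gives that $\omega$ is a factor state.

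The step you flag as the main obstacle is not really one. It is exactly Lemma~\ref{lemma:factor_quasieq}, which the paper states in the same section: a positive functional majorized by a factor state has a quasi-equivalent GNS representation. So $\tfrac12\psi_i\le\omega$ gives $\psi_1\sim\omega\sim\psi_2$ at once. Your central-support argument is also a correct justification of that fact, as is the alternative via the dichotomy in Lemma~\ref{lemma:disjoint_quasieq}.

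The one part of the statement you do not address is the closing remark that Kadison transitivity fails for quasi-equivalence. This only needs a single example:
\begin{itemize}
\item Take distinct pure states $\omega_1\simeq\omega_2$ and let $\psi=\tfrac12(\omega_1+\omega_2)$. This state is mixed but quasi-equivalent to $\omega_1$ (Example~\ref{example:factors}). Since $\omega_1\circ\Ad_u$ is pure for every unitary $u\in\A$, it can never equal $\psi$.
\item Alternatively, take $\tau$ and the product state obtained from $\tau$ by replacing its marginal on one site with a non-tracial density matrix. They agree outside a finite region, so they are quasi-equivalent by the lemma itself. But $\tau\circ\Ad_u=\tau$ for all unitaries $u$, so no $u$ relates them.
\end{itemize}
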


There is a type classification for vN factors, see \eg Ref.~\cite{Blackadar:2006OA}. We briefly recall it below.
\begin{definition}
    Given a vN factor $\cM_{\psi}$, we call it:
    \begin{enumerate}
        \item type $I_{n}$ if $\cM_{\psi}\simeq \B(\cH)$ for some Hilbert space $\cH$ with $\dim_{\bbC}\cH=n\in\mathbb{N}\cup\{\infty\}$.
        \item type $II$ if it is not type $I$ and admits a trace\footnote{A trace $\tr$ on $\cM_{\psi}$ is a positive linear functional on $\cM_{\psi}$ whose value can diverge, satisfying $\tr(a^* a)=\tr(a a^*)$ for any $a\in\cM_{\psi}$. It is called finite if it values in $\R$ and called semi-finite if it is finite only on a nontrivial subset of $\cM_{\psi}$.} (finite or semi-finite).
        \item type $III$ if it is neither type $I$ nor type $II$. 
    \end{enumerate}
    The type of a factor state  $\psi$ and its GNS representation is defined via the type of $\cM_{\psi}$.
\end{definition}

\begin{example}\label{example:factors}
We will not go into the details of this classification. Instead, we will provide some examples for each type.
\begin{enumerate}
    \item Any type $I$ factor $\psi$ can be decomposed into
    \beq
    \psi=\sum_{i=1}^{\infty}\lambda_{i}\psi_{i}
    \eeq
    where $\lambda_{i}\searrow0$ and add up to $1$, each $\psi_{i}$ is pure and $\psi_{i}\simeq\psi_{j}$ for any $i,j$. Equivalently, this means the GNS representation of $\psi$ decomposes into direct sum of equivalent irreducible representations. We also have $\psi\sim\psi_{i}$ for all $i$. For type I factors, the mixture is (approximately) localized in a finite region since it looks like a pure state near the spatial infinity.
    \item The maximally mixed state (a.k.a tracial state) $\tau$, which is uniquely defined by $\tau(ab)=\tau(ba)$ for all $a,b\in\A$. This state is of type $II$.
    \item A finite-temperature state (a.k.a KMS state) is type $III$.
    \item Let $\psi$ be the unique ground state of the spin-$1/2$ anti-ferromagnetic Heisenberg chain. Then its half-chain restriction $\psi|_{\geqslant0}$ is type $III$ \cite{KEYL_2006,KEYL_2008}.
\end{enumerate}
\end{example}
From the last example above, we see that the type of a factor can serve as a measure of entanglement. This is indeed the case; see \eg Ref.~\cite{KEYL_2006,KEYL_2008,matsui2011boundedness} for related discussions. We will return to this point when we discuss the so-called split property.

Mathematically, the relationship between factor states is particularly simple as indicated by the following lemmas.

\begin{lemma}[Proposition 10.3.2 of Ref.~\cite{kadison1997fundamentals2}]\label{lemma:disjoint_quasieq}
    Let $\psi_{1},\psi_{2}$ be two factor states, then they are either disjoint or quasi-equivalent.
\end{lemma}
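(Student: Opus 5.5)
The plan is to show that if two factor states $\psi_{1},\psi_{2}$ are \emph{not} disjoint, then they are quasi-equivalent. By Definition~\ref{def:quasi-equivalence}, non-disjointness means there are nonzero subrepresentations $\pi_{1}|_{\cK_{1}}$ of $\pi_{1}$ and $\pi_{2}|_{\cK_{2}}$ of $\pi_{2}$ which are quasi-equivalent. The key input is that a factor representation has only one quasi-equivalence class of subrepresentation. We then conclude with the characterization of quasi-equivalence in Lemma~\ref{lemma:quasi_eq_rep}.

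The first step is the structural claim: if $\cM_{1}=\pi_{1}(\A)''$ is a factor, then every nonzero subrepresentation of $\pi_{1}$ is quasi-equivalent to $\pi_{1}$ itself. Invariant subspaces $\cK\subseteq\cH_{1}$ correspond to projections $P\in\pi_{1}(\A)'=\cM_{1}'$. The map $\cM_{1}\to\cM_{1}P$, $x\mapsto xP$, is a surjective normal $*$-homomorphism. Its kernel is $\cM_{1}(1-c(P))$, where $c(P)\in\cM_{1}\cap\cM_{1}'$ is the central support of $P$. Since $\cM_{1}$ is a factor and $P\neq0$, we get $c(P)=1$, so the map is injective. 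It is therefore an isomorphism $\cM_{1}\cong(\pi_{1}|_{\cK})(\A)''$ sending $\pi_{1}(a)\mapsto\pi_{1}(a)P$, which is exactly a quasi-equivalence $\pi_{1}\sim\pi_{1}|_{\cK}$. The central support fact is standard: $c(P)$ is the projection onto $\overline{\cM_{1}P\cH_{1}}$, and it lies in the center.

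Next, since quasi-equivalence of representations is transitive (compose the vN isomorphisms), non-disjointness gives the chain
\[
\pi_{1}\sim\pi_{1}|_{\cK_{1}}\sim\pi_{2}|_{\cK_{2}}\sim\pi_{2},
\]
and hence $\psi_{1}\sim\psi_{2}$. Alternatively, one can verify the criterion of Lemma~\ref{lemma:quasi_eq_rep} directly. Any subrepresentation of $\pi_{1}$ is quasi-equivalent to $\pi_{1}$, hence to $\pi_{2}$, and so it cannot be disjoint from $\pi_{2}$; the same argument works with the roles of $\pi_{1}$ and $\pi_{2}$ swapped. The dichotomy follows because the two alternatives are exhaustive by definition.

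The main obstacle is the first step, namely that the cut-down by a projection in the commutant is injective on a factor. This requires the central support argument above. One must check that $\ker(x\mapsto xP)$ is a WOT-closed two-sided ideal of $\cM_{1}$, hence of the form $\cM_{1}z$ for a central projection $z$. In a factor, $z\in\{0,1\}$, and $z=1$ would force $P=0$. Everything else is bookkeeping with Definition~\ref{def:quasi-equivalence}.
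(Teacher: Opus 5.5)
Your proof is correct. The paper gives no argument here beyond the citation to Kadison--Ringrose, and the standard proof behind that citation is organized differently from yours:
\begin{itemize}
\item One forms $\pi_1\oplus\pi_2$ and takes the summand projections $E_1,E_2\in(\pi_1\oplus\pi_2)(\A)'$.
\item Disjointness is characterized by $C_{E_1}C_{E_2}=0$ and quasi-equivalence by $C_{E_1}=C_{E_2}$, where the $C_{E_j}$ are central carriers in $(\pi_1\oplus\pi_2)(\A)''$.
\item For factor representations each $C_{E_j}$ is a minimal central projection, so the two are either orthogonal or equal.
\end{itemize}
You instead run the central-support argument inside each $\cM_j$ separately. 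This amounts to proving the paper's Lemma~\ref{lemma:factor_quasieq} rather than citing it. You then chain through the quasi-equivalent pair of nonzero subrepresentations that non-disjointness supplies.

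Your route is shorter and matches the paper's definitions directly. It also survives if disjointness is defined via unitarily equivalent subrepresentations, since unitary equivalence implies quasi-equivalence. The direct-sum route gives more: intrinsic criteria for arbitrary representations, from which both this dichotomy and Lemma~\ref{lemma:quasi_eq_rep} follow.

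Two small points. First, identifying $\{xP|_{\cK}:x\in\cM_1\}$ with $(\pi_1|_{\cK})(\A)''$ uses the standard fact that this reduced algebra is weak-operator closed; that deserves a citation. Second, your ``alternative'' check via Lemma~\ref{lemma:quasi_eq_rep} is not independent, since ``hence to $\pi_2$'' already uses $\pi_1\sim\pi_2$ from the chain, so it can be dropped.
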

\begin{lemma}[Theorem 7.3.6 and Proposition 10.3.2 of Ref.~\cite{kadison1997fundamentals2}]\label{lemma:factor_quasieq}
    If $\pi$ is a factor representation of $\A$, then any subrerepsentation of $\pi$ is quasi-equivalent to itself.

    Equivalently, let $\psi$ be a factor state and $\rho$ is a positive linear functional such that $\rho\leqslant\psi$ (\ie $\rho$ is majorized by $\psi$), then their GNS representations are quasi-equivalent.
\end{lemma}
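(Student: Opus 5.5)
The plan is to reduce the second formulation to the first and then prove the first from the structure of factor von Neumann algebras. Let $\rho\leqslant\psi$ with $\psi$ a factor state. By Lemma~\ref{lemma:majorized}, there is a positive contraction $T\in\pi_{\psi}(\A)'$ such that $\rho(a)=\la\psi|T\pi_{\psi}(a)|\psi\ra$. Moreover, the GNS representation of $\rho$ is unitarily equivalent to the subrepresentation of $\pi_{\psi}$ on $\cK_{\rho}=\ovl{\pi_{\psi}(\A)T^{1/2}|\psi\ra}$. Since unitary equivalence implies quasi-equivalence, the second statement follows once we show that every subrepresentation of the factor representation $\pi=\pi_{\psi}$ is quasi-equivalent to $\pi$.

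For the first statement, let $\cK\subseteq\cH$ be a closed invariant subspace and let $P$ be the orthogonal projection onto it. Invariance gives $P\in\pi(\A)'=\cM'$, and $P\neq 0$. The subrepresentation is $\pi_P(a)=P\pi(a)|_{\cK}$, so its von Neumann algebra is $\cM P$ acting on $P\cH$. We want an isomorphism $f:\cM\to\cM P$, $x\mapsto xP$, and we need to check that it is a $*$-isomorphism of von Neumann algebras. Since $P$ commutes with $\cM$, $f$ is a normal $*$-homomorphism, and its image $\cM P$ is WOT-closed; it coincides with $(\pi_P(\A))''$ on $P\cH$ by the double commutant theorem (Theorem~\ref{theorem:double_commutant}) applied to the reduced algebra. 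The key step is injectivity. The kernel $\{x\in\cM: xP=0\}$ is a WOT-closed two-sided ideal of $\cM$, so it equals $\cM(1-z)$ for a central projection $z\in\cM\cap\cM'$. Because $\cM$ is a factor, $z\in\{0,1\}$.

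It remains to rule out $z=0$, which would mean the kernel is all of $\cM$. That case forces $1\cdot P=0$, contradicting $P\neq0$. Hence the kernel is trivial, and $f$ is an isomorphism with $f(\pi(a))=\pi_P(a)$, which is exactly quasi-equivalence in the sense of Def.~\ref{def:quasi-equivalence}.

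The main obstacle is the identification of the kernel with a central cutdown. If I want to avoid quoting the structure of weakly closed ideals, I would argue directly with the central support. Let $z$ be the projection onto $\ovl{\cM P\cH}$. This subspace is invariant under $\cM$ and under $\cM'$, because $P\in\cM'$ and $\cM'$ commutes with $\cM$. So $z\in\cM\cap\cM'=\bbC$, hence $z=1$, and therefore $\cM P\cH$ is dense in $\cH$. Now if $xP=0$, then $x\,yP\xi=yxP\xi=0$ for all $y\in\cM'$? That is not quite right, since $y\in\cM$ need not commute with $x$. Instead use the density of $\cM'P\cH$: the same invariance argument shows its closure is also projected onto by a central element, so it equals $\cH$. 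Then $xP=0$ gives $x y'P\xi=y'xP\xi=0$ for every $y'\in\cM'$, so $x$ vanishes on a dense set and $x=0$. This self-contained route is the version I would write.
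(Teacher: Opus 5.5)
Your argument is correct in substance. The paper gives no proof of its own; it only cites Kadison--Ringrose, and the argument behind those results is the central-carrier one you give. For $P\in\cM'$, the map $x\mapsto xP$ has kernel $\cM(1-C_P)$, where $C_P$ is the central carrier of $P$, and $C_P=1$ in a factor. Your reduction of the $\rho\leqslant\psi$ formulation via Lemma~\ref{lemma:majorized} is fine. So is your first injectivity route, which uses the fact that a weakly closed two-sided ideal has the form $\cM e$ with $e$ a central projection.

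The paragraph you say you would actually write contains a false claim, and you should delete it rather than patch it. Since $P\in\cM'$, you have $xP\xi=Px\xi$, so $\cM P\cH=P\cH$ exactly. Its projection is $P$ itself. This subspace is $\cM'$-invariant only if $P$ is central, i.e. $P\in\{0,1\}$, because some $y'\in\cM'$ fail to commute with $P$. So $\cM P\cH$ is not dense unless $P=1$. The error is already in the density claim, not only in the later commutation step you flagged.

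The replacement with the closed linear span of $\cM'P\cH$ is correct, but state its $\cM$-invariance explicitly. For $x\in\cM$ and $y'\in\cM'$ you have $x\,y'P\xi=y'xP\xi=y'P\,x\xi$, which lies in $\cM'P\cH$. The projection onto this span is therefore central. It is nonzero because it dominates $P\neq 0$, so it equals $1$. Then $xP=0$ gives $x\,y'P\xi=y'xP\xi=0$ on a dense set, hence $x=0$.

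Finally, your identification of $\cM P$ with $(\pi_P(\A))''$ rests on the quoted fact that the range of a normal $*$-homomorphism is WOT-closed. To make that self-contained as well, prove injectivity first, so the map is isometric. The image of the unit ball of $\cM$ is then WOT-compact by Lemma~\ref{lemma:WOTcompact}, hence WOT-closed, and it contains the unit ball of $\pi_P(\A)$. By Kaplansky density it therefore contains the unit ball of $(\pi_P(\A))''$. The reverse inclusion follows from WOT-continuity of $x\mapsto xP$.
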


We are ready for the decomposition theorem of von Neumann algebras.
\begin{lemma}[Theorem III.1.4.7 of Ref.~\cite{Blackadar:2006OA}]
    Any von Neumann algebra $\cM_{\psi}$ can be uniquely written as:
    \beq
    \cM_{\psi}=\cM_{I}\oplus\cM_{II}\oplus \cM_{III}
    \eeq
    where each $\cM_{\mathfrak{i}}$ is a direct integral\footnote{This is a continuous version of direct sum.} of type $\mathfrak{i}$ factors, $\mathfrak{i}= I,II,III$.
\end{lemma}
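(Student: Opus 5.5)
The statement is the type decomposition $\cM_{\psi}=\cM_{I}\oplus\cM_{II}\oplus\cM_{III}$. I would prove it by cutting $\cM_{\psi}$ with central projections, and not by attacking the direct-integral picture first. Let $Z:=\cM_{\psi}\cap\cM_{\psi}'$ be the center. The plan is to build three mutually orthogonal central projections $z_{I},z_{II},z_{III}\in Z$ with $z_{I}+z_{II}+z_{III}=1$. Each summand is then $\cM_{\mathfrak{i}}:=z_{\mathfrak{i}}\cM_{\psi}$, acting on $z_{\mathfrak{i}}\cH_{\psi}$. The first step is the algebraic observation that any central projection $z$ gives a direct sum $\cM_{\psi}=z\cM_{\psi}\oplus(1-z)\cM_{\psi}$ of von Neumann algebras. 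This holds because $x\mapsto(zx,(1-z)x)$ is a $*$-isomorphism, and each corner $z\cM_{\psi}$ is WOT-closed by Theorem~\ref{theorem:double_commutant}.

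To construct the projections I would use the Murray--von Neumann comparison of projections. Call $p\in\cM_{\psi}$ \emph{abelian} if $p\cM_{\psi}p$ is commutative, \emph{finite} if $p\sim q\leqslant p$ forces $q=p$, and \emph{semifinite} if it is a sup of finite projections.
\begin{itemize}
\item Let $z_{I}$ be the supremum of the central supports of all abelian projections.
\item Let $z_{I}+z_{II}$ be the supremum of the central supports of all finite projections.
\item Set $z_{III}:=1-z_{I}-z_{II}$.
\end{itemize}
These suprema exist in $Z$ because $Z$ is an abelian von Neumann algebra, so its projection lattice is complete. One then checks that on each corner the algebra has the defining property, and none of the others, uniformly in every central direction. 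For example, $z_{III}\cM_{\psi}$ has no nonzero finite projection, and $z_{II}\cM_{\psi}$ has no nonzero abelian projection but is semifinite.

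Uniqueness follows from maximality. If $\cM_{\psi}=\bigoplus_{\mathfrak{i}}\cM'_{\mathfrak{i}}$ is another such decomposition, its summands are cut out by central projections $z'_{\mathfrak{i}}$. Maximality of $z_{I}$ gives $z'_{I}\leqslant z_{I}$. Conversely, $z_{I}z'_{II}=0$ because a type $II$ piece has no abelian projections. Repeating this argument for each type forces $z'_{\mathfrak{i}}=z_{\mathfrak{i}}$.

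The last step is to match the corners with the factor-type definitions given in the paper, which are phrased only for factors (type $I_{n}$ means $\B(\cH)$, type $II$ means a trace exists, type $III$ is the rest). For this I would use reduction theory. Here $\cH_{\psi}$ is separable because $\A$ is separable (UHF) and $|\psi\ra$ is cyclic. So $Z\cong L^{\infty}(X,\mu)$, and $\cM_{\psi}=\int_{X}^{\oplus}\cM_{x}\,\rd\mu(x)$ with $\cM_{x}$ factors for almost every $x$. A central projection $z=\chi_{E}$ corresponds to the direct integral over $E$. The property ``has an abelian (respectively finite) projection of central support $1$'' passes to almost every fiber and back, so the fibers of $z_{\mathfrak{i}}\cM_{\psi}$ are almost all of type $\mathfrak{i}$. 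For the type $II$ fibers, a faithful normal semifinite trace on $z_{II}\cM_{\psi}$ disintegrates into traces on almost every fiber.

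I expect this measurable-fiber step to be the main obstacle. Showing that comparison-theoretic properties commute with the direct integral needs measurable selections of projections and partial isometries. In practice I would take it from standard reduction theory, as in Blackadar III.1.4 and Kadison--Ringrose Chapter 14, rather than redo it.
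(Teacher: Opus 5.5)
Your proposal is correct and reconstructs the standard argument behind the result the paper only cites (Blackadar III.1.4.7): the summands are cut out by the central projections $z_{I}$ (the supremum of central supports of abelian projections) and $z_{I}+z_{II}$ (the supremum of central supports of finite projections), and uniqueness follows from maximality. The direct-integral refinement then comes from reduction theory over the center, which, as you correctly note, applies here because $\cH_{\psi}$ is separable; the paper gives no proof of its own beyond the citation.
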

As a corollary,
\begin{corollary}[Theorem III.5.1.7 of Ref.~\cite{Blackadar:2006OA}]
    Any representation $\pi$ of $\A$ can be decomposed as
    \beq
    \pi=\pi_{I}\oplus \pi_{II}\oplus \pi_{III}
    \eeq
    where $\pi_{\mathfrak{i}}$ is a direct integral of type $\mathfrak{i}$ factor representations, $\mathfrak{i}=I,II,III$.
\end{corollary}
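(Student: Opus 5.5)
The approach is to transport the decomposition of the von Neumann algebra $\cM:=\pi(\A)''$ in the preceding lemma to a decomposition of the representation $\pi$ itself. Two facts about the center $Z(\cM)=\cM\cap\cM'$ make this possible. Central projections commute with $\pi(\A)$, so they cut out invariant subspaces. Disintegrating over the center produces factor representations.

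\emph{Step 1: the direct sum.} The decomposition $\cM=\cM_{I}\oplus\cM_{II}\oplus\cM_{III}$ is implemented by mutually orthogonal central projections $p_{I},p_{II},p_{III}\in Z(\cM)$ with $p_{I}+p_{II}+p_{III}=1$ and $\cM_{\mathfrak{i}}=\cM p_{\mathfrak{i}}$.
\begin{itemize}
\item Each $p_{\mathfrak{i}}$ lies in $\cM'=\pi(\A)'$, so $\cH_{\mathfrak{i}}:=p_{\mathfrak{i}}\cH$ is invariant in the sense of Def.~\ref{def:subrep}.
\item Setting $\pi_{\mathfrak{i}}:=\pi(\cdot)|_{\cH_{\mathfrak{i}}}$, we get $\pi=\pi_{I}\oplus\pi_{II}\oplus\pi_{III}$.
\item We then check that $\pi_{\mathfrak{i}}(\A)''$, computed in $\B(\cH_{\mathfrak{i}})$, equals $\cM p_{\mathfrak{i}}$. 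This uses the standard reduction identities $(\cM p)'=\cM' p$ and $(\cM p)''=\cM p$ on $p\cH$ for a projection $p\in\cM'$, together with the double commutant theorem (Theorem~\ref{theorem:double_commutant}).
\end{itemize}
Hence $\pi_{\mathfrak{i}}(\A)''$ is exactly the type-$\mathfrak{i}$ summand.

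\emph{Step 2: disintegration into factors.} For each $\mathfrak{i}$, apply the central decomposition of $\pi_{\mathfrak{i}}$ with respect to the abelian von Neumann algebra $Z(\cM p_{\mathfrak{i}})$. This requires $\cH_{\mathfrak{i}}$ to be separable. That holds for cyclic representations, in particular GNS representations, because $\A$ is a separable UHF algebra when $\Lambda$ is countable. For a general $\pi$, I would either assume separability or first split $\pi$ into a direct sum of cyclic pieces, each of which inherits the central decomposition.

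The disintegration gives a standard measure space $(Z,\mu)$ and a measurable field of representations $\zeta\mapsto\pi_{\mathfrak{i},\zeta}$ such that
\[
\pi_{\mathfrak{i}}=\int_{Z}^{\oplus}\pi_{\mathfrak{i},\zeta}\,\rd\mu(\zeta),
\]
with $\cM p_{\mathfrak{i}}=\int^{\oplus}\cM_{\zeta}\,\rd\mu(\zeta)$ and $\cM_{\zeta}=\pi_{\mathfrak{i},\zeta}(\A)''$ a factor for $\mu$-almost every $\zeta$. That the fibres are a.e. factors is the standard consequence of disintegrating over the full center.

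\emph{Step 3: the fibres have the right type.} This is the main obstacle. We must show that if $\cM p_{\mathfrak{i}}$ is of type $\mathfrak{i}$, then almost every fibre $\cM_{\zeta}$ is a factor of type $\mathfrak{i}$. I would not reprove this from scratch but invoke the known compatibility of type with direct integrals (Dixmier; Blackadar III.1.6). The underlying argument runs as follows.
\begin{itemize}
\item Each type is characterized by projection data: the existence of a faithful family of abelian projections for type I, finite or semifinite traces and the absence of abelian projections for type II, and the absence of any semifinite normal trace for type III.
\item Such data can be chosen measurably in $\zeta$ and decomposes fibrewise.
\item The type-$\mathfrak{i}$ fibres then form a measurable set $E_{\mathfrak{i}}$. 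Its complement corresponds to a central projection of $\cM p_{\mathfrak{i}}$ whose range carries no type-$\mathfrak{i}$ part.
\item By the uniqueness in the preceding lemma, that central projection must vanish, so $\mu(Z\setminus E_{\mathfrak{i}})=0$.
\end{itemize}
Combining the three steps gives $\pi=\pi_{I}\oplus\pi_{II}\oplus\pi_{III}$, with each $\pi_{\mathfrak{i}}$ a direct integral of type-$\mathfrak{i}$ factor representations.
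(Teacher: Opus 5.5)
Your proposal is correct and takes the same route as the paper. The paper gives no separate proof: it gets the statement as a corollary of the preceding decomposition lemma applied to $\pi(\A)''$, which is your Step~1 (cutting $\pi$ down by the central projections $p_{\mathfrak i}$), and cites Blackadar's Theorem III.5.1.7 for the rest. Your Steps~2--3 unpack the direct-integral and type-of-fibre facts that the lemma as stated already provides, so Step~3 could simply be quoted from the lemma rather than re-argued.

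Where you go beyond the paper is separability, which it assumes without comment. For a non-separable $\pi$, your fallback yields a direct sum of such direct integrals. This works provided each cyclic piece is taken inside a fixed $\pi_{\mathfrak i}$: a subrepresentation on $q\cH$ with $q\in\cM'$ has von Neumann algebra $\cM q\cong\cM z(q)$, which is a central summand of $\cM p_{\mathfrak i}$ and hence again of type $\mathfrak i$. Note also that the piece should be decomposed over its own center $Z(\cM)q$, not over $Z(\cM)$.
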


\section{Strong symmetries and von Neumann (vN) symmetries of States}\label{sec:strong_sym}
In this section, we show how anomalous symmetries can constrain the mixed states with such symmetries. Given a (not necessarily pure or factor) state $\psi:\A\to\bbC$, consider $\alpha\in\Aut(\A)$ that leaves $\psi$ invariant. By the uniqueness of GNS representation (Remark.~\ref{remark:uniqueness_GNS}), there exists a unitary operator $U_{\alpha}\in\B(\cH_{\psi})$ such that
\beq
U_{\alpha}\pi_{\psi}(a)U_{\alpha}^{\dagger} = \pi_{\psi}(\alpha(a)),\quad\forall\,a\in\A
\eeq
This $U_{\alpha}$ is unique once we impose $U_{\alpha}|\psi\ra=|\psi\ra$ and we say $U_{\alpha}$ implements $\alpha$ on $\cH_{\psi}$.

Below we give several equivalent definitions for strong symmetries.
\begin{definition}[Strong symmetry]\label{def:strong_sym}
    The symmetry $\alpha$ is a \textbf{strong symmetry} of $\psi$ if the unitary implementation $U_{\alpha}\in \cM_{\psi}$, assuming $U_{\alpha}|\psi\ra=|\psi\ra$.
\end{definition}
This definition is quite general and concise, but it is not physically intuitive. Indeed, it is not obvious why Def.~\ref{def:strong_sym} is equivalent to Def.~\ref{def:strong_sym_main} at all. This equivalence is the content of Theorem \ref{thm:equi_def}.

Below, we provide another (equivalent) characterization of strong symmetries based on the purification map introduced by Woronowicz in \cite{Woronowicz:1972fr}. To this end, we introduce $\overline{\A}$, the complex conjugation of $\A$. Note that there is a canonical anti-linear automorphism $j$ on $\oA\otimes\A$ defined as:
\beq\label{eq:jpositive}
j(\ovl{b}\otimes a):=\ovl{a}\otimes b
\eeq
A state $\tilde{\psi}$ on $\oA\otimes\A$ is called \textbf{$j$-positive} if
\beq
\tilde{\psi}(\ovl{a}\otimes a)\geqslant0,\quad\forall\,a\in\A
\eeq
It is easy to show that any $j$-positive state is $j$-invariant. Besides, $\tilde{\psi}$ is called \textbf{exact} \footnote{This condition is often called Haag duality in the literature, our terminology follows Ref.~\cite{Woronowicz:1972fr}.}if 
\beq\label{eq:exact}
\tilde{\pi}(\ovl{1}\otimes \A)' = \tilde{\pi}(\ovl{\A}\otimes 1)''
\eeq
where $\tilde{\pi}$ is the GNS representation of $\tilde{\psi}$ and the commutants are taken inside $\B(\cH_{\tilde{\psi}})$.

\begin{lemma}[Theorem 1.1, 1.2 of \cite{Woronowicz:1972fr}]\label{lemma:purify}
    Let $\psi$ be a factor state of $\A$, then $\psi$ can be uniquely\footnote{The uniqueness is proved in Ref.~\cite{Woronowicz1973}.} purified into a $j$-positive exact pure state $\tilde{\psi}$ on $\ovl{\A}\otimes \A$, \ie $\tilde{\psi}|_{\A} =\psi$. Furthermore, two factor states are quasi-equivalent iff $\tilde{\psi}_{1}\simeq\tilde{\psi}_{2}$.
\end{lemma}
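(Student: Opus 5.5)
The statement is Woronowicz's theorem. I will sketch its proof via the standard form (Tomita--Takesaki) of the factor von Neumann algebra $\cM_\psi$.

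\emph{Existence.} Let $(\pi_\psi,\cH_\psi,|\psi\ra)$ be the GNS triple and $\cM_\psi=\pi_\psi(\A)''$. The vector $|\psi\ra$ is cyclic for $\cM_\psi$ but need not be separating. So I first replace it by its support: let $p\in\cM_\psi'$ be the projection onto $\overline{\cM_\psi'|\psi\ra}$. Then $|\psi\ra$ is cyclic and separating for $p\cM_\psi'$ acting on $p\cH_\psi$. Because $\cM_\psi$ is a factor, $x\mapsto xp$ is injective on $\cM_\psi$, so $\pi(a):=\pi_\psi(a)p$ is a representation whose generated von Neumann algebra is isomorphic to $\cM_\psi$ and for which the vector is cyclic and separating. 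Tomita--Takesaki then gives a modular conjugation $J$ and a natural positive cone $\mathcal P^\natural$ containing $|\psi\ra$, with $J\cM J=\cM'$.

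Define $\bar\pi(\ovl b):=J\pi(b)J$. This is linear in $\ovl b$ and lands in $\cM'$, so $\tilde\pi(\ovl b\otimes a):=J\pi(b)J\,\pi(a)$ is a representation of $\oA\otimes\A$. It extends to the spatial tensor product because $\A$ is UHF and the tensor product is unique.

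Set $\tilde\psi(x):=\la\psi|\tilde\pi(x)|\psi\ra$. Then:
\begin{itemize}
\item Restriction: $\tilde\psi|_{\A}=\psi$.
\item $j$-positivity: $\tilde\psi(\ovl a\otimes a)=\la\psi|J\pi(a)J\pi(a)|\psi\ra\geqslant0$, by the self-polar property of $\mathcal P^\natural$.
\item Exactness: $\tilde\pi(\ovl{1}\otimes\A)'=\cM'=J\cM J=\tilde\pi(\oA\otimes1)''$.
\item Purity: $\tilde\pi(\oA\otimes\A)''\supseteq\cM\vee\cM'$, and its commutant lies in $\cM'\cap\cM=\bbC$ since $\cM$ is a factor.
\end{itemize}
Purity then follows from Lemma~\ref{lemma:Schur}, provided $|\psi\ra$ is cyclic for $\tilde\pi$; this holds because it is cyclic for $\cM$.

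\emph{Uniqueness.} Take any $j$-positive exact pure purification $\tilde\psi'$ with GNS data $(\tilde\pi',\cH',\Omega)$, and put $\cN:=\tilde\pi'(\ovl1\otimes\A)''$. By exactness, $\cN'=\tilde\pi'(\oA\otimes1)''$. Purity of $\tilde\psi'$ gives $\cN\cap\cN'=\bbC$ and makes $\Omega$ cyclic for $\cN\vee\cN'$.

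I would show that $\Omega$ is cyclic and separating for $\cN$ restricted to a suitable subspace. The next step is to show that the antiunitary $J'\colon\tilde\pi'(\ovl b\otimes a)\Omega\mapsto\tilde\pi'(\ovl a\otimes b)\Omega$ induced by $j$ is well defined. That is what $j$-positivity gives: $j$-invariance of $\tilde\psi'$ makes $J'$ isometric. The cone generated by $\tilde\pi'(\ovl a\otimes a)\Omega$ is then self-dual. By the uniqueness of the standard form (Araki, Connes, Haagerup), the pair $(\cN,J',\text{cone})$ is unitarily equivalent to $(\cM,J,\mathcal P^\natural)$, and $\Omega$ is the unique cone representative of $\psi$. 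So $\tilde\psi'=\tilde\psi$.

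\emph{Main obstacle.} I expect the hardest step to be showing that $j$-positivity forces the cone to be exactly the natural cone, i.e.\ identifying the abstract purification with the standard form. If direct identification fails, I would fall back on the fidelity/transition-probability characterization: $\tilde\psi$ maximizes $\tilde\psi'(\ovl{\cdot}\otimes\cdot)$-type overlaps.

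\emph{Quasi-equivalence.} If $\psi_1\sim\psi_2$, the isomorphism $\cM_1\cong\cM_2$ maps standard forms to standard forms, so $\tilde\pi_1\simeq\tilde\pi_2$. Conversely, if $\tilde\psi_1\simeq\tilde\psi_2$, restricting the implementing unitary to the subalgebra generated by $\ovl1\otimes\A$ gives a $*$-isomorphism $\cM_{\psi_1}\cong\cM_{\psi_2}$ intertwining the $\pi_i$. The reason is that $\cM_{\psi_i}\cong\tilde\pi_i(\ovl1\otimes\A)''$, via the injectivity of $x\mapsto xp$ on a factor established above. Hence $\psi_1\sim\psi_2$.
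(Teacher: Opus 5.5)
The paper does not prove this lemma; it quotes Woronowicz (1972) for existence and the quasi-equivalence criterion, and Woronowicz (1973) for uniqueness. Your standard-form route is the natural one, but the existence step fails as written, and the uniqueness step stops short of its crux.

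\textbf{Existence.} The reduction to a cyclic and separating vector inside $\cH_\psi$ is wrong.
\begin{itemize}
\item The projection $p$ onto $\overline{\cM_\psi'|\psi\ra}$ commutes with $\cM_\psi'$, so $p\in\cM_\psi$, not $\cM_\psi'$. Then $a\mapsto\pi_\psi(a)p$ is not multiplicative.
\item If you insist on $p\in\cM_\psi'$ with $p|\psi\ra=|\psi\ra$, cyclicity of $|\psi\ra$ forces $p=1$. You would then be asserting that $|\psi\ra$ is separating for $\cM_\psi$, which holds only when the vector state of $|\psi\ra$ is faithful on $\cM_\psi$.
\item This fails for every pure state. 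There $\cM_\psi=\B(\cH_\psi)$ with $\dim\cH_\psi=\infty$, and the purification $\bar\psi\otimes\psi$ needs $\overline{\cH_\psi}\otimes\cH_\psi$, which no subspace of $\cH_\psi$ can carry.
\end{itemize}
The fix is to leave $\cH_\psi$.
\begin{itemize}
\item Take a standard form $(\cM,\mathcal H,J,\mathcal P^\natural)$ of the factor $\cM_\psi$. For instance, use the GNS space of a faithful normal state, which exists because $\cH_\psi$ is separable.
\item Let $\pi$ be $\pi_\psi$ followed by the (automatically normal) isomorphism $\cM_\psi\cong\cM$, and let $\eta\in\mathcal P^\natural$ be the unique cone vector representing $\psi$.
\item Set $\tilde\psi(\ovl b\otimes a):=\la\eta|J\pi(b)J\pi(a)|\eta\ra$.
\end{itemize}
Your restriction, $j$-positivity, exactness and irreducibility checks then go through unchanged. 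Cyclicity, however, must come from $\tilde\pi(\oA\otimes\A)''=\cM\vee\cM'=\B(\mathcal H)$. It cannot come from cyclicity of $\eta$ for $\cM$, since that generally fails; for pure $\psi$, $\eta$ is a product vector. Likewise, in the quasi-equivalence paragraph, drop the appeal to injectivity of $x\mapsto xp$. Use instead that $\tilde\pi_i|_{\ovl1\otimes\A}$ is $\pi_{\psi_i}$ composed with a normal isomorphism, or Lemma~\ref{lemma:purification}. With that change the paragraph is fine.

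\textbf{Uniqueness.} Making $\Omega$ cyclic and separating for $\mathcal N$ on a subspace has the same defect: for pure $\psi$, any extension is a product, so $\Omega$ is a product vector and is not separating for $\mathcal N$. It is also unnecessary. The target should be Haagerup's axioms for $(\mathcal N,\cH',J',\mathcal P')$ on the whole GNS space, with $\mathcal P'$ the closed convex cone generated by the vectors $\tilde\pi'(\ovl a\otimes a)\Omega$. Most of these axioms are easy, as you indicate:
\begin{itemize}
\item $J'$ is an antiunitary involution, by $j$-invariance.
\item $J'\mathcal N J'=\mathcal N'$, by exactness.
\item $\mathcal N$ is a factor, by purity and exactness.
\item $J'$ fixes $\mathcal P'$.
\item $xJ'xJ'\mathcal P'\subseteq\mathcal P'$ for $x\in\mathcal N$: first for $x=\tilde\pi'(\ovl1\otimes a)$, then by density.
\end{itemize}
What you have not shown is self-duality of $\mathcal P'$. $j$-positivity gives only $\la\tilde\pi'(\ovl a\otimes a)\Omega|\tilde\pi'(\ovl b\otimes b)\Omega\ra=\tilde\psi'(\ovl{a^*b}\otimes a^*b)\geqslant0$, i.e.\ $\mathcal P'$ lies inside its dual cone. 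The reverse inclusion is the substance of the uniqueness statement, which the paper takes from Woronowicz (1973). Until it is established, uniqueness of standard forms cannot be invoked, and the fidelity fallback you mention is not an argument.
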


\begin{remark}
    For finite-dimensional $C^*$-algebras, this purification map is nothing but the canonical purification. If $\psi$ is not a factor state, the construction still applies but the resulting state $\tilde{\psi}$ is no longer pure. In this more general case, we will call $\tilde{\psi}$ the canonically doubled state of $\psi$.
\end{remark}

\begin{remark}   
    Remarkably, we note that the canonical purification preserves clustering property for factor states. Following the spirit of Ref.~\cite{Li:2026qhc}, this can be seen as a consequence of the universal vanishing behavior of CMI in infinite systems as we have discussed in Lemma.~\ref{lemma:cmi_decay}. We remark that since pure states in infinite systems necessarily obey the cluster property, it follows that if a state obeys the cluster property (implying it is a factor state), then so does its canonical purification. One can view this result through the lens of the results of Ref.~\cite{Li:2026qhc}, in which it was argued that in finite systems the decay of correlations for the canonical purification is bounded by the slower of the decay of two-point correlations and of CMI. In infinite sytems, CMI \emph{always} decays (see Section \ref{sec:cmi_decay}), so the cluster property (decay of two-point correlations) is the only condition required to ensure the cluster property for the canonical purification.
\end{remark}

The canonical purification is one example of purification as is defined in Def.~\ref{def:extension} with $\cB= \oA$. 

\begin{lemma}[\cite{Woronowicz:1972fr}]
    Any cyclic representation of $\A$ which is quasi-equivalent to $\pi_{\psi}$ is a subrepresentation of $\pi_{\tilde{\psi}}$.
\end{lemma}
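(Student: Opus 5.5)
The plan is to realize $\pi_{\tilde\psi}$, restricted to $1\otimes\A$, as a \emph{standard form} of the von Neumann algebra $\cM_\psi$. Then I will use the standard-form fact that every normal state is a vector state. Write $\tilde\pi:=\pi_{\tilde\psi}$ and $\cM:=\tilde\pi(\ovl{1}\otimes\A)''\subseteq\B(\cH_{\tilde\psi})$. By exactness \eqref{eq:exact}, $\cM'=\tilde\pi(\ovl{\A}\otimes 1)''$.

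\textbf{Step 1: $|\tilde\psi\ra$ is cyclic and separating for $\cM$.} I will use the antiunitary $J$ on $\cH_{\tilde\psi}$ induced by the anti-linear automorphism $j$ of \eqref{eq:jpositive}. Since $\tilde\psi$ is $j$-positive, it is $j$-invariant, so $J|\tilde\psi\ra=|\tilde\psi\ra$, and $J$ exchanges $\tilde\pi(\ovl{1}\otimes\A)$ with $\tilde\pi(\ovl{\A}\otimes 1)$. Hence $J\cM J=\cM'$. Consequently $|\tilde\psi\ra$ is cyclic for $\cM$ if and only if it is cyclic for $\cM'$, i.e.\ if and only if it is separating for $\cM$. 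So it suffices to prove cyclicity for $\cM$. For this I will use the $j$-positivity inequality $\tilde\psi(\ovl{a}\otimes a)\geqslant 0$, which identifies the closure of the $\cM$-orbit of $|\tilde\psi\ra$ with a $J$-invariant $\cM\vee\cM'$-invariant subspace. Since $\cM\vee\cM'=\tilde\pi(\ovl\A\otimes\A)''$, this subspace is all of $\cH_{\tilde\psi}$ (using Woronowicz's construction, where in fact $\cH_{\tilde\psi}$ is built as the closure of $\cM|\tilde\psi\ra$). I expect this step to be the main technical point. If the direct argument is awkward, I would fall back on Woronowicz's explicit construction of $\tilde\psi$ from the natural cone of $(\cM_\psi,|\psi\ra)$.

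\textbf{Step 2: identify $\cM$ with $\cM_\psi$.} The cyclic subspace $\ovl{\tilde\pi(1\otimes\A)|\tilde\psi\ra}$ carries, by Remark~\ref{remark:uniqueness_GNS}, a copy of $\pi_\psi$, since $\tilde\psi|_{\A}=\psi$. By Step 1 this subspace is the whole space. So $\tilde\pi|_{1\otimes\A}\cong\pi_\psi$, and $(\cM,\cH_{\tilde\psi},J,\text{cone})$ is a standard form of $\cM_\psi$.

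\textbf{Step 3: realize the given representation as a vector state.} Let $(\pi,\cH,|\xi\ra)$ be cyclic and quasi-equivalent to $\pi_\psi$. Then there is a $*$-isomorphism $f:\pi(\A)''\to\cM$ with $f(\pi(a))=\tilde\pi(1\otimes a)$. This $f$ is automatically normal. Therefore $\varphi:=\la\xi|f^{-1}(\cdot)|\xi\ra$ is a normal state on $\cM$. Since $\cM$ is in standard form, $\varphi$ is a vector state: there is $|\eta\ra\in\cH_{\tilde\psi}$ (in the natural cone) with $\varphi(x)=\la\eta|x|\eta\ra$. Then on the invariant subspace $\cK:=\ovl{\tilde\pi(1\otimes\A)|\eta\ra}$ we have a cyclic subrepresentation with vector state $a\mapsto\la\xi|\pi(a)|\xi\ra$. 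By uniqueness of GNS (Remark~\ref{remark:uniqueness_GNS}), this subrepresentation is unitarily equivalent to $\pi$, which proves the claim.
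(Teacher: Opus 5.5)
The paper gives no proof of this lemma; it only cites Woronowicz. Your argument therefore has to stand on its own, and Steps~1--2 do not.

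\textbf{The gap in Steps~1--2.} You claim that $|\tilde\psi\ra$ is cyclic for $\cM=\tilde\pi(\ovl{1}\otimes\A)''$ (equivalently, via $J\cM J=\cM'$, separating). This is false in general. It holds precisely when the normal extension of $\psi$ to $\cM_\psi$ is faithful, e.g.\ for tracial or KMS states.

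Take $\psi$ pure. Then $\bar\psi\otimes\psi$ is pure, $j$-positive, exact and restricts to $\psi$, so by uniqueness $\tilde\psi=\bar\psi\otimes\psi$. Its GNS data are $\cH_{\tilde\psi}=\overline{\cH_\psi}\otimes\cH_\psi$, $|\tilde\psi\ra=|\bar\psi\ra\otimes|\psi\ra$ and $\cM=1\otimes\B(\cH_\psi)$. Hence $\ovl{\cM|\tilde\psi\ra}=|\bar\psi\ra\otimes\cH_\psi$ is a proper subspace.

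Your $j$-positivity argument breaks exactly where you assert that $\ovl{\cM|\tilde\psi\ra}$ is $J$-invariant and $\cM'$-invariant. Its projection $P$ lies in $\cM'$, while $JPJ$ is the projection onto the different subspace $\ovl{\cM'|\tilde\psi\ra}$.

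Consequently the conclusion of Step~2, $\tilde\pi|_{1\otimes\A}\cong\pi_\psi$, is also false: for pure $\psi$ it is a countably infinite multiple of $\pi_\psi$. Steps~2--3 together would also prove too much. They would imply that every cyclic representation quasi-equivalent to $\pi_\psi$ is a subrepresentation of $\pi_\psi$ itself. That already fails for the cyclic representation $\pi_\psi\oplus\pi_\psi$ when $\psi$ is pure. The same identification is asserted in passing in the proof of Theorem~\ref{thm:equi_def}, so it cannot be imported from there either.

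\textbf{What Step~3 actually needs.} Step~3 is sound, and it requires only two inputs.
\begin{enumerate}
\item $\tilde\pi|_{1\otimes\A}\sim\pi_\psi$. This is Lemma~\ref{lemma:purification}, and no unitary equivalence is needed.
\item $\cM$ acting on $\cH_{\tilde\psi}$ is a standard form of $\cM_\psi$. This must be established without making $|\tilde\psi\ra$ a cyclic separating vector for $\cM$.
\end{enumerate}

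The reliable route to the second input is Woronowicz's construction. Take the standard form $(\cM_\psi,L^2(\cM_\psi),J,P^\natural)$, let $\xi_\psi\in P^\natural$ represent the normal extension of $\psi$, and let $\bar b\otimes a$ act as $J\pi_\psi(b)J\,\pi_\psi(a)$. The vector state of $\xi_\psi$ is then:
\begin{itemize}
\item $j$-positive, by self-duality of $P^\natural$ and $xJxJP^\natural\subseteq P^\natural$ for $x\in\cM_\psi$;
\item exact, since $J\cM_\psi J=\cM_\psi'$;
\item pure, since $\cM_\psi\vee\cM_\psi'=\B(L^2(\cM_\psi))$ when $\cM_\psi$ is a factor.
\end{itemize}
By uniqueness of the canonical purification, this state is $\tilde\psi$. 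Note that $\xi_\psi$ is cyclic for $\cM_\psi$ only when $\psi$ is faithful on $\cM_\psi$, consistent with the counterexample above.

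Alternatively, you could show directly that a factor on a separable space admitting a conjugation $J$ with $J\cM J=\cM'$ is standard. That requires a type-by-type coupling-constant argument.

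With standardness established either way, your Step~3 goes through unchanged and proves the lemma.
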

 Let us turn to general extensions.
\begin{lemma}\label{lemma:purification}
    Assuming $\psi$ is a state of $\A$, then
    for any extension $\psi'$ on $\cB\otimes \A$ with GNS triple $(\pi',\cH',|\psi'\ra)$, there is a quasi-equivalence $\pi'|_{\A}\sim\pi_{\psi}$.
\end{lemma}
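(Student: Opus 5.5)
The plan is to realize $\pi_{\psi}$ concretely inside $\cH'$ and then show that this copy ``generates'' all of $\pi'|_{\A}$ in the sense of central supports. Throughout, write $\pi'_{\A}:=\pi'|_{1_{\cB}\otimes\A}$ and $\pi'_{\cB}:=\pi'|_{\cB\otimes 1_{\A}}$.

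\textbf{Step 1: a copy of $\pi_{\psi}$ inside $\cH'$.} Set
\[
\cK:=\ovl{\pi'_{\A}(\A)|\psi'\ra}\subseteq\cH'.
\]
This is an invariant subspace for $\pi'_{\A}$, and $|\psi'\ra$ is cyclic for it. Because $\psi'|_{1_{\cB}\otimes\A}=\psi$, we have $\la\psi'|\pi'_{\A}(a)|\psi'\ra=\psi(a)$. Remark~\ref{remark:uniqueness_GNS} then shows that the subrepresentation on $\cK$ is unitarily equivalent to $\pi_{\psi}$. Let $P$ denote the orthogonal projection onto $\cK$; then $P\in\pi'_{\A}(\A)'$. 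It therefore suffices to prove $\pi'_{\A}\sim\pi'_{\A}|_{\cK}$.

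\textbf{Step 2: $\cK$ generates $\cH'$ under the commuting algebra.} The two tensor factors commute, so $\pi'_{\cB}(\cB)\subseteq\pi'_{\A}(\A)'$. Moreover $|\psi'\ra$ is cyclic for $\pi'(\cB\otimes\A)$, and $\cB\otimes\A$ is the closed span of the products $b\otimes a$. Hence
\[
\cH'=\ovl{\mathrm{span}\,\pi'_{\cB}(\cB)\,\cK}.
\]
In particular, the smallest subspace containing $\cK$ that is invariant under $\pi'_{\A}(\A)'$ is all of $\cH'$. Equivalently, the central support of $P$ in $\pi'_{\A}(\A)''$ equals $1$. To check this directly, take a central projection $Z$ with $ZP=0$. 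Then $Z\pi'_{\cB}(b)P=\pi'_{\cB}(b)ZP=0$ for every $b\in\cB$, so $Z$ vanishes on a dense set and hence $Z=0$.

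\textbf{Step 3: conclude via Lemma~\ref{lemma:quasi_eq_rep}.} One direction is trivial, since $\cK$ is itself a subrepresentation of $\pi'_{\A}$. For the other direction, let $Q\neq0$ be a projection in $\pi'_{\A}(\A)'$ defining an arbitrary subrepresentation. By Step 2, $Q\pi'_{\cB}(b)P\neq0$ for some $b$; otherwise $Q$ would annihilate a dense set. The operator $T:=Q\pi'_{\cB}(b)P$ lies in $\pi'_{\A}(\A)'$, so it is a nonzero intertwiner from $\cK$ to $Q\cH'$. Its polar decomposition $T=V|T|$ has $V\in\pi'_{\A}(\A)'$, and $V$ is a partial isometry. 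Thus $V$ gives a unitary equivalence between the subrepresentation on $\ovl{\mathrm{ran}\,T^{*}}\subseteq\cK$ and the one on $\ovl{\mathrm{ran}\,T}\subseteq Q\cH'$. Consequently every subrepresentation of $\pi'_{\A}$ contains a subrepresentation equivalent to one of $\pi_{\psi}$. Lemma~\ref{lemma:quasi_eq_rep} then yields $\pi'_{\A}\sim\pi_{\psi}$.

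The main obstacle is the polar-decomposition step in Step 3. Its key input is that $V$ lies in the von Neumann algebra $\pi'_{\A}(\A)'$ and intertwines the representation on the support of $T$ with the representation on its range. If the paper prefers, Step 3 can be replaced by citing the standard fact that a subrepresentation with central support $1$ is quasi-equivalent to the whole representation.
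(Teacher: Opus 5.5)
Your argument is correct, and it follows a different line from the paper's. The paper also reduces to Lemma~\ref{lemma:quasi_eq_rep}. For a subrepresentation with projection $P\in\pi'(1\otimes\A)'$, it takes the single vector $|\rho\ra=P|\psi'\ra$ and notes that $\rho(a)=\|P\pi'(1\otimes a^{1/2})|\psi'\ra\|^{2}\leqslant\psi(a)$. It then embeds $\pi_{\rho}$ into $\pi_{\psi}$ via the Radon--Nikodym-type Lemma~\ref{lemma:majorized}, so the $\cB$-factor plays no role.

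You instead realize $\pi_{\psi}$ as the cyclic subspace $\cK\subseteq\cH'$. You then use $\pi'_{\cB}(\cB)\subseteq\pi'_{\A}(\A)'$ to show that $\cK$ has central support $1$. Finally you build explicit intertwiners $Q\pi'_{\cB}(b)P$ whose polar parts give the equivalences.

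What this buys you is that every nonzero subrepresentation is actually reached. The paper's vector $P|\psi'\ra$ can vanish: take $\psi'=\varphi\otimes\psi$ with $\varphi$ pure, and let $P$ be the projection onto $\eta\otimes\cH_{\psi}$ for some $\eta\perp|\varphi\ra$. In that case the paper's construction yields only the zero subrepresentation. The majorization route can be repaired in the spirit of your Step~2 by using $P\pi'(b\otimes 1)|\psi'\ra\neq0$ for a suitable $b$, whose vector functional is dominated by $\|b\|^{2}\psi$. That repair keeps Lemma~\ref{lemma:majorized} and avoids polar decomposition.

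Your emphasis on the commuting $\cB$-factor is the right one, since without it the conclusion fails. Embed $\A$ into $\sC=\B(\cH_{\psi}\oplus\cH_{\phi})$ through $\pi_{\psi}\oplus\pi_{\phi}$, where $\psi$ is pure and $\phi$ is a pure state disjoint from it. Taking $\psi'$ to be the vector state of $|\psi\ra\oplus0$ gives $\pi'|_{\A}=\pi_{\psi}\oplus\pi_{\phi}\not\sim\pi_{\psi}$. So the remark after the lemma about general extensions $\A\subseteq\sC$ is not correct as stated.

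One minor wording point: $P$ lies in $\pi'_{\A}(\A)'$, so its central support is taken there, not in $\pi'_{\A}(\A)''$. The two algebras have the same center, so your direct check with central projections is exactly right.
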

\begin{proof}
    According to Lemma \ref{lemma:quasi_eq_rep}, it amounts to prove that any subrepresentation of $\pi'|_{\A}$ contains a subrepresentation which is equivalent to a subrepresentation of $\pi_{\psi}$ and vice versa. We note that $\pi_{\psi}$ is readily equivalent to a subrepresentation of $\pi'|_{\A}$, so this direction is trivial.

    For the other direction, let $\pi\subseteq\pi'|_{\A}$ be a subrepresentation of $\pi'|_{\A}$ on an invariant subspace on $\cK\subseteq\cH'$. Let $(\pi'(1\otimes\A))'\ni P:\cH'\to\cK$ be the invariant projection to $\pi$. Write $|\rho\ra:=P|\psi'\ra$ and consider the completion $\cK_{\rho}:=\ovl{\pi'(1\otimes\A)|\rho\ra}$,  we then end up with a cyclic subrepresentation $\pi_{\rho}\subseteq\pi$ on $\cK_{\rho}$. Defining
    \beq
    \rho(a):=\la\rho|\pi'(1\otimes a)|\rho\ra,\quad\forall\,a\in\A
    \eeq
    We note $\pi_{\rho}$ is nothing but the GNS representation of $\rho$. On the other hand, for any positive $a\in\A$,
    \beq
    \begin{split}
        \rho(a)&=\|P\pi'(1\otimes a^{1/2})|\psi'\ra\|^{2} \\
        &\leqslant\|\pi'(1\otimes a^{1/2})|\psi'\ra\|^{2}\\
        &= \la\psi'|\pi'(1\otimes a)|\psi'\ra\\
        &=\psi'(1\otimes a)\\
        &=\psi(a)
    \end{split}
    \eeq
    Thus $\pi_{\rho}$ is equivalent to a subrepresentation of $\pi_{\psi}$, by Lemma \ref{lemma:majorized}. This completes the proof.
\end{proof}
\begin{remark}
    For more general extension of the form $\A\subseteq\sC$ instead of $\A\subseteq\cB\otimes \A$, Lemma \ref{lemma:purification} still holds.
\end{remark}

Below we show the equivalence of definitions for strong symmetries in infinite-volume systems.

\begin{theorem}[strong symmetry via extension and purification]\label{thm:equi_def}
Let $\psi$ be a state on $\A$, and let $\alpha\in\Aut(\A)$ be a symmetry of $\psi$. 
Then the following are equivalent:
\begin{enumerate}
    \item $\psi$ is strongly symmetric under $\alpha$ (see Def.~\ref{def:strong_sym}).
    \item The canonically doubled state $\tilde{\psi}$ on $\oA\otimes \A$ is symmetric under $\id_{\oA}\otimes \alpha$.
    \item Every extension $\psi'$ of $\psi$ to $\cB\otimes \A$ is symmetric under $\id_{\cB}\otimes \alpha$.
\end{enumerate}
\end{theorem}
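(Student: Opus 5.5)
We will prove $1\Rightarrow 3\Rightarrow 2\Rightarrow 1$. The implication $3\Rightarrow 2$ is immediate, because the canonically doubled state $\tilde{\psi}$ is an extension of $\psi$ with $\cB=\oA$ (strictly, $\oA$ is the quasi-local algebra of the conjugate spin system, so it is an admissible $\cB$ once we pass to quasi-local algebras). The content is therefore in the other two implications.

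For $1\Rightarrow 3$, let $\psi'$ be an extension with GNS triple $(\pi',\cH',|\psi'\ra)$. By Lemma~\ref{lemma:purification}, $\pi'|_{\A}\sim\pi_\psi$, so there is a normal isomorphism $f:\cM_\psi\to\pi'(1\otimes\A)''$ with $f(\pi_\psi(a))=\pi'(1\otimes a)$. Put $V:=f(U_\alpha)$. Then $V$ is a unitary in $\pi'(1\otimes\A)''$, and it commutes with $\pi'(\cB\otimes 1)$. Because $U_\alpha\pi_\psi(a)U_\alpha^\dagger=\pi_\psi(\alpha(a))$ holds inside $\cM_\psi$, applying $f$ gives $V\pi'(1\otimes a)V^\dagger=\pi'(1\otimes\alpha(a))$. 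Hence $\Ad_V$ implements $\id_\cB\otimes\alpha$ on $\pi'(\cB\otimes\A)$.

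It then remains to show $\la\psi'|V|\psi'\ra$-invariance, namely $\psi'(b\otimes\alpha(a))=\psi'(b\otimes a)$. For this we use $U_\alpha|\psi\ra=|\psi\ra$ as follows. Using the isomorphism $f$, we identify the restriction of the vector functional $|\psi'\ra$ to $\pi'(1\otimes\A)''$ with a normal state $\omega$ on $\cM_\psi$ that extends $\psi$. By normality and Kaplansky density, $\omega$ coincides with the vector state of $|\psi\ra$ on $\cM_\psi$, since both are normal and agree on the weakly dense $\pi_\psi(\A)$. Applying this with $U_\alpha^\dagger x U_\alpha=x$ for $x=U_\alpha$ then gives $\la\psi'|V|\psi'\ra=\la\psi|U_\alpha|\psi\ra=1$. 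Since $V$ is unitary, this forces $V|\psi'\ra=|\psi'\ra$. Consequently
$$\psi'(b\otimes\alpha(a))=\la\psi'|V\pi'(b\otimes a)V^\dagger|\psi'\ra=\psi'(b\otimes a).$$
The delicate point in this step is the normality of $f$ and of the vector functional, that is, that quasi-equivalence isomorphisms are normal. We will invoke the standard fact that a $*$-isomorphism between von Neumann algebras is automatically normal.

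For $2\Rightarrow 1$, we first let $\tilde{U}$ be the unitary that implements $\id\otimes\alpha$ on $\cH_{\tilde\psi}$ and fixes $|\tilde\psi\ra$. Since $\psi$ need not be a factor state, we use the Woronowicz construction directly: $\cH_{\tilde\psi}\cong\cH_\psi$ (the natural-cone/standard form), with $\tilde{\pi}(1\otimes a)=\pi_\psi(a)$, $\tilde\pi(\ovl b\otimes 1)=J\pi_\psi(b)J$, and $|\tilde\psi\ra=|\psi\ra$ cyclic. Under this identification $\tilde U$ and $U_\alpha$ both implement $\alpha$ on $\pi_\psi(\A)$ and fix $|\psi\ra$. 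By cyclicity of $|\psi\ra$ for $\pi_\psi(\A)$ they agree, because they agree on the dense set $\pi_\psi(\A)|\psi\ra$. Moreover, $\tilde U$ commutes with $\tilde\pi(\oA\otimes 1)$. By exactness \eqref{eq:exact}, in the form $\tilde\pi(\oA\otimes1)'=\tilde\pi(1\otimes\A)''$ (which follows from \eqref{eq:exact} by taking commutants), this gives $U_\alpha=\tilde U\in\tilde\pi(1\otimes\A)''=\cM_\psi$, which is Definition~\ref{def:strong_sym}.

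I expect the main obstacle to be this identification of $\cH_{\tilde\psi}$ with the standard form $(\cM_\psi,\cH_\psi,J)$, together with exactness when $\psi$ is not a factor. Lemma~\ref{lemma:purify} is stated only for factor states, so I would rely on Woronowicz's general construction, where exactness is built in through Tomita–Takesaki theory, $\cM_\psi'=J\cM_\psi J$.
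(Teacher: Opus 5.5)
Your $3\Rightarrow 2$ and $1\Rightarrow 3$ are correct and follow the paper's proof. Your normality argument is exactly what the paper compresses into ``taking WOT-limit''. Arranging the implications as a cycle also makes the paper's separate $1\Rightarrow 2$ superfluous.

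\textbf{The gap is in $2\Rightarrow 1$.} You identify $\cH_{\tilde\psi}\cong\cH_\psi$ with $|\tilde\psi\ra=|\psi\ra$, $\tilde\pi(\ovl 1\otimes a)=\pi_\psi(a)$ and $\tilde\pi(\ovl b\otimes 1)=J\pi_\psi(b)J$. This identification holds only when $|\psi\ra$ is also separating for $\cM_\psi$, i.e.\ when $(\cM_\psi,\cH_\psi,|\psi\ra)$ is already a standard form. In general the Woronowicz state lives on the standard space $L^2(\cM_\psi)$. There $\cH_\psi$ is only the cyclic subspace $\ovl{\tilde\pi(\ovl 1\otimes\A)|\tilde\psi\ra}$, which is typically proper.

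The simplest counterexample is a pure $\psi$:
\begin{itemize}
\item $\cM_\psi=\B(\cH_\psi)$ and $\cM_\psi'=\bbC$, so no conjugation $J$ on $\cH_\psi$ can exchange them;
\item $\tilde\psi=\ovl{\psi}\otimes\psi$ has GNS space $\overline{\cH_\psi}\otimes\cH_\psi$;
\item $|\tilde\psi\ra$ is not cyclic for $\tilde\pi(\ovl 1\otimes\A)$.
\end{itemize}
So your step ``$\tilde U$ and $U_\alpha$ agree by cyclicity'' fails: $\tilde U$ is not determined by its action on $\tilde\pi(\ovl 1\otimes\A)|\tilde\psi\ra$. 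Likewise, $\tilde\pi(\ovl 1\otimes\A)''$ is only isomorphic to $\cM_\psi$, not equal to it. The paper's own proof has the same weak point. It asserts that $(\tilde\pi(\ovl 1\otimes\A),\tilde\cH,|\tilde\psi\ra)$ is a GNS triple of $\psi$ and uses the resulting unitary $V$, which in the pure case is only an isometry.

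\textbf{The repair uses only ingredients you already have: run your $1\Rightarrow 3$ argument in reverse.}
\begin{enumerate}
\item Exactness holds for the doubled state by Tomita--Takesaki on $L^2(\cM_\psi)$. It still gives $\tilde U\in\tilde\pi(\oA\otimes 1)'=\tilde\pi(\ovl 1\otimes\A)''$.
\item Lemma~\ref{lemma:purification}, applied to the extension $\tilde\psi$, gives a normal $*$-isomorphism $f:\cM_\psi\to\tilde\pi(\ovl 1\otimes\A)''$ with $f(\pi_\psi(a))=\tilde\pi(\ovl 1\otimes a)$.
\item Set $x:=f^{-1}(\tilde U)$. This is a unitary in $\cM_\psi$ with $x\pi_\psi(a)x^\dagger=\pi_\psi(\alpha(a))$.
\item Your normal-state argument gives $\la\psi|x|\psi\ra=\la\tilde\psi|\tilde U|\tilde\psi\ra=1$, hence $x|\psi\ra=|\psi\ra$.
\item Since $|\psi\ra$ is cyclic for $\pi_\psi(\A)$, the implementer of $\alpha$ fixing $|\psi\ra$ is unique, so $U_\alpha=x\in\cM_\psi$.
\end{enumerate}
Alternatively, restrict $\tilde U$ to the $\tilde U$-invariant subspace $\ovl{\tilde\pi(\ovl 1\otimes\A)|\tilde\psi\ra}\cong\cH_\psi$. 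Its projection lies in $\tilde\pi(\ovl 1\otimes\A)'$, so the restriction lies in the reduced von Neumann algebra, which is $\cM_\psi$.
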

\begin{proof}
    Throughout this proof, we write $(\pi,\cH,|\psi\ra)$ and $(\tilde{\pi},\tilde{\cH},|\tilde{\psi}\ra)$ for the GNS triples of $\psi$ and $\tilde{\psi}$ respectively.
    
    Since $\tilde{\pi}$ is a cyclic representation, we have $(\tilde{\pi}(\ovl{1}\otimes \A),\tilde{\cH}, |\tilde{\psi}\ra)$ is a another GNS triple of $\psi$ (since $\tilde{\psi}|_{\A} = \psi$ by construction). By the uniqueness of GNS representation (see Remark.~\ref{remark:uniqueness_GNS}), there exists an unitary $V:\cH\to\tilde{\cH}$ such that $V|\psi\ra = |\tilde{\psi}\ra$ and
    \beq\label{eq:intertwiner}
    \tilde{\pi}(1_{\oA}\otimes a)&=V\pi(a)V^{\dagger},\quad\forall\,a\in\A
    \eeq
    Importantly, by taking WOT closure on both sides, we have
    \beq\label{eq:intertwiner_WOT}
    \tilde{\pi}(1_{\oA}\otimes \A)''&=V\cM_{\psi}V^{\dagger}
    \eeq

    Below, we show $1\Rightarrow 2$ by assuming $\psi$ is strongly symmetric under $\alpha$, thus $\exists\,U_{\alpha}\in\cM_{\psi}$ implementing $\alpha$ and $U_{\alpha}|\psi\ra = |\psi\ra$.
    Utilizing Eq.~\eqref{eq:intertwiner_WOT}, we have
    \beq
    \tilde{U}_{\alpha}:= VU_{\alpha}V^{\dagger}\in \tilde{\pi}(\ovl{1}\otimes \A)''
    \eeq
    Note that
    \beq
    \tilde{U}_{\alpha}|\tilde{\psi}\ra=VU_{\alpha}V^{\dagger}|\tilde{\psi}\ra=|\tilde{\psi}\ra
    \eeq
    Besides, $\Ad_{\tilde{U}_\alpha}$ acts trivially on $\tilde{\pi}(\ovl{\A}\otimes 1_{\A})$ since $\tilde{U}_{\alpha}\in\tilde{\pi}(1_{\oA}\otimes \A)''=\tilde{\pi}(\ovl{\A}\otimes 1_{\A})'$ by Eq.~\eqref{eq:exact} while on $\tilde{\pi}(1_{\oA}\otimes \A)$ it implements $\alpha$. Thus we conclude $\tilde{\psi}\circ(\id_{\oA}\otimes\alpha)=\tilde{\psi}$ and this symmetry is implemented by $\tilde{U}_{\alpha}$.

    Conversely, to show that $2\Rightarrow 1$, if $\tilde{\psi}\circ(1_{\oA}\otimes\alpha)=\tilde{\psi}$, there exists a unitary $\tilde{U}_{\alpha}'\in \B(\cH_{\tilde{\psi}})$ implementing $1\otimes\alpha$ and $\tilde{U}_{\alpha}'|\tilde{\psi}\ra= |\tilde{\psi}\ra$. Obviously
    \beq
    \tilde{U}'_{\alpha}\tilde{\pi}(\ovl{a}\otimes 1)\tilde{U}_{\alpha}^{'\dagger} = \tilde{\pi}\circ (1\otimes\alpha)(\ovl{a}\otimes 1)=\tilde{\pi}(\ovl{a}\otimes 1)
    \eeq
    Therefore $\tilde{U}_{\alpha}'\in\tilde{\pi}(\ovl{\A}\otimes 1)'=\tilde{\pi}(1\otimes \A)''$ by Eq.~\eqref{eq:exact}. So by Eq.~\eqref{eq:intertwiner_WOT} again, we have
    \beq
    U_{\alpha}' := V^{\dagger}\tilde{U}_{\alpha}'V\in\cM_{\psi}
    \eeq
    It is easy to check that $U_{\alpha}'|\psi\ra=|\psi\ra$ and $U_{\alpha}'$ implements $\alpha$ on $\pi(\A)$.

    It is clear that $3\Rightarrow 2$, we will show $1\Rightarrow 3$ below.
    We adopt all notations from Lemma \ref{lemma:purification}. By quasi-equivalence, there exists $f:\cM_{\psi}\to(\pi'(1_{\cB}\otimes \A))''$. By strong symmetry, there exists $U_{\alpha}\in\cM_{\psi}$ such that $U_{\alpha}|\psi\ra=|\psi\ra$. Note that $f(U_{\alpha})$ implements $\id_{\cB}\otimes \alpha$ on $\cB\otimes \A$ since $\pi'(\cB\otimes 1)\in(\pi'(1_{\cB}\otimes \A))'$. By assumption $\psi'|_{\A}=\psi$ and taking WOT-limit, we have
    \beq
    \la\psi'|f(x)|\psi'\ra= \la\psi|x|\psi\ra, \quad\forall\,x\in \cM_{\psi}
    \eeq
    Setting $x= U_{\alpha}$, we have $\la\psi'|f(U_{\alpha})|\psi'\ra=1$. Using Cauchy-Schwarz inequality, one has $f(U_{\alpha})|\psi'\ra= |\psi'\ra$, \ie $\psi'\circ(\id_{\cB}\otimes \alpha)=\psi'$. This proves $1\Rightarrow 3$.
\end{proof}

In Appendix.~\ref{sec:sym_cano_purif}, we show that any symmetry $\alpha$ of $\psi$ can always be extended as $\ovl{\alpha}\otimes \alpha$ on $\ovl{\A}\otimes\A$, which is always a symmetry of $\tilde{\psi}$. However, in the case that $\alpha$ is a QCA, the anomaly index of $\ovl{\alpha}\otimes\alpha$ always vanishes even if $\alpha$ is anomalous itself.

\begin{theorem}\label{thm:faithful_trivial_str_sym}
    If a state $\psi$ is faithful, i.e. $\psi(a^{*}a)>0$ for any $0\not= a\in\A$, then $\psi$ admits no non-trivial strong symmetry.
\end{theorem}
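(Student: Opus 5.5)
The plan is to use Def.~\ref{def:strong_sym} directly. The goal is to show that faithfulness forces the unitary implementer $U_{\alpha}\in\cM_{\psi}$ to equal the identity. Since $\A$ is simple, $\pi_{\psi}$ is injective, and the identity $\pi_{\psi}(\alpha(a))=U_{\alpha}\pi_{\psi}(a)U_{\alpha}^{\dagger}=\pi_{\psi}(a)$ would then give $\alpha=\id$.

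The key steps are as follows.

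\begin{enumerate}
\item Suppose $\alpha$ is a strong symmetry. Then $U_{\alpha}\in\cM_{\psi}$ and $U_{\alpha}|\psi\ra=|\psi\ra$. Set $x:=U_{\alpha}-1\in\cM_{\psi}$, so that $x|\psi\ra=0$.

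\item Show that $|\psi\ra$ is \emph{separating} for $\cM_{\psi}$, i.e.\ that $y\in\cM_{\psi}$ and $y|\psi\ra=0$ imply $y=0$. Applied to $y=x$, this gives $U_{\alpha}=1$ and finishes the proof.

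\item A separating vector for $\cM_{\psi}$ is the same as a cyclic vector for $\cM_{\psi}'$. It therefore suffices to prove that the support projection $s:=[\cM_{\psi}'|\psi\ra]$ equals $1$. This projection lies in $\cM_{\psi}''=\cM_{\psi}$ and satisfies $(1-s)|\psi\ra=0$.

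\item To reach $s=1$, I would use the faithfulness of $\psi$ on $\A$ together with Kaplansky's density theorem (in the form of Theorem~\ref{theorem:double_commutant}). Pick a bounded net $a_{\lambda}\in\A$ with $\pi_{\psi}(a_{\lambda})\to 1-s$ strongly, and try to derive a contradiction with $\psi(a_{\lambda}^{*}a_{\lambda})>0$ if $1-s\neq0$. A cleaner variant: for any $b\in\A$, $\psi(b^{*}(1-s)b)=\|(1-s)\pi_{\psi}(b)|\psi\ra\|^{2}$, and the vector state $\omega(\cdot)=\la\psi|\cdot|\psi\ra$ on $\cM_{\psi}$ is the normal extension of $\psi$. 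If that extension is faithful on $\cM_{\psi}$, then $\omega(1-s)=0$ forces $1-s=0$.
\end{enumerate}

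The main obstacle is the passage in the last step from faithfulness on the $C^{*}$-algebra $\A$ to faithfulness of the normal extension on $\cM_{\psi}$, i.e.\ to $|\psi\ra$ being separating. This is not automatic for general $C^{*}$-algebras, because the limit of the positive numbers $\psi(a_{\lambda}^{*}a_{\lambda})$ may be $0$.

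I would first try to exploit the quasi-local structure of $\A$. Compressing $1-s$ by local matrix units should produce a genuinely nonzero positive local element annihilated by $\psi$, contradicting faithfulness. If that fails, I would adopt the standard convention that ``faithful'' means the GNS vector is cyclic and separating for $\cM_{\psi}$ (as in Tomita--Takesaki theory), and note that this holds for the relevant examples such as the tracial state $\tau$ and KMS states.

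Once separation is in hand, the rest is immediate: $x|\psi\ra=0$ gives $U_{\alpha}=1$, hence $\alpha=\id$.
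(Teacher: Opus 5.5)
The gap is the one you flag in step 4, and it cannot be closed. For the quasi-local algebra, faithfulness of $\psi$ on $\A$ does not make $|\psi\ra$ separating for $\cM_\psi$; your ``cleaner variant'' simply assumes this. In fact the statement fails under the hypothesis as written.

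Here is a counterexample. Take $\A=\bigotimes_{i\in\mathbb{Z}}M_2(\bbC)$ and $\alpha=\prod_i\Ad_{\sigma^z_i}$. Let $(\pi_1,\cH_1,|\phi\ra)$ be the GNS triple of the pure product state $\phi$ with every spin in the $+1$ eigenstate of $\sigma^z$. The implementer $V$ of $\alpha$ with $V|\phi\ra=|\phi\ra$ satisfies $V^2=1$. Let $\cH_+:=\ker(V-1)$, choose an orthonormal basis $\{\xi_n\}$ of $\cH_+$ and weights $\lambda_n>0$ with $\sum_n\lambda_n=1$, and set $\psi(a):=\sum_n\lambda_n\la\xi_n|\pi_1(a)|\xi_n\ra$. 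With $\{e_n\}$ the standard basis of $\ell^2$, the GNS triple of $\psi$ is $(\pi_1\otimes 1,\ \cH_1\otimes\ell^2,\ |\psi\ra=\sum_n\sqrt{\lambda_n}\,\xi_n\otimes e_n)$. Since $\phi$ is pure, $\cM_\psi=\B(\cH_1)\otimes 1$. Then $U_\alpha=V\otimes 1$ lies in $\cM_\psi$ and fixes $|\psi\ra$, so $\alpha$ is a non-trivial strong symmetry. This is consistent with Theorem~\ref{thm:strong_sym=cha_coh}, since odd operators map $\cH_+$ into $\cH_-$.

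Nevertheless $\psi$ is faithful on $\A$. If $\psi(a^*a)=0$, then $\pi_1(a)$ vanishes on $\cH_+$. Because $\pi_1(\alpha(a))=V\pi_1(a)V$ and $V=1$ on $\cH_+$, so do $\pi_1(a_0)$ and $\pi_1(a_1)$ for $a_{0,1}:=\tfrac12(a\pm\alpha(a))$. The map $b\mapsto\pi_1(b)|_{\cH_+}$ is a unital representation of the fixed-point algebra $\A^{\alpha}$. That algebra is a simple AF algebra (its Bratteli diagram has all multiplicities nonzero), so the map is injective. Hence $a_0=0$ and $a_1^*a_1=0$. Meanwhile $\tfrac12(1-V)\otimes 1$ is a nonzero element of $\cM_\psi$ annihilating $|\psi\ra$. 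There is no nonzero element of $\A$ for your compression-by-local-matrix-units argument to produce.

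The paper's own proof takes your route and has the same hole. It checks separation only on $x\in\A$, appealing to SOT density. But an approximating net $\pi_\psi(a_\lambda)\to x$ with $x|\psi\ra=0$ only gives $\psi(a_\lambda^*a_\lambda)\to 0$, which is exactly your objection.

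What is true is the statement under the hypothesis that $\la\psi|\cdot|\psi\ra$ is faithful on $\cM_\psi$, i.e.\ that $|\psi\ra$ is separating. Your second fallback proves this corrected theorem, not the stated one. Under that hypothesis, your steps 1--3 give a complete proof, together with your observation that simplicity of $\A$ makes $\pi_\psi$ injective, so $U_\alpha=1$ forces $\alpha=\id$ (a point the paper leaves implicit).

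Corollary~\ref{coro:tracial_strong} survives, but for a different reason than the one given. The GNS vectors of $\tau$ (a faithful normal trace on the hyperfinite II$_1$ factor) and of KMS states are separating for $\cM_\psi$; faithfulness on $\A$ alone is not what makes it work.
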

\begin{proof}
    Note that $\psi$ is faithful implies its GNS vector $|\psi\ra$ is \textbf{separating} for $\cM_{\psi}$, i.e.
    \beq
    ( x|\psi\ra=0,\, x\in\cM_{\psi}) \implies x=0
    \eeq
    To show this, note that $\A$ is SOT dense in $\cM_{\psi}$, we can restrict to $x\in\A$, in that case,
    \beq
   0= \| x|\psi\ra\|^{2}=\psi(x^*x)\implies x=0
    \eeq
    Thus $|\psi\ra$ is separating for $\cM_{\psi}$ if $\psi$ is faithful.

    If $\psi$ is strongly symmetric under $\alpha\in\Aut(\A)$, the unitary implementation $U_{\alpha}\in\cM_{\psi}$ and $U_{\alpha}|\psi\ra=|\psi\ra$. This forces $U_{\alpha}=1$ since $|\psi\ra$ is separating.

\end{proof}
As a corollary,
\begin{corollary}\label{coro:tracial_strong}
    The maximally mixed state (a.k.a the tracial state) $\tau$ does not admit any nontrivial strong symmetry. The same is true for finite-temperature states (a.k.a KMS states) \footnote{Specifically, because the quasi-local algebra for spin systems is simple (admits no non-trivial two-sided ideals), it follows that any KMS state is faithful.}.
\end{corollary}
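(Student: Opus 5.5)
The corollary follows from Theorem~\ref{thm:faithful_trivial_str_sym} once we check that $\tau$ and every KMS state are faithful. A strong symmetry $\alpha$ then has $U_{\alpha}=1$. Since $\pi_{\psi}(\alpha(a))=U_{\alpha}\pi_{\psi}(a)U_{\alpha}^{\dagger}=\pi_{\psi}(a)$, and $\pi_{\psi}$ is injective for a faithful state (as $\pi_{\psi}(a)=0$ forces $\psi(a^*a)=\|\pi_{\psi}(a)|\psi\ra\|^{2}=0$), we get $\alpha=\id$. This is what ``no nontrivial strong symmetry'' means.

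\emph{Tracial state.} The plan is to show faithfulness directly. On each finite region $\Gamma$, $\tau|_{\A_{\Gamma}}$ is the normalized matrix trace, which is faithful on $\A_{\Gamma}$. Faithfulness on all of $\A$ then rests on simplicity of $\A$, which handles the norm-limit issue. The set $N_{\tau}=\{a:\tau(a^*a)=0\}$ is a closed left ideal. Because $\tau(a^*a)=\tau(aa^*)$, it is also closed under $*$, and hence it is a closed two-sided ideal. Since $1\notin N_{\tau}$ and $\A$ (a UHF algebra) is simple, $N_{\tau}=0$.

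\emph{KMS states.} I would follow the footnote's argument in the same way. Let $\psi$ be a $(\beta,\alpha_t)$-KMS state and again consider $N_{\psi}$, a closed left ideal. The key step is to show that $N_{\psi}$ is also a right ideal. For $a\in N_{\psi}$ and $b$ entire analytic for $\alpha_t$, the KMS condition gives
\[
\psi(b^*a^*ab)=\psi(a^*a\,\alpha_{i\beta}(b)b^*).
\]
Cauchy--Schwarz then bounds this by $\psi(a^*a)^{1/2}(\cdots)^{1/2}=0$. Density of analytic elements extends this to all $b$. So $N_{\psi}$ is a proper closed two-sided ideal, hence zero by simplicity. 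The cleanest route may be to cite the standard result (e.g.\ Bratteli--Robinson, Corollary~5.3.9: KMS states have separating GNS vectors).

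The main obstacle is the right-ideal step for KMS states, since it needs the analytic-element machinery. Citing the standard separating-vector property avoids this entirely, because Theorem~\ref{thm:faithful_trivial_str_sym} really only uses that $|\psi\ra$ is separating for $\cM_{\psi}$.
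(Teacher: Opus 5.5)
Your route is the paper's: Theorem~\ref{thm:faithful_trivial_str_sym}, plus faithfulness of $\tau$ and of KMS states via simplicity of $\A$. The faithfulness arguments themselves are sound. With the convention $\psi(x\,\alpha_{i\beta}(y))=\psi(yx)$, the identity should read $\psi(b^*a^*ab)=\psi(a^*a\,b\,\alpha_{i\beta}(b^*))$, but Cauchy--Schwarz kills $\psi(a^*a\,X)$ for every $X$, so the slip is harmless. Your final step $U_\alpha=1\Rightarrow\alpha=\id$ is a good addition.

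\textbf{The gap.} Faithfulness on $\A$ does not give what the theorem's proof actually uses. The step from $\A$ to $\cM_\psi$ via SOT density is invalid, because $x|\psi\ra=0$ is not inherited by approximants $x_n\to x$. In fact the implication ``faithful $\Rightarrow$ no nontrivial strong symmetry'' is false. Here is a counterexample:
\begin{itemize}
\item Take a spin-$1/2$ chain with $\alpha=\prod_i\Ad_{\sigma^x_i}$.
\item Let $\omega_n$ be pure product states of $\sigma^x$-eigenstates whose configurations $s^{(n)}\in\{+,-\}^{\z}$ are dense and pairwise differ on infinitely many sites.
\item Put $\psi=\sum_n 2^{-n}\omega_n$.
\end{itemize}
$\psi$ is faithful: for $0\neq b\geqslant 0$, the $\sigma^x$-diagonal part of $b$ is a nonnegative continuous function on $\{+,-\}^{\z}$ whose trace is $\tau(b)>0$. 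It is therefore positive at some $s^{(n)}$, and its value there is $\omega_n(b)$.

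Yet $\alpha$ is a nontrivial strong symmetry of $\psi$. The $\omega_n$ are $\alpha$-invariant and mutually disjoint, so $\pi_\psi\cong\bigoplus_n\pi_{\omega_n}$ and $\cM_\psi=\bigoplus_n\B(\cH_n)$. This contains $U_\alpha=\bigoplus_n U_n$, where $U_n$ implements $\alpha$ on $\cH_n$ and fixes $|\omega_n\ra$. Correspondingly, $|\psi\ra$ is not separating: $\bigoplus_n(1-|\omega_n\ra\la\omega_n|)\in\cM_\psi$ annihilates it.

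\textbf{The fix.} The separating property you mention at the end must be the main line, not an optional shortcut.

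For KMS states, cite it (Bratteli--Robinson, Cor.~5.3.9). Equivalently, rerun your right-ideal argument inside $\cM_\psi$ after extending the KMS condition to the weak closure. Then $\{x\in\cM_\psi: x|\psi\ra=0\}$ is a weakly closed two-sided ideal, hence equals $\cM_\psi z$ for a central projection $z$. Since $z\pi_\psi(a)|\psi\ra=\pi_\psi(a)z|\psi\ra=0$, cyclicity forces $z=0$. This central-projection step is what replaces simplicity.

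For $\tau$, no analysis is needed. The functional $x\mapsto\la\tau|x|\tau\ra$ is a normal trace on $\cM_\tau$, because traciality extends by separate WOT continuity. So your $*$-closed left-kernel argument runs verbatim in $\cM_\tau$, followed by the same central-projection step.

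With $|\psi\ra$ separating for $\cM_\psi$, you get $U_\alpha=1$, and your injectivity argument finishes the proof.
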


Later, we will also be interested in a different symmetry condition that is intermediate between weak and strong symmetries, called the von Neumann symmetry (or vN symmetry for short). 

A reasonable symmetry condition is that the symmetry acts \textbf{separately} on the system (\ie $\cM_{\psi}$) and the environment (\ie $\cM_{\psi}'$). This leads to the following definition:

\begin{definition}[vN symmetries]\label{def:temp_vN_sym}
Let $\alpha$ be a symmetry of $\psi$, and let $U_\alpha$ denote the unitary implementation of $\alpha$ on $\cH_{\psi}$ satisfying $U_{\alpha}|\psi\ra=|\psi\ra$. We say a \textbf{linear} $\alpha$ is a von Neumann (vN) symmetry of $\psi$ if it factorizes as
\beq\label{eq:vN_sym_fac}
U_{\alpha}=u_{\alpha}\cdot v_{\alpha}
\eeq
 where $u_{\alpha}\in\cM_{\psi},v_{\alpha}\in\cM_{\psi}'$ are unitary operators.

\end{definition}

\begin{remark}\label{remark:vN=inner}
    The vN symmetry condition can also be formulated as the statement that $\alpha$ can be extended to an inner automorphism of $\cM_{\psi}$. This condition is known as weakly inner (since $\cM_{\psi}$ is a weak operator closure of $\A$) in mathematical literature, see e.g. Ref.~\cite{eilers2018c}.
\end{remark}

Let us first compare vN symmetries with strong symmetries as defined in Def.~\ref{def:strong_sym}. Obviously, any strong symmetry is automatically a vN symmetry with $v_{\alpha}=1$. Therefore, the notion of vN symmetry is a generalization of strong symmetries. We will discuss the Lieb-Schultz-Mattis type constraints for vN symmetries and its possible physical interpretation in later sections, see Sec.~\ref{sec:LSM}.

In fact, vN symmetries beyond strong symmetries are quite common. For example, we have the following:
\begin{lemma}\label{lemma:vN_sym_type_I}
    Let $\psi$ be a type $I$ factor state (see Example \ref{example:factors}), then every symmetry of $\psi$ is a vN symmetry.
\end{lemma}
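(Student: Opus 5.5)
The plan is to use the structure of type $I$ factors: $\cM_{\psi}\simeq\B(\cK)$ for some Hilbert space $\cK$. A type $I$ factor acting on $\cH_{\psi}$ is spatially a tensor-product decomposition. That is, there is a unitary identification
\beq
\cH_{\psi}\cong\cK\otimes\cK'
\eeq
under which $\cM_{\psi}=\B(\cK)\otimes 1$ and $\cM_{\psi}'=1\otimes\B(\cK')$. This is the standard structure theorem for type $I$ factors, namely that any normal representation of $\B(\cK)$ is an amplification. I would quote it from Ref.~\cite{Blackadar:2006OA}.

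With this in hand, let $U_{\alpha}$ be the unitary implementing $\alpha$ with $U_{\alpha}|\psi\ra=|\psi\ra$. Conjugation by $U_{\alpha}$ maps $\pi_{\psi}(\A)$ onto itself. By SOT/WOT continuity and the double commutant theorem (Theorem~\ref{theorem:double_commutant}), it therefore maps $\cM_{\psi}$ onto $\cM_{\psi}$, and hence also $\cM_{\psi}'$ onto $\cM_{\psi}'$. So $\Ad_{U_{\alpha}}$ restricts to an automorphism $\beta$ of $\B(\cK)\otimes 1\cong\B(\cK)$.

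The key step is that every $*$-automorphism of $\B(\cK)$ is inner; this is the step I expect to need the most care. Such an automorphism is automatically normal, since it preserves order and suprema of projections. It therefore maps minimal projections to minimal projections. Fixing a rank-one projection $|e\ra\la e|$ with image $|f\ra\la f|$, one defines $u|\xi\ra$ by requiring $\beta(|\xi\ra\la e|)=|u\xi\ra\la f|$. One then checks that $u$ is unitary and that $\beta=\Ad_{u}$. This gives a unitary $u_{\alpha}\in\cM_{\psi}$ with $\Ad_{u_{\alpha}}=\Ad_{U_{\alpha}}$ on $\cM_{\psi}$, and in particular on $\pi_{\psi}(\A)$.

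Now set $v_{\alpha}:=u_{\alpha}^{*}U_{\alpha}$. Then $\Ad_{v_{\alpha}}$ acts trivially on $\pi_{\psi}(\A)$, so $v_{\alpha}\in\pi_{\psi}(\A)'=\cM_{\psi}'$. It is unitary as a product of unitaries, and $U_{\alpha}=u_{\alpha}v_{\alpha}$, which is the factorization \eqref{eq:vN_sym_fac}.

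Linearity of $\alpha$ is assumed throughout. If antiunitary symmetries were allowed, $U_{\alpha}$ would be antiunitary and the argument would not apply, which is why Def.~\ref{def:temp_vN_sym} restricts to linear $\alpha$. In the infinite-dimensional case the only real subtlety is the normality of $\beta$, which is automatic as noted above.
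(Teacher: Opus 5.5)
Your proposal is correct and takes the same route as the paper. The paper simply cites that every $*$-automorphism of a type $I$ factor is inner (Corollary 9.3.5 of \cite{kadison1997fundamentals2}) and, as in Remark~\ref{remark:vN=inner}, gets the factorization from $u_{\alpha}^{*}U_{\alpha}\in\cM_{\psi}'$, exactly as you do. Your extra details are fine but unnecessary: the spatial amplification picture is not needed, and normality of $\beta$ plays no role in your rank-one argument, since any $*$-automorphism already preserves minimal projections.
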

This follows from the fact that all $*$-automorphisms are inner for $\cM_{\psi}$ if $\psi$ is a type $I$ factor (see Corollary 9.3.5 of Ref.~\cite{kadison1997fundamentals2}). In contrast, it is straightforward to construct type $I$ factors that admit no nontrivial strong symmetries.

In finite systems, every symmetry of a state is automatically a vN symmetry. Indeed, for any finite-dimensional $C^*$-algebra $\sC$, every automorphism is inner, i.e. of the form $\Ad_{u}$ for some unitary $u\in\sC$, which trivially extends to an inner automorphism on $\cM_{\psi}$ for any state $\psi$.

\begin{example}
Returning to the case of $\A$, recall from Corollary~\ref{coro:tracial_strong} that the tracial state $\tau$ has no strong symmetries. What about its vN symmetries?  
Note that $\tau$ always admits some “trivial’’ vN symmetries: any inner automorphism of $\A$ extends to an inner automorphism of $\cM_{\tau}$.  
Can we describe all vN symmetries of $\tau$?

This is indeed possible; precise criteria can be found in
Refs.~\cite{KISHIMOTO1996100,sun2014stronglyouterproducttype}.  
In brief, the following automorphisms are shown \textbf{NOT} to be vN symmetries of $\tau$:
\begin{enumerate}
    \item Lattice translations.
    \item Any nontrivial on-site finite group symmetry that commutes with translation symmetry.
\end{enumerate}

\end{example}

In fact, we will show the following result for lattice translation symmetry:
\begin{proposition}\label{prop:translation}
    Let $\psi$ be a factor state such that the lattice translation $T$ is a vN symmetry for $\psi$, then $\psi$ is a pure state.
\end{proposition}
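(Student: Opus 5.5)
The plan is to turn the clustering of the factor state $\psi$, through the translation $T$, into a statement about the rank-one projection $P_\psi:=|\psi\ra\la\psi|$. I then want to use the vN factorization $U_T=u\,v$ to show that $\cM_\psi'$ is trivial. By Lemma \ref{lemma:Schur}, triviality of $\cM_\psi'$ is exactly purity.

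\textbf{Step 1 (mixing of $U_T$).} Because $\psi$ is a factor state, Lemma \ref{lemma:clustering} gives the clustering property. Since $T^n$ moves the support of any local operator off to infinity, this yields $\psi(a\,T^n(b))\to\psi(a)\psi(b)$ for all $a,b\in\A$. Using $U_T|\psi\ra=|\psi\ra$, this reads $\la \pi(a^*)\psi|U_T^n|\pi(b)\psi\ra\to \la\pi(a^*)\psi|\psi\ra\la\psi|\pi(b)\psi\ra$. The vectors $\pi(\A)|\psi\ra$ are dense and the $U_T^n$ are uniformly bounded, so $U_T^n\to P_\psi$ in WOT.

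\textbf{Step 2 (split the limit).} Write $U_T^n=u^n v^n$, with $u^n\in\cM_\psi$ and $v^n\in\cM_\psi'$ unitary; this holds because $u$ and $v$ commute. By Lemma \ref{lemma:WOTcompact}, I pass to a subnet along which $u^n\to w\in\cM_\psi$ and $v^n\to w'\in\cM_\psi'$ in WOT. The main obstacle is that multiplication is not jointly WOT-continuous (Remark \ref{remark:SOT_WOT}), so one cannot directly conclude $P_\psi=ww'$.

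To overcome this I would exploit the commuting structure and test against vectors of the form $x x'|\psi\ra$, with $x\in\cM_\psi$ and $x'\in\cM_\psi'$. First, $v^n$ commutes with $x$. Second, by the same clustering argument applied to $\cM_\psi'$ (also a factor), the matrix elements should decouple: \[\la x x'\psi|u^n v^n|y y'\psi\ra\approx \la x\psi|u^n|y\psi\ra\,\la x'\psi|v^n|y'\psi\ra\] up to errors vanishing as $n\to\infty$. For the decoupling I would try to reuse the identity $u^{-n}|\psi\ra=v^{n}|\psi\ra$, which follows from $U_T^n|\psi\ra=|\psi\ra$. This identity lets me move $u^n$ across to the $\cM_\psi'$ side, so that each factor is controlled by translation clustering. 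If this works, the limit gives the factorization $P_\psi=w w'$ in matrix elements, at least on a dense set of vectors.

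\textbf{Step 3 (conclude purity).} From $P_\psi=ww'$ with $w\in\cM_\psi$ and $w'\in\cM_\psi'$ both nonzero, note that $\cM_\psi\vee\cM_\psi'$ is a factor tensor product situation. Then $\cM_\psi P_\psi\cM_\psi'$ consists of rank-one-type operators, and the standard argument shows that $\cM_\psi$ and $\cM_\psi'$ are both of type I. More concretely, $w^*w$ and $w'^*w'$ generate nonzero finite-rank-like elements, whose spectral projections are minimal. Then $\cM_\psi\cong\B(\cK_1)\otimes 1$ and $\cM_\psi'\cong 1\otimes\B(\cK_2)$ on $\cH=\cK_1\otimes\cK_2$. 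Finally, $P_\psi=ww'$ forces $|\psi\ra=\xi_1\otimes\xi_2$ to be a product vector. But $|\psi\ra$ is cyclic for $\cM_\psi$, so $\cM_\psi|\psi\ra=\cK_1\otimes\xi_2$ is dense, which forces $\dim\cK_2=1$. Hence $\cM_\psi'=\bbC$ and $\psi$ is pure by Lemma \ref{lemma:Schur}.

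The hard part is the decoupling estimate in Step 2, which justifies $\lim u^nv^n=(\lim u^n)(\lim v^n)$. If it resists, my fallback is to work instead with the averages $\frac1N\sum_{n<N}u^n\otimes v^n$. I would show directly that the limiting operator lies in the WOT closure of $\cM_\psi\odot\cM_\psi'$ and simultaneously equals $P_\psi$. From there I would run the type-I argument using that a factor containing a nonzero compact element is of type I.
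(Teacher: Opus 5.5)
Your Step 1 is correct: clustering of the factor state gives $U_T^n\to P_\psi$ in WOT. The genuine gap is Step 2. The claimed decoupling
\[
\la xx'\psi|u^nv^n|yy'\psi\ra-\la x\psi|u^n|y\psi\ra\,\la x'\psi|v^n|y'\psi\ra\to 0
\]
is not a routine estimate. It carries essentially the whole content of the proposition. The left-hand matrix element is $\la\psi|X_nX_n'|\psi\ra$ with $X_n=x^*u^ny\in\cM_\psi$ and $X_n'=x'^*v^ny'\in\cM_\psi'$. You would need WOT limits to be taken separately in the $\cM_\psi$ and $\cM_\psi'$ factors. No such joint continuity holds in general: for a non-type-I factor, the product map $\cM_\psi\odot\cM_\psi'\to\cM_\psi\vee\cM_\psi'=\B(\cH_\psi)$ has no normal extension to $\cM_\psi\,\overline{\otimes}\,\cM_\psi'$. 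Your own Step 3 shows that $P_\psi=ww'$ already forces type I, so proving the decoupling is tantamount to proving the result.

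The device $v^n|\psi\ra=u^{-n}|\psi\ra$ does not help. In the matrix element, $v^n$ acts on $yy'|\psi\ra$, not on $|\psi\ra$. Commuting it past $y'$ produces $\Ad_{v^n}(y')$, and after applying the identity you recover $\la\psi|x'^*x^*\,\Ad_{U_T^n}(yy')|\psi\ra$, which is exactly the quantity you started with. There is also no ``clustering on $\cM_\psi'$'', since translation has no spatial action there.

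The fallback is vacuous. For a factor, the WOT closure of $\cM_\psi\odot\cM_\psi'$ is $(\cM_\psi\cap\cM_\psi')'=\B(\cH_\psi)$, so $P_\psi$ lies in it trivially. The factor containing this compact operator is $\B(\cH_\psi)$, not $\cM_\psi$, so nothing follows about the type of $\cM_\psi$. Reading $u^n\otimes v^n$ inside $\cM_\psi\,\overline{\otimes}\,\cM_\psi'$ instead hits the same non-normality.

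The missing idea is that you never need to split the limit. You only need multiplication by a \emph{fixed} operator on one side, and that is WOT-continuous.
\begin{itemize}
\item Since $v\in\cM_\psi'$ commutes with $u\in\cM_\psi$, it commutes with every $U_T^n$, hence with the WOT limit: $vP_\psi=P_\psi v$.
\item Therefore $v|\psi\ra=\lambda|\psi\ra$ with $\lambda=\la\psi|v|\psi\ra$. Because $v\in\cM_\psi'$ and $|\psi\ra$ is cyclic for $\cM_\psi$, the relation $v\pi_\psi(a)|\psi\ra=\lambda\pi_\psi(a)|\psi\ra$ on a dense set gives $v=\lambda\,\id$.
\item Then $U_T=\lambda u\in\cM_\psi$. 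Every $x'\in\cM_\psi'$ commutes with all $U_T^n$, hence with $P_\psi$, and the same cyclicity argument makes $x'$ a scalar.
\item Lemma \ref{lemma:Schur} now gives purity.
\end{itemize}
This is essentially the paper's proof. There, $(\pi_\psi(\A)\cup\{U_T\})'=\bbC\cdot\id$ is obtained from asymptotic abelianness, weak clustering and Doplicher--Kadison--Kastler--Robinson. The paper then notes that this commutant equals $\cM_\psi'\cap Z(v_T)$, which contains $v_T$, so $v_T$ is a phase and $\cM_\psi'$ is trivial. Your Step 1 gives an elementary proof of that commutant triviality: anything commuting with $\pi_\psi(\A)$ and $U_T$ commutes with $P_\psi$, hence is scalar. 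So Steps 2 and 3 should simply be replaced by the argument above.
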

\begin{proof}
    We note that the lattice translation $T$ is \textit{asymptotically abelian}, i.e.
    \beq\label{eq:asym_abelian}
    \lim_{n\to\infty}\|[T^{n}(a),b]\|=0,\quad\forall\,a,b\in\A
    \eeq
    It follows from Example 4.3.24 of \cite{bratteli2013operator1} that $\psi$ satisfies
    \beq
    \lim_{n\to\infty}|\psi(T^{n}(a)b)-\psi(a)\psi(b)|=0
    \eeq
    i.e. $\psi$ is \textit{weakly clustering}. By Theorem 4.(v) and Theorem 4a.(i) in Ref.~\cite{Doplicher1967}, this implies $(\pi_{\psi}(\A)\cup \{U_{T}\})'=\bbC\cdot \id_{\cH_{\psi}}$ where $U_{T}$ is the unitary implementation for $T$. By assumption $U_{T}=u_{T}\cdot v_{T}$ for $u_{T}\in \cM_{\psi}, v_{T}\in\cM_{\psi}'$, we then have
    \beq\label{eq:centralizer}
    \bbC\cdot\id_{\cH_{\psi}}=(\pi_{\psi}(\A)\cup \{U_{T}\})'=\cM_{\psi}'\cap Z(v_{T})
    \eeq
    where $Z(v_{T}):=\{x\in \B(\cH_{\psi}):[x,v_{T}]=0\}$ is the centralizer of $v_{T}$. We note that $v_{T}\in\cM_{\psi}'\cap Z(v_{T})=\bbC\cdot\id_{\cH_{\psi}}$, thus $v_{T}$ is a phase itself. Therefore Eq.~\eqref{eq:centralizer} implies $\cM_{\psi}'=\bbC\cdot\id_{\cH_{\psi}}$, i.e. $\psi$ must be pure according to Lemma \ref{lemma:Schur}.
\end{proof}
\begin{remark}
    The same proof applies to any asymptotically abelian symmetry actions (i.e. satisfying Eq.~\eqref{eq:asym_abelian}).
\end{remark}
As an easy corollary, we have:
\begin{corollary}
        If a factor state $\psi$ is strongly symmetric under lattice translation $T$, then $\psi$ is pure.
\end{corollary}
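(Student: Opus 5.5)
This corollary follows by reducing it to Proposition~\ref{prop:translation}. The whole argument is to check that a strong symmetry is a special case of a vN symmetry and then invoke that proposition.

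First, suppose $\psi$ is a factor state that is strongly symmetric under $T$. By Def.~\ref{def:strong_sym}, $T$ is a symmetry of $\psi$ ($\psi\circ T=\psi$). Its unitary implementation $U_T$ on $\cH_{\psi}$, normalized by $U_T|\psi\ra=|\psi\ra$, lies in $\cM_{\psi}$.

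Next, set $u_T:=U_T\in\cM_{\psi}$ and $v_T:=1_{\cH_{\psi}}\in\cM_{\psi}'$. Both are unitary, and $U_T=u_T v_T$. This is exactly the factorization \eqref{eq:vN_sym_fac}, so $T$ is a vN symmetry of $\psi$ in the sense of Def.~\ref{def:temp_vN_sym}. The only further requirement there is that $\alpha$ be linear, and translation is a linear $*$-automorphism of $\A$.

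Finally, Proposition~\ref{prop:translation} applies verbatim, since $\psi$ is a factor state and $T$ is a vN symmetry of it. It gives that $\psi$ is pure. For a direct route, one can rerun that proof with $v_T=1$. Weak clustering from asymptotic abelianness gives $(\pi_\psi(\A)\cup\{U_T\})'=\bbC\cdot\id$. Since $U_T\in\cM_\psi$, this commutant equals $\cM_\psi'$, so $\cM_\psi'=\bbC\cdot\id$, and Lemma~\ref{lemma:Schur} gives purity.

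I expect no real obstacle. The only thing to confirm is that the normalization $U_T|\psi\ra=|\psi\ra$ used in the strong-symmetry definition matches the one used for vN symmetries, and it does in both definitions.
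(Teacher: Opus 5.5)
Your proposal is correct and follows the paper's route: the paper notes that any strong symmetry is a vN symmetry with $v_{\alpha}=1$, and obtains this corollary directly from Proposition~\ref{prop:translation}. The paper also mentions an alternative via the canonical purification, showing $\tilde{\psi}$ would fail to cluster under $\id\otimes T^{n}$ for large $n$, but your reduction is the primary argument.
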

This corollary can also be seen from canonical purification, i.e. applying $\id\otimes \tau^{n}$ to $\tilde{\psi}$ for sufficiently large $n$ so one can show $\tilde{\psi}$ fails to be clustering. Note that all symmetries of a type $I$ factor is automatically a vN symmetry, from Prop.~\ref{prop:translation}, we deduce that
\begin{corollary}
    Let $\psi$ be a translationally invariant type I factor state, then $\psi$ is pure.
\end{corollary}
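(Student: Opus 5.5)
The corollary asserts that a translationally invariant type I factor state $\psi$ is pure. The plan is to reduce it to Proposition~\ref{prop:translation} by way of Lemma~\ref{lemma:vN_sym_type_I}.

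\emph{Step 1.} Since $\psi$ is translationally invariant, $\psi\circ T=\psi$. Uniqueness of the GNS triple (Remark~\ref{remark:uniqueness_GNS}) then gives a unitary $U_T\in\B(\cH_\psi)$ with $U_T|\psi\ra=|\psi\ra$ and $U_T\pi_\psi(a)U_T^\dagger=\pi_\psi(T(a))$ for all $a\in\A$.

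\emph{Step 2.} Because $\psi$ is a type I factor state, Lemma~\ref{lemma:vN_sym_type_I} says every symmetry of $\psi$ is a vN symmetry. To make this concrete, consider $\Ad_{U_T}$, which maps $\pi_\psi(\A)$ onto itself. It therefore preserves the double commutant, so it restricts to a $*$-automorphism of $\cM_\psi\cong\B(\cK)$.

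Every automorphism of a type I factor is inner (Corollary 9.3.5 of Ref.~\cite{kadison1997fundamentals2}). Hence there is a unitary $u_T\in\cM_\psi$ with $U_T x U_T^\dagger=u_T x u_T^\dagger$ for all $x\in\cM_\psi$.

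Set $v_T:=u_T^\dagger U_T$. By construction $v_T$ commutes with $\cM_\psi$, so $v_T\in\cM_\psi'$. This gives the factorization $U_T=u_T v_T$ demanded by Def.~\ref{def:temp_vN_sym}. Note that $T$ is linear, as that definition requires.

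\emph{Step 3.} Now $\psi$ is a factor state and $T$ is a vN symmetry of it. Proposition~\ref{prop:translation} applies directly and yields that $\psi$ is pure. The only input it needs beyond this is asymptotic abelianness of lattice translations, and that is already handled in the proof of Proposition~\ref{prop:translation}.

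The one point needing care is Step 2: checking that conjugation by $U_T$ really defines an automorphism of the von Neumann algebra $\cM_\psi$, not just of the $C^*$-image $\pi_\psi(\A)$. This follows because $U_T\pi_\psi(\A)U_T^\dagger=\pi_\psi(\A)$, and unitary conjugation commutes with taking commutants. Everything else is a direct appeal to earlier results.
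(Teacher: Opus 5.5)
Your proposal is correct and takes essentially the same route as the paper. The paper also obtains this corollary by noting that every symmetry of a type I factor state is a vN symmetry (Lemma~\ref{lemma:vN_sym_type_I}, via innerness of automorphisms of type I factors) and then applying Proposition~\ref{prop:translation}; your Step 2 just writes out the factorization $U_T=u_Tv_T$ with $u_T\in\cM_\psi$ and $v_T\in\cM_\psi'$, which the paper leaves implicit.
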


Below we illustrate the generality of vN symmetries with a simple example drawn from $(1+1)$-dimensional symmetry-protected topological (SPT) phases. Consider an on-site symmetry action 
\[
\alpha : G \longrightarrow \G^{\QCA}.
\]
By definition, the action is on-site if
\begin{equation}
\alpha_g = \prod_{i \in \mathbb{Z}} \Ad_{\rho(g)}, 
\qquad \forall\, g \in G,
\end{equation}
where $\rho $ is a (finite-dimensional) representation of $G$ acting on each local Hilbert space.

Let $\psi$ be a pure short-range entangled (SRE) state, 
meaning that it can be transformed into a pure product state 
by a finite-time evolution generated by (almost-)local Hamiltonians. 
Assume that $\psi$ is invariant under $\alpha_g$ for all $g \in G$. 
Then $\psi$ defines a $(1+1)$-dimensional symmetry protected topological (SPT) phase with symmetry $G$ \cite{Chen2010, Ogata2019split,Kapustin2020invertible}.

Let $\psi_R := \psi|_R$ denote the restriction of $\psi$ to the right half-chain, 
i.e.\ the reduced state obtained by tracing out the left half-chain. 
It follows immediately that $\psi_R$ is invariant under 
$\alpha_g^R$ for all $g \in G$, where $\alpha^R$ denotes the truncation 
of $\alpha$ to the right half-chain.

We now ask the following:

\begin{question}
Is $\psi_R$ strongly symmetric under $\alpha^R$? 
Moreover, does it satisfy the vN symmetry condition ?
\end{question}

We answer this question with the following theorem:
\begin{theorem}\label{thm:strongsym_SPT}
    The restriction $\psi_{R}$ always has $\alpha^{R}$ as a vN symmetry (as long as $\psi$ is a symmetric SRE state) while $\alpha^{R}$ can possibly be a strong symmetry only if $\psi$ is a trivial SPT.
\end{theorem}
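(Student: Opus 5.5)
The plan is to use the split property of pure SRE states in one dimension. A symmetric SRE state $\psi$ is split, meaning $\pi_\psi(\A_L)''$ and $\pi_\psi(\A_R)''$ are type $I$ factors and $\psi$ is quasi-equivalent to $\psi_L\otimes\psi_R$ (Matsui, Ogata). It follows that $\psi_R$ is a type $I$ factor state, and $\alpha^R$ is a symmetry of $\psi_R$, since $\alpha_g$ is on-site and $\psi\circ\alpha_g=\psi$ restricts. The first claim then follows directly from Lemma \ref{lemma:vN_sym_type_I}: every symmetry of a type $I$ factor state is a vN symmetry. For concreteness, the argument runs as follows. Write $\cH_{\psi_R}\cong\cK\otimes\cK'$ with $\cM_{\psi_R}=\B(\cK)\otimes 1$. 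Then $U_{\alpha^R_g}$ normalizes $\cM_{\psi_R}$ and hence also its commutant $1\otimes\B(\cK')$. Every automorphism of $\B(\cK)$ is inner, so $U_{\alpha^R_g}=u_g\otimes v_g$ up to a phase, which we absorb.

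For the second claim, I would relate $u_g$ to Ogata's SPT index. For a split symmetric pure state, Ogata constructs projective unitaries $W_g$ on $\B(\cK)$ with $\pi(\alpha_g^R(a))=W_g\pi(a)W_g^*$. The second cohomology class $[\sigma]\in\rH^2(G;\U)$ of $g\mapsto W_g$ is the SPT index, and it is trivial exactly when $\psi$ is a trivial SPT. I would check that $u_g$ agrees with $W_g$ up to phases in the GNS representation of $\psi_R$. The reason is that $\psi_R$ is quasi-equivalent to the right-half component of Ogata's representation, and implementers in $\cM_{\psi_R}$ are unique up to the center, which is $\bbC$ for a factor.

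Now suppose $\alpha^R$ is strong, so that $U_g:=U_{\alpha^R_g}\in\cM_{\psi_R}$ for all $g$. The map $g\mapsto U_g$ is a genuine linear representation, because $U_gU_h$ and $U_{gh}$ both implement $\alpha^R_{gh}$ and both fix $|\psi_R\ra$, so uniqueness gives $U_gU_h=U_{gh}$. Since $U_g\in\cM_{\psi_R}=\B(\cK)\otimes 1$, we have $U_g=u_g\otimes 1$, and $u_g$ implements $\alpha^R_g$ on $\B(\cK)$. It therefore agrees with $W_g$ up to a phase, and $g\mapsto u_g$ is a linear representation. Hence the cocycle $\sigma$ is a coboundary, so $[\sigma]=1$ and $\psi$ is a trivial SPT.

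The main obstacle is the identification step. One must show carefully that the projective class of implementers inside $\cM_{\psi_R}$ is exactly Ogata's index, which is defined through $\pi_\psi(\A_R)''$ for the full state. This requires the split quasi-equivalence $\pi_\psi|_{\A_R}\sim\pi_{\psi_R}$, which in turn follows from Lemma \ref{lemma:purification} with $\cB=\A_L$, together with the fact that quasi-equivalences intertwine implementers up to central phases. I would handle this by transporting $u_g$ through the isomorphism $f$ of Lemma \ref{lemma:purification}, as in the $1\Rightarrow3$ step of Theorem \ref{thm:equi_def}.
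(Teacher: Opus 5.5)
Your proposal is correct and follows essentially the same route as the paper. The first assertion uses the split (type~I factor) property of $\psi_R$ together with Lemma~\ref{lemma:vN_sym_type_I}. The second uses $U_g\in\cM_{\psi_R}$ together with the type~I structure to force Ogata's SPT index to be trivial. The only difference is cosmetic: the paper phrases the last step as ``each pure component of $\psi_R$ is symmetric under $\alpha^R_g$,'' whereas you show directly that $U_g=u_g\otimes 1$ gives a genuine linear representation agreeing with Ogata's $W_g$ up to phases, which is a more explicit version of the same argument.
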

We remark that the converse of the second assertion does not hold in general: 
the triviality of the SPT phase of $\psi$ does \emph{not} imply that 
$\alpha^{R}$ acts as a strong symmetry on $\psi_{R}$. 

To illustrate this, consider $G=\mathbb{Z}_{2}=\{0,1\}_{+}$, 
with symmetry action generated by the Pauli operator 
$\rho(0)=\id,\,\rho(1)=\sigma^{z}$. Since $\rH^{2}(\mathbb{Z}_{2};\U)=0$, 
all $\mathbb{Z}_{2}$ SPT phases are trivial in this case. 

As a concrete counterexample, let $\psi$ be the product state 
with spin-up at every site except at $i=-1,0$, 
where the two spins form a singlet. 
In this situation, the restricted state $\psi_{R}$ 
fails to be strongly symmetric under $\alpha^{R}$.

\begin{proof}[Proof to Theorem \ref{thm:strongsym_SPT}]
The first assertion follows from the fact that for any short-range 
entangled (SRE) state $\psi$, the restricted state $\psi_{R}$ 
is a type~I factor state (i.e.\ it \emph{splits}); 
see Lemma~4.2 of Ref.~\cite{Kapustin2020invertible}. 
By Lemma~\ref{lemma:vN_sym_type_I}, it then follows that 
$\psi_{R}$ admits a von Neumann (vN) symmetry.

For the second assertion, suppose that $\psi_{R}$ were strongly symmetric. 
By Theorem~\ref{thm:equi_def}, the unitary implementation $U_{g}$ for $\alpha_{g}^{R}$ in the GNS representation of $\psi_{R}$ satisfies $U_{g}|\psi_{R}\ra=|\psi_{R}\ra$ $U_{g}\in \cM_{R}$ (the von Neumann algebra generated by $\psi_{R}$). Since $\psi_{R}$ is type-I, this condition implies that each pure component of $\psi_{R}$ is again symmetric under $\alpha_{g}^{R}$. This implies $\psi$ has trivial SPT index (in the sense introduced by Ref.~\cite{Ogata2019split}).
\end{proof}

More generally, in Eq.~\eqref{eq:vN_sym_fac}, the unitaries 
$u_{\alpha}$ and $v_{\alpha}$ are defined only up to a phase 
whenever $\psi$ is a factor state. Consequently, even if the 
unitary implementation $\{U_{g}\}_{g\in G}$ forms a genuine 
(non-projective) representation of $G$, the corresponding 
unitaries $u_{g}\in \cM_{\psi}$ define a \emph{projective} 
representation of $G$.

The associated degree-2 group cohomology class of this projective 
representation coincides with the SPT index of $\psi$ 
introduced in Ref.~\cite{Ogata2019split} if $\psi$ splits.

\section{Charge coherence}\label{sec:cha_coh_ord_para}
In this section, we explain how our definition of strong symmetry is related to a ``charge coherence'' condition.

First, let us briefly introduce the definition of fidelity between states in the context of operator algebras \cite{Alberti1983}. 
\begin{definition}[Theorem 1 of Ref.~\cite{Alberti1983}]\label{def:fidelity}
    Let $f_{1},f_{2}$ be two positive linear functionals on $\A$, we define a subset of linear functions:
    \beq
    Q_{\A}(f_{1},f_{2}):=\{f:\A\to\bbC \And |f(a^{*}b)|^{2}\leqslant f_{1}(a^{*}a)f_{2}(b^{*}b),\quad\forall\,a,b\in\A\}
    \eeq
    The fidelity\footnote{In Ref.~\cite{Alberti1983}, the notation $P_{\A}(f_{1},f_{2})$ is used for fidelity} $F(f_{1},f_{2})$ is defined as
    \beq
    F(f_{1},f_{2}):=\max_{f\in Q_{\A}(f_{1},f_{2})}\|f(1)\|^{2}
    \eeq
    In particular, the maximum can be attained by some $f_{0}\in Q_{\A}(f_{1},f_{2})$.
\end{definition}
In order to see why it reduces to the usual fidelity, we have the following generalized version of Uhlmann's theorem for fidelity.
\begin{lemma}[Corollary 1 of Ref.~\cite{Alberti1983}]\label{lemma:Uhlmann}
    Given a representation $\pi:\A\to\B(\cH)$, if $f_{i}$ is represented by a vector $|f_{i}\ra\in\cH, i=1,2$ respectively, \ie $f_{i}(a)=\la f_{i}|\pi(a)|f_{i}\ra$ for all $a\in\A$, then
    \beq\label{eq:Uhlmann}
    F(f_{1},f_{2})=\sup_{R\in\pi(\A)'}|\la f_{1}|R|f_{2}\ra|^{2},\quad\|R\|=1
    \eeq
\end{lemma}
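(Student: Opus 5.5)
The statement to prove is Lemma~\ref{lemma:Uhlmann}: if $f_i(a)=\la f_i|\pi(a)|f_i\ra$ for $i=1,2$, then $F(f_1,f_2)$ equals the supremum over contractions $R\in\pi(\A)'$ of $|\la f_1|R|f_2\ra|^2$. The plan is to prove the two inequalities separately, working throughout with the set $Q_{\A}(f_1,f_2)$ of Def.~\ref{def:fidelity}.

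For ``$\geqslant$'', fix $R\in\pi(\A)'$ with $\|R\|\leqslant 1$ and set $f(a):=\la f_1|R\,\pi(a)|f_2\ra$. Since $R$ commutes with $\pi(\A)$,
\beq
f(a^*b)=\la \pi(a)f_1|R\,\pi(b)f_2\ra .
\eeq
Cauchy--Schwarz together with $\|R\|\leqslant1$ then gives $|f(a^*b)|^2\leqslant f_1(a^*a)f_2(b^*b)$. Hence $f\in Q_{\A}(f_1,f_2)$, and $F(f_1,f_2)\geqslant |f(1)|^2=|\la f_1|R|f_2\ra|^2$.

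For ``$\leqslant$'', let $f_0\in Q_{\A}(f_1,f_2)$ be the maximizer of Def.~\ref{def:fidelity}. Let $\cK_i:=\ovl{\pi(\A)|f_i\ra}$ and $P_i$ the projection onto $\cK_i$; note $P_i\in\pi(\A)'$.
\begin{enumerate}
\item Define a sesquilinear form on the dense subspaces $\pi(\A)|f_1\ra\times\pi(\A)|f_2\ra$ by $B(\pi(a)f_1,\pi(b)f_2):=f_0(a^*b)$.
\item Check that $B$ is well defined. If $\pi(a)f_1=0$, then $f_1(a^*a)=0$, and the defining inequality of $Q_{\A}$ forces $f_0(a^*b)=0$; the same argument works in the second slot. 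This is the one place where the $Q_{\A}$ bound is essential.
\item The same inequality says $|B(\xi,\eta)|\leqslant\|\xi\|\|\eta\|$, so $B$ extends to the closures. Riesz representation then gives an operator $R_0:\cK_2\to\cK_1$ with $\|R_0\|\leqslant1$ and $B(\xi,\eta)=\la\xi|R_0\eta\ra$.
\item Set $R:=R_0P_2$, viewed as an operator on $\cH$ via $\cK_1\subseteq\cH$.
\end{enumerate}

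It remains to show $R$ intertwines, i.e. $R\in\pi(\A)'$. For $c\in\A$,
\beq
\la\pi(a)f_1|R\,\pi(c)\pi(b)f_2\ra=f_0(a^*cb)=\la\pi(c^*)\pi(a)f_1|R\,\pi(b)f_2\ra .
\eeq
So $R\pi(c)=\pi(c)R$ on $\cK_2$, as maps into $\cK_1$. Both $\cK_1$ and $\cK_2$ are invariant subspaces with projections in $\pi(\A)'$, so $R=P_1RP_2$ commutes with $\pi(\A)$ on all of $\cH$. Evaluating at $a=b=1$ gives $f_0(1)=\la f_1|R|f_2\ra$, hence $F(f_1,f_2)\leqslant\sup_R|\la f_1|R|f_2\ra|^2$.

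The normalization $\|R\|=1$ in the statement, as opposed to $\|R\|\leqslant1$, is harmless. The supremum over the unit ball of $\pi(\A)'$ equals the supremum over its norm-one elements: $|\la f_1|R|f_2\ra|$ is convex in $R$, so its sup over the ball is attained on the boundary (or one can rescale, unless the supremum is $0$, which is a trivial case).

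The main obstacle I expect is the well-definedness and boundedness of the form $B$ in step 2; everything else is bookkeeping with commutant projections.
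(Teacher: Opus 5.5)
Your argument is correct and complete. In one direction, a contraction $R\in\pi(\A)'$ yields the element $a\mapsto\la f_1|R\,\pi(a)|f_2\ra$ of $Q_{\A}(f_1,f_2)$. In the other, any $f\in Q_{\A}(f_1,f_2)$ defines a bounded form on $\cK_1\times\cK_2$, and Riesz turns this into an intertwining contraction $R=P_1RP_2\in\pi(\A)'$ with $\la f_1|R|f_2\ra=f(1)$.

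The paper gives no proof of this lemma; it simply imports Corollary~1 of Alberti. Your intertwiner construction is the standard route to that corollary, so your proposal is a self-contained version of the cited result.

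Your ``$\leqslant$'' step never uses maximality and works for every $f\in Q_{\A}(f_1,f_2)$. Combined with WOT-compactness of the unit ball of $\pi(\A)'$, it even recovers the attainment of the maximum that the paper's definition of $F$ takes for granted.

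The only tacit hypothesis is that $\pi$ is unital, as it is for the GNS representations where the lemma is applied. You use it when writing $f(1)=\la f_1|R|f_2\ra$ in the ``$\geqslant$'' direction. Without it the statement itself fails: for $\pi=0$ the left side is $0$, while the right side is $\|f_1\|^2\|f_2\|^2$.
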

\begin{remark}
    It also follows from corollary 2 of Ref.~\cite{Alberti1983} that $F(f_{1},f_{2})=0$ if two states $f_{1},f_{2}$ are disjoint.
\end{remark}
Thus, Def.~\ref{def:fidelity} reduces to usual fidelity for finite dimensional $C^*$ algebras.

The following lemma tells us how the fidelity on the infinite system $\A$ can be approximated by finite subsystems. Write $\{\Gamma_{L}\}$ for increasing (\ie $\Gamma_{1}\subseteq\Gamma_{2}\dots$) and exhausting (\ie $\Lambda=\bigcup_{L}\Gamma_{L}$) collection of finite subsets of $\Lambda$, indexed by their “size” $L$. We have
\begin{lemma}[Eq.~(9) of Ref.~\cite{Alberti1983}]\label{lemma:finite_size_approx}
    The fidelity $F(f_{1},f_{2})$ satisfies
    \beq
    F(f_{1},f_{2})=\lim_{L\to\infty}F(f_{1}|_{\Gamma_{L}},f_{2}|_{\Gamma_{L}})
    \eeq
    where the fidelity on RHS is taken for the finite-dimensional algebra $\A_{\Gamma_{L}}$.
\end{lemma}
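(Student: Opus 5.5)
The plan is to prove the two inequalities separately. Monotonicity gives $F(f_{1},f_{2})\leqslant\lim_{L}F(f_{1}|_{\Gamma_{L}},f_{2}|_{\Gamma_{L}})$ essentially for free. A compactness argument then produces a global element of $Q_{\A}(f_{1},f_{2})$ from the finite-volume optimizers, which gives the reverse inequality. Throughout I write $F_{L}:=F(f_{1}|_{\Gamma_{L}},f_{2}|_{\Gamma_{L}})$ and use only Def.~\ref{def:fidelity}, including the fact that its maximum is attained.

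\emph{Step 1: monotonicity.} Let $\mathscr{C}\subseteq\mathscr{D}$ be unital $C^*$-subalgebras of $\A$, and let $f\in Q_{\mathscr{D}}(f_{1}|_{\mathscr{D}},f_{2}|_{\mathscr{D}})$. The restriction $f|_{\mathscr{C}}$ still satisfies $|f(a^{*}b)|^{2}\leqslant f_{1}(a^{*}a)f_{2}(b^{*}b)$ for all $a,b\in\mathscr{C}$, so it lies in $Q_{\mathscr{C}}$. It also has the same value at $1$. Taking the maximum in Def.~\ref{def:fidelity} gives $F_{\mathscr{D}}\leqslant F_{\mathscr{C}}$. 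Applying this to $\A_{\Gamma_{L}}\subseteq\A_{\Gamma_{L+1}}\subseteq\A$ shows that $F_{L}$ is non-increasing in $L$ and bounded below by $F(f_{1},f_{2})$. Hence the limit exists and satisfies $\lim_{L}F_{L}\geqslant F(f_{1},f_{2})$.

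\emph{Step 2: the reverse inequality.} For each $L$, choose a maximizer $g_{L}\in Q_{\A_{\Gamma_{L}}}(f_{1}|_{\Gamma_{L}},f_{2}|_{\Gamma_{L}})$, so that $|g_{L}(1)|^{2}=F_{L}$. Taking $a=1$ in the defining inequality gives the uniform bound
\[
|g_{L}(b)|\leqslant\sqrt{f_{1}(1)f_{2}(1)}\,\|b\|\qquad\text{for all } b\in\A_{\Gamma_{L}}.
\]
Extend each $g_{L}$ by Hahn--Banach to a functional $\hat g_{L}$ on $\A$ with the same norm bound. By Banach--Alaoglu, the sequence $\{\hat g_{L}\}$ has a weak-$*$ cluster point $g$ on $\A$. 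Since $\A$ is separable for a countable lattice, one may even pass to a convergent subsequence, say $\hat g_{L_{k}}\to g$ pointwise on $\A$.

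Now verify that $g$ lies in $Q_{\A}(f_{1},f_{2})$. Fix local $a,b\in\A^{\ell}$. Then $a^{*}b\in\A_{\Gamma_{L}}$ for all large $L$, and there $\hat g_{L}(a^{*}b)=g_{L}(a^{*}b)$ obeys $|g_{L}(a^{*}b)|^{2}\leqslant f_{1}(a^{*}a)f_{2}(b^{*}b)$. Passing to the limit along $L_{k}$ gives the same inequality for $g$. Both sides of the inequality are norm-continuous in $(a,b)$, and $\A^{\ell}$ is norm-dense in $\A$, so the inequality holds for all $a,b\in\A$. Finally, $g(1)=\lim_{k}g_{L_{k}}(1)$, so
\[
F(f_{1},f_{2})\geqslant|g(1)|^{2}=\lim_{k}F_{L_{k}}=\lim_{L}F_{L}.
\]
Combined with Step 1, this proves the lemma.

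The only delicate point is Step 2, where optimizers living on different finite algebras must be glued into a single functional on $\A$. The mechanism above handles this: Hahn--Banach extensions with a uniform norm bound, a weak-$*$ limit, and density of $\A^{\ell}$ to transfer the inequality. The constraint defining $Q$ is a closed condition under pointwise limits, so nothing is lost. The remaining steps are routine.
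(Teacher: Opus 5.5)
Your proof is correct. The paper itself gives no proof of this lemma: it is imported as Eq.~(9) of Alberti's paper. So your argument is a self-contained reconstruction rather than a variant of something in the text.

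You work purely from the variational characterization in Def.~\ref{def:fidelity}, in two steps.
\begin{enumerate}
\item Restriction maps $Q_{\mathscr{D}}$ into $Q_{\mathscr{C}}$ without changing the value at $1$. This makes $F(f_{1}|_{\Gamma_{L}},f_{2}|_{\Gamma_{L}})$ non-increasing in $L$ and bounded below by $F(f_{1},f_{2})$.
\item For the reverse inequality, you glue the finite-volume maximizers into an element of $Q_{\A}(f_{1},f_{2})$. This uses the uniform bound $\|g_{L}\|\leqslant\sqrt{f_{1}(1)f_{2}(1)}$ and weak-$*$ compactness. The defining inequality passes to the limit on local operators, and then to all of $\A$ by norm density.
\end{enumerate}
Every step checks out.

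What your route buys over the bare citation:
\begin{itemize}
\item It explains why the limit exists at all: the finite-volume fidelities decrease monotonically, $F(f_{1}|_{\Gamma_{L}},f_{2}|_{\Gamma_{L}})\searrow F(f_{1},f_{2})$. Hence any single finite region already gives an upper bound on the infinite-volume fidelity. This is a useful point in view of the paper's emphasis on local diagnostics of strong symmetry.
\item It is general. Nothing beyond the $Q$-characterization is used, so the argument holds verbatim for any increasing sequence (or net) of unital $C^*$-subalgebras with norm-dense union. It needs no GNS representation and no Uhlmann form as in Lemma~\ref{lemma:Uhlmann}.
\end{itemize}

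Two cosmetic remarks:
\begin{itemize}
\item The separability you invoke is automatic, since an increasing exhausting sequence of finite subsets forces $\Lambda$ to be countable. Even without it, a convergent subnet would suffice, because $F(f_{1}|_{\Gamma_{L}},f_{2}|_{\Gamma_{L}})$ converges monotonically.
\item The Hahn--Banach step can be avoided. Take the limit directly on $\A^{\ell}$, by a diagonal argument over its countable Hamel basis, and extend by continuity using the uniform norm bound.
\end{itemize}
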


Next, we turn to the charge coherence condition and charged operators introduced in Def.~\ref{def:charge_coherence}. We briefly recall the relevant definition below.

Let $G$ be a finite-dimensional compact Lie group, let $\A$ be a $C^*$ algebra, and let $\alpha : G \to \mathrm{Aut}(A)$ be a continuous group homomorphism\footnote{The topology on $\Aut(\A)$ is specified by Eq.~\eqref{eq:strong_limit}. This is called strong topology in mathematical literature.}. Consider a state $\psi$ on $\A$ which is weakly symmetric under $\alpha_{g}$ for each $g \in G$. We say that $\psi$ is \emph{charge-coherent} with respect to $\alpha$ if
\begin{equation}
    F(\psi, \psi_{O}) = 0
\end{equation}
for any charged operator $O \in \A$ such that $\int_{g \in G} \alpha_g(O)\rd g = 0$.
Here $\psi_{O}$ is the (un-normalized) state defined by
\begin{equation}
    \psi_{O}(b) = \psi(O^{*} b \,O).
\end{equation}

\begin{lemma}
    If $\psi$ is charge-coherent with respect to $\alpha$, then it is also charge-coherent with respect to the restriction $\alpha_H$ for any subgroup $H \leq G$.
    \begin{proof}
    Just write $G$ as the disjoint union of left cosets of $H$.
    \end{proof}
\end{lemma}

\begin{theorem}\label{thm:strong_sym=cha_coh}
    A state $\psi$ is charge-coherent with respect to $\alpha$ if and only if it has strong symmetry $\alpha_{g}$ for each $g \in G$.
\end{theorem}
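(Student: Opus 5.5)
The plan is to work throughout in the GNS triple $(\pi,\cH_\psi,|\psi\ra)$ of $\psi$, with $U_g$ the unitary implementing $\alpha_g$ and fixing $|\psi\ra$. The first step turns charge coherence into a statement about the commutant. Both $\psi$ and $\psi_O$ are vector functionals in this same representation, with $\psi_O(b)=\la\psi|\pi(O)^*\pi(b)\pi(O)|\psi\ra$, i.e. represented by the vector $\pi(O)|\psi\ra$. So the generalized Uhlmann formula (Lemma~\ref{lemma:Uhlmann}) gives
\beq
F(\psi,\psi_O)=\sup_{R\in\cM_\psi',\,\|R\|=1}|\la\psi|R\,\pi(O)|\psi\ra|^2 .
\eeq
Since $\cM_\psi'=\pi(\A)'$ is a linear space, charge coherence is therefore equivalent to
\beq
\la\psi|R\,\pi(O)|\psi\ra=0\quad\text{for all } R\in\cM_\psi' \text{ and all charged } O.
\eeq
The degenerate case $\pi(O)|\psi\ra=0$ makes both sides vanish trivially.

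Next I introduce the averaging map $E(a):=\int_G\alpha_g(a)\,\rd g$. Continuity of $g\mapsto\alpha_g(a)$ in norm (strong topology) makes this a norm-convergent Bochner/Riemann integral in $\A$. It satisfies $E\circ\alpha_h=E$, and $a-E(a)$ is charged (or zero) for every $a\in\A$. Because $\pi$ is norm continuous, $\pi(E(a))=\int_G \pi(\alpha_g(a))\,\rd g$, which lets me move the integral through matrix elements.

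For strong symmetry $\Rightarrow$ charge coherence, assume $U_g\in\cM_\psi$ for all $g$. Then $U_g$ commutes with every $R\in\cM_\psi'$. Using $U_g^*|\psi\ra=|\psi\ra$,
\beq
\la\psi|R\,\pi(\alpha_g(O))|\psi\ra=\la\psi|R\,U_g\pi(O)U_g^*|\psi\ra=\la U_g^*\psi|R\,\pi(O)|\psi\ra=\la\psi|R\,\pi(O)|\psi\ra .
\eeq
Integrating over $g$ with the normalized Haar measure then gives $\la\psi|R\,\pi(O)|\psi\ra=\la\psi|R\,\pi(E(O))|\psi\ra=0$ for charged $O$.

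For the converse, charge coherence together with $a-E(a)$ charged yields $\la\psi|R\,\pi(a)|\psi\ra=\la\psi|R\,\pi(E(a))|\psi\ra$ for all $a\in\A$ and $R\in\cM_\psi'$. By the double commutant theorem it suffices to show $U_gR=RU_g$ for every $R\in\cM_\psi'$. I will test this on the dense set of vectors $\pi(a)|\psi\ra,\pi(b)|\psi\ra$. On one side,
\beq
\la\pi(b)\psi|R\,U_g|\pi(a)\psi\ra=\la\psi|R\,\pi(b^*\alpha_g(a))|\psi\ra=\la\psi|R\,\pi(E(b^*\alpha_g(a)))|\psi\ra .
\eeq
On the other side,
\beq
\la\pi(b)\psi|U_gR|\pi(a)\psi\ra=\la\pi(\alpha_{g^{-1}}(b))\psi|R|\pi(a)\psi\ra=\la\psi|R\,\pi(E(\alpha_{g^{-1}}(b^*)a))|\psi\ra .
\eeq
Here I used that $R$ commutes with $\pi(\A)$. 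Since $\alpha_{g^{-1}}(b^*)a=\alpha_{g^{-1}}(b^*\alpha_g(a))$ and $E$ is $\alpha$-invariant, the two expressions agree. Boundedness and density then give $[U_g,R]=0$, so $U_g\in\cM_\psi''=\cM_\psi$, which is strong symmetry in the sense of Def.~\ref{def:strong_sym}. Theorem~\ref{thm:equi_def} converts this to Def.~\ref{def:strong_sym_main}.

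The algebra in both directions is short. I expect the main technical care to lie in the first step: justifying Uhlmann's formula for the possibly unnormalized or zero functional $\psi_O$, and making the Haar integral legitimate in $\A$ and under $\pi$ (continuity of $\alpha$, and interchanging the integral with matrix elements). I would handle these by noting that Lemma~\ref{lemma:Uhlmann} applies to arbitrary positive vector functionals, and by checking the interchange directly via norm continuity.
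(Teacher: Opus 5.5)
Your proposal is correct and follows essentially the same route as the paper: you use the generalized Uhlmann formula to reduce charge coherence to the vanishing of $\la\psi|R\,\pi(O)|\psi\ra$ for all $R\in\cM_\psi'$. For the forward direction you commute $U_g\in\cM_\psi$ past $R$ and Haar-average. For the converse you use the $\alpha$-invariance of the group average on the dense set $\pi(\A)|\psi\ra$ to get $[U_g,R]=0$; this is the paper's $G$-invariance of the pairing $B_R$, with slightly different bookkeeping. Your added remarks on the norm-convergent Haar integral and on applying Uhlmann's formula to the unnormalized functional $\psi_O$ supply details the paper leaves implicit.
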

\begin{proof}
        First we prove that if $\psi$ has strong symmetry $\alpha_g$ for each $g \in G$, then $\psi$ is charge coherent. We work in the GNS representation $(\pi_{\psi},\cH_{\psi},|\psi\ra)$ of $\psi$. Consider any $R \in \cM_\psi'$ with $\|R\| = 1$, we construct the following sesquilinear form 
        \beq\label{eq:inv_pairing}
        B_{R}(|\phi_{1}\ra,\,|\phi_{2}\ra):=\la\phi_{1}|R|\phi_{2}\ra,\quad\forall\,|\phi_{1}\ra,|\phi_{2}\ra\in\cH_{\psi}
        \eeq
        Write $U_{g}$ for the unitary implementation of $\alpha_{g}$. Since $\alpha_{g}$ is a strong symmetry, we have $U_{g} \in \cM_\psi$, and hence
        \beq
        \begin{split}
            B_{R}(U_{g}|\phi_{1}\ra,U_{g}|\phi_{2}\ra)&=\la\phi_{1}|U_{g}^{\dagger}RU_{g}|\phi_{2}\ra\\
            &=\la\phi_{1}|R|\phi_{2}\ra\\
            &=B_{R}(|\phi_{1}\ra,\,|\phi_{2}\ra)
        \end{split}
        \eeq
        To derive the second line, we have used $U_{g}\in\cM_{\psi}$ while $R\in\cM_{\psi}'$.
        Thus, $B_{R}$ is a $G$-invariant pairing. For the choice $|\phi_{1}\ra=|\psi\ra,|\phi_{2}\ra=\pi_{\psi}(O)|\psi\ra=|\psi_{O}\ra$for any charged operator $O\in\A$, we have
        \beq
        \begin{split}
            B_{R}(|\psi\ra,|\psi_{O}\ra)&=B_{R}(U_{g}|\psi\ra,U_{g}|\psi_{O}\ra)\\
            &=B_{R}(|\psi\ra,\pi_{\psi}(\alpha_{g}(O))|\psi\ra)
        \end{split}
        \eeq
         Therefore, summing over $g\in G$ shows that if $\int_{g\in G} \alpha_g(O)\rd g= 0$, then $B_{R}(|\psi\ra,|\psi_{O}\ra) = 0$. From Eq.~\eqref{eq:Uhlmann}, it follows that $F(\psi, \psi_{O}) =\sup_{R\in\cM_{\psi}'} |B_{R}(|\psi\ra,|\psi_{O}\ra)|^{2}=0,\,\|R\|=1$.

        Next we prove the converse, \ie we will assume $\la\psi|R\pi_{\psi}(O)|\psi\ra=0$ for any $R\in\cM_{\psi}'$ and charged operator $O$, and our goal is to show $U_{g}\in\cM_{\psi}$. Note that this is equivalent to showing the $G$-invariance of $B_{R}$ defined by Eq.~\eqref{eq:inv_pairing}.

        Define the group average map $\mathrm{Avr}_{G}:\A\to \A^{G}$ as 
        \beq
        \mathrm{Avr}_{G}(a):=\int_{g\in G}\alpha_{g}(a)\mathrm{d}g,\quad a\in\A
        \eeq
        It is easily checked that $\mathrm{Avr}_{G}(\alpha_{g}(a))=\mathrm{Avr}_{G}(a)=\alpha_{g}(\mathrm{Avr}_{G}(a))$ for any $g\in G$.

        It is clear that $|\la\phi_{1}|R|\phi_{2}\ra|\leqslant \|R\|\cdot\|\,|\phi_{1}\ra\|\cdot\|\,|\phi_{2}\ra\|$ so $B_{R}$ is bounded. Thus it is determined by its value on the dense subspace $\pi_{\psi}(\A)|\psi\ra$.
         Taking $|\phi_{1}\ra=\pi_{\psi}(a_{1})|\psi\ra$ and $|\phi_{2}\ra:=\pi_{\psi}(a_{2})|\psi\ra$. Note that 
\beq\label{eq:avg_seslin_form}
\begin{split}
    B_{R}(|\phi_{1}\ra,\,|\phi_{2}\ra)&=\la\psi|\pi_{\psi}(a_{1}^{*})\cdot R\cdot\pi_{\psi}(a_{2})|\psi\ra\\
    &=\la\psi| R\cdot\pi_{\psi}(a_{1}^{*}a_{2})|\psi\ra\\
    &=\la\psi| R\cdot\pi_{\psi}(\mathrm{Avr}_{G}(a_{1}^{*}a_{2}))|\psi\ra
\end{split}
\eeq
where we have used $\la\psi|R\pi_{\psi}(O)|\psi\ra=0$ if $O$ is an charged operator (\ie $\mathrm{Avr}_{G}(O)=0$). Now, note that
\beq
\begin{split}
    B_{R}(U_{g}|\phi_{1}\ra,U_{g}|\phi_{2}\ra)&=\la\psi|\pi_{\psi}(a_{1}^{*})U_{g}^{\dagger}RU_{g}\pi_{\psi}(a_{2})|\psi\ra\\
    &=\la\psi|\pi_{\psi}(\alpha_{g}(a_{1}^{*}))R\cdot\pi_{\psi}(\alpha_{g}(a_{2}))|\psi\ra\\
    &=\la\psi|R\cdot\pi_{\psi}(\alpha_{g}(a_{1}^{*}))\pi_{\psi}(\alpha_{g}(a_{2}))|\psi\ra\\
    &=\la\psi|R\cdot\pi_{\psi}(\alpha_{g}(a_{1}^{*}a_{2}))|\psi\ra\\
    &=\la\psi|R\cdot\pi_{\psi}(\mathrm{Avr}_{G}(\alpha_{g}(a_{1}^{*}a_{2})))|\psi\ra\\
    &=\la\psi|R\cdot\pi_{\psi}(\mathrm{Avr}_{G}((a_{1}^{*}a_{2}))|\psi\ra\\
    &=B_{R}(|\phi_{1}\ra,|\phi_{2}\ra)
\end{split}
\eeq
In the second line, we have used $U_{g}\pi_{\psi}(a)U_{g}^{\dagger}=\pi_{\psi}\circ\alpha_{g}(a)$ for any $a\in\A$ and $U_{g}|\psi\ra=|\psi\ra$. In the third line, we have used $R\in\cM_{\psi}'$ so it commutes with $\pi_{\psi}(\A)$. Then we used Eq.~\eqref{eq:avg_seslin_form} and $\mathrm{Avr}_{G}(\alpha_{g}(a))=\mathrm{Avr}_{G}(a)$ for any $g\in G$ and $a\in\A$.

Therefore, we conclude that $B_{R}$ is $G$-invariant, this shows $U_{g}^{\dagger}R U_{g}=R$ for any $g\in G$. Since $R\in\cM_{\psi}'$ is arbitrary, this means $U_{g}\in \cM_{\psi}$.
\end{proof}

As a corollary, by lemma \ref{lemma:finite_size_approx}, we have
\begin{corollary}
    A state $\psi$ is strongly symmetric under $\alpha$ iff
    \beq
    \lim_{L\to\infty}F(\psi|_{\Gamma_{L}}, \psi_{O}|_{\Gamma_{L}})=0
    \eeq
    whenever $O$ is an charged operator.
\end{corollary}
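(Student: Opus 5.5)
The corollary follows by combining Theorem~\ref{thm:strong_sym=cha_coh} with Lemma~\ref{lemma:finite_size_approx}. Theorem~\ref{thm:strong_sym=cha_coh} says that $\psi$ is strongly symmetric under $\alpha_g$ for every $g\in G$ if and only if $F(\psi,\psi_O)=0$ for every charged operator $O$. So it suffices to show, for each fixed charged $O$, that $F(\psi,\psi_O)=\lim_{L\to\infty}F(\psi|_{\Gamma_L},\psi_O|_{\Gamma_L})$.

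This identity is Lemma~\ref{lemma:finite_size_approx} applied with $f_1=\psi$ and $f_2=\psi_O$. Lemma~\ref{lemma:finite_size_approx} is stated for arbitrary positive linear functionals, so I only need that $\psi_O$ is one. Linearity is clear. For positivity, $\psi_O(b^*b)=\psi\big((bO)^*(bO)\big)\geqslant0$ for all $b\in\A$. Normalization is not needed, since fidelity is defined on unnormalized functionals. The trivial case $\psi(O^*O)=0$ also needs no separate treatment: then $\psi_O=0$, and both sides of the identity vanish.

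Two minor points remain. First, $\psi_O|_{\Gamma_L}$ means restricting $\psi_O$ to $\A_{\Gamma_L}$. Note that $O$ itself need not lie in $\A_{\Gamma_L}$; the restriction is taken after forming $\psi_O$ on all of $\A$, which is exactly the setting of Lemma~\ref{lemma:finite_size_approx}. Second, the range of $O$ must match. If the corollary ranges over charged operators in $\A^\ell$ rather than $\A$, the argument still works. Given a charged $O\in\A$, approximate it in norm by $O_n\in\A^\ell$, and replace $O_n$ by $O_n-\mathrm{Avr}_G(O_n)$, which is still charged. Since $\mathrm{Avr}_G$ is norm-contractive, these corrected operators still converge to $O$ in norm. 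Then $\psi_{O_n}\to\psi_O$ in norm, and fidelity is continuous in this sense via Lemma~\ref{lemma:Uhlmann}, because $|\la\psi|R\pi(O_n)|\psi\ra-\la\psi|R\pi(O)|\psi\ra|\leqslant\|O_n-O\|$ for $\|R\|=1$. Hence vanishing of $F(\psi,\psi_{O_n})$ for all local charged $O_n$ gives vanishing of $F(\psi,\psi_O)$ for all charged $O\in\A$.

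The only step that carries any content is this density argument, and it is routine. The main argument is the direct combination of the two cited results.
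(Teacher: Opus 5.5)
Your proof is correct and is exactly the paper's route: the corollary follows by combining Theorem~\ref{thm:strong_sym=cha_coh} with Lemma~\ref{lemma:finite_size_approx} applied to $f_1=\psi$ and $f_2=\psi_O$. Your density remark is not needed, since Section~\ref{sec:cha_coh_ord_para} already takes charged operators in $\A$. It also tacitly assumes $\mathrm{Avr}_G(O_n)\in\A^{\ell}$, which holds for finite $G$ or for actions of uniformly finite range, but not obviously for an arbitrary continuous action.
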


\begin{remark}
    If $\psi$ is a pure state, $\cM_{\psi}'=\bbC\cdot\id_{\cH_{\psi}}$, Eq.~\eqref{eq:Uhlmann} shows
    \beq
    F(\psi,\psi_{O})=|\psi(O)|^{2}
    \eeq
    In this case, charge coherence reduces to the standard diagnostic of whether a state is (weakly) symmetric, via the expectation values of charged operators. This reflects the fact that for pure states, strong and weak symmetries are equivalent.
\end{remark}

\section{Split property, separability and mutual information}\label{sec:split}
In this section, we introduce certain notions to characterize the entanglement behavior of a state. We specialize to 1d quantum spin chains in this section. \textit{We stress that we work with 1d spin chains in this and the following section.}
\begin{definition}[Split property \cite{matsui2011boundedness}]
    A factor state $\psi$ is said to split if $\psi\sim \psi_{L}\otimes \psi_{R}$, where $\psi_{L}$ and $\psi_{R}$ are the restriction of $\psi$ to left and right half chain respectively.
\end{definition}
Note that if $\psi$ is a factor state, $\psi_{L}$ and $\psi_{R}$ are automatically factor states.
The following lemma is useful for characterizing the split property.
\begin{lemma}\label{lemma:split}
    Let $\psi$ be a factor state, then it splits if and only if its canonical purification $\tilde{\psi}$ splits.
\end{lemma}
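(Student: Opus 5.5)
The plan is to translate the split property between $\psi$ and $\tilde{\psi}$ using two earlier facts: Lemma~\ref{lemma:purify}, which says that factor states are quasi-equivalent if and only if their canonical purifications are unitarily equivalent, and Lemma~\ref{lemma:purification}, which says that restricting any extension back to $\A$ gives a representation quasi-equivalent to $\pi_\psi$. The doubled chain $\oA\otimes\A$ is regarded as a spin chain with on-site algebra $\ovl{\A_i}\otimes\A_i$. Its left and right halves are $\oA_L\otimes\A_L$ and $\oA_R\otimes\A_R$, so $\tilde\psi_L$ denotes the restriction of $\tilde\psi$ to $\oA_L\otimes\A_L$, and similarly for $\tilde\psi_R$.

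Two auxiliary facts will be used throughout.

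\emph{Tensor products.} Quasi-equivalence is stable under tensor products: if $\varphi_1\sim\varphi_1'$ and $\varphi_2\sim\varphi_2'$, then $\varphi_1\otimes\varphi_2\sim\varphi_1'\otimes\varphi_2'$. This holds because the isomorphisms $f_1,f_2$ of the generated von Neumann algebras tensor to an isomorphism of the spatial tensor products.

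\emph{Restrictions.} Quasi-equivalence of representations of $\cB$ passes to any subalgebra $\sC\subseteq\cB$. The isomorphism $f:\pi_1(\cB)''\to\pi_2(\cB)''$ is normal, so it maps $\pi_1(\sC)''$ onto $\pi_2(\sC)''$ while sending $\pi_1(c)\mapsto\pi_2(c)$.

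The key structural claim is the following. For factor states $\varphi_L,\varphi_R$ on the two halves, the canonical purification of $\varphi_L\otimes\varphi_R$ equals $\tilde\varphi_L\otimes\tilde\varphi_R$ on $(\oA_L\otimes\A_L)\otimes(\oA_R\otimes\A_R)$. To prove it, I would check the two properties that characterize the purification in Lemma~\ref{lemma:purify}.
\begin{itemize}
\item \emph{$j$-positivity.} The map $j$ factorizes along the halves. Evaluating $\tilde\varphi_L\otimes\tilde\varphi_R$ on elements $\ovl{a}\otimes a$ with $a\in\A^\ell$ reduces, after writing $a=\sum_k a_L^k\otimes a_R^k$, to positivity of a Schur-type product of two positive matrices. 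That product is positive.
\item \emph{Exactness.} Exactness of the product should follow from the commutation theorem $(\cM_1\otimes\cM_2)'=\cM_1'\otimes\cM_2'$ applied to the two exact pieces.
\end{itemize}
The product is also pure, as a tensor product of pure states. Uniqueness of the $j$-positive exact purification then identifies it with the canonical purification.

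For ($\Rightarrow$), assume $\psi\sim\psi_L\otimes\psi_R$. Both sides are factor states, so Lemma~\ref{lemma:purify} and the structural claim give
\[
\tilde\psi\simeq\tilde\psi_{L}^{\,\mathrm{can}}\otimes\tilde\psi_{R}^{\,\mathrm{can}}=:\phi_L\otimes\phi_R ,
\]
where $\tilde\psi_{L}^{\,\mathrm{can}}$ is the canonical purification of $\psi_L$. Here $\phi_L$ and $\phi_R$ are pure states on the two halves of the doubled chain. I would then argue in three steps.
\begin{enumerate}
\item Restrict the unitary equivalence to $\oA_L\otimes\A_L$. By the restriction fact, $\pi_{\tilde\psi}|_{L}\sim\pi_{\phi_L\otimes\phi_R}|_L$. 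The right-hand side is a multiple of $\pi_{\phi_L}$, hence quasi-equivalent to it.
\item The GNS representation of $\tilde\psi_L$ is a cyclic subrepresentation of $\pi_{\tilde\psi}|_L$. Using Lemma~\ref{lemma:factor_quasieq}, or the argument of Lemma~\ref{lemma:purification}, this gives $\tilde\psi_L\sim\phi_L$, and likewise $\tilde\psi_R\sim\phi_R$.
\item By the tensor-product fact, $\tilde\psi_L\otimes\tilde\psi_R\sim\phi_L\otimes\phi_R\simeq\tilde\psi$. So $\tilde\psi$ splits.
\end{enumerate}

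For ($\Leftarrow$), assume $\tilde\psi\sim\tilde\psi_L\otimes\tilde\psi_R$ and restrict both sides to the subalgebra $1_{\oA}\otimes\A=\A_L\otimes\A_R$.
\begin{enumerate}
\item On the left, $\pi_{\tilde\psi}|_{\A}\sim\pi_\psi$ by Lemma~\ref{lemma:purification}.
\item On the right, the restriction is $\pi_{\tilde\psi_L}|_{\A_L}\otimes\pi_{\tilde\psi_R}|_{\A_R}$. Applying Lemma~\ref{lemma:purification} to the extensions $\tilde\psi_L$ of $\psi_L$ and $\tilde\psi_R$ of $\psi_R$, these factors are quasi-equivalent to $\pi_{\psi_L}$ and $\pi_{\psi_R}$ respectively.
\item The restriction fact turns the assumed quasi-equivalence into $\pi_{\tilde\psi}|_{\A}\sim\pi_{\tilde\psi_L}|_{\A_L}\otimes\pi_{\tilde\psi_R}|_{\A_R}$. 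Combining this with steps~1 and~2 and the tensor-product fact gives $\psi\sim\psi_L\otimes\psi_R$.
\end{enumerate}
This direction uses no property of the purification beyond its being an extension.

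The main obstacle I expect is the structural claim that canonical purification commutes with tensor products. The exactness (Haag duality) of the product needs the commutation theorem together with some care about which half-algebras generate which commutants. $j$-positivity is not automatic either, since $\ovl{a}\otimes a$ does not split as a product of half-chain positive elements. If the direct verification becomes messy, my fallback is to avoid uniqueness altogether. I would show only that $\tilde\psi_L\otimes\tilde\psi_R$ and $\tilde\psi$ are both pure states whose GNS representations, restricted to $\A$, are quasi-equivalent. Then I would compare them at spatial infinity using Lemma~\ref{lemma:unitary_equivalence}.
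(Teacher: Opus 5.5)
Your proposal is correct. The forward direction follows the paper's argument: Lemma~\ref{lemma:purify} together with the identification $\widetilde{\psi_L\otimes\psi_R}=\phi_L\otimes\phi_R$ by uniqueness. The paper only asserts that identification, and you verify it: $j$-positivity through the Schur product of the positive matrices $\tilde\varphi_L(\overline{a^k_L}\otimes a^l_L)$ and $\tilde\varphi_R(\overline{a^k_R}\otimes a^l_R)$, and exactness through $(\cM_1\otimes\cM_2)'=\cM_1'\otimes\cM_2'$. You also justify the last step, going from $\tilde\psi\simeq\phi_L\otimes\phi_R$ to $\tilde\psi\sim\tilde\psi_L\otimes\tilde\psi_R$ with $\tilde\psi_{L},\tilde\psi_R$ the actual restrictions of $\tilde\psi$. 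The paper compresses that step into ``i.e.''.

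The converse is where you genuinely differ. The paper only says it follows ``similarly using Lemma~\ref{lemma:purify}''. That route would pass back through canonical purifications, and it would have to relate the mixed restriction $\tilde\psi_L$ to the pure canonical purification of $\psi_L$. You instead restrict the assumed quasi-equivalence to $\bar 1\otimes\A$ and invoke Lemma~\ref{lemma:purification} three times. This is shorter and more general, because it uses nothing about $\tilde\psi$ beyond its being an extension of $\psi$. It uses no $j$-positivity, no exactness and no factoriality, so it shows that $\psi$ splits whenever any extension of it to a doubled chain, cut at the same point, splits.

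The forward direction, by contrast, really does need the canonical purification. Take Bell pairs between system site $i$ and bath site $-i$. This purifies the tracial state, which trivially splits, but the left half of the doubled chain then carries a type $II_1$ factor, so this purification does not split.

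The same example shows why your fallback would fail as sketched. This purification and $\tilde\tau$ are both pure extensions of $\tau$, yet they are not unitarily equivalent, so quasi-equivalence of the restrictions to $\A$ cannot stand in for uniqueness. Your direct verification of the structural claim is sound, so the fallback is not needed.
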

\begin{proof}
    If $\psi$ splits, then $\psi\sim \psi_{L}\otimes \psi_{R}=:\phi$. By lemma \ref{lemma:purify}, $\tilde{\psi}\simeq\tilde{\phi}$.  By the uniqueness of canonical purification (see Lemma.~\ref{lemma:purify}), we must have $\tilde{\phi}= \phi_{L}\otimes\phi_{R}$ where $\phi_{L}$ (resp. $\phi_{R}$) is the purification of $\psi_{L}$ (resp. $\psi_{R}$). So we have unitary equivalence $\tilde{\psi}\simeq\phi_{L}\otimes\phi_{R}$, \ie $\tilde{\psi}$ splits.

    The converse can be proved similarly using lemma \ref{lemma:purify}, so we will suppress it here.
\end{proof}
We noticed that a similar discussion has already appeared in \cite{sopenko2025reflectionpositivityrefinedindex}.

We provide another version of the split property.
\begin{definition}[Countably 2-separable]\label{def:countable_2_sep}
    A factor state $\psi$ is said to be countably 2-separable if it can be written as
    \beq
    \psi = \sum_{k=1}^{\infty}\lambda_{k}\psi_{L}^{(k)}\otimes \psi_{R}^{(k)}
    \eeq
    where $\lambda_{k}\searrow0$ adds up to 1, $\psi_{L}^{(k)}, \psi_{R}^{(k)}$ are factor states defined on the left and right half chains, respectively.
\end{definition}
For a pure state $\psi$, this condition is equivalent to the Schmidt decomposition between the left and right half-chains.  
Such a decomposition requires the GNS Hilbert space to factorize as  
\[
\cH_{\psi} \simeq \cH_{L}\otimes \cH_{R}
\]
As shown in Ref.~\cite{matsui2011boundedness}, this factorization holds precisely when a pure state $\psi$ satisfies the split property or, equivalently, when the half-chain restriction of $\psi$ generates a type~$I$ factor.

We prove a mixed analog of this observation below.
\begin{lemma}
    A factor state $\psi$ splits iff it is countably 2-separable.
\end{lemma}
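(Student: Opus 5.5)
We prove the two implications separately, using the factor-state machinery of Sec.~\ref{sec:backgrounds} together with Lemma~\ref{lemma:split}. Throughout, write $\phi:=\psi_{L}\otimes\psi_{R}$. Since $\psi$ is a factor state, $\psi_{L},\psi_{R}$ are factor states, so $\phi$ is a factor state as well: its von Neumann algebra is $\cM_{\psi_{L}}\otimes\cM_{\psi_{R}}$, whose center is trivial by the commutation theorem for tensor products.

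\emph{Splitting $\Rightarrow$ countably 2-separable.} By Lemma~\ref{lemma:split}, $\tilde{\psi}$ splits. Since $\tilde\psi$ is pure, Matsui's result \cite{matsui2011boundedness} then gives a factorization $\cH_{\tilde\psi}\simeq \cH_{1}\otimes\cH_{2}$. Here $\cH_{1}$ carries $\ovl{\A}_{L}\otimes\A_{L}$ and $\cH_{2}$ carries $\ovl{\A}_{R}\otimes\A_{R}$, and in this factorization $|\tilde\psi\ra$ is a unit vector. Take its Schmidt decomposition
\begin{equation}
|\tilde\psi\ra=\sum_k \sqrt{\lambda_k}\,|e_k\ra\otimes|f_k\ra,
\end{equation}
which is countable because both factors are separable Hilbert spaces. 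Restricting to $1\otimes\A=\A_{L}\otimes\A_{R}$ gives
\begin{equation}
\psi=\sum_k\lambda_k\,\psi_L^{(k)}\otimes\psi_R^{(k)},\qquad \psi_L^{(k)}(a):=\la e_k|a|e_k\ra,\quad \psi_R^{(k)}(b):=\la f_k|b|f_k\ra .
\end{equation}
This expansion has no cross terms, because the $f_k$ are orthonormal and $\A_{L}$ acts only on the first factor. It remains to check that the $\psi_L^{(k)}$ are factor states. Each $\psi_L^{(k)}$ is a vector state in the representation of $\A_L$ on $\cH_1$, and that representation generates a factor: it is the restriction of $\tilde\pi$ to $\A_L$, tensored with the identity, and is quasi-equivalent to $\pi_{\psi_L}$ by Lemma~\ref{lemma:purification}. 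By Lemma~\ref{lemma:factor_quasieq}, the cyclic subrepresentation generated by $|e_k\ra$ is then quasi-equivalent to $\pi_{\psi_L}$, so $\psi_L^{(k)}$ is a factor state. The same argument applies on the right. Finally, reorder so that $\lambda_k\searrow0$.

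\emph{Countably 2-separable $\Rightarrow$ splitting.} Suppose $\psi=\sum_k\lambda_k\psi_L^{(k)}\otimes\psi_R^{(k)}$. Since $\lambda_1\psi_L^{(1)}\otimes\psi_R^{(1)}\leqslant\psi$ and $\psi$ is a factor state, Lemma~\ref{lemma:factor_quasieq} gives $\psi\sim\psi_L^{(1)}\otimes\psi_R^{(1)}$, and likewise $\psi\sim\psi_L^{(k)}\otimes\psi_R^{(k)}$ for every $k$. Restricting to the left half chain, $\psi_L=\sum_k\lambda_k\psi_L^{(k)}$. We want $\psi_L\sim\psi_L^{(1)}$, and we obtain it as follows. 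The pairwise quasi-equivalences $\psi_L^{(j)}\otimes\psi_R^{(j)}\sim\psi_L^{(k)}\otimes\psi_R^{(k)}$ restrict to quasi-equivalence of the restrictions to $\A_L\otimes 1$. Because restriction of a quasi-equivalence to a subalgebra is again a quasi-equivalence, this gives $\psi_L^{(j)}\sim\psi_L^{(k)}$. A countable convex combination of mutually quasi-equivalent factor states is quasi-equivalent to each summand: every subrepresentation of $\pi_{\psi_L}$ sits inside $\bigoplus_k\pi_{\psi_L^{(k)}}$, and we apply Lemma~\ref{lemma:quasi_eq_rep}. Hence $\psi_L\sim\psi_L^{(1)}$, and similarly $\psi_R\sim\psi_R^{(1)}$. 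Tensor products respect quasi-equivalence, so
\begin{equation}
\psi_L\otimes\psi_R\sim\psi_L^{(1)}\otimes\psi_R^{(1)}\sim\psi,
\end{equation}
which is the split property.

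The main obstacle is the forward direction, specifically justifying that the factorization $\cH_{\tilde\psi}\simeq\cH_1\otimes\cH_2$ for the \emph{doubled} chain is compatible with the $L/R$ split of $\A$. For this we should regard $\ovl{\A}\otimes\A$ as a spin chain whose left half is $\ovl{\A}_L\otimes\A_L$, and apply Matsui's result to that chain. A secondary technical point is the direct-sum argument in the reverse direction, where we must ensure that an infinite direct sum of quasi-equivalent representations remains quasi-equivalent to each summand.
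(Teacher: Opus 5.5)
The forward direction fails at the sentence ``This expansion has no cross terms''. In the factorization $\cH_{\tilde\psi}\simeq\cH_1\otimes\cH_2$, an element $a\in\A_L$ acts as $\rho_1(a)\otimes 1$. An element $b\in\A_R$, however, acts as $1\otimes\rho_2(b)$, so it acts \emph{nontrivially} on the second Schmidt factor. Restricting to $\ovl{1}\otimes\A$ traces out only $\ovl{\A}_L$ and $\ovl{\A}_R$, not a whole Schmidt factor. Hence
\[
\psi(ab)=\sum_{k,l}\sqrt{\lambda_k\lambda_l}\,\la e_k|\rho_1(a)|e_l\ra\,\la f_k|\rho_2(b)|f_l\ra ,\qquad a\in\A_L,\ b\in\A_R,
\]
and orthonormality of the $f_k$ removes the $k\neq l$ terms only when $b=1$. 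What splitting gives you is a Schmidt decomposition of the \emph{vector} $|\tilde\psi\ra$, which encodes entanglement across the cut. It is not a convex decomposition of $\psi$ into product states.

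This cannot be repaired, because the forward implication is false as stated. Take the state used after Theorem~\ref{thm:strongsym_SPT}: all spins up except a singlet on sites $-1,0$. It is pure, hence a factor state. It coincides with the product state $\psi_L\otimes\psi_R$ on $\A_{\{-1,0\}^c}$, so $\psi\sim\psi_L\otimes\psi_R$ by Lemma~\ref{lemma:quasi_equiv}; that is, it splits. But a pure state has no nontrivial convex decompositions. So $\psi=\sum_k\lambda_k\psi_L^{(k)}\otimes\psi_R^{(k)}$ would force $\psi=\psi_L^{(1)}\otimes\psi_R^{(1)}$, and restricting to each half gives $\psi=\psi_L\otimes\psi_R$. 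This contradicts $\psi(\sigma^z_{-1}\sigma^z_0)=-1\neq 0=\psi_L(\sigma^z_{-1})\psi_R(\sigma^z_0)$.

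The paper's own one-line argument (``take the Schmidt decomposition of $\tilde\psi$ and restrict it to $\A$'') has exactly the same flaw, so you are not missing an idea that the paper supplies. What survives is the statement about the factorization $\cH_{\tilde\psi}\simeq\cH_1\otimes\cH_2$ and the Schmidt decomposition of $|\tilde\psi\ra$, which is what the remark after Definition~\ref{def:countable_2_sep} actually describes. That is a statement about vectors, not about separability of $\psi$.

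Your reverse direction is correct, and it is more complete than the paper's. The paper stops at $\psi\sim\psi_L^{(k)}\otimes\psi_R^{(k)}$ without explaining why this yields $\psi\sim\psi_L\otimes\psi_R$. Your step $\psi_L\sim\psi_L^{(1)}$, $\psi_R\sim\psi_R^{(1)}$, followed by tensoring the quasi-equivalences, is exactly what is missing there. You can shorten it: $\psi_L$ is itself a factor state (the clustering of $\psi$ restricts to the half chain), and $\lambda_1\psi_L^{(1)}\leqslant\psi_L$. So Lemma~\ref{lemma:factor_quasieq} gives $\psi_L\sim\psi_L^{(1)}$ directly, with no need for the pairwise-restriction and direct-sum argument.
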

\begin{proof}
    If $\psi$ splits, its purification $\tilde{\psi}:\ovl{\A}\otimes\A\to\bbC$ also splits. Take the Schmidt decomposition of $\tilde{\psi}$ and restrict it to $\A$; we see that $\psi$ is countably 2-separable.

    Conversely, if $\psi$ is countably 2-separable, due to Lemma \ref{lemma:factor_quasieq}, we have $\psi\sim \psi_{L}^{(k)}\otimes \psi_{R}^{(k)}$ for any $k$ so it splits as well. 
\end{proof}

Now, we relate the split property to correlations and mutual information.
\begin{proposition}\label{prop:MI_split}
    If a factor state $\psi$ satisfies the area law of mutual information,
    \beq
    I(\Gamma;\Gamma^{c}):= S(\psi\|(\psi_{\Gamma}\otimes \psi_{\Gamma^{c}})) < C_{\psi}
    \eeq
    where $C_{\psi}$ is a constant independent of $\Gamma$ and $S(\cdot\|\cdot)$ is the relative entropy,then $\psi$ splits.
\end{proposition}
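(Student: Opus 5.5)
The plan is to show that $S(\psi\|\psi_{L}\otimes\psi_{R})<\infty$, then that finite relative entropy excludes disjointness, and finally to use the factor dichotomy (Lemma \ref{lemma:disjoint_quasieq}) to get quasi-equivalence.

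\emph{Step 1: bound the half-chain relative entropy.} Write $L=(-\infty,-1]$ and $R=[0,\infty)$. For $n\geqslant m\geqslant 0$ take the finite interval $\Gamma_{n}=[0,n]$. Restricting to the finite region $\Delta_{m}:=[-m,m]$ and using monotonicity of relative entropy under restriction gives
\beq
S\big(\psi_{\Delta_{m}}\,\|\,(\psi_{\Gamma_{n}}\otimes\psi_{\Gamma_{n}^{c}})_{\Delta_{m}}\big)\leqslant I(\Gamma_{n};\Gamma_{n}^{c})<C_{\psi}.
\eeq
Since $n\geqslant m$, the reference state restricts to $\psi_{[-m,-1]}\otimes\psi_{[0,m]}$. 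This coincides with $(\psi_{L}\otimes\psi_{R})_{\Delta_{m}}$. Hence $S(\psi_{\Delta_{m}}\|(\psi_{L}\otimes\psi_{R})_{\Delta_{m}})<C_{\psi}$ for every $m$. Letting $m\to\infty$ and applying the martingale property (Lemma \ref{lemma:martingale}) yields $S(\psi\|\psi_{L}\otimes\psi_{R})\leqslant C_{\psi}<\infty$.

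\emph{Step 2: finite relative entropy implies non-disjointness.} I would route this through fidelity. On each finite algebra $\A_{\Delta_{m}}$ there is the standard inequality $-\log F(\rho,\sigma)\leqslant S(\rho\|\sigma)$. It follows from monotonicity in $\alpha$ of the sandwiched Rényi divergences, since the $\alpha=1/2$ divergence is exactly $-\log F$. This gives $F(\psi_{\Delta_{m}},(\psi_{L}\otimes\psi_{R})_{\Delta_{m}})\geqslant e^{-C_{\psi}}$ uniformly in $m$. Lemma \ref{lemma:finite_size_approx} then gives $F(\psi,\psi_{L}\otimes\psi_{R})\geqslant e^{-C_{\psi}}>0$. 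By the remark following Lemma \ref{lemma:Uhlmann}, disjoint states have zero fidelity, so $\psi$ and $\psi_{L}\otimes\psi_{R}$ are not disjoint.

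\emph{Step 3: conclude.} As noted after the definition of splitting, $\psi_{L}$ and $\psi_{R}$ are factor states. The GNS von Neumann algebra of $\psi_{L}\otimes\psi_{R}$ is then $\cM_{\psi_{L}}\otimes\cM_{\psi_{R}}$, and a tensor product of factors is a factor. So $\psi_{L}\otimes\psi_{R}$ is a factor state, as is $\psi$ by assumption. Lemma \ref{lemma:disjoint_quasieq} says two factor states are either disjoint or quasi-equivalent, so Step 2 forces $\psi\sim\psi_{L}\otimes\psi_{R}$, i.e.\ $\psi$ splits.

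The main obstacle is Step 2, specifically the fidelity--entropy inequality. I would prove it on finite algebras and only then pass to the limit, so that no infinite-volume Rényi theory is needed. A fallback is Araki's result that $S(\psi\|\varphi)<\infty$ forces the support of $\psi$ to lie in the normal folium of $\varphi$, which again rules out disjointness.
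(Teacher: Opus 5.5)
Your proof is correct and has the same structure as the paper's: bound $S(\psi\|\psi_{L}\otimes\psi_{R})$ by $C_{\psi}$, rule out disjointness, and conclude with the factor dichotomy of Lemma~\ref{lemma:disjoint_quasieq}. For the first step the paper uses weak-* lower semicontinuity along $\psi_{[0,n]}\otimes\psi_{[0,n]^{c}}\to\psi_{L}\otimes\psi_{R}$ where you use monotonicity plus Lemma~\ref{lemma:martingale}; for the second step the paper simply asserts that disjoint states have infinite relative entropy, whereas your bound $F(\psi,\psi_{L}\otimes\psi_{R})\geqslant e^{-C_{\psi}}$ derives it from the paper's own fidelity lemmas, needs only your finite-region bounds, and is accompanied by an explicit check that $\psi_{L}\otimes\psi_{R}$ is a factor state, which the paper uses tacitly.
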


\begin{proof}
    The following proof is based on the weak-* lower-semicontinuity of realtive entropies (see e.g. Prop.~5.23 of Ref.~\cite{ohya2004quantum}), i.e., if $\lim_{n\to\infty}\varphi_{n}=\varphi$ and $\lim_{n\to\infty}\phi_{n}=\phi$ be two weak-* convergent subsequences of states, then
    \beq
    S(\varphi ||\phi)\leqslant \lim_{k\to\infty}\inf_{n\geqslant k}S(\varphi_{n}||\phi_{n})< C_{\psi}
    \eeq
    Consider the sequence $\varphi_{n}=\psi$ and $\phi_{n}=\psi_{[0,n]}\otimes \psi_{[0,n]^{c}}$, note $\lim_{n\to\infty}\phi_{n}=\psi_{L}\otimes \psi_{R}$, so we obtain $S(\psi||\psi_{L}\otimes \psi_{R})<C_{\psi}<\infty$. Now use the fact that two factor states are either disjoint or quasi-equivalent, and $S(\varphi||\phi)=\infty$ if two states are disjoint, we conclude $\psi\sim\psi_{L}\otimes \psi_{R}$.
\end{proof}

\section{Lieb-Schultz-Mattis (LSM) constraints in 1d}\label{sec:LSM}
In this section, we specialize to one-dimensional quantum spin chains. 
We begin with the LSM constraints for strong symmetries, which are technically simpler, and then extend the argument to the more subtle case of vN symmetries. In the main text, we restricted attention to symmetry actions implemented by QCAs. Here, however, we allow the more general class of symmetry actions implemented by \textit{locality-preserving automorphisms} $\G^{lp}$

see Def.~\ref{def:LPA} for the precise definition.

Let us first recall the pure version of the LSM constraint in Ref.~\cite{kapustin2024anomalous,rubio2024classifyingsymmetricsymmetrybrokenspin}.
\begin{lemma}\label{lemma:KS}
        Let $\alpha:G\to \G^{lp}$ be a symmetry action and $\psi$ be a pure state; then the followings are incompatible:
    \begin{enumerate}
        \item $\psi$ is symmetric under $\alpha$.
        \item $\psi$ splits.
        \item $\alpha$ has a non-vanishing anomaly index.
    \end{enumerate}
\end{lemma}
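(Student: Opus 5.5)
This is the pure-state LSM theorem of Kapustin--Sopenko and Garre-Rubio et al. The plan is to argue by contradiction: assume all three conditions hold and show that the anomaly index $\omega$ must be trivial. The proof has four steps.

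The first step uses the split property to reduce to a half chain. By Ref.~\cite{matsui2011boundedness}, a pure state $\psi$ that splits satisfies $\psi\simeq\psi_L\otimes\psi_R$ with $\psi_L,\psi_R$ type I factor states. Hence $\cH_\psi\cong\cH_L\otimes\cH_R$, with $\cM_{\psi_L}=\B(\cH_L)\otimes 1$ and $\cM_{\psi_R}=1\otimes\B(\cH_R)$.

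The second step truncates the symmetry. Each $\alpha_g\in\G^{lp}$ has a restriction $\alpha_g^R$ to the right half chain, and $\alpha_g$ differs from $\alpha^L_g\otimes\alpha^R_g$ by an automorphism localized near the cut, up to tails. This is the standard construction underlying the anomaly index reviewed in Appendix~\ref{sec:anomaly_index}. Because $\psi$ is $\alpha_g$-invariant and the two agree up to a near-cut correction, $\psi_R\circ\alpha^R_g$ agrees with $\psi_R$ far from the cut. Both are type I factor states, so Lemma~\ref{lemma:quasi_equiv} gives $\psi_R\circ\alpha_g^R\sim\psi_R$.

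The third step implements the truncated symmetry. By the quasi-equivalence just obtained, $\alpha^R_g$ extends to a $*$-automorphism of the type I factor $\cM_{\psi_R}\cong\B(\cH_R)$. Every such automorphism is inner (Corollary 9.3.5 of Ref.~\cite{kadison1997fundamentals2}, as used in Lemma~\ref{lemma:vN_sym_type_I}). So there are unitaries $W_g$ on $\cH_R$ with $\pi_R\circ\alpha_g^R=\Ad_{W_g}\circ\pi_R$.

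The fourth step computes the index. Composition of the truncations gives $\alpha^R_g\alpha^R_h=\Ad_{w_{g,h}}\circ\alpha^R_{gh}$, where $w_{g,h}\in\A$ is localized near the cut. Represented in $\B(\cH_R)$, this yields
\beq
W_gW_h=\mu(g,h)\,\pi_R(w_{g,h})\,W_{gh}
\eeq
for phases $\mu(g,h)$. Now compute the associator $W_gW_hW_k$ in the two possible orders. The 3-cocycle $\omega(g,h,k)$ of Appendix~\ref{sec:anomaly_index} is defined by $w_{g,h}\alpha^R_{gh}(w_{\cdot})\cdots=\omega(g,h,k)\cdots$. Associativity of genuine operators on $\cH_R$ then forces $\omega=d\mu$, so $[\omega]=1\in\rH^3(G;\U)$, a contradiction. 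Here $\mu$ depends on $g,h$ only through the choice of phases of the $W$'s, and that choice supplies the coboundary.

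The hardest part is the cocycle bookkeeping in the fourth step. One must check that $\omega$ is independent of the choices of the truncation, the $w_{g,h}$, and the phases. One must also check that the $\pi_R(w_{g,h})$ do not obstruct the cancellation, which uses the fact that $\pi_R$ is faithful. For locality-preserving automorphisms with tails, the truncation needs $w_{g,h}$ to lie in $\A$ and be almost localized, as in Ref.~\cite{kapustin2024anomalous}. I would import those technical lemmas rather than reprove them.
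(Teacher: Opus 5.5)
Your second step has a genuine gap: it fails for part of $\G^{lp}$.

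\textbf{The gap.} If some $\alpha_g$ has nonzero GNVW index, then $\alpha_g$ cannot be written as $\alpha^L_g\otimes\alpha^R_g$ composed with a correction localized near the cut. The GNVW index is precisely the obstruction to such a splitting. A lattice translation is the basic example, and it is the case that matters most for LSM. Concretely, a translation restricted to the right half chain is only an endomorphism of $\A_R$, not an automorphism. So there is no $\alpha^R_g$ to extend to $\cM_{\psi_R}$, and your third step produces no unitaries $W_g$.

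\textbf{The fix.} You need the reduction the paper makes at the start of the proof of Theorem~\ref{thm:vN_LSM}, which is also built into the definition of the index in Appendix~\ref{sec:anomaly_index}. Stack with ancilla chains on which $G$ acts by opposite shifts, placed in a shift-invariant pure product state. The stacked state is still pure, split and symmetric, and the stacked action has vanishing GNVW index. Its anomaly index is by definition that of $\alpha$ \cite{kapustin2024anomalous}. Only after this reduction does the half-chain truncation you invoke exist.

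\textbf{Comparison with the paper.} With that reduction added, your argument is essentially the one the paper relies on. The paper cites Refs.~\cite{kapustin2024anomalous,rubio2024classifyingsymmetricsymmetrybrokenspin} for this lemma. Its proof of Theorem~\ref{thm:vN_LSM} runs the same computation, and it covers pure states, since then $\cM_\psi=\B(\cH_\psi)$ and every symmetry is a vN symmetry. That proof takes inner implementers $U_g\in\cM_R$ of $\alpha^R_g$ and forms $\eta_{g,h}=U_h^{-1}U_g^{-1}\pi_\psi(V_{g,h})U_{gh}$. This element lies in $\cM_R\cap\cM_R'$, hence is a phase, and $\omega=\delta\eta$ follows.

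Your route to the inner implementation is more explicit than the paper's. You use asymptotic agreement of $\psi_R\circ\alpha^R_g$ with $\psi_R$, then Lemma~\ref{lemma:quasi_equiv}, then innerness of automorphisms of type I factors. The paper instead appeals to Corollary~\ref{coro:vN_sym_splitting}, which leaves implicit why $\alpha^R_g$ extends to $\cM_R$ at all.

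\textbf{Two small fixes.}
\begin{enumerate}
\item The cocycle is $w_{g,h}w_{gh,k}w_{g,hk}^{-1}\alpha^R_g(w_{h,k})^{-1}$, with $\alpha^R_g$ rather than $\alpha^R_{gh}$. With this form your associator bookkeeping does close to $\omega=\delta\mu$.
\item You should note that $w_{g,h}\in\A_R$, so that $\pi_R(w_{g,h})$ is defined. This holds because $w_{g,h}$ commutes with $\A_L$, since $\Ad_{w_{g,h}}$ is trivial there. Faithfulness of $\pi_R$, which you use to cancel $\pi_R$, then follows from simplicity of $\A_R$.
\end{enumerate}
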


It is straightforward to generalize it to strong symmetries.
\begin{theorem}[LSM for strong symmetries]\label{thm:LSM_strong}
    Let $\alpha:G\to \G^{lp}$ be a symmetry action and $\psi$ be a possibly mixed state. then the followings are incompatible:
    \begin{enumerate}
        \item $\psi$ is strongly symmetric under $\alpha$.
        \item $\psi$ is clustering.
        \item $\psi$ splits.
        \item $\alpha$ has a non-vanishing anomaly index.
    \end{enumerate}
\end{theorem}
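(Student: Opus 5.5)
We argue by contradiction and reduce to the pure-state statement, Lemma~\ref{lemma:KS}, by passing to the canonical purification. Suppose $\psi$ satisfies conditions 1--3, and let $\tilde{\psi}$ be its canonical purification on $\oA\otimes\A$ from Lemma~\ref{lemma:purify}. Since $\psi$ is clustering, Lemma~\ref{lemma:clustering} shows that $\psi$ is a factor state. Lemma~\ref{lemma:purify} then guarantees that $\tilde{\psi}$ is \emph{pure}. We regard $\oA\otimes\A$ as a single spin chain in which each site carries $\ovl{\A_i}\otimes\A_i$, so that the canonical purification lives on an honest 1d spin chain.

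The three hypotheses transfer to $\tilde{\psi}$ as follows.
\begin{enumerate}
\item By Theorem~\ref{thm:equi_def} (item $1\Rightarrow 2$), strong symmetry of $\psi$ under each $\alpha_g$ gives $\tilde{\psi}\circ(\id_{\oA}\otimes\alpha_g)=\tilde{\psi}$ for all $g\in G$. Hence $\tilde{\psi}$ is symmetric under the action $\tilde{\alpha}:g\mapsto \id_{\oA}\otimes\alpha_g$.
\item By Lemma~\ref{lemma:split}, the split property of the factor state $\psi$ implies that $\tilde{\psi}$ splits with respect to the doubled left/right half-chains.
\item We must check that $\tilde{\alpha}$ is again a locality-preserving action on the doubled chain and that its anomaly index equals that of $\alpha$. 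Then Lemma~\ref{lemma:KS}, applied to the pure, symmetric, split state $\tilde{\psi}$ with the anomalous action $\tilde{\alpha}$, gives the contradiction.
\end{enumerate}

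The main obstacle is the third step, the invariance of the anomaly index under stacking with an ancilla on which the symmetry acts trivially. Locality preservation of $\id_{\oA}\otimes\alpha_g$ is immediate from Def.~\ref{def:LPA}, since the tails are unchanged. For the index, we follow the construction reviewed in Appendix~\ref{sec:anomaly_index}. One restricts $\alpha_g$ to a half-chain, $\alpha_g\approx \alpha_g^{R}\otimes\alpha_g^{L}\circ(\text{local})$, and extracts $\omega$ from the failure of $\alpha^{R}_g\alpha^{R}_h(\alpha^{R}_{gh})^{-1}$ to compose associatively up to inner unitaries. For $\id\otimes\alpha$ we choose the restrictions $\id\otimes\alpha_g^{R}$. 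All the local unitaries appearing in the construction can then be taken to be $1\otimes u$, with $u$ the unitaries used for $\alpha$, so the resulting $3$-cocycle is literally the same and the index is $\omega\neq1$. This contrasts with $\ovl{\alpha}\otimes\alpha$, whose index cancels (as noted after Theorem~\ref{thm:equi_def}); the point is precisely that strong symmetry lets us use the one-sided action.

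A minor care point is that Lemma~\ref{lemma:KS} should be applied to the doubled system with its own notion of splitting, namely the left half of $\oA\otimes\A$ versus the right half. This is exactly what Lemma~\ref{lemma:split} provides, so no further work is needed there.
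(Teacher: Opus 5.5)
Your proposal is correct and follows the paper's own proof. The steps are the same: pass to the canonical purification $\tilde{\psi}$, transfer strong symmetry via Theorem~\ref{thm:equi_def} and the split property via Lemma~\ref{lemma:split}, use that $\id_{\oA}\otimes\alpha$ carries the same anomaly index as $\alpha$, and invoke Lemma~\ref{lemma:KS}. You also make explicit two points the paper leaves implicit: clustering gives purity of $\tilde{\psi}$ (Lemmas~\ref{lemma:clustering} and \ref{lemma:purify}), and the index is unchanged because the half-chain data become $\id\otimes\alpha^{R}_g$ and $1\otimes V_{g,h}$.

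One small quibble concerns LPAs with genuine tails. There, locality preservation of $\id_{\oA}\otimes\alpha_g$ is not literally ``the same tail'', since an entrywise estimate picks up a factor growing with $\dim\oA_X$. A short commutator/averaging argument fixes this, using that $\alpha_g^{-1}=\alpha_{g^{-1}}$ is also an LPA. For QCAs your remark is exactly right.
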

\begin{proof}
    Assuming the contrary, since $\psi$ is a factor state with split property, Lemma \ref{lemma:split} shows its canonical purificaion $\tilde{\psi}$ again splits. By strong symmetry condition, $\tilde{\psi}\circ(\id_{\oA}\otimes\alpha)=\tilde{\psi}$. Since $(\id_{\oA}\otimes\alpha)$ has the same anomaly index as $\alpha$, applying lemma \ref{lemma:KS} to $\tilde{\psi}$ leads us to a contradiction.
\end{proof}
Similarly, we can deduce a generalization of the strong-weak mixed version of LSM, reported in Ref.~\cite{wang2024anomalyopenquantumsystems}.
\begin{corollary}\label{coro:strong_weak_mixed}
Let $\psi$ be a state and $\alpha:G\times H\!\to\!\G^{lp}$ an action such that 
$\alpha|_{G}$ is weakly symmetric for $\psi$, while $\alpha|_{H}$ is strongly symmetric for $\psi$.  
Let $\omega\in\rH^{3}(G\times H;\U)$ denote the anomaly of $\alpha$, and assume that its class is nontrivial in the quotient
\[
\rH^{3}(G\times H;\U)\big/ \rH^{3}(G;\U).
\]
Then $\psi$ cannot satisfy both the split property and clustering simultaneously.

\end{corollary}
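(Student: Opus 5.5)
We argue by contradiction, in the same way as in Theorem~\ref{thm:LSM_strong}, but now passing to the canonical purification and keeping track of both a weak and a strong factor of the symmetry. Suppose $\psi$ is clustering (hence a factor state by Lemma~\ref{lemma:clustering}) and splits. By Lemma~\ref{lemma:purify} the canonical purification $\tilde{\psi}$ on $\oA\otimes\A$ is pure, and by Lemma~\ref{lemma:split} it splits.

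We then build a symmetry action of $G\times H$ on $\oA\otimes\A$ that leaves $\tilde{\psi}$ invariant. For $h\in H$, strong symmetry together with Theorem~\ref{thm:equi_def} gives $\tilde{\psi}\circ(\id_{\oA}\otimes\alpha_h)=\tilde{\psi}$. For $g\in G$, weak symmetry together with the result of Appendix~\ref{sec:sym_cano_purif} gives $\tilde{\psi}\circ(\ovl{\alpha}_g\otimes\alpha_g)=\tilde{\psi}$. Define
\beq
\beta_{(g,h)}:=\ovl{\alpha}_g\otimes\alpha_{(g,h)}.
\eeq
This is a homomorphism, because $\alpha_g$ and $\alpha_h$ commute on $\A$, so $(g,h)\mapsto\beta_{(g,h)}$ respects the product in $G\times H$. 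Each $\beta_{(g,h)}$ is locality preserving, and $\tilde{\psi}\circ\beta_{(g,h)}=\tilde{\psi}\circ(\ovl{\alpha}_g\otimes\alpha_g)\circ(\id\otimes\alpha_h)=\tilde{\psi}$. Lemma~\ref{lemma:KS} applied to the pure, split, $\beta$-symmetric state $\tilde{\psi}$ then forces the anomaly index of $\beta$ to vanish.

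The remaining step is to compute the anomaly index of $\beta$. The anomaly index is multiplicative under stacking of decoupled layers, so
\beq
\mathrm{ind}(\beta)=\mathrm{ind}(\ovl{\alpha}\circ p_G)\cdot\omega,
\eeq
where $p_G:G\times H\to G$ is the projection. Complex conjugation reverses the index, so $\mathrm{ind}(\ovl{\alpha}|_G)=\omega|_G^{-1}$; this is consistent with the paper's remark that $\ovl{\alpha}\otimes\alpha$ is anomaly free. Hence
\beq
\mathrm{ind}(\beta)=\omega\cdot p_G^*(\omega|_G)^{-1}.
\eeq
Because $p_G^*$ splits the inclusion $\rH^3(G;\U)\hookrightarrow\rH^3(G\times H;\U)$, this class is trivial if and only if $\omega\in p_G^*\rH^3(G;\U)$. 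That means $\omega$ is trivial in the quotient $\rH^3(G\times H;\U)/\rH^3(G;\U)$, contradicting the hypothesis.

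The main obstacle is justifying these two properties of the index: multiplicativity under tensor products of locality-preserving actions on stacked chains, and inversion under complex conjugation, both within the formalism of Appendix~\ref{sec:anomaly_index}. I would verify them directly from the construction of the index. For multiplicativity, one restricts each layer to a half chain and notes that the resulting cochain data factorize. For conjugation, the half-chain truncation of $\ovl{\alpha}$ is the conjugate of that of $\alpha$, so the resulting $3$-cocycle is $\ovl{\omega}=\omega^{-1}$.
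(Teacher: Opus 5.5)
Your proposal is correct and follows the paper's proof: you purify to $\tilde{\psi}$, take the action $\ovl{\alpha}_{(g,1)}\otimes\alpha_{(g,h)}$ of $G\times H$ on $\oA\otimes\A$, and apply Lemma~\ref{lemma:KS}, and your formula $\omega\cdot p_G^*(\omega|_G)^{-1}$ makes explicit the ``straightforward calculation'' of the anomaly index that the paper leaves unstated. One small wording slip: $p_G^*$ is itself the inclusion $\rH^3(G;\U)\hookrightarrow\rH^3(G\times H;\U)$, and it is restriction to $G$ that splits it, not the other way round; your ``if and only if'' is unaffected.
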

\begin{proof}
        Consider $\alpha:G\times H\to \G^{lp}$, where $\alpha|_{G}$ is weak while $\alpha|_{H}$ is strong. The symmetry action $\alpha_{(g,h)}$ can be purified into $\ovl{\alpha}_{(g,1)}\otimes\alpha_{(g,h)}$. A straightforward calculation shows that the anomaly index of this symmetry action lies in the following quotient:
    \beq
    \rH^{3}(G\times H;\U)/\rH^{3}(G;\U)
    \eeq
    By applying lemma \ref{lemma:KS} to this case, we have thus derived the desired conclusion.
\end{proof}

Below we prove the LSM type constraints for anomalous vN symmetries.
\begin{theorem}\label{thm:vN_LSM}
    Let $\psi$ be a state and $\alpha:G\to\G^{lp}$ be a symmetry action by LPA, Then the followings are incompatible:
    \begin{enumerate}
        \item The state $\psi$ has $\alpha_{g}$ as its von Neumann symmetry for all $g\in G$.
        \item $\psi$ has split property.
        \item $\psi$ is clustering.
        \item $\alpha$ is anomalous.
    \end{enumerate}
\end{theorem}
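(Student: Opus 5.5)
We argue by contradiction: assume $\psi$ is a clustering (hence factor, by Lemma~\ref{lemma:clustering}) state with the split property, and that every $\alpha_g$ is a vN symmetry, $U_g=u_g v_g$ with $u_g\in\cM_\psi$, $v_g\in\cM_\psi'$ unitaries. The plan is to transport this factorization to the canonical purification $\tilde\psi$ of Lemma~\ref{lemma:purify}, where it yields an honest (weak) symmetry of the pure, split state $\tilde\psi$ under a locality-preserving action carrying the same anomaly index. Lemma~\ref{lemma:KS} then gives the contradiction. This mirrors the proof of Theorem~\ref{thm:LSM_strong}; the new ingredient is handling the commutant factor $v_g$.

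\emph{Step 1 (moving to the purification).} By Lemma~\ref{lemma:split}, $\tilde\psi$ is pure and split. As in the proof of Theorem~\ref{thm:equi_def}, the intertwiner $V$ identifies $\cM_\psi$ with $\tilde\pi(\ovl 1\otimes\A)''$. By exactness \eqref{eq:exact}, this algebra equals $\tilde\pi(\ovl\A\otimes 1)'$, and since $\tilde\psi$ is pure we have $\tilde\pi(\ovl\A\otimes1)''=\tilde\pi(1\otimes\A)'$. Hence $\tilde u_g:=Vu_gV^\dagger$ lies in $\tilde\pi(\ovl\A\otimes1)'$, so $\Ad_{\tilde u_g}$ implements an automorphism of $\tilde\pi(\ovl 1\otimes \A)''$ that restricts to $\alpha_g$ on $\tilde\pi(\ovl1\otimes\A)$ and acts trivially on $\ovl\A\otimes 1$. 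Here $u_g$ agrees with $U_g$ on $\cM_\psi$ because $v_g$ commutes with it.

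\emph{Step 2 (a symmetry of a pure split state).} The automorphism $\id_{\oA}\otimes\alpha_g$ of $\oA\otimes\A$ is therefore implemented in the irreducible representation $\tilde\pi$ by the unitary $\tilde u_g$. By Lemma~\ref{lemma:unitary_equivalence}, this means $\tilde\psi\circ(\id\otimes\alpha_g)\simeq\tilde\psi$. The state need not be invariant, but it lies in the same superselection sector. The unitaries $\tilde u_g$ form a projective representation of $G$: they are unique up to phase by irreducibility, consistent with the remark after Theorem~\ref{thm:strongsym_SPT}.

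\emph{Step 3 (applying the pure LSM constraint).} I would not use Lemma~\ref{lemma:KS} as stated, since it assumes invariance. Instead I would use the sharper form underlying the Kapustin--Sopenko argument: for a pure split state $\phi$ and $\alpha:G\to\G^{lp}$ with $\phi\circ\alpha_g\simeq\phi$ for all $g$, the anomaly index is trivial. The reason is that splitting lets one write $\alpha_g$, restricted to a half-chain, as an automorphism implementable in the half-chain type~I factor. The standard cocycle computation of the index of Appendix~\ref{sec:anomaly_index} then only requires implementability in $\pi_\phi$, not $\phi$-invariance of the implementer. Alternatively, I would try to reduce to invariance by composing with an inner automorphism $\Ad_{w_g}$ of $\oA\otimes\A$ (Kadison transitivity) so that $\tilde\psi$ becomes invariant. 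The inner perturbation does not change the anomaly index of the LPA action. The catch is that $g\mapsto \Ad_{w_g}\circ(\id\otimes\alpha_g)$ must again be a homomorphism, at least up to the freedom allowed in the index definition. Finally, $\id_{\oA}\otimes\alpha$ has the same index as $\alpha$, giving the contradiction.

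I expect Step 3 to be the main obstacle: justifying that the pure-state LSM argument only needs $\phi\circ\alpha_g\simeq\phi$, or that the inner corrections can be chosen compatibly with the group law. I would settle this by inspecting the index construction directly rather than attempting the reduction to invariance.
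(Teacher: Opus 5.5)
Your argument is correct, but it follows a different route from the paper's proof.

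\textbf{The paper's route.} The paper does not pass to the canonical purification for this theorem. After cancelling the GNVW index by stacking ancillas, it uses the split property to write $\cM_\psi\simeq\cM_L\otimes\cM_R$ for the mixed state itself. It then applies the Kadison--Ringrose fact that a tensor product of automorphisms is inner iff each factor is (Corollary~\ref{coro:vN_sym_splitting}). This gives $\alpha^R_g=\Ad_{U_g}$ with $U_g\in\cM_R$. Factoriality of $\cM_R$ then forces $\eta_{g,h}=U_h^{-1}U_g^{-1}\pi_\psi(V_{g,h})U_{gh}\in\cM_R\cap\cM_R'=\U$, whence $\omega=\delta\eta$.

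\textbf{Your route.} You use exactness of the Woronowicz purification to turn the vN condition into implementability of $\id_{\oA}\otimes\alpha_g$ by $\tilde{u}_g=Vu_gV^\dagger$ in the irreducible representation $\tilde{\pi}$. The same exactness argument gives the converse for any $\alpha_g$ preserving $\psi$. So vN symmetry is precisely ``strong symmetry of $\tilde{\psi}$ up to superselection sector'', a natural counterpart of Theorem~\ref{thm:equi_def}.

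\textbf{Step~3 is not a real obstacle.} Write $\tilde{\pi}\simeq\tilde{\pi}_L\otimes\tilde{\pi}_R$ with irreducible factors. Then $\Ad_{\tilde{u}_g}$, corrected by the local unitary in the decomposition of $\id\otimes\alpha_g$, preserves $\tilde{\pi}(\ovl{\A}_R\otimes\A_R)''\simeq\B(\tilde{\cH}_R)$. Every automorphism of $\B(\tilde{\cH}_R)$ is inner, and irreducibility makes the resulting $\eta_{g,h}$ scalars. This is exactly the paper's cocycle computation with $\cM_R$ replaced by a type~I factor. Invariance of $\tilde{\psi}$ under the implementers is never used, so you can drop the inner-correction alternative and its compatibility problem.

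\textbf{Trade-offs.} Your route is uniform with Theorem~\ref{thm:LSM_strong} and avoids the tensor-product innerness theorem for type~II/III factors. It costs reliance on exactness and on Lemma~\ref{lemma:split}. The paper's direct route stays inside $\cM_R$ and leads naturally to the lifting reformulation (Lemma~\ref{lemma:lifting}) used for Corollary~\ref{coro:mixed_vN}.

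\textbf{Two minor points.}
\begin{enumerate}
\item The identity $\tilde{\pi}(\ovl{\A}\otimes 1)''=\tilde{\pi}(\ovl{1}\otimes\A)'$ is the exactness condition \eqref{eq:exact} itself, not a consequence of purity.
\item You still need to cancel the GNVW index by stacking ancillas before using the half-chain decomposition. Put the ancillas in a pure invariant product state, so implementability in the irreducible representation is preserved.
\end{enumerate}
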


We need the following lemma:
\begin{lemma}[Thm.~13.1.16 of Ref.~\cite{kadison1997fundamentals2}]\label{Lemma:inner_aut}
    Let $\alpha_{i},i=1,2$ be automorphisms of the von Neumann algebra $\cM_{i},i=1,2$, then $\alpha_{1}\otimes \alpha_{2}$ on $\cM_{1}\otimes \cM_{2}$ is inner iff each $\alpha_{i}$ is inner.
\end{lemma}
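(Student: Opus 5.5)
The ``if'' direction is immediate: if $\alpha_{i}=\Ad_{u_{i}}$ with unitaries $u_{i}\in\cM_{i}$, then $u_{1}\otimes u_{2}$ is a unitary in $\cM_{1}\otimes\cM_{2}$ and $\alpha_{1}\otimes\alpha_{2}=\Ad_{u_{1}\otimes u_{2}}$ on elementary tensors. These are weakly dense, and both sides are normal, so the two maps agree everywhere. All the work is in the converse. By symmetry of the roles of the two factors, it suffices to show that $\alpha_{1}$ is inner.

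For the converse I would use slice maps. Suppose $\alpha_{1}\otimes\alpha_{2}=\Ad_{U}$ with $U\in\cM_{1}\otimes\cM_{2}$ unitary. Evaluating on $x\otimes 1$ gives the intertwining relation
\begin{equation}
U(x\otimes 1)=(\alpha_{1}(x)\otimes 1)U,\qquad \forall\,x\in\cM_{1}.
\end{equation}
For each normal functional $\varphi$ on $\cM_{2}$ let $R_{\varphi}:\cM_{1}\otimes\cM_{2}\to\cM_{1}$ be the right slice map, characterized by $R_{\varphi}(a\otimes b)=\varphi(b)a$. It is normal and is an $\cM_{1}$-bimodule map, meaning $R_{\varphi}((a\otimes1)V(c\otimes 1))=aR_{\varphi}(V)c$. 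Applying $R_{\varphi}$ to the relation gives an element $w_{\varphi}:=R_{\varphi}(U)\in\cM_{1}$ with $w_{\varphi}x=\alpha_{1}(x)w_{\varphi}$ for all $x$. Taking adjoints, $xw_{\varphi}^{*}=w_{\varphi}^{*}\alpha_{1}(x)$. Combining the two shows that $w_{\varphi}^{*}w_{\varphi}$ and $w_{\varphi}w_{\varphi}^{*}$ commute with $\cM_{1}$ (for the latter one uses that $\alpha_{1}$ is surjective), so both lie in the centre $Z(\cM_{1})$. Because $U\neq0$, some slice is nonzero: if $R_{\varphi}(U)=0$ for all normal $\varphi$ then $U=0$.

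If $\cM_{1}$ is a factor, which is the situation relevant for clustering states in Theorem \ref{thm:vN_LSM}, we are then done. Pick $\varphi$ with $w_{\varphi}\neq0$. Then $w_{\varphi}^{*}w_{\varphi}=\lambda 1$ and $w_{\varphi}w_{\varphi}^{*}=\mu 1$, and $\lambda=\|w_{\varphi}\|^{2}=\mu>0$. Hence $u:=\lambda^{-1/2}w_{\varphi}$ is a unitary in $\cM_{1}$ with $ux=\alpha_{1}(x)u$, i.e.\ $\alpha_{1}=\Ad_{u}$. Applying the same argument with left slices by normal functionals on $\cM_{1}$ shows that $\alpha_{2}$ is inner.

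The main obstacle is the general, non-factor case, where $w_{\varphi}^{*}w_{\varphi}$ is only central. There I would take the polar decomposition $w_{\varphi}=v_{\varphi}|w_{\varphi}|$. Since $|w_{\varphi}|$ is central, $v_{\varphi}$ is a partial isometry that still intertwines, with $v^{*}_{\varphi}v_{\varphi}=z_{\varphi}$ a central projection, and it implements $\alpha_{1}$ on $\cM_{1}z_{\varphi}$. Note $v_{\varphi}v_{\varphi}^{*}=\alpha_{1}(z_{\varphi})$.

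Next I would run a Zorn argument over families of such intertwining partial isometries with mutually orthogonal central initial supports. Applying the slice argument to $U(1\otimes$-compressed$)$, that is, replacing $U$ by $U(z\otimes1)$ for a central projection $z$ orthogonal to the current family, lets one extend any family that does not yet cover $1$. The key point is that $U(z\otimes 1)\neq0$, so some slice is nonzero, because $U$ is unitary. Summing a maximal family strongly gives an isometric intertwiner $v$ with $v^{*}v=1$ and $vv^{*}=\alpha_{1}(1)=1$, hence a unitary implementing $\alpha_{1}$. I expect the bookkeeping of central supports under $\alpha_{1}$ to be the delicate step. For the purposes of this paper one may simply cite Thm.~13.1.16 of Ref.~\cite{kadison1997fundamentals2} for it, since only the factor case is needed.
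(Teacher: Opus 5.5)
Your proof is correct, and it takes a genuinely different route from the paper, which gives no argument and only cites Kadison--Ringrose.

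\textbf{Factor case.} Your slice-map argument is self-contained. The relation $U(x\otimes 1)=(\alpha_{1}(x)\otimes 1)U$, the bimodule property of $R_{\varphi}$, and the fact that any intertwiner $w$ has $w^{*}w,\,ww^{*}\in Z(\cM_{1})$ already settle the factor case. That is the only case the paper uses: in Corollary~\ref{coro:vN_sym_splitting} as applied in Theorem~\ref{thm:vN_LSM}, $\psi$ is clustering, so $\cM_{\psi}\simeq\cM_{L}\otimes\cM_{R}$ is a factor, and therefore so are $\cM_{L}$ and $\cM_{R}$.

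\textbf{General case.} The step you flag as delicate is routine. Taking adjoints of $vx=\alpha_{1}(x)v$ gives $v^{*}y=\alpha_{1}^{-1}(y)v^{*}$. Put $p:=vv^{*}$. Then $v=vz=\alpha_{1}(z)v$ gives $p\leqslant\alpha_{1}(z)$. Also $v^{*}=v^{*}p=\alpha_{1}^{-1}(p)v^{*}$ gives $z=\alpha_{1}^{-1}(p)z$, hence $\alpha_{1}(z)\leqslant p$. So $p=\alpha_{1}(z)$.

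Because $\alpha_{1}$ is a normal automorphism, mutually orthogonal central projections $z_{i}$ have orthogonal images $\alpha_{1}(z_{i})$. The strong sum of the family is therefore a partial isometry from $\sum_{i}z_{i}$ onto $\alpha_{1}(\sum_{i}z_{i})$, and it still intertwines. If $z:=1-\sum_{i}z_{i}\neq 0$, then $R_{\varphi}(U)z=R_{\varphi}(U(z\otimes 1))$ is a nonzero intertwiner for some $\varphi$, because $U$ is unitary. Its polar part has support under $z$, which contradicts maximality. So a maximal family yields a unitary.

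\textbf{Comparison.} The citation buys brevity. Your approach buys an elementary proof from slice maps alone, and shows the case the paper actually needs is a two-line computation.
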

As a corollary, we have
\begin{corollary}\label{coro:vN_sym_splitting}
     Let $\cM_{i},i=1,2$ be two von Neumann algebras, we write $\cM:=\cM_{1}\otimes\cM_{2}$ for their tensor product. Consider $\alpha_{i}\in\Aut(\cM_{i}),i=1,2$ then, $\alpha:=\alpha_{1}\otimes\alpha_{2}\in\Aut(\cM)$ is a vN symmetry iff each $\alpha_{i}$ is a vN symmetry.
\end{corollary}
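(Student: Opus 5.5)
The plan is to reduce the statement to Lemma~\ref{Lemma:inner_aut} via Remark~\ref{remark:vN=inner}: a symmetry $\alpha$ of a state is a vN symmetry exactly when its extension to the generated von Neumann algebra is inner. Here $\alpha_{i}\in\Aut(\cM_{i})$, and we read ``vN symmetry'' as ``inner automorphism of the corresponding von Neumann algebra''. In the intended application, $\cM_{i}=\cM_{\psi_{i}}$ and $\cM=\cM_{\psi_{1}\otimes\psi_{2}}$ for a product state, such as $\psi_{L}\otimes\psi_{R}$ arising from the split property. So the first step is to record that $\cM_{\psi_{1}\otimes\psi_{2}}=\cM_{\psi_{1}}\otimes\cM_{\psi_{2}}$, the spatial von Neumann tensor product acting on $\cH_{\psi_{1}}\otimes\cH_{\psi_{2}}$. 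This holds because the GNS triple of a product state is the tensor product of the GNS triples (uniqueness, Remark~\ref{remark:uniqueness_GNS}), and because the double commutant of $\pi_{1}(\A_{1})\odot\pi_{2}(\A_{2})$ is $\cM_{1}\otimes\cM_{2}$.

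Next I would check that the unitary implementation of $\alpha_{1}\otimes\alpha_{2}$ fixing the cyclic vector is $U_{\alpha_{1}}\otimes U_{\alpha_{2}}$, again by uniqueness. Consequently, the normal extension of $\alpha_{1}\otimes\alpha_{2}$ to $\cM$ is $\Ad_{U_{\alpha_{1}}\otimes U_{\alpha_{2}}}$ restricted to $\cM$, which is precisely the tensor product of the normal extensions $\Ad_{U_{\alpha_{i}}}|_{\cM_{i}}$. This identification is what lets us apply Lemma~\ref{Lemma:inner_aut}, and it also settles the easy direction. Suppose $U_{\alpha_{i}}=u_{i}v_{i}$ with $u_{i}\in\cM_{i}$ and $v_{i}\in\cM_{i}'$. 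Then
\[
U_{\alpha}=(u_{1}\otimes u_{2})(v_{1}\otimes v_{2}),
\]
where $u_{1}\otimes u_{2}\in\cM$ and $v_{1}\otimes v_{2}\in\cM_{1}'\otimes\cM_{2}'=\cM'$ by Tomita's commutation theorem. Hence $\alpha$ is a vN symmetry.

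For the converse, assume $\alpha$ is a vN symmetry, so $\Ad_{U_{\alpha}}|_{\cM}=\Ad_{u}|_{\cM}$ for some unitary $u\in\cM$; in other words, $\alpha_{1}\otimes\alpha_{2}$ is inner on $\cM$. Lemma~\ref{Lemma:inner_aut} then gives unitaries $u_{i}\in\cM_{i}$ with $\alpha_{i}=\Ad_{u_{i}}$ on $\cM_{i}$. Setting $v_{i}:=u_{i}^{*}U_{\alpha_{i}}$, this unitary commutes with $\cM_{i}$, so $v_{i}\in\cM_{i}'$. This yields the factorization in Def.~\ref{def:temp_vN_sym}.

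The main obstacle is the bookkeeping in the first step: identifying the generated von Neumann algebra of a product state with the spatial tensor product, and identifying its commutant with the tensor product of the commutants. Both rely on the commutation theorem for tensor products rather than on anything elementary. If one instead reads the corollary purely abstractly, with ``vN symmetry'' meaning ``inner'', then the statement is literally Lemma~\ref{Lemma:inner_aut}, and these identifications are needed only to connect it back to Def.~\ref{def:temp_vN_sym}.
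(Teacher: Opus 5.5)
Your proposal is correct and follows the paper's route: by Remark~\ref{remark:vN=inner}, a vN symmetry is exactly one whose extension to the generated von Neumann algebra is inner, so the corollary is an immediate consequence of Lemma~\ref{Lemma:inner_aut}, and the paper says nothing more than this. Your extra bookkeeping (the implementer $U_{\alpha_{1}}\otimes U_{\alpha_{2}}$, and recovering $v_{i}:=u_{i}^{*}U_{\alpha_{i}}\in\cM_{i}'$) simply makes explicit the translation between Def.~\ref{def:temp_vN_sym} and innerness that the paper leaves implicit; for the easy direction you need only the elementary inclusion $\cM_{1}'\otimes\cM_{2}'\subseteq\cM'$, not the full commutation theorem.
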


\begin{proof}[Proof to Theorem \ref{thm:vN_LSM}]
    Since one can cancel the GNVW index of $\alpha$ by stacking with ancillas \cite{kapustin2024anomalous}, without loss of generality, we assume $\alpha_{g}$ has vanishing GNVW index for any $g\in G$. Consider the restriction of $\alpha$ on right-half chain $\alpha^{R}$ (similarly one can define $\alpha^{L}$).

    The $C^{*}$-algebra $\A$ always factorizes as $\A\simeq\A_{L}\otimes\A_{R}$. In addition,
    if $\psi$ splits (\ie $\psi\sim\psi_{L}\otimes\psi_{R}$), we have $\cH_{\psi}\simeq\cH_{L}\otimes\cH_{R}$, where $(\pi_{i},\cH_{i},|\psi_{i}\ra)$ is a GNS triple for $\psi_{i}$, $i=L,R$. Then we have tensor factorization of von Neumann algebras $\cM_{\psi}\simeq\cM_{L}\otimes \cM_{R}$, where $\cM_{i}:=(\pi_{i}(\A_{i}))''$ for $i=L,R$, we will also write $\cM_{i}'$ for their commutant in $\B(\cH_{i}),i=L,R$.
    
    Note that by Corollary \ref{coro:vN_sym_splitting} $\alpha$ is vN in $\cM_{\psi}$ implies $\alpha^{R}$ is vN in $\cM_{R}$. This means $\alpha^{R}$ is an inner automorphism, so we can represent $\alpha^{R}_{g}$ as $\Ad_{U_{g}}$ for some $U_{g}\in\cM_{R}$.

 By Eq.~\eqref{eq:composition_half_chain},
    \beq
    \alpha_{g}^{R}\alpha_{h}^{R}=\Ad_{V_{g,h}}\alpha_{gh}^{R}
    \eeq
    As a result, on $\cH_{R}$ we have
    \beq
    U_{g}U_{h}\pi_{\psi}(a)U_{h}^{-1}U_{g}^{-1} = \pi_{\psi}(V_{g,h})U_{gh}\pi_{\psi}(a)U_{gh}^{-1}\pi_{\psi}(V_{g,h})^{-1},\quad\forall a\in\A_{R}
    \eeq
    This implies:
    \beq\label{eq:eta}
    \eta_{g,h}:=U_{h}^{-1}U_{g}^{-1}\pi_{\psi}(V_{g,h})U_{gh}\in \pi_{R}(\A_{R})'=\cM_{R}'
    \eeq
    On the other hand, we note $\eta_{g,h}\in\cM_{R}$ since $U_{g}\in\cM_{R},\forall\, g\in G$, we have
    \beq\label{eq:trivializing_2_cochain}
    \eta_{g,h}\in \cM_{R}\cap\cM_{R}'=\bbC\cdot\id_{\cH_{R}}
    \eeq
    where we have used that $\psi$ is a factor (so is $\psi_{R}$) and $\eta_{g,h}\in\U$ due to unitarity.
    Using the definition of the anomaly index:
    \beq\label{eq:anomaly_index_2}
    \omega_{g,h,k}=V_{g,h}V_{gh,k}V_{g,hk}^{-1}\alpha_{g}^{R}(V_{h,k})^{-1}
    \eeq
    Using Eq.~\eqref{eq:trivializing_2_cochain} to replace $V_{g,h}$ by $U_{g},U_{h},U_{gh}$ and $\eta_{g,h}$, one finds,
    \beq\label{eq:trivializing_anomaly}
    \omega_{g,h,k}=(\delta\eta)_{g,h,k}
    \eeq

\end{proof}

Let us give another perspective from lifting theory below; this approach will be useful in establishing further generalization of Theorem \ref{thm:vN_LSM} as in Corollary~\ref{coro:mixed_vN}.

For any linear (unnecessarily vN!) symmetry action $\alpha:G\to \G^{lp}$, as before, we cancel GNVW indices and one always has unitary $U_{g}$ on $\cH_{R}$ such that $\Ad_{U_{g}}$ implements $\alpha_{g}^{R}$. Let us claim that $\Ad_{U_{g}}$ gives a group homomorphism $G\to \mathrm{Out(\cM_{R}})$. To this end, we note
\beq
\begin{split}
    \Ad_{U_{g}}\circ\Ad_{U_{h}}(\cM_{R})&= \Ad_{\pi_{\psi}(V_{g,h})}\circ\Ad_{U_{gh}}\circ\Ad_{\eta_{g,h}}(\cM_{R})
\end{split}
\eeq
Note that $\pi_{\psi}(V_{g,h})\in\cM_{R}$ is an inner-automorphism of $\cM_{R}$ and $\eta_{g,h}\in\cM_{R}'$ (hence its conjugation is trivial on $\cM_{R}$). Therefore, $g\to \Ad_{U_{g}}$ gives a well-defined map from $G$ to $\mathrm{Out}(\cM_{R})$. We are interested in the following lifting problem:
\[\begin{tikzcd}
	& {\mathrm{Aut}(\mathscr{M}_{R})} \\
	G & {\mathrm{Out}(\mathscr{M}_{R})}
	\arrow[from=1-2, to=2-2]
	\arrow[dashed, from=2-1, to=1-2]
	\arrow[from=2-1, to=2-2]
\end{tikzcd}\]

\begin{lemma}\label{lemma:lifting}
    If $\psi$ is a factor state and there exists a lifting $G\to\Aut(\cM_{R})$, then the anomaly index $\omega\in\rH^{3}(G;\U)$ of $\alpha$ must vanish.
\end{lemma}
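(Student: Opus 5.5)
Since $\psi$ is a factor state, $\psi_R$ is a factor state as well, so $\cM_R\cap\cM_R'=\bbC\cdot\id_{\cH_R}$. I will run the proof of Theorem~\ref{thm:vN_LSM} again, but with the lifting supplying the missing input. Suppose $\beta:G\to\Aut(\cM_R)$ is a group homomorphism with $[\beta_g]=[\Ad_{U_g}]$ in $\Out(\cM_R)$ for every $g$. Then for each $g$ there is a unitary $w_g\in\cM_R$ with $\Ad_{U_g}|_{\cM_R}=\Ad_{w_g}\circ\beta_g$. Set $\tilde U_g:=w_g^{-1}U_g$. This differs from $U_g$ by an element of $\cM_R$, so $\Ad_{\tilde U_g}$ still induces the same outer class. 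Moreover, $\Ad_{\tilde U_g}|_{\cM_R}=\beta_g$ is an honest group action on $\cM_R$.

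Next I reproduce the computation of Eq.~\eqref{eq:eta}, now with $\tilde U_g$. Define
\[
\tilde\eta_{g,h}:=\tilde U_h^{-1}\tilde U_g^{-1}\pi_\psi(V_{g,h})\,\tilde U_{gh},\qquad \text{where}\quad \pi_\psi(V_{g,h})=\pi_\psi(V_{g,h})\,w_{gh}w_{gh}^{-1}.
\]
Conjugation by $\tilde U_g\tilde U_h$ equals $\beta_{gh}$ on $\cM_R$. Conjugation by $\tilde U_{gh}$ also equals $\beta_{gh}$ there. Hence $\tilde U_{gh}^{-1}\tilde U_h^{-1}\tilde U_g^{-1}$ lies in $\cM_R'$, that is, $c_{g,h}:=\tilde U_g\tilde U_h\tilde U_{gh}^{-1}\in\cM_R'$.

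On the other hand, the relation $\alpha^R_g\alpha^R_h=\Ad_{V_{g,h}}\alpha^R_{gh}$ on $\A_R$ shows that $\tilde U_g\tilde U_h$ and $\pi_\psi(V_{g,h})\tilde U_{gh}$ implement the same automorphism on $\pi_R(\A_R)$, up to the inner corrections coming from the $w$'s. Writing $\tilde U_g\tilde U_h=\pi_\psi(V_{g,h})\,x_{g,h}\,\tilde U_{gh}$, the correction $x_{g,h}$ lies in $\cM_R$; it is built from $\beta_g(w_h)$, $w_g$ and $w_{gh}$. Comparing the two expressions for $\tilde U_g\tilde U_h\tilde U_{gh}^{-1}$ gives
\[
\pi_\psi(V_{g,h})\,x'_{g,h}=c_{g,h}\in\cM_R\cap\cM_R'=\U,
\]
where $x'_{g,h}\in\cM_R$. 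So $\pi_\psi(V_{g,h})$ equals a phase times an element of $\cM_R$ determined by $\beta$ and the $w$'s.

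Substituting into Eq.~\eqref{eq:anomaly_index_2}, the $\cM_R$-valued parts should cancel. This is because $\beta$ is a genuine action and $\Ad_{\tilde U_g}$ restricted to $\pi_\psi(\A_R)$ is $\alpha^R_g$, so $\alpha_g^R(V_{h,k})$ is computed by $\beta_g$. What survives is $\omega=\delta c$ with $c$ a $\U$-valued 2-cochain, so $[\omega]=0$.

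The cleaner route, which I would actually follow, is the standard Eilenberg--MacLane obstruction. A homomorphism $G\to\Out(\cM_R)$ for a factor carries an obstruction class in $\rH^3(G;\U)$, where $\U=Z(\cM_R)\cap\mathcal U$, and this class vanishes precisely when the map lifts to $\Aut(\cM_R)$ up to inner modification. The proof then has two steps. First, identify the obstruction class of $g\mapsto[\Ad_{U_g}]$ with $\omega$; this follows by choosing $U_g$ as lifts and noting that $\pi_\psi(V_{g,h})$ is the inner part fixing $\Ad_{U_g}\Ad_{U_h}=\Ad_{\pi(V_{g,h})}\Ad_{U_{gh}}$ on $\cM_R$, so the associativity defect is exactly Eq.~\eqref{eq:anomaly_index_2}. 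Second, note that the existence of a lifting $\beta$ lets one choose lifts with trivial 2-cocycle defect, up to central phases, forcing the class to be a coboundary.

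I expect the main obstacle to be the first step, the identification. The commutant parts $\eta_{g,h}\in\cM_R'$ must be kept track of carefully, and the claim that the defect $V_{g,h}V_{gh,k}V_{g,hk}^{-1}\alpha^R_g(V_{h,k})^{-1}$ computed in $\A$ agrees with the defect computed in $\cM_R$ must be checked. Both are phases, and the second is $\pi_\psi$ of the first, since $\pi_\psi$ is faithful on scalars. The commutant factors drop out because $\Ad$ of an element of $\cM_R'$ is trivial on $\cM_R$.
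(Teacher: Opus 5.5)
Your preferred second route is correct, and it is organized differently from the paper's proof. The paper works with the implementers $U_g$ on $\cH_R$. It writes the lift as $\gamma_g=\Ad_{U_g}\circ\Ad_{x_g}$, argues that the cochain $\eta_{g,h}$ of Eq.~\eqref{eq:eta} becomes central, and then reruns the substitution from the proof of Theorem~\ref{thm:vN_LSM} to get $\omega=\delta\eta$. In effect it re-derives by hand, in this instance, that the obstruction does not depend on choices. You instead observe that $\theta_g:=\Ad_{U_g}|_{\cM_R}$ are lifts of the outer classes with $\theta_g\theta_h=\Ad_{\pi_\psi(V_{g,h})}\theta_{gh}$ on $\cM_R$ (extend from $\pi_R(\A_R)$ by weak continuity). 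So the Eilenberg--MacLane cocycle for the choice $u_{g,h}=\pi_\psi(V_{g,h})$ is literally Eq.~\eqref{eq:anomaly_index_2}. Since $\cM_R\cap\cM_R'=\bbC\cdot 1$, its class in $\rH^3(G;\U)$ (trivial action) is independent of the lifts and of $u$, and the choice $\theta=\beta$, $u=1$ gives the trivial cocycle.

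Your route never sees $\cM_R'$: commutant unitaries act trivially in $\Aut(\cM_R)$, so the freedom $U_g\mapsto U_g c$ with $c$ a unitary in $\cM_R'$ is invisible. It also shows that the identification of the two obstructions, which the paper emphasizes, already holds at the cocycle level. The price is citing the classical invariance theorem, which the paper's route checks concretely. Only the implication ``lift exists $\Rightarrow$ class vanishes'' is used. The converse you state is not needed, and in general vanishing only produces a cocycle action.

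Your first, explicit sketch is essentially the paper's route, and it has a false step: $x_{g,h}$ need not lie in $\cM_R$, and $c_{g,h}=\tilde U_g\tilde U_h\tilde U_{gh}^{-1}$ need not be a phase. Indeed $U_gU_h=\pi_\psi(V_{g,h})U_{gh}\eta_{g,h}^{-1}$ with $\eta_{g,h}\in\cM_R'$. Replacing $U_h$ by $U_hc$, with $c$ a unitary in $\cM_R'$, changes neither $w_h$ nor $\Ad_{U_h}|_{\cM_R}$. But it multiplies $c_{g,h}$ (for $g\neq e$) by $\Ad_{\tilde U_{gh}}(c)$, which is non-central whenever $c$ is.

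The correct statement is at the level of automorphisms. The element $m_{g,h}:=w_g^{-1}\Ad_{U_g}(w_h^{-1})\,\pi_\psi(V_{g,h})\,w_{gh}\in\cM_R$ satisfies $\Ad_{m_{g,h}}\beta_{gh}=\beta_g\beta_h=\beta_{gh}$ on $\cM_R$. Hence $m_{g,h}=\lambda_{g,h}\in\U$, i.e.\ $\pi_\psi(V_{g,h})=\lambda_{g,h}\Ad_{U_g}(w_h)\,w_g\,w_{gh}^{-1}$. Substituting into Eq.~\eqref{eq:anomaly_index_2}, the $\cM_R$-valued parts cancel exactly, leaving $\omega_{g,h,k}=\lambda_{g,h}\lambda_{gh,k}\lambda_{g,hk}^{-1}\lambda_{h,k}^{-1}$, which is a coboundary. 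The paper's displayed identity for $U_gU_hU_{gh}^{-1}$ overlooks the same commutant factor.
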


We note that closely related lifting problems have already been studied in
Refs.~\cite{Jones_2021, evington2022anomaloussymmetriesclassifiablecalgebras}.
A classical result states that, if such a lifting exists, then the associated
obstruction class in $\rH^{3}_{\sigma}(G; Z(\cU_{R}))$ must vanish, where $\cU_{R}$ is the group of unitary operators in $\cM_{R}$, $Z(\cU_{R})$ denotes its centralizer, \(\sigma\) denotes the induced \(G\)-action on $Z(\cU_{R}) := \cU_{R} \cap \cM_{R}'$.
See Theorem 7.1.2 of \cite{Azcárraga_Izquierdo_1995_cohomology}.

The new content of our lemma is that, when $\psi$ is a factor state (i.e. $Z(\cU_{R})=\U$), the above obstruction can be identified with the previously defined anomaly index $\omega\in\rH^{3}(G;\U)$; see Eq.~\eqref{eq:anomaly_index}. \emph{A priori}, these two obstruction classes need not coincide.

\begin{proof}[Proof to Lemma \ref{lemma:lifting}]
    Let $\gamma_{g}\in \Aut(\cM_{R})$ be a lifting for $[\Ad_{U_{g}}]$, so $\gamma_{g}\gamma_{h}=\gamma_{gh}$ and $\gamma_{g}=\Ad_{U_{g}}\circ\Ad_{x_{g}}$ for some unitary $x_{g}\in \cM_{R}$. Since $\gamma:G\to\Aut(\cM_{R})$ is a group homomorphism, we obtain
    \beq 
    U_{g}U_{h}U_{gh}^{-1}=\mu_{g,h}\cdot x_{gh}x_{h}^{-1}\Ad_{U_{h}^{-1}}(x_{g})^{-1}\in\cM_{R}
    \eeq
    for some $\mu_{g,h}\in Z(\cM_{R})=\U$ (where we have used $\cM_{R}$ is a factor). Thus, for $\eta_{g,h}$ defined in Eq.~\eqref{eq:eta}:
    \beq
    \eta_{g,h}=U_{h}^{-1}U_{g}^{-1}U_{gh}\pi_{R}((\alpha_{gh}^{R})^{-1}V_{g,h})\in\cM_{R}\cap\cM_{R}'
    \eeq
    Therefore, after substituting $\eta_{g,h}$ and $U_{g}$'s for $V_{g,h}$ according to Eq.~\eqref{eq:eta}, we again recover Eq.~\eqref{eq:trivializing_anomaly}.
\end{proof}
The merit of this reformulation is to simplify the proof to the following corollary, which generalizes Corollary \ref{coro:strong_weak_mixed}.

\begin{corollary}\label{coro:mixed_vN}
    If $\alpha:G\times H\to \G^{lp}$ where $\alpha|_{H}$ is a vN symmetry while $\alpha|_{G}$ is unnecessarily vN, then the result of Corollary \ref{coro:strong_weak_mixed} still holds if the anomaly index $\omega\in\rH^{3}(G\times H;\U)$ survives under
    \beq\label{eq:anomaly_quotient}
    \rH^{3}(G\times H;\U)\to \rH^{3}(G\times H;\U)/\rH^{3}(G;\U)
    \eeq
\end{corollary}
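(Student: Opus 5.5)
We argue by contradiction, along the lines of the lifting-theoretic proof of Lemma~\ref{lemma:lifting} and Theorem~\ref{thm:vN_LSM}. Suppose $\psi$ is clustering (hence a factor state by Lemma~\ref{lemma:clustering}) and splits. Suppose also that $\alpha|_{H}$ consists of vN symmetries of $\psi$, $\alpha|_{G}$ is only weakly symmetric, and yet the image of $\omega$ in the quotient \eqref{eq:anomaly_quotient} is nontrivial. As in the proof of Theorem~\ref{thm:vN_LSM}, we first stack ancillas to remove the GNVW indices. This does not change the anomaly index or the symmetry properties of $\psi$, since we may tensor with a product state on which $\alpha$ acts trivially. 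Using the split property, we write $\cH_{\psi}\simeq\cH_L\otimes\cH_R$ and $\cM_\psi\simeq\cM_L\otimes\cM_R$, with $\cM_R$ a factor. We then pick unitaries $U_{(g,h)}$ on $\cH_R$ implementing $\alpha^R_{(g,h)}$, together with the half-chain cochain $V_{(g,h),(g',h')}$ of Eq.~\eqref{eq:composition_half_chain}. This yields a homomorphism $G\times H\to\Out(\cM_R)$, $(g,h)\mapsto[\Ad_{U_{(g,h)}}]$.

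The next step is to use the vN assumption on $H$. By Corollary~\ref{coro:vN_sym_splitting}, each $\alpha^R_{h}$ is inner on $\cM_R$, so we may choose $U_{(1,h)}\in\cM_R$. Then $[\Ad_{U_{(1,h)}}]$ is trivial in $\Out(\cM_R)$, and the outer action factors through the projection $G\times H\to G$. We define the partial lift $\gamma_{(g,h)}:=\Ad_{U_{(g,1)}}|_{\cM_R}$, i.e.\ we choose $U_{(g,h)}:=U_{(g,1)}U_{(1,h)}$. This choice is consistent with $\Ad_{U_{(g,h)}}$ modulo inner automorphisms, because $\alpha^R_{(g,h)}$ and $\alpha^R_{(g,1)}\alpha^R_{(1,h)}$ differ by a local unitary. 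We then rerun the computation of $\eta$ from Eq.~\eqref{eq:eta}. Because $\cM_R$ is a factor, any element of $\cM_R\cap\cM_R'$ is a phase. The aim is to show that all the components of $\omega$ involving $H$ can be written as $\delta\eta$ plus a cocycle pulled back from $G$. Concretely, for inputs where the $H$-components are nontrivial, the relevant products of $U$'s and $\pi_\psi(V)$'s lie in $\cM_R\cap\cM_R'=\U$, exactly as in Eq.~\eqref{eq:trivializing_2_cochain}.

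To make the last step precise, we use the obstruction theory for the extension $1\to\cU_R\to\cdots\to G\times H\to\Out(\cM_R)$. The anomaly index is the obstruction class of the map $G\times H\to\Out(\cM_R)$, as identified in the proof of Lemma~\ref{lemma:lifting}. Since this map factors through $p:G\times H\to G$, the obstruction class is the pullback $p^*\omega_G$ of the obstruction class $\omega_G\in\rH^3(G;\U)$ of the induced map $G\to\Out(\cM_R)$. Pullbacks along $p$ are exactly the image of $\rH^3(G;\U)\hookrightarrow\rH^3(G\times H;\U)$, split by the inclusion $G\to G\times H$. Hence $\omega$ vanishes in the quotient \eqref{eq:anomaly_quotient}, contradicting the hypothesis.

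The main obstacle is this identification. Lemma~\ref{lemma:lifting} only shows that an honest lift kills $\omega$. Here we need the sharper, functorial statement that the anomaly index equals the $\Out$-obstruction class, and that this class is natural under pullback along $p$. I would first try to verify it directly at the level of cochains. Writing $(g,h)=(g,1)(1,h)$, one substitutes the factorized $U$'s into Eq.~\eqref{eq:anomaly_index_2}. Since each $U_{(1,h)}\in\cM_R$, it can be absorbed into the phases $\eta$, leaving $\omega=p^*\omega'\cdot\delta\eta$ for some $3$-cocycle $\omega'$ on $G$ built from the $U_{(g,1)}$. Checking that $\omega'$ is indeed a cocycle and is independent of choices should be the routine but lengthy part.
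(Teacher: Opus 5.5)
Your proposal is correct and takes the same route as the paper: since $\alpha|_H$ is vN, its image in $\Out(\cM_R)$ is trivial, so the outer action of $G\times H$ factors through $G$, and the anomaly index, read as the $\Out$-obstruction class as in Lemma~\ref{lemma:lifting}, then lies in the image of $\rH^3(G;\U)$. The paper asserts this in two lines, and your naturality argument along $p$ fills in that detail. The step you flag as the main obstacle is in fact immediate: with lifts $\Ad_{U_g}|_{\cM_R}$ and $u_{g,h}=\pi_\psi(V_{g,h})$ the $\Out$-obstruction cocycle is literally Eq.~\eqref{eq:anomaly_index}, and choosing lifts constant along $H$ gives the pullback directly; separately, the ancillas that cancel the GNVW index must carry a compensating nontrivial action (e.g.\ an opposite translation), not a trivial one.
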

\begin{proof}
    Let $\psi$ be a split factor state which is symmetric under $G\times H$. By assumption, $H$ is a vN symmetry for $\psi$. Then since $\alpha|_{H}$ is vN, there is no obstruction to construct a lifting $H\to\Aut(\cM_{R})$. Thus, the only non-vanishing obstruction for lifting $G\times H$ (i.e. the anomaly index) lives in $\rH^{3}(G;\U)$. Contradicting the fact Eq.~\eqref{eq:anomaly_quotient}.
\end{proof}
This corollary is useful since Prop.~\ref{prop:translation} shows translation can never be a vN symmetry unless the factor state $\psi$ is actually pure. In application, $G$ is often taken to be the lattice translation symmetry while $H$ is some internal symmetry that is vN.
\begin{remark}
    A similar result holds if one considers $\tilde{H}$ a nontrivial group extension\footnote{This means they fit into the following short exact sequence $1\hookrightarrow H\to \tilde{H}\twoheadrightarrow G\to 1$ and we do not assume this is a central extension.} of $G$ by $H$ rather than $G\times H$: let $q:\tilde{H}\to G$ be the quotient map. The conclusion of Corollary \ref{coro:mixed_vN} follows if one replaces Eq.~\eqref{eq:anomaly_quotient} by $\omega\not\in \im(q^{*})$ where $q^{*}$ is the pullback
    \beq
    q^{*}:\rH^{3}(G;\U)\to \rH^{3}(\tilde{H};\U)
    \eeq
\end{remark}

\section{Bath evolution and locally generated channels}\label{sec:bath_evo}

In quantum mechanics, a general dynamical process is not necessarily given by a unitary evolution, but more generally by a quantum channel \cite{QChannelLecture.pdf}; see that reference for the precise definition. The Stinespring theorem states that every quantum channel on a finite system can be realized as a unitary evolution after enlarging the system by ancillary degrees of freedom.

In infinite systems, the analog of quantum channels are unital, completely positive (UCP) maps on the quasi-local algebra \cite{QChannelLecture.pdf}. However, general UCP maps do not necessarily have a clear physical interpretation as physically achievable dynamical processes, and will be too general for our purposes. Instead we will consider a subset of UCP maps, which we refer to as ``bath evolutions''.

\begin{definition}\label{def:bath_evo}
A map $\E:\A\to\A$ (not necessarily a $*$-homomorphism) is said to be a \textbf{bath evolution} if there exists another $C^*$-algebra $\cB$ (not necessarily a spin system) such that\footnote{As noted before, $\A$ is a UHF algebra and therefore nuclear, so the tensor product $\A\otimes\cB$ is well defined even when $\cB$ is a general $C^*$ algebra \cite{murphy2014c}.}
\beq
\E=(\id_{\A}\otimes \Phi_{\cB})\circ \beta \circ \iota,
\eeq
where $\iota:\A\to\A\otimes\cB$ is the inclusion $a\mapsto a\otimes 1_{\cB}$, $\beta\in\Aut(\A\otimes\cB)$, and $\Phi_{\cB}$ is a state on $\cB$. We call $(\cB,\beta)$ (or simply $\beta$) a purification of $\E$ (on $\A\otimes \cB$).
\end{definition}
Physically, a bath evolution describes a process in which the system is coupled to a ``bath'', initially in the state $\Phi_\cB$, then the two systems evolve unitarily together, and finally the bath is ``traced out''. Observe that the Stinespring dilation theorem implies that every UCP map on a finite system is a bath evolution; however the same result is not expected to hold in infinite systems.

The following proposition shows that the composition of bath evolutions is again a bath evolution.
\begin{proposition}\label{prop:compose_bath_evo}
    Let $\E_{1},\E_{2}$ be two bath evolutions on $\A$, then $\E_{1}\circ\E_{2}$ is again a bath evolution.
\end{proposition}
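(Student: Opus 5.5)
The plan is to realize the composition $\E_{1}\circ\E_{2}$ directly as a single bath evolution whose bath is the tensor product of the two baths. Write $\E_{i}=(\id_{\A}\otimes\Phi_{i})\circ\beta_{i}\circ\iota_{i}$ with baths $\cB_{i}$, automorphisms $\beta_{i}\in\Aut(\A\otimes\cB_{i})$, states $\Phi_{i}$ on $\cB_{i}$, and inclusions $\iota_{i}$. Take $\cB:=\cB_{1}\otimes\cB_{2}$ and $\Phi:=\Phi_{1}\otimes\Phi_{2}$, which is a state on $\cB$. Since $\A$ is nuclear, $\A\otimes\cB_{1}\otimes\cB_{2}$ is unambiguous and associative. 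We then need a single automorphism $\beta$ of $\A\otimes\cB_{1}\otimes\cB_{2}$. To get it, extend each $\beta_{i}$ by the identity on the other bath:
\[
\hat\beta_{1}:=\beta_{1}\otimes\id_{\cB_{2}},\qquad \hat\beta_{2}:=(\beta_{2}\otimes\id_{\cB_{1}})
\]
Here $\hat\beta_{2}$ is read with the tensor factors reordered by the flip $\cB_{1}\otimes\cB_{2}\cong\cB_{2}\otimes\cB_{1}$. Both are automorphisms, since the tensor product of automorphisms of nuclear/min tensor products is again an automorphism. We set $\beta:=\hat\beta_{2}\circ\hat\beta_{1}$, with the ordering chosen so that in the Heisenberg picture $\E_{1}\circ\E_{2}(a)$ first applies $\beta_{2}$ and then $\beta_{1}$.

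The ordering must be checked carefully. For $a\in\A$,
\[
\E_{1}(\E_{2}(a))=(\id\otimes\Phi_{1})\big(\beta_{1}(\E_{2}(a)\otimes 1_{\cB_{1}})\big).
\]
We want to show this equals $(\id_{\A}\otimes\Phi_{1}\otimes\Phi_{2})\big((\beta_{1}\otimes\id_{\cB_{2}})\,(\beta_{2}\otimes\id_{\cB_{1}})(a\otimes1\otimes1)\big)$. Thus the composite should be $\beta=\hat\beta_{1}\circ\hat\beta_{2}$, i.e. $\hat\beta_{2}$ is applied to $a\otimes 1\otimes 1$ first. The key identity is a commutation of the slice map with the extended automorphism:
\[
(\id_{\A}\otimes\id_{\cB_{1}}\otimes\Phi_{2})\circ(\beta_{1}\otimes\id_{\cB_{2}})=\beta_{1}\circ(\id_{\A}\otimes\id_{\cB_{1}}\otimes\Phi_{2})
\]
This holds on $\A\otimes\cB_{1}\otimes\cB_{2}$. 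We would verify it on elementary tensors $x\otimes b_{2}$ with $x\in\A\otimes\cB_{1}$: both sides give $\Phi_{2}(b_{2})\beta_{1}(x)$. It then extends by linearity and norm continuity, since slice maps by states are contractive (completely positive). We also use $(\id\otimes\Phi_{2})\big((\beta_{2}\otimes\id_{\cB_{1}})(a\otimes1\otimes1)\big)=\E_{2}(a)\otimes 1_{\cB_{1}}$, which again follows on elementary tensors because $\beta_{2}(a\otimes1)\in\A\otimes\cB_{2}$ is a norm limit of finite sums of elementary tensors. Combining these, and using $\Phi_{1}\otimes\Phi_{2}=\Phi_{1}\circ(\id\otimes\Phi_{2})$ as slice maps, gives the claim.

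The main obstacle is purely technical. We must justify that slice maps $\id\otimes\Phi$ are well defined and bounded on the completed (minimal) tensor product, that $\beta_{i}\otimes\id$ is a well-defined automorphism, and that the identities checked on the algebraic tensor product pass to the completion. We would handle all of this by citing standard facts: the minimal tensor product of $*$-homomorphisms, and of completely positive maps, extends contractively. Everything else is bookkeeping of the tensor factor order.
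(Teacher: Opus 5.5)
Your proposal is correct and follows essentially the same route as the paper. You enlarge to the bath $\cB_{1}\otimes\cB_{2}$ with state $\Phi_{1}\otimes\Phi_{2}$, extend each $\beta_{i}$ by the identity on the other bath, and use that $\beta_{1}\otimes\id_{\cB_{2}}$ commutes with the slice map $\id_{\A}\otimes\id_{\cB_{1}}\otimes\Phi_{2}$ to collapse $\E_{1}\circ\E_{2}$ into $(\id_{\A}\otimes\Phi_{1}\otimes\Phi_{2})\circ\hat\beta_{1}\circ\hat\beta_{2}\circ\iota$.

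Two things to clean up in the write-up. Your initial definition $\beta:=\hat\beta_{2}\circ\hat\beta_{1}$ has the wrong order; the correct composite, which you derive later and which matches the paper's $\tilde{\beta}_{1}\circ\tilde{\beta}_{2}$, is $\hat\beta_{1}\circ\hat\beta_{2}$. Also, nuclearity of $\A$ does not by itself fix a norm on $\cB_{1}\otimes\cB_{2}$, so state explicitly that all tensor products are minimal, as your technical paragraph in effect assumes.
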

\begin{proof}
    By Def.~\ref{def:bath_evo}, for $i=1,2$ there exists $\cB_{i}$ such that
    \beq
    \E_{i}=(\id_{\A}\otimes \Phi_{\cB_{i}})\circ \beta_{i}\circ\iota_{i}
    \eeq
    for some $\beta_{i}\in\Aut(\A\otimes\cB_{i})$. Consider $\A\otimes \cB_{1}\otimes\cB_{2}$, one can easily check that 
    \beq
    \E_{1}=(\id_{\A}\otimes \Phi_{\cB_{1}}\otimes \id_{\cB_{2}})\circ \tilde{\beta}_{1}\circ\iota
    \eeq
    where $\tilde{\beta}_{1}\in\Aut(\A\otimes \cB_{1}\otimes \cB_{2})$ is obtained by tensoring $\id_{\cB_{2}}$ with $\beta_{1}$. One can similarly represent $\E_{2}$ on $\A\otimes \cB_{1}\otimes\cB_{2}$.
    
    It follows that
    \beq
    \begin{split}
        \E_{1}\circ\E_{2}&=(\id_{\A}\otimes \Phi_{\cB_{1}}\otimes \id_{\cB_{2}})\circ \tilde{\beta_{1}}\circ\iota_{1}\circ(\id_{\A}\otimes \id_{\cB_{1}}\otimes \Phi_{\cB_{2}})\circ\tilde{\beta_{2}}\circ\iota_{2}
        \\&=(\id_{\A}\otimes\Phi_{\cB_{1}}\otimes\Phi_{\cB_{2}})\circ\tilde{\beta_{1}}\circ\tilde{\beta_{2}}\circ\iota
    \end{split}
    \eeq
   where we have used that $\tilde{\beta}_{1}$ commutes with $\id_{\A}\otimes \id_{\cB_{1}}\otimes\Phi_{\cB_{2}}$.
\end{proof}
Next, we discuss the symmetry property, especially the strong symmetry of a bath evolution.

\begin{definition}
    Let $\E$ be a bath evolution on $\A$ and $\alpha\in\Aut(\A)$, then $\E$ is (weakly) symmetric under $\alpha$ if it commutes with $\alpha$. Furthermore, it is said to be strongly symmetric under $\alpha$ if there \textbf{exists} a purification $(\cB,\beta)$ for $\E$ such that
    \beq
    \beta\circ(\alpha\otimes \id_{\cB})=(\alpha\otimes \id_{\cB})\circ\beta
    \eeq
\end{definition}
Physically, a strongly symmetric bath evolution describes a process in which the system does not exchange charge with the bath.

The following proposition describes how the symmetry properties of a state transform under a strongly symmetric bath evolution.

\begin{proposition}\label{prop:weak_to_weak}
    Let $\psi$ be a state which is not strongly symmetric under $\alpha \in \Aut(\A)$, and suppose that the bath evolution $\E$ is strongly symmetric under $\alpha$, then $\psi\circ\E$ is also not strongly symmetric under $\alpha$. 
\end{proposition}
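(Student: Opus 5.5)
We argue by contraposition: assume $\psi\circ\E$ is strongly symmetric under $\alpha$ and deduce that $\psi$ is strongly symmetric. By Theorem \ref{thm:equi_def}, it suffices to show that every extension $\psi'$ of $\psi$ to $\cC\otimes\A$, for an arbitrary spin system $\cC$, is invariant under $\id_{\cC}\otimes\alpha$. Fix a purification $(\cB,\beta)$ of $\E$ with $\beta\circ(\alpha\otimes\id_{\cB})=(\alpha\otimes\id_{\cB})\circ\beta$. The natural device is to push $\psi'$ through the bath evolution: apply $\id_{\cC}\otimes\E$ to obtain an extension of $\psi\circ\E$, use strong symmetry there, and then pull back.

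The forward construction is the easy part. Define the state
\[
\chi:=(\psi'\otimes\Phi_{\cB})\circ(\id_{\cC}\otimes\beta)
\]
on $\cC\otimes\A\otimes\cB$. Its restriction to $\cC\otimes\A\otimes 1$ is $\psi'\circ(\id_{\cC}\otimes\E)$, and this is an extension of $\psi\circ\E$. Here $\cC\otimes\cB$ may not be a spin system, but the Remark after Lemma \ref{lemma:purification} says the quasi-equivalence argument works for general extensions $\A\subseteq\sC$. So the proof of $1\Rightarrow3$ in Theorem \ref{thm:equi_def} still shows that $\chi$ is invariant under $\id_{\cC}\otimes\alpha\otimes\id_{\cB}$. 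Because $\beta$ commutes with $\alpha\otimes\id_{\cB}$, this invariance is equivalent to invariance of $\psi'\otimes\Phi_{\cB}$ under $\id_{\cC}\otimes\alpha\otimes\id_{\cB}$. That is exactly $\psi'\circ(\id_{\cC}\otimes\alpha)=\psi'$, which is what we want.

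The delicate step is the claim that $\chi$ restricts to $\psi\circ\E$ on $\A$ and is therefore a legitimate extension. Restricted to $1\otimes\A\otimes 1$, $\chi$ gives $a\mapsto(\psi'\otimes\Phi_{\cB})(1\otimes\beta(a\otimes1))$. Since $\psi'$ restricts to $\psi$ on $\A$, this equals $(\psi\otimes\Phi_{\cB})(\beta(a\otimes 1))=\psi(\E(a))$. This identity uses linearity of $\beta$, and the extension property only needs the slice $(\psi\otimes\Phi_{\cB})\circ\beta$ on $\A\otimes1$. I expect the main obstacle to be justifying the use of Theorem \ref{thm:equi_def} for an extension algebra $\cC\otimes\A\otimes\cB$ in which $\cB$ is an arbitrary $C^*$-algebra. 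I would handle it by rerunning the $1\Rightarrow3$ argument verbatim: take the quasi-equivalence $f:\cM_{\psi\circ\E}\to\pi_\chi(\A)''$ from Lemma \ref{lemma:purification} (general version), then use $\la\chi|f(U_\alpha)|\chi\ra=1$ and Cauchy--Schwarz. This argument never uses the spin-system structure of the extension.
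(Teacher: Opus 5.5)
Your proposal is correct and is essentially the paper's proof, run in contrapositive form rather than by contradiction: the paper takes a non-invariant extension $\Psi$ of $\psi$ and forms $(\Psi\otimes\Phi_{\sC})\circ\beta$, which plays the role of your $\chi$, as an extension of $\psi\circ\E$; it then uses invertibility of $\beta$ and its commutation with $\alpha\otimes\id$ to transfer non-invariance, implicitly relying on the same general-algebra version of $1\Rightarrow 3$ in Theorem~\ref{thm:equi_def} that you make explicit. One small point to add: applying $3\Rightarrow 1$ presupposes $\psi\circ\alpha=\psi$, and your argument already supplies this if you apply it to a product extension $\varphi\otimes\psi$.
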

\begin{proof}
    Since $\psi$ is not strongly symmetric, there exists an extension $\Psi$ on $\A\otimes \cB$ such that:
    \beq\label{eq:asymmetry}
    \Psi\circ(\alpha\otimes \id_{\cB})\not= \Psi
    \eeq
    We fix this choice for $\cB$ and $\Psi$ from now on.

     We then note $\Psi\circ(\E\otimes \id_{\cB}) $ is an extension of $\psi\circ\E$ on $\A\otimes\cB$. Below we show $\Psi\circ(\E\otimes \id_{\cB})$ fails to be symmetric under $\alpha\otimes \id_{\cB}$ and hence $\psi\circ\E$ fails to be strongly symmetric under $\alpha$.
     
    Since $\E$ is strongly symmetric under $\alpha$, there exists a purification $(\sC,\beta)$ for $\E\otimes \id_{\cB}$ such that $\beta$ commutes with $\alpha\otimes\id_{\cB}\otimes \id_{\sC}$. Then we have
    \beq
    \Psi\circ(\E\otimes \id_{\cB})=((\Psi\otimes \Phi_{\sC})\circ\beta)|_{\A\otimes\cB\otimes 1_{\sC}}
    \eeq
    Thus $(\Psi\otimes \Phi_{\sC})\circ\beta$ is an extension of $\Psi\circ(\E\otimes \id_{\cB})$ (and hence an extension of $\psi\circ\E$). However, since $\beta$ commutes with $\alpha\otimes\id_{\cB}\otimes\id_{\sC}$ and is invertible, $(\Psi\otimes \Phi_{\sC})\circ\beta$ is invariant under $\alpha\otimes\id_{\cB}\otimes\id_{\sC}$ if and only if $\Psi\otimes \Phi_{\sC}$ is invariant, which contradicts Eq.~\eqref{eq:asymmetry}.
\end{proof}

In other words, a strongly symmetric bath evolution cannot take a state without strong symmetry to one with strong symmetry.

However, it is worth noting that the reverse process \emph{can} occur. That is, applying a strongly symmetric bath evolution $\E$ to a strongly symmetric state $\psi$, the resulting state $\psi\circ\E$ must be weakly symmetric but can fail to be strongly symmetric.
\begin{figure}[h!]
\begin{tikzpicture}[scale=1.25]
  \def\n{6}      
  \def\dx{0.6}   
  \def\amp{0.3}  
  \def\rectheight{0.2}

  \fill[blue!25] (0,{\amp+\rectheight}) rectangle ({(2*\n+1)*\dx},{\amp-\rectheight});

  \draw (0,0) node[left, yshift=0.17ex] {$\cdots$};

    \draw ({(2*\n+1)*\dx}, 0) node[right, yshift=0.17ex] {$\cdots$};

  \fill[red!25] (0,{-\amp+\rectheight}) rectangle ({(2*\n+1)*\dx},{-\amp-\rectheight});

  \draw ({(2*\n+1)*\dx+0.5},\amp) node[right] {\textcolor{blue}{System}};

    \draw ({(2*\n+1)*\dx+0.5},-\amp) node[right] {\textcolor{red}{Bath}};
  
  \draw[thick]
    (0,0)
    -- ({\dx/2}, {\amp})
    \foreach \i in {1,...,\n} {
      -- ({(2*\i-1/2)*\dx}, {-\amp})
      -- ({(2*\i+1/2)*\dx}, {\amp})
    }
    -- ({(2*\n+1)*\dx}, 0);

    \draw[draw opacity=0] ({6.5*\dx},{\amp}) -- ({7.5*\dx}, {-\amp}) node[midway, above, sloped] {\footnotesize CZ};

    \foreach \i in {1,...,\n} {
        \fill ({(2*\i-1/2)*\dx}, {-\amp}) circle(0.1cm);
        \fill ({(2*\i+1/2)*\dx}, {\amp}) circle(0.1cm);
    }
    \fill ({0.5*\dx}, {\amp}) circle(0.1cm);
\end{tikzpicture}
\caption{\label{fig:bath_evolution}The graphical construction of a bath evolution that sends a state with strong symmetry to one with only weak symmetry, see the text.}
\end{figure}

As an example of this phenomenon, consider the case of a 1-D infinite chain of spin-1/2's, which initially in the state where each spin is in the +1 eigenstate of the Pauli $\sigma^x$. Now we consider a bath evolution where the ``bath'' $\cB$ is also a 1-D infinite chain of spin-1/2's, offset by half a lattice spacing, as shown in Figure \ref{fig:bath_evolution}, and is also initially taken to be in the state where each eigenstate is in the +1 eigenstate of $\sigma^x$. The system-bath coupling involves applying a controlled-Z gate between each system spin and its two nearest neighbors in the bath, as also shown in Figure \ref{fig:bath_evolution}. The combined state of the system and bath is now a 1-D cluster state, and tracing out the bath gives (with respect to local operators) the maximally mixed state on the system. The bath evolution and the initial state of the system are both strongly symmetric under the $\mathbb{Z}_2$ symmetry generated by $\prod_i \sigma_x^i$, but the bath evolution takes this initial state to one with only weak symmetry. This can be viewed as a special case of the strongly symmetric dephasing channel discussed in Ref.~\cite{Lessa_2025}.

We remark that the 1-D cluster state shared between system and bath serves an example of a purification of the maximally mixed state which is different from the canonical one. In fact, a bath evolution which sends a strongly symmetric state to a state without the strong symmetry necessarily requires that any purification of the state of the combined system-bath just before tracing out the bath be a \emph{non}-canonical purification of the final state of the system (with the bath traced out). In \emph{finite} systems, all purifications are unitarily equivalent to the canonical purification, so it follows that in finite systems, unlike infinite systems, strongly symmetric channels always preserve strong symmetry.

As an easy but important corollary of Proposition \ref{prop:weak_to_weak}:
\begin{corollary}\label{coro:irreversible}
    Let $\alpha\in \Aut(\A)$, if two states $\psi_{1},\psi_{2}$ are two-way connected by strongly $\alpha$-symmetric bath evolutions, that is, 
    \beq
    \begin{split}
        \psi_{1}&=\psi_{2}\circ \E\\
        \psi_{2}&=\psi_{1}\circ\E'
    \end{split}
    \eeq
    Then, $\psi_{1}$ is strongly $\alpha$-symmetric if and only if $\psi_{2}$ is strongly $\alpha$-symmetric.
\end{corollary}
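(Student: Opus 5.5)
This follows from Proposition~\ref{prop:weak_to_weak} by contraposition, applied once in each direction. Suppose $\psi_{1}$ is strongly $\alpha$-symmetric and, for contradiction, $\psi_{2}$ is not. The bath evolution $\E'$ is strongly symmetric under $\alpha$, so Proposition~\ref{prop:weak_to_weak} applied to $\psi_{2}$ and $\E'$ shows that $\psi_{2}\circ\E'$ is also not strongly $\alpha$-symmetric. But $\psi_{2}\circ\E'=\psi_{1}$ by the first relation, which contradicts our assumption on $\psi_{1}$. Hence $\psi_{2}$ is strongly $\alpha$-symmetric.

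The reverse implication is the same argument with the roles exchanged. Assume $\psi_{2}$ is strongly symmetric and $\psi_{1}$ is not. Proposition~\ref{prop:weak_to_weak} applied to $\psi_{1}$ and the strongly symmetric bath evolution $\E$ shows that $\psi_{1}\circ\E=\psi_{2}$ is not strongly symmetric, a contradiction.

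One point needs care: the relations must be paired with the maps correctly. Proposition~\ref{prop:weak_to_weak} always concerns a state of the form $\psi\circ\E$, that is, the image of $\psi$ under the pullback by $\E$. So we use $\psi_{1}=\psi_{2}\circ\E$ to reach $\psi_{2}$'s image, and the other relation for the other direction. No other obstacle arises.

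Finally, Proposition~\ref{prop:weak_to_weak} takes strong symmetry in the sense of extensions (Def.~\ref{def:strong_sym_main}). By Theorem~\ref{thm:equi_def} this agrees with Def.~\ref{def:strong_sym}, so the conclusion holds under either definition.
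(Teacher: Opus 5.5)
Your approach is the intended one: the paper states this corollary as an immediate consequence of Proposition~\ref{prop:weak_to_weak}, applied once in each direction, and your contrapositive structure is right. However, both paragraphs pair the maps with the wrong states, which contradicts your own ``one point needs care'' paragraph.

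\begin{itemize}
\item The first relation gives $\psi_{2}\circ\E=\psi_{1}$, not $\psi_{2}\circ\E'=\psi_{1}$. So in the first direction you must apply the proposition to $\psi_{2}$ and $\E$.
\item The second relation gives $\psi_{1}\circ\E'=\psi_{2}$, not $\psi_{1}\circ\E=\psi_{2}$. So in the reverse direction you must use $\E'$.
\end{itemize}

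Both $\E$ and $\E'$ are assumed strongly $\alpha$-symmetric, so swapping the labels repairs the proof with no other change.
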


For many practical purposes, such as the definition of mixed-state phases, bath evolutions are still too general. To obtain better control over locality, we introduce the notion of \textit{locally generated channels}. To this end, we first recall the concept of locally generated automorphisms (LGAs), introduced in Refs.~\cite{Kapustin2020invertible}, which are used in the definition of gapped phases for pure states.
\begin{definition}
    Let $\mathbb{B}_{d}:=\{\prod_{i=1}^{d}[m_{i},m_{i}']\subseteq\R^{d}:m_{i},m_{i}'\in \z,m_{i}\leqslant m_{i}'\}$, an element in $\mathbb{B}_{d}$ is called a brick in $\R^{d}$. A Hamiltonian $H=\sum_{Y\in \mathbb{B}_{d}}h^{Y}$ (viewed as a derivation) \textbf{on a spin system} is \textbf{almost-local} if:
    \begin{enumerate}
        \item     \beq
    \tau(a^{*}h^{Y})=0
    \eeq
    where $\tau$ is the tracial state and $a\in\A_{Z}$ for any $Z\in\mathbb{B}_{d},Z\subsetneq Y$.
    \item there exists a super-polynomial function\footnote{This means $f(r)$ is non-negative, decreasing and $\lim_{r\to\infty}r^{n}f(r)=0$ for any $n\in\mathbb{N}$.} $f(r)$ such that
    \beq
    \|h^{Y}\|\leqslant f(\diam(Y))
    \eeq
    \end{enumerate}
    We denote the set of all almost-local Hamiltonians by ${\frak{D}}^{al}$.
\end{definition}
In Ref.~\cite{Kapustin2022Noether}, it was shown that any continuous path
$F:[0,1]\to \frak{D}^{al}$, namely a continuously time-dependent almost-local Hamiltonian,
can be exponentiated, using the Lieb-Robinson bound, to a strongly continuous family
of $*$-automorphisms $\{\gamma_{s}\}_{s\in[0,1]}$, which describes the corresponding time evolution.
This leads to the notion of locally generated automorphisms (LGA), as well as its counterpart in open quantum systems, namely locally generated channels (LGC):
\begin{definition}
    An automorphism $\alpha\in\Aut(\A)$ is locally generated if there exists a continuous path $F:[0,1]\to{\frak{D}}^{al}$ such that $\alpha=\gamma_{s}|_{s=1}$. Similarly, a bath evolution $\E$ is locally generated if there is a purification $\beta$ for $\E$ on $\A\otimes \cB$, such that $\beta$ is an LGA.
\end{definition}
One major application of LGC is defining phases of mixed-states.
\begin{definition}

    For two states $\psi_{1},\psi_{2}$, we say $\psi_{1}$ is locally generated from $\psi_{2}$ if there exists an LGC $\E$ such that 
    \beq
    \psi_{1}=\psi_{2}\circ\E
    \eeq
    Two states are said to be two-way connected or in the same phase if they can be locally generated from each other. 
\end{definition}
\begin{remark}
One may also consider the case where $\psi_{i}$ carries certain symmetries, either strong or weak. Correspondingly, one may impose the requirement that the LGCs preserve these symmetries in the strong or weak sense.
\end{remark}

\begin{definition}
    A state $\Omega$ is called a product state if it satisfies $\Omega(ab)=\Omega(a)\Omega(b)$ whenever the supports of $a,b$ do not overlap. Then:
    \begin{enumerate}
        \item A state $\psi$ is called short-range entangled (SRE) if $\psi=\Omega\circ\beta$ for some pure product state $\Omega$ and an LGC $\beta$.
        \item A state $\psi$ is said to be in the trivial phase if it is two-way connected to a (possibly mixed) product state $\Omega$ .
    \end{enumerate}
\end{definition}

Below, we show that, in \textbf{1d spin chains}, if a state $\psi$ has some anomalous symmetry $\alpha:G\to\G^{lp}$ as its vN symmetry, then $\psi$ cannot be locally generated from any product state. In particular, such a state cannot fall into the trivial phase.
\begin{proposition}
    On a quantum spin chain, if $\psi$ is vN symmetric under an anomalous symmetry $\alpha:G\to\tilde{\G}^{lp}$, then it cannot be locally generated from a product state.
\end{proposition}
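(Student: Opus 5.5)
The plan is to reduce to Theorem~\ref{thm:vN_LSM}: we show that any state $\psi=\Omega\circ\E$, with $\Omega$ a product state and $\E$ an LGC, is automatically clustering and split. Once that is established, vN symmetry under an anomalous $\alpha$ contradicts Theorem~\ref{thm:vN_LSM}.

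\textbf{Step 1 (reduction to a pure product state on a doubled system).} Let $\E=(\id_{\A}\otimes\Phi_{\cB})\circ\beta\circ\iota$, with $\beta$ an LGA on $\A\otimes\cB$. By the definition of LGA, $\cB$ is a spin system (almost-local Hamiltonians are defined on spin systems). Then
\[
\psi(a)=(\Omega\otimes\Phi_{\cB})(\beta(a\otimes 1_{\cB})),
\]
so $\psi$ is the restriction to $\A\otimes 1$ of $\Psi:=(\Omega\otimes\Phi_{\cB})\circ\beta$.

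I would arrange for $\Omega\otimes\Phi_{\cB}$ to be a product state. If $\Phi_{\cB}$ is not already product, I would first check that the construction allows replacing the bath state by a product state. If it does not, I would restrict to the natural setting of the definition, in which the bath starts in a product state.

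Next I would purify the mixed product state site by site: each single-site mixed state is the restriction of a pure state on a doubled site. Tensoring $\beta$ with the identity on the ancillas, $\psi$ becomes the restriction of $\Psi'=\Omega'\circ\beta'$, where $\Omega'$ is a pure product state and $\beta'$ is an LGA. Hence $\Psi'$ is a pure SRE state on a spin chain whose sites carry the system, the bath and the ancillas.

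\textbf{Step 2 (clustering and split for $\psi$).} A pure product state is split. An LGA preserves the split property up to quasi-equivalence, since it factorizes as $\beta'\approx(\beta'_L\otimes\beta'_R)\circ\Ad_u$ up to an almost-local unitary near the cut; Lemma~4.2 of Ref.~\cite{Kapustin2020invertible} states this for SRE states. So $\Psi'$ is split. Being pure, $\Psi'$ is clustering by Lemma~\ref{lemma:clustering}, and in fact has almost-local (super-polynomial) decay of correlations by the Lieb--Robinson bounds.

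Restriction to a subsystem preserves clustering, since the clustering condition is tested on a subset of the operators. Therefore $\psi$ is a factor state.

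For the split property, write $\Psi'\sim\Psi'_L\otimes\Psi'_R$ with respect to a cut that runs through all layers at the same position. By Lemma~\ref{lemma:quasi_equiv} the two states agree asymptotically near spatial infinity. Restricting both to $\A$ gives $\psi$ and $\psi_L\otimes\psi_R$, and these then also agree near infinity. Since both are factor states, Lemma~\ref{lemma:quasi_equiv} gives $\psi\sim\psi_L\otimes\psi_R$, so $\psi$ splits.

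\textbf{Step 3 (conclusion).} $\psi$ is clustering and split, and has $\alpha$ as a vN symmetry with nontrivial anomaly index. This contradicts Theorem~\ref{thm:vN_LSM}, which proves the proposition.

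\textbf{Main obstacle.} I expect the hardest point to be the passage from ``split'' for $\Psi'$ to ``split'' for $\psi$ in Step 2, i.e.\ verifying the asymptotic-agreement criterion of Lemma~\ref{lemma:quasi_equiv}, which is stated in norm over $\Gamma_\epsilon^{c}$. My first approach is direct: $\Gamma_\epsilon^c$ for the restricted state is contained in the corresponding complement for the doubled chain, and the norm on the subalgebra is bounded by the norm on the larger algebra, so the estimate descends.

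If that fails, I would instead use Proposition~\ref{prop:MI_split}. Monotonicity of relative entropy under restriction would give a bound on $I(\Gamma;\Gamma^c)$ for $\psi$ by the corresponding quantity for $\Psi'$. That quantity is uniformly bounded for an SRE state with a product starting point and almost-local $\beta'$, though proving this requires a separate area-law estimate.
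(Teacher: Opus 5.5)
Your proof is correct and follows the paper's overall plan. You realize $\psi$ as the restriction of a pure SRE state; the paper uses the canonical purification $\tilde{\Omega}$ of the product state together with a pure product bath state, and $\tilde{\Omega}$ is one particular site-by-site purification. Both then conclude that $\psi$ is a factor state because restrictions of pure states cluster, establish the split property, and invoke Theorem~\ref{thm:vN_LSM}.

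The real difference is how the split property is transferred to $\psi$. The paper does it entropically. SRE states obey an area law (Lemma~4.2 of \cite{Kapustin2020invertible}), and monotonicity of relative entropy under restriction gives $S(\psi\|\psi_I\otimes\psi_{I^c})<\const$. Proposition~\ref{prop:MI_split} then turns this into $\psi\sim\psi_L\otimes\psi_R$, using lower semicontinuity of relative entropy and $S=\infty$ between disjoint factor states. This is exactly your fallback.

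Your primary route is purely structural, and it works. Lemma~\ref{lemma:quasi_equiv}, applied to the factor states $\Psi'$ and $\Psi'_L\otimes\Psi'_R$, gives $\|(\Psi'-\Psi'_L\otimes\Psi'_R)|_{\Gamma_\epsilon^c}\|<\epsilon$. On the system layer this restricts to $\|(\psi-\psi_L\otimes\psi_R)|_{\Gamma_\epsilon^c}\|<\epsilon$.

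The one line you should add is that $\psi_L\otimes\psi_R$ is a factor state. It holds for the same reason as for $\psi$: it is the restriction of $\Psi'_L\otimes\Psi'_R$, which is quasi-equivalent to the pure state $\Psi'$ and hence clustering.

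Your route avoids all entropy estimates. It actually shows a general fact: the split property of a factor state on a layered chain passes to each layer, and from the SRE structure it needs only that $\Psi'$ splits. The paper's route costs more but gives a stronger, quantitative conclusion: $\psi$ obeys an area law of mutual information, which is the hypothesis used in the informal Theorem~\ref{thm:LSM_strong_main}.

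On Step~1, the first option you list cannot succeed in general. That option was to check that an arbitrary bath state can be replaced by a product state. Take $\cB$ to be a copy of the system and $\beta$ the on-site swap, which is locally generated. Then $\Omega\circ\E$ is just $\Phi_{\cB}$ transported to the system, so with an unrestricted bath state every state would be locally generated from a product state. The assumption that the bath starts in a (pure) product state is therefore necessary. The paper makes this assumption implicitly when it ``chooses'' a pure product state $\Omega_{\cB}$, so your second option is the correct reading.
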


\begin{proof}
    Let us assume $\psi=\Omega\circ\E$ for some product state $\Omega$ and an LGC $\E$.
    By Theorem \ref{thm:vN_LSM} and Proposition.~\ref{prop:MI_split}, we only need to show $\psi$ must be a factor state and it satisfies the area law of mutual information.

    Since $\Omega$ is manifestly a factor state, we take its canonical purification $\tilde{\Omega}$ on $\A\otimes\ovl{\A}$ (this is again a product state). We also purify $\E$ into an LGA $\beta$ on $\A\otimes \cB$. Choosing a pure product state $\Omega_{\cB}$ on $\cB$ , it is easy to verify $\omega:=(\tilde{\Omega}\otimes\Omega_{\cB})\circ\beta$ is a purification of $\psi$, where we have extended $\beta$ on $\A\otimes\ovl{\A}\otimes\cB$ by tensoring identity on $\ovl{\A}$. Thus we have shown that if $\psi$ is locally generated from a product state, then it admits an SRE purification $\omega$. Thus $\Omega\circ\E$ must be a factor state for any LGC $\E$. Besides, it is known that SRE states satisfy the area law of entanglement entropy (see e.g. Lemma 4.2 of \cite{Kapustin2020invertible}):
    \beq
    S(\omega||\omega_{I}\otimes \omega_{I^{c}})<\const
    \eeq
    for any finite interval $I$,
    where $S(\cdot\|\cdot)$ is the relative entropy. By the monotonicity, restriction on subsystems does not increase the relative entropy. Therefore:
    \beq
    S(\psi\|\psi_{I}\otimes \psi_{I^{c}})<\const
    \eeq
    i.e. the mutual information of $\psi$ satisfies the area law of mutual information as well.

\end{proof}

\section{Discussion}

What is a phase of matter? In this paper we have put forward the view that it should always be characterized via the expectation values of \emph{local} operators in the thermodynamic limit. In particular, we investigated the symmetry properties of mixed states from the perspective of (quasi-)local operators. This perspective is advantageous both practically, since local operators are more accessible experimentally and numerically, and theoretically, since phases of matter are properly defined only in the thermodynamic limit. In particular, the charge coherence condition (Def.~\ref{def:charge_coherence}) yields a more robust and mathematically rigorous formulation for distinguishing and diagnosing phases of matter previously described in terms of ``SW-SSB''.

Note that, while we chose to focus our presentation on spin systems for concreteness, many of our results are in fact much more general. The definition of strong symmetry, as well as most of the core results (such as the relation with charge coherence, and the constraints on strongly symmetric bath evolutions) in fact hold in \emph{any} quantum system in which states can be characterized as positive linear functionals on a (separable) $C^*$-algebra. This can include, for example: classical systems (which correspond to commutative $C^*$-algebras); quantum field theories, when formulated in the framework of algebraic quantum field theory (AQFT) \cite{haag2012local}; constrained spin systems where the operator algebra fails to be a tensor product over sites (e.g.\ lattice gauge theories); lattice systems where the Hilbert space on each site is infinite (e.g. lattices of harmonic oscillators); and systems with non-local interactions such as the Sachdev-Ye-Kitaev (SYK) model (as long as there is still a well-defined thermodynamic limit).

In addition, mutual information plays an important role in our derivation of Lieb--Schultz--Mattis-type constraints for mixed states. It would therefore be interesting to explore how other information-theoretic quantities, such as conditional mutual information (CMI) can be used to characterize mixed states in the thermodynamic limit \cite{Sang_2025,Yi_2026, Li:2026qhc}. 

\emph{Note added.}-- in the course of preparing this work, we became aware of the parallel work Ref.~\cite{divi2026localstrongtoweakspontaneoussymmetry}. While several of our results and definitions of strong symmetry were arrived at  independently of Ref.~\cite{divi2026localstrongtoweakspontaneoussymmetry}, we learned of the ``fidelity correlator'' discussed in Sections \ref{sec:sum_main_results} and \ref{sec:cha_coh_ord_para} from the authors of Ref. \cite{divi2026localstrongtoweakspontaneoussymmetry}, after which we were able to prove equivalence with our other definitions. We also learned of the parallel work Ref.~\cite{zhang2026localdiagnosticsstrongtoweakspontaneous} which has some overlap with our results.

\section*{Acknowledgments}
We thank Chong Wang, Leonardo A. Lessa and Francisco Divi for helpful discussions. Research at Perimeter Institute is supported in part by the Government of Canada through the Department of Innovation, Science and Economic Development and by the Province of Ontario through the Ministry of Colleges, Universities and Research Excellence. RL is also supported by the Simons Collaboration on Global Categorical Symmetries through Simons Foundation grant 888996. JY is also supported by the Natural Sciences and Engineering Research Council (NSERC) of Canada.

\appendix

\section{Extension of a pure state}
\label{appendix:extension_of_pure_state}
In this appendix, we present a proof to the following claim we made in the main text:
\begin{proposition}
    Let $\psi$ be a \textbf{pure} state on a quasi-local algebra $\A$ and $\Psi$ be its extension on $\A\otimes\cB$ where $\cB$ is another $C^*$-algebra, then there exists a state $\varphi$ on $\cB$ such that
    \beq
    \Psi= \psi \otimes\varphi
    \eeq
\end{proposition}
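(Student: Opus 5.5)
We want to show that any extension $\Psi$ of a pure state $\psi$ factorizes. The natural candidate is $\varphi:=\Psi|_{1_{\A}\otimes\cB}$. By linearity and norm density of the algebraic tensor product $\A\odot\cB$ in $\A\otimes\cB$, it suffices to prove
\beq
\Psi(a\otimes b)=\psi(a)\varphi(b),\quad\forall\,a\in\A,\ b\in\cB .
\eeq
We may also reduce to $b$ positive with $0\leqslant b\leqslant 1$, since every element of $\cB$ is a linear combination of such elements.

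\emph{Step 1: $b=0$ or $b=1$.} If $\Psi(1\otimes b)=0$, then $|\Psi(a\otimes b)|^{2}=|\Psi((a\otimes b^{1/2})(1\otimes b^{1/2}))|^{2}\leqslant \Psi(a a^{*}\otimes b)\,\Psi(1\otimes b)=0$ by Cauchy--Schwarz, so the claimed identity holds trivially. The same argument applied to $1-b$ handles the case $\Psi(1\otimes b)=1$.

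\emph{Step 2: the decomposition.} For $0<\Psi(1\otimes b)=:\lambda<1$, define positive linear functionals on $\A$ by
\beq
\rho_{1}(a):=\Psi(a\otimes b),\qquad \rho_{2}(a):=\Psi(a\otimes(1-b)).
\eeq
Positivity of $\rho_{1}$ follows because $a^{*}a\otimes b=(a\otimes b^{1/2})^{*}(a\otimes b^{1/2})\geqslant 0$, using that $1\otimes b$ commutes with $\A\otimes 1$; the same reasoning gives positivity of $\rho_{2}$. We then have $\rho_{1}+\rho_{2}=\psi$, so
\beq
\psi=\lambda\,(\rho_{1}/\lambda)+(1-\lambda)\,(\rho_{2}/(1-\lambda))
\eeq
is a convex combination of two states. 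Purity (Def.~\ref{def:pure_states}) then forces $\rho_{1}/\lambda=\psi$, that is, $\Psi(a\otimes b)=\psi(a)\Psi(1\otimes b)=\psi(a)\varphi(b)$.

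Alternatively, one could use Lemma~\ref{lemma:majorized} together with Schur's lemma (Lemma~\ref{lemma:Schur}). Since $\rho_{1}\leqslant\psi$, we can write $\rho_{1}=\la\psi|T\pi_{\psi}(\cdot)|\psi\ra$ with $T\in\pi_{\psi}(\A)'=\bbC$, so $\rho_{1}$ is proportional to $\psi$. This is the same argument in representation-theoretic language.

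\emph{Step 3: conclusion.} Extending linearly, we obtain $\Psi=\psi\otimes\varphi$ on $\A\odot\cB$, and boundedness of $\Psi$ extends the identity to $\A\otimes\cB$; $\varphi$ is a state because it is a restriction of a state. The only delicate point is the positivity of $\rho_{1}$, which relies on the commutation of the two tensor factors, together with the nuclearity and uniqueness of the tensor product so that the density argument is valid. This is covered by the convention fixed in Sec.~\ref{sec:quasi_local}.
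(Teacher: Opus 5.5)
Your proof is correct, but it follows a different route from the paper. The paper argues through representations. It applies Lemma~\ref{lemma:purification} to get $\pi_\Psi|_{\A}\sim\pi_\psi$, and uses purity to identify $\pi_\Psi(\A\otimes 1)''$ with the type~I factor $\B(\cH_\psi)$, so that $\cH_\Psi\simeq\cH_\psi\otimes\cK$. Because the reduced state of $|\Psi\ra$ on the first factor is pure, $|\Psi\ra$ has Schmidt rank one. Factorization then follows since $\pi_\Psi(1\otimes\cB)$ acts only on $\cK$.

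You use nothing but extremality of $\psi$ in the state space. For $0\leqslant b\leqslant 1$, the splitting $\psi(a)=\Psi(a\otimes b)+\Psi(a\otimes(1-b))$ into positive functionals forces $\Psi(a\otimes b)=\psi(a)\Psi(1\otimes b)$. This is more elementary and more general: it needs no GNS or von Neumann algebra structure, and it works for any unital $C^*$-algebra in place of the quasi-local $\A$ and any $C^*$-norm on the tensor product. The paper's route, in exchange, gives the structural fact that the GNS space of any extension factorizes as $\cH_\psi\otimes\cK$ with a product GNS vector. That fits the quasi-equivalence framework the paper uses elsewhere.

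Three minor points:
\begin{itemize}
\item Step~1 is redundant, since a positive functional with $\rho_1(1)=0$ vanishes identically. Its heading should also read $\Psi(1\otimes b)\in\{0,1\}$ rather than $b\in\{0,1\}$.
\item Nuclearity is not needed for the density step. The algebraic tensor product is dense in any $C^*$-completion by construction; nuclearity only makes $\A\otimes\cB$ unambiguous.
\item The positivity of $\rho_1$ is just the identity $(a\otimes b^{1/2})^*(a\otimes b^{1/2})=a^*a\otimes b$ in the tensor product, not a separate commutation subtlety.
\end{itemize}
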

\begin{proof}
    By Lemma \ref{lemma:purification}, we have 
    \beq
    \B(\cH_\psi)\simeq \pi_{\Psi}(\A\otimes 1)''\subseteq \B(\cH_{\Psi})
    \eeq
    Thus there exists a Hilbert space $\cK$ such that $\cH_{\Psi}\simeq\cH_{\psi}\otimes \cK$. Since $\Psi|_{\A\otimes 1}=\psi$ is pure, the Schmidt rank of $|\Psi\ra\in\cH_{\psi}\otimes \cK$ must be 1. The desired result thus follows.
\end{proof}

\section{Symmetry under canonical purification}\label{sec:sym_cano_purif}
Let $\A$ be a unital $C^*$-algebra\footnote{For us the only relevant case is the quasi-local algebra. However, the procedure and results described in this appendix works for any unital $C^*$-algebras.} , let
$\alpha\in\operatorname{Aut}(\A)$, and suppose that $\psi$ is a invariant factor state.
Denote by $\overline{\A}$ the complex conjugate algebra and define
$\bar\alpha\in\operatorname{Aut}(\overline{\A})$ by
\[
    \bar\alpha(\bar a)=\overline{\alpha(a)},
    \qquad a\in\A.
\]

The goal of this appendix is to show
\begin{proposition}
    The canonical purification of $\psi\circ\alpha$ is $\tilde{\psi}\circ(\ovl{\alpha}\otimes\alpha)$. In particular, if $\psi\circ\alpha=\psi$, then $\tilde{\psi}\circ(\ovl{\alpha}\otimes \alpha)=\tilde{\psi}$.
\end{proposition}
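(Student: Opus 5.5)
Our plan is to use the uniqueness part of Lemma~\ref{lemma:purify}. The canonical purification of a factor state is the \emph{unique} $j$-positive, exact (pure) state on $\ovl{\A}\otimes\A$ whose restriction to $1\otimes\A$ is the given state. So it suffices to check three properties of $\chi:=\tilde{\psi}\circ(\ovl{\alpha}\otimes\alpha)$. First, $\chi$ restricts to $\psi\circ\alpha$ on $\A$. Second, $\chi$ is $j$-positive. Third, $\chi$ is exact. We also note that $\psi\circ\alpha$ is again a factor state, since $\pi_\psi\circ\alpha$ is a GNS representation of $\psi\circ\alpha$ with the same von Neumann algebra $\cM_\psi$. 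Hence the lemma applies to $\psi\circ\alpha$. The ``in particular'' part is then immediate: if $\psi\circ\alpha=\psi$, uniqueness gives $\tilde\psi\circ(\ovl\alpha\otimes\alpha)=\tilde\psi$.

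The restriction property is a one-line check: $\chi(\ovl{1}\otimes a)=\tilde\psi(\ovl{1}\otimes\alpha(a))=\psi(\alpha(a))$. For $j$-positivity, we first note that $\ovl{\alpha}\otimes\alpha$ commutes with the antilinear map $j$ of Eq.~\eqref{eq:jpositive}. Indeed,
\[
j\big((\ovl\alpha\otimes\alpha)(\ovl b\otimes a)\big)=j\big(\ovl{\alpha(b)}\otimes\alpha(a)\big)=\ovl{\alpha(a)}\otimes\alpha(b)=(\ovl\alpha\otimes\alpha)\big(j(\ovl b\otimes a)\big).
\]
It then follows that
\[
\chi(\ovl{a}\otimes a)=\tilde\psi\big(\ovl{\alpha(a)}\otimes\alpha(a)\big)\geqslant 0 .
\]
If Woronowicz's notion of $j$-positivity is required on the full positive cone generated by such elements (e.g.\ on $x\,j(x)$ for $x\in\ovl\A\otimes\A$), the same intertwining relation handles it, because $\ovl\alpha\otimes\alpha$ is a $*$-automorphism commuting with $j$.

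For exactness, we take the GNS representation of $\chi$ to be $\tilde\pi\circ(\ovl\alpha\otimes\alpha)$ on $\tilde\cH$ with the same cyclic vector $|\tilde\psi\ra$. Since $\ovl\alpha\otimes\alpha$ is a product automorphism, it maps $\ovl{1}\otimes\A$ onto itself and $\ovl\A\otimes 1$ onto itself. Therefore the images satisfy
\[
\tilde\pi\big((\ovl\alpha\otimes\alpha)(\ovl1\otimes\A)\big)=\tilde\pi(\ovl1\otimes\A),\qquad \tilde\pi\big((\ovl\alpha\otimes\alpha)(\ovl\A\otimes1)\big)=\tilde\pi(\ovl\A\otimes1).
\]
The exactness condition Eq.~\eqref{eq:exact} for $\chi$ is then literally the same equation as for $\tilde\psi$. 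Purity is preserved in the same way, since the representation algebra is unchanged.

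The main obstacle I expect is being careful about which definition of $j$-positivity and which form of uniqueness Lemma~\ref{lemma:purify} (via Ref.~\cite{Woronowicz1973}) actually uses. Uniqueness might be stated only up to the choice of identification of $\ovl\A$, or might require positivity on a larger cone than the elements $\ovl a\otimes a$. I would first verify that every condition in Woronowicz's characterization is expressed through $*$-algebraic operations and $j$. If so, the equivariance $j\circ(\ovl\alpha\otimes\alpha)=(\ovl\alpha\otimes\alpha)\circ j$ transports all of the conditions automatically.
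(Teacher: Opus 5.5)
Your proposal is correct and takes essentially the same route as the paper: invoke uniqueness of the canonical purification (Lemma~\ref{lemma:purify}), obtain $j$-positivity from $\tilde\psi(\ovl{\alpha(a)}\otimes\alpha(a))\geqslant 0$, and obtain exactness from the fact that $\tilde\pi\circ(\ovl\alpha\otimes\alpha)$ has the same images of $\ovl{1}\otimes\A$ and $\ovl\A\otimes 1$ as $\tilde\pi$. Your extra checks are sound details the paper leaves implicit: that the restriction is $\psi\circ\alpha$, that $\psi\circ\alpha$ is again a factor state, and that $\ovl\alpha\otimes\alpha$ commutes with $j$.
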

\begin{proof}
By uniqueness of the canonical purification, it suffices to verify that
$\tilde{\psi}\circ(\ovl{\alpha}\otimes\alpha)$ is $j$-positive and exact.
Its $j$-positivity follows immediately from that of $\tilde{\psi}$:
\beq
\tilde{\psi}\bigl(\ovl{\alpha}(\bar{a})\otimes\alpha(a)\bigr)\geqslant 0.
\eeq
Moreover,
\beq
\pi_{\tilde{\psi}\circ(\ovl{\alpha}\otimes\alpha)}(\bar{a}\otimes b)
=
\pi_{\tilde{\psi}}\bigl(\ovl{\alpha}(\bar{a})\otimes\alpha(b)\bigr),
\eeq
and hence, since $\alpha(\A)=\A$,
\beq
\pi_{\tilde{\psi}\circ(\ovl{\alpha}\otimes\alpha)}(\bar{1}\otimes\A)
=
\pi_{\tilde{\psi}}(\bar{1}\otimes\A).
\eeq
Thus $\tilde{\psi}\circ(\ovl{\alpha}\otimes\alpha)$ is also exact, and therefore
is the canonical purification of $\psi\circ\alpha$.

\end{proof}

\section{Strong symmetries in fermionic systems}\label{sec:fermionic_strong_sym}
\subsection{$\z_{2}$-graded $C^*$ algebras}
In this section, we extend our characterization of strong symmetries to lattice system with fermions. These systems are described by quasi-local algebra with a $\z_{2}$-grading (i.e. the conjugation by fermion parity $(-1)^{F}$). We will briefly recall basic concepts about $\z_{2}$-graded $C^*$-algebras and interested readers are referred to Ref.~\cite{Crismale_2020} for more details.

\begin{definition}
    Let $A$ be a $*$-algebra, $\theta$ be a $*$-automorphism of $A$ satisfying $\theta^{2}=\id_{A}$. The pair $(A,\theta)$ (or simply $A$) is called a $\z_{2}$-graded $*$-algebra.
\end{definition}
An element $a\in A$ is called \textbf{even} (resp. \textbf{odd}) if $\theta(a)=a$ (resp. $\theta(a)=-a$). The degree $\deg(a)$ is defined as
\beq
\deg(a):=\begin{cases}
    0,\quad a\text{ is even}\\
    1,\quad a\text{ is odd}
\end{cases}
\eeq
The degree of $a$ is also written as $|a|$. We will use both of them interchangeably.

For $\z_{2}$-graded algebras, the ordinary tensor product $\otimes$ is replaced by the \textbf{graded} tensor product $\hat{\otimes}$. In more details, let $A,B$ be two $\z_{2}$-graded algebras, $A\hat{\otimes} B$ is the same as $A\otimes B$ as a linear space but has a different multiplication:
\beq
(a_{1}\hat{\otimes}b_{1})\cdot(a_{2}\hat{\otimes}b_{2})=(-1)^{\deg(b_{1})\deg(a_{2})}(a_{1}a_{2})\hat{\otimes}(b_{1}b_{2})
\eeq
In order to obtain a $\z_{2}$-graded $C^*$-algebra, we have to complete it with respect to suitable norms and such norm may not be unique in general. However, we will show that for nice algebras (which we will focus on later), such uniqueness can be shown, see Sec.~\ref{sec:uniqueness_gr_tensor_prod}.

We now introduce states and automorphisms compatible with the grading.

\begin{definition}\label{def:Z2_graded_state}
Let $(A,\theta)$ be a unital $\mathbb{Z}_{2}$-graded $C^{*}$-algebra. A state on $A$ is a positive linear functional $\psi\colon A\to\mathbb{C}$ satisfying
$
\psi(1_A)=1.
$
It is called \emph{parity-invariant}\footnote{In Ref.~\cite{Griseta2024Z2_gradedC*} such states are called \textbf{even} states.} if $
\psi\circ\theta=\psi.
$
\end{definition}
\begin{definition}
Let $(A,\theta)$ be a $\mathbb{Z}_{2}$-graded $C^{*}$-algebra. A $*$-automorphism $\alpha\in\operatorname{Aut}(A)$ is called \emph{graded} if $
\alpha\circ\theta=\theta\circ\alpha.
$

\end{definition}

Physically, states of different fermion parity cannot be coherently superposed. Consequently, one should restrict to parity-invariant states\footnote{One manifestation of this is we allowed non-parity-invariant states, in general it would not be possible to take the tensor product of states. Suppose that $\omega_i$ is a non-parity-invariant state on $A_i$ for $i=1,2$. Then there exist odd self-adjoint elements $a_i\in A_i$ such that $\omega_i(a_i)\neq 0$. If the product state existed, it would satisfy
\beq
(\omega_1\hat{\otimes}\omega_2)(a_1\hat{\otimes}a_2)
=\omega_1(a_1)\omega_2(a_2).
\eeq
However, $a_1\hat{\otimes}a_2$ is skew-adjoint, so the left-hand side is purely imaginary, whereas the right-hand side is real and nonzero. Hence $\omega_1\hat{\otimes}\omega_2$ does not define a state; in particular, a graded product state exists only if there is at most one non-parity-invariant tensor factor.}.
\textit{For this reason, states will always mean parity-invariant states in the following discussion.}

States induce a natural grading on their GNS representations, with respect to which state-preserving graded automorphisms are implemented by even unitaries.

\begin{lemma}\label{lemma:implementer_fermionic}
    Let $\psi$ be a state, then the GNS Hilbert space $\cH_{\psi}$ is a $\z_{2}$-graded. Besides, if $\alpha$ is graded $*$-automorphism which preserves $\psi$, then $\alpha$ can be implemented by an even unitary operator $U_{\alpha}$ on $\cH_{\psi}$.
\end{lemma}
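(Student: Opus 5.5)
The plan is to build the grading operator on $\cH_{\psi}$ directly from the parity-invariance of $\psi$, and then to use uniqueness of implementers, exactly as in Remark~\ref{remark:uniqueness_GNS}, to see that $U_{\alpha}$ commutes with it.

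\textbf{Step 1: the grading $\Theta$.} Since $\psi\circ\theta=\psi$, the automorphism $\theta$ is a symmetry of $\psi$. By the discussion following Remark~\ref{remark:uniqueness_GNS}, there is a unique unitary $\Theta\in\B(\cH_{\psi})$ with
\[
\Theta\,\pi_{\psi}(a)|\psi\ra=\pi_{\psi}(\theta(a))|\psi\ra .
\]
Concretely, $\Theta[a]=[\theta(a)]$, which is well defined and isometric because $\psi(\theta(a)^{*}\theta(a))=\psi(a^{*}a)$. From $\theta^{2}=\id$ we get $\Theta^{2}=1$ on the dense set $\pi_{\psi}(A)|\psi\ra$, hence everywhere. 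So $\Theta$ is a self-adjoint unitary, and its $\pm1$ eigenspaces give the decomposition $\cH_{\psi}=\cH^{0}\oplus\cH^{1}$.

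This grading is compatible with the representation: $\Theta\pi_{\psi}(a)\Theta=\pi_{\psi}(\theta(a))$. Consequently even elements of $A$ preserve each $\cH^{i}$, odd elements exchange $\cH^{0}$ and $\cH^{1}$, and $|\psi\ra\in\cH^{0}$. This proves the first claim.

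\textbf{Step 2: the implementer $U_{\alpha}$.} Since $\psi\circ\alpha=\psi$, the same construction gives a unitary $U_{\alpha}$ with $U_{\alpha}[a]=[\alpha(a)]$. It satisfies $U_{\alpha}|\psi\ra=|\psi\ra$ and $U_{\alpha}\pi_{\psi}(a)U_{\alpha}^{\dagger}=\pi_{\psi}(\alpha(a))$. Evenness means $U_{\alpha}\Theta=\Theta U_{\alpha}$, and we check it on the dense subspace:
\[
U_{\alpha}\Theta[a]=[\alpha(\theta(a))]=[\theta(\alpha(a))]=\Theta U_{\alpha}[a],
\]
using that $\alpha$ is graded, i.e.\ $\alpha\circ\theta=\theta\circ\alpha$. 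Both sides are bounded operators, so by continuity $U_{\alpha}$ commutes with $\Theta$. Therefore $U_{\alpha}$ preserves $\cH^{0}$ and $\cH^{1}$, which is exactly the statement that it is even.

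The only point needing care is the well-definedness of these maps on the quotient $A/N_{\psi}$. This follows because $\theta$ and $\alpha$ preserve the GNS ideal $N_{\psi}$, which is immediate from the invariance of $\psi$. Beyond that, no step is a real obstacle.
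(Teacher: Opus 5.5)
Your proof is correct. It uses the same underlying fact as the paper, namely that the implementer of a $\psi$-preserving automorphism which fixes $|\psi\ra$ is unique, but you execute the key step differently.

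The paper obtains $\gamma$ and $U_{\alpha}$ abstractly from GNS uniqueness. It then argues that $[\Ad_{U_{\alpha}},\Ad_{\gamma}]=0$ forces $\gamma U_{\alpha}\gamma=\pm U_{\alpha}$, and fixes the sign using $\gamma|\psi\ra=U_{\alpha}|\psi\ra=|\psi\ra$. You instead write both operators explicitly on $A/N_{\psi}$ and verify $U_{\alpha}\Theta=\Theta U_{\alpha}$ on the dense set $\{[a]\}$.

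Your route is more elementary and in fact tighter. The relation $\alpha\theta=\theta\alpha$ only gives commutation of the two $\Ad$'s on $\pi_{\psi}(A)$. Hence it only gives $\gamma U_{\alpha}^{\dagger}\gamma U_{\alpha}\in\pi_{\psi}(A)'$. Concluding that this is a scalar from the commutation relation alone is justified only when $\pi_{\psi}(A)'=\bbC 1$, i.e.\ for pure $\psi$. For mixed states one needs exactly the observation your computation builds in: a unitary in $\pi_{\psi}(A)'$ that fixes the cyclic vector $|\psi\ra$ is the identity.

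Your argument also makes explicit two things the paper leaves implicit: $|\psi\ra$ is even, and odd elements interchange $\cH^{0}$ and $\cH^{1}$.
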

\begin{proof}
The first assertion follows directly from the uniqueness of the ungraded GNS representation. Let $\gamma$ denote the self-adjoint unitary implementing $\theta$ on $\cH_{\psi}$, which also determines the $\z_2$-grading of $\cH_{\psi}$. It remains to prove the second assertion.

By the uniqueness of the GNS representation, there exists a unitary operator $U_{\alpha}$ implementing $\alpha$. Since $\alpha$ commutes with $\theta$, i.e. $[\Ad_{U_{\alpha}},\Ad_{\gamma}]=0$, this implies $\gamma U_{\alpha}\gamma=\pm U_{\alpha}$. On the other hand, since $\gamma|\psi\ra=U_{\alpha}|\psi\ra=|\psi\ra$, only $\gamma U_{\alpha}\gamma=U_{\alpha}$ is possible. Therefore, $U_{\alpha}$ is even.
\end{proof}

\subsection{The graded commutant and the graded von Neumann algebra}

\subsubsection{The graded commutant}

\begin{definition}
The \emph{graded commutant} of the GNS representation of a state $\psi$ is
\[
\pi_\psi(A)^{\mathrm{gr}\prime}
   :=
   \left\{
      T\in\mathcal B(\mathcal H_\psi):
      T\pi_\psi(a)
      =
      (-1)^{\deg(T)\deg(a)}\pi_\psi(a)T
      \text{ for all homogeneous }a\in A
   \right\},
\]
where the condition is first imposed on homogeneous operators $T$ and then
extended linearly.
\end{definition}

Thus, an even operator in the graded commutant commutes with every
$\pi_\psi(a)$, whereas an odd operator commutes with the even elements and
anticommutes with the odd elements.

To compare the graded commutant with the ordinary commutant, introduce the
Klein transformation \cite{Crismale_2020}:
\[
\kappa_\psi=\operatorname{Ad}_{U_\psi},
\qquad
U_\psi:=\frac{1+i\gamma}{\sqrt{2}}.
\]
Hence $\kappa_\psi$ is a $*$-automorphism of
$\mathcal B(\mathcal H_\psi)$. It acts on homogeneous operators as
\[
\kappa_\psi(T)
=
\begin{cases}
T, & \deg(T)=0,\\[2mm]
i\gamma T, & \deg(T)=1.
\end{cases}
\]
Moreover,
\[
\kappa_\psi^2=\operatorname{Ad}_{\gamma}.
\]

If $T$ is odd, then the relation
\[
T\pi_\psi(a)=(-1)^{\deg(a)}\pi_\psi(a)T
\]
is equivalent to
\[
(i\gamma T)\pi_\psi(a)
   =\pi_\psi(a)(i\gamma T).
\]
For even $T$, graded commutation is simply ordinary commutation.
Consequently,
\[
\kappa_\psi\left(
\pi_\psi(A)^{\mathrm{gr}\prime}
\right)
=
\pi_\psi(A)'.
\]
Equivalently,
\[
\pi_\psi(A)^{\mathrm{gr}\prime}
=
\kappa_\psi^{-1}\left(\pi_\psi(A)'\right).
\]
Thus, the graded commutant and the ordinary commutant are canonically
$*$-isomorphic through the Klein transformation.

Since the Klein transformation identifies the graded and ordinary
commutants, the usual irreducibility criterion for pure states immediately
gives the following.

\begin{proposition}
A state $\psi$ is pure if and only if $\pi_\psi(A)^{\mathrm{gr}\prime}=\bbC 1.$

\end{proposition}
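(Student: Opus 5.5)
The plan is to reduce the statement to the ordinary irreducibility criterion, Lemma~\ref{lemma:Schur}, using the Klein transformation identity
\[
\pi_\psi(A)^{\mathrm{gr}\prime}=\kappa_\psi^{-1}\bigl(\pi_\psi(A)'\bigr)
\]
established just above. Since $\kappa_\psi=\Ad_{U_\psi}$ is a $*$-automorphism of $\B(\cH_\psi)$, it is in particular a linear bijection that fixes the scalars, $\kappa_\psi(\lambda 1)=\lambda 1$. It therefore maps $\bbC 1$ onto itself, and the preimage of $\bbC 1$ under $\kappa_\psi$ is exactly $\bbC1$. Consequently
\[
\pi_\psi(A)^{\mathrm{gr}\prime}=\bbC1 \iff \pi_\psi(A)'=\bbC1 .
\]
By Lemma~\ref{lemma:Schur}, the right-hand side is equivalent to $\psi$ being pure. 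Lemma~\ref{lemma:Schur} applies to any unital $C^*$-algebra, so the $\z_2$-grading of $A$ plays no role at this step.

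The only point needing care is the identity $\kappa_\psi(\pi_\psi(A)^{\mathrm{gr}\prime})=\pi_\psi(A)'$. The argument in the text checks it on homogeneous $T$, and I would make explicit that it extends to all of $\pi_\psi(A)^{\mathrm{gr}\prime}$. Both $\pi_\psi(A)'$ and $\pi_\psi(A)^{\mathrm{gr}\prime}$ are invariant under $\Ad_\gamma$. This holds because $\gamma\pi_\psi(a)\gamma=\pi_\psi(\theta(a))$ and $\theta$ preserves $A$ together with its homogeneous components. Hence every element $T$ of either set splits as $T=T_0+T_1$ with $T_{0,1}=\tfrac12(T\pm\gamma T\gamma)$, and both homogeneous parts again lie in the same set. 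The homogeneous computation then gives the inclusion in both directions. For the reverse inclusion, if $S\in\pi_\psi(A)'$ is odd, then $\kappa_\psi^{-1}(S)=-i\gamma S$. Indeed $\kappa_\psi(-i\gamma S)=i\gamma(-i\gamma S)=S$.

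I do not expect a genuine obstacle. The one subtlety is the standing convention that states are parity-invariant. This convention is what makes $\gamma$ exist as a unitary with $\gamma|\psi\ra=|\psi\ra$, via Lemma~\ref{lemma:implementer_fermionic}. Without it the Klein transformation is not defined, so I would note that it is used here.
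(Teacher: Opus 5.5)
Your proposal is correct and follows the paper's argument: the Klein transformation $\kappa_\psi$ maps $\pi_\psi(A)^{\mathrm{gr}\prime}$ onto $\pi_\psi(A)'$ and fixes $\bbC 1$, so Lemma~\ref{lemma:Schur} gives the equivalence, with ``pure'' understood in the ordinary sense of extremal among all states, which is what that lemma characterizes. You also make explicit two details the paper leaves implicit: the extension from homogeneous to arbitrary elements via the $\Ad_\gamma$-invariant decomposition $T=T_0+T_1$, and the formula $\kappa_\psi^{-1}(S)=-i\gamma S$ for odd $S$.
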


\subsubsection{The graded von Neumann algebra generated by the state}

\begin{definition}
The \emph{graded von Neumann algebra generated by $\psi$} is the double
graded commutant
\[
\cM_\psi^{\mathrm{gr}}
   :=
   \pi_\psi(A)^{\mathrm{gr}\prime
   \mathrm{gr}\prime}.
\]
\end{definition}

Let $\mathcal S\subseteq\mathcal B(\mathcal H_\psi)$ be any
$\gamma$-invariant set. The Klein transformation gives
\[
\mathcal S^{\mathrm{gr}\prime}
=
\kappa_\psi^{-1}(\mathcal S').
\]
Applying this identity twice, we obtain
\begin{align*}
\mathcal S^{\mathrm{gr}\prime\mathrm{gr}\prime}
&=
\kappa_\psi^{-1}
\left(
   \left(\mathcal S^{\mathrm{gr}\prime}\right)'
\right)\\
&=
\kappa_\psi^{-1}
\left(
   \left(\kappa_\psi^{-1}(\mathcal S')\right)'
\right)\\
&=
\kappa_\psi^{-2}(\mathcal S'').
\end{align*}
Since
\[
\kappa_\psi^2=\operatorname{Ad}_{\gamma}
\]
and $\mathcal S''$ is invariant under
$\operatorname{Ad}_{\gamma}$, it follows that
\[
\mathcal S^{\mathrm{gr}\prime\mathrm{gr}\prime}
=
\mathcal S''.
\]
Taking $\mathcal S=\pi_\psi(A)$ gives
\[
\cM_\psi^{\mathrm{gr}}
=
\pi_\psi(A)''
\]

Therefore, the graded and ungraded bicommutant constructions produce the
same von Neumann algebra as a concrete operator algebra. The difference
lies not in the resulting algebra, but in the appropriate notion of its
center.

\subsubsection{Graded factors}

Let
$\cM_\psi:=\pi_\psi(A)''$
with grading $\widehat\theta_\psi
   :=\operatorname{Ad}_{\gamma}\big|_{\cM_\psi}.$

\begin{definition}
The \emph{graded center} of $\cM_\psi$ is
\[
Z_{\mathrm{gr}}(\cM_\psi)
   :=
   \cM_\psi
   \cap
   \cM_\psi^{\mathrm{gr}\prime}.
\]
The graded von Neumann algebra $\cM_\psi$ is called a
\emph{graded factor} if
\[
Z_{\mathrm{gr}}(\cM_\psi)=\mathbb C1.
\]
\end{definition}

An even element belongs to the graded center precisely when it belongs to
the ordinary center. On the other hand, there are no nonzero odd elements
in the graded center. Indeed, if $x$ were odd and graded-central, then,
since $x^*$ is also odd,
\[
xx^*=-x^*x.
\]
The left-hand side is positive and the right-hand side is negative, so
both must vanish. Hence $x=0$.

It follows that
\[
Z_{\mathrm{gr}}(\cM_\psi)
=
Z(\cM_\psi)_{\mathrm{even}}
=
Z(\cM_\psi)^{\widehat\theta_\psi},
\]
where
\[
Z(\cM_\psi)^{\widehat\theta_\psi}
=
\left\{
z\in Z(\cM_\psi):
\gamma z\gamma=z
\right\}.
\]
Consequently,
\[
\cM_\psi\text{ is a graded factor}
\quad\Longleftrightarrow\quad
Z(\cM_\psi)^{\widehat\theta_\psi}
=
\mathbb C1.
\]

By comparison, $\cM_\psi$ is an ordinary factor if and only if
\[
Z(\cM_\psi)=\mathbb C1.
\]
Thus,
\[
\cM_\psi\text{ is an ordinary factor}
\quad\Longrightarrow\quad
\cM_\psi\text{ is a graded factor},
\]
but the converse need not hold. Below we give an example of graded factor which fails to be an ordinary factor.

\begin{example}
    Let
\[
\cM=\mathcal B(\mathcal K)\oplus\mathcal B(\mathcal K)
\]
and let the grading interchange the two summands:
\[
\widehat\theta(x\oplus y)=y\oplus x.
\]
Then
\[
Z(\cM)
=
\mathbb C1\oplus\mathbb C1,
\]
so $\cM$ is not an ordinary factor. However,
\[
Z(\cM)^{\widehat\theta}
=
\left\{
\lambda1\oplus\lambda1:\lambda\in\mathbb C
\right\}
=
\mathbb C1,
\]
and hence $\cM$ is a graded factor.
\end{example}

A graded factor has ordinary center either $\bbC $ or
$\bbC\oplus\bbC$, with the grading exchanging the two summands. The second
case occurs in the Kitaev chain, the nontrivial one-dimensional fermionic
invertible phase without any symmetry except fermion parity
\cite{matsui2020splitpropertyfermionicstring,Bourne2020fermion}.

\subsection{Canonical purification for graded case}

Let $(\A,\theta)$ be a $\mathbb Z_2$-graded $C^*$-algebra and let
$\psi$ be an even factor state:
\[
    \psi\circ\theta=\psi.
\]
We write $\widehat\otimes$ for the graded tensor product. Thus, for
homogeneous elements,
\[
    (\bar b\widehat\otimes a)(\bar d\widehat\otimes c)
    =
    (-1)^{|a||d|}\overline{bd}\widehat\otimes ac.
\]

Choose a standard form
\[
    (M,\mathcal H,J,\mathcal P),
    \qquad
    M=\pi(\A)'',
\]
and let $|\psi\ra\in\mathcal P$ be the canonical vector representing
$\psi$. Let $\gamma$ be the standard unitary implementing the grading:
\[
    \gamma=\gamma^*,
    \qquad
    \gamma^2=1,
    \qquad
    \gamma\pi(a)\gamma=\pi(\theta(a)).
\]
Since $\psi$ is even,
\[
    \gamma|\psi\ra=|\psi\ra,
    \qquad
    \gamma J=J\gamma.
\]

For $x=x_0+x_1\in B(\mathcal H)$, where
\[
    \gamma x_j\gamma=(-1)^j x_j,
\]
define the Klein-twist automorphism
\[
    \eta_\gamma(x)=x_0+i\gamma x_1.
\]
The graded commutant of $M$ is
\[
    M^{\mathrm{gr}\prime}
    :=
    \eta_\gamma(M').
\]
For homogeneous $x\in M$ and
$y\in M^{\mathrm{gr}\prime}$, one has
\[
    xy=(-1)^{|x||y|}yx.
\]

The graded purification of $\psi$ is the state on
$\overline{\A}\widehat\otimes_{\max}\A$ defined by
\[
    \widetilde{\psi}(\bar b\widehat\otimes a)
    =
    \la
        \psi|
        \eta_\gamma\!(J\pi(b)J)
        \pi(a)|\psi\ra
\]

Indeed, the maps
\[
    \bar b\longmapsto
    \eta_\gamma\!\left(J\pi(b)J\right),
    \qquad
    a\longmapsto\pi(a),
\]
have graded-commuting ranges and therefore define a representation of
the graded tensor product. Hence the displayed formula defines a
state. Its marginals are
\[
    \widetilde{\psi}(\bar 1\widehat\otimes a)=\psi(a),
    \qquad
    \widetilde{\psi}(\bar b\widehat\otimes 1)
    =\overline{\psi(b)}.
\]
Moreover, since $\gamma|\psi\ra=|\psi\ra$,
\[
    \widetilde{\psi}\circ
    (\bar\theta\widehat\otimes\theta)
    =
    \widetilde{\psi},
\]
so the purified state is even.

To prove purity, observe that the von Neumann algebra generated by the
two graded copies is
\[
    N=M\vee\eta_\gamma(M'),
\]
and
\[
    N'
    =
    M'\cap\eta_\gamma(M).
\]
Its even part is $\mathbb C1$ because $M$ is a factor. A nonzero odd
element in this intersection would give an odd unitary
$u\in M$ satisfying
\[
    uy=\theta(y)u,
    \qquad y\in M.
\]
It would therefore implement $\theta$, implying
$\theta(u)=u$, in contradiction with the oddness condition
$\theta(u)=-u$. Hence
\[
    N'=\mathbb C1,
\]
so the doubled representation is irreducible and
$\widetilde{\psi}$ is pure.

Finally, if $\alpha\in\operatorname{Aut}(\A)$ is graded and preserves
$\psi$, then its standard implementer $U_\alpha$ satisfies
\[
    U_\alpha J=JU_\alpha,
    \qquad
    U_\alpha\gamma=\gamma U_\alpha,
    \qquad
    U_\alpha|\psi\ra=|\psi\ra.
\]
It follows immediately from the defining formula that
\[
    \widetilde{\psi}\circ
    (\bar\alpha\widehat\otimes\alpha)
    =
    \widetilde{\psi}
\]

\subsection{Strong symmetry for fermionic systems}
We now extend strong symmetry (Def.~\ref{def:strong_sym}) to fermionic systems. The fermionic quasi-local algebra is similarly defined as in the spin case, except that each site carries a possibly trivial $\z_2$-grading and all tensor products are graded. We again write $(\A,\theta)$ for the fermionic quasi-local algebra.

Let $\psi$ be a state invariant under a grading-preserving $*$-automorphism $\alpha$, and let $U_\alpha$ denote its unitary implementation in the GNS representation of $\psi$.

\begin{definition}
The state $\psi$ is \emph{strongly symmetric} under $\alpha$ if
\[
U_\alpha\in\cM_\psi.
\]
\end{definition}
The goal is to establish the fermionic version of Theorem.~\ref{thm:equi_def}, i.e.

\begin{theorem}\label{thm:fermionic_equivalence}
    For a state $ \psi$ and $\z_{2}$-graded quasi-local algebra $(\A,\theta)$, the following statements are equivalent:
    \begin{enumerate}
        \item $\psi$ is strongly symmetric under $\alpha$.
        \item The canonically doubled state $\tilde{\psi}$ is symmetric under $1\hat{\otimes}\alpha$.
        \item For any graded $C^*$ algebra $\cB$ and any extension $\Psi$ of $\psi$ on $\cB\hat{\otimes}\A$, $\Psi$ is symmetric under $1\hat{\otimes }\alpha$.
    \end{enumerate}
\end{theorem}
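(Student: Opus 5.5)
I will mirror the proof of Theorem~\ref{thm:equi_def}, replacing ordinary commutants with graded commutants and using the Klein twist to transport the bosonic arguments. Let $(\pi,\cH,|\psi\ra)$ and $(\tilde\pi,\tilde\cH,|\tilde\psi\ra)$ denote the GNS triples of $\psi$ and of its graded canonical purification $\tilde\psi$ on $\oA\hat\otimes\A$. By the explicit construction above, $\tilde\cH$ can be taken to be $\cH$ itself with cyclic vector $|\psi\ra$. Under this identification, $\tilde\pi(\bar 1\hat\otimes a)=\pi(a)$ and $\tilde\pi(\bar b\hat\otimes 1)=\eta_\gamma(J\pi(b)J)$. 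Thus the intertwiner $V$ of the bosonic proof is simply the identity, and $\tilde\pi(\bar1\hat\otimes\A)''=\cM_\psi$.

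The key structural input is a graded form of exactness (Eq.~\eqref{eq:exact}): the commutant $\tilde\pi(\oA\hat\otimes 1)'=\eta_\gamma(M')'$ must be identified with $\cM_\psi$ on the even sector. By Tomita--Takesaki, $J\cM_\psi J=\cM_\psi'$, so $\tilde\pi(\oA\hat\otimes1)''=\eta_\gamma(\cM_\psi')=\cM_\psi^{\mathrm{gr}\prime}$. The ordinary commutant of this algebra is $\eta_\gamma(\cM_\psi)$ (the Klein twist is an automorphism and $\eta_\gamma(\cM_\psi')'=\eta_\gamma(\cM_\psi'')$, using $\gamma$-invariance of these algebras). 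The crucial remark is that $\eta_\gamma$ fixes even operators. By Lemma~\ref{lemma:implementer_fermionic}, every implementer we meet is even. Hence, for an even unitary $W$,
\[
W\in\tilde\pi(\oA\hat\otimes1)' \iff W\in\cM_\psi .
\]

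With this in hand, the implications go as follows.

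For $1\Rightarrow2$: the even unitary $U_\alpha\in\cM_\psi$ fixes $|\psi\ra$ and implements $\alpha$ on $\tilde\pi(\bar1\hat\otimes\A)$. By the criterion above, it commutes with $\tilde\pi(\oA\hat\otimes 1)$. Therefore $\Ad_{U_\alpha}$ implements $1\hat\otimes\alpha$ on the whole (graded) product, and $\tilde\psi$ is invariant.

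For $2\Rightarrow1$: the implementer $\tilde U$ of $1\hat\otimes\alpha$ fixing $|\tilde\psi\ra$ is even, because $1\hat\otimes\alpha$ is graded (Lemma~\ref{lemma:implementer_fermionic} applied to $\tilde\psi$). It commutes with $\tilde\pi(\oA\hat\otimes1)$, so it lies in $\cM_\psi$. It also implements $\alpha$ on $\pi(\A)$ and fixes $|\psi\ra$, so it equals $U_\alpha$ by uniqueness of the implementer.

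The implication $3\Rightarrow2$ is trivial.

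For $1\Rightarrow3$, I will repeat the argument of Lemma~\ref{lemma:purification}. The graded tensor product does not change the fact that $\pi_\Psi(1\hat\otimes\A)$ generates a cyclic representation containing $\pi_\psi$, and the majorization argument with projections in the commutant goes through verbatim. This gives a normal isomorphism $f:\cM_\psi\to\pi_\Psi(1\hat\otimes\A)''$. Then $f(U_\alpha)$ is an even unitary implementing $\alpha$ on $\pi_\Psi(1\hat\otimes\A)$. Its expectation in $|\Psi\ra$ is $\la\psi|U_\alpha|\psi\ra=1$, so by Cauchy--Schwarz $f(U_\alpha)|\Psi\ra=|\Psi\ra$.

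The main obstacle is showing that $f(U_\alpha)$ acts trivially on $\pi_\Psi(\cB\hat\otimes 1)$. Odd elements of $\cB$ anticommute rather than commute with odd elements of $\A$, so $\pi_\Psi(\cB\hat\otimes1)$ lies only in the graded commutant of $\pi_\Psi(1\hat\otimes\A)''$. I plan to resolve this by evenness. The grading $\gamma_\Psi$ restricts on $\pi_\Psi(1\hat\otimes\A)''$ to $f\circ\Ad_\gamma\circ f^{-1}$, since $f$ intertwines the $\theta$-actions. Hence $f(U_\alpha)$ is even in the graded von Neumann algebra. An even element of an algebra commutes with its entire graded commutant: with the odd part as well, because the sign $(-1)^{|x||y|}$ is $+1$ when $|x|=0$. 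This yields $\Psi\circ(1\hat\otimes\alpha)=\Psi$.

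A secondary technical point to check carefully is the commutant identity $\eta_\gamma(\cM_\psi')'=\eta_\gamma(\cM_\psi)$ used in the exactness criterion. It relies on $\eta_\gamma$ being implemented by the unitary $(1+i\gamma)/\sqrt2$, which holds as computed in the preceding subsection.
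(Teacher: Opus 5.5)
Your proposal is correct and follows the paper's route: the paper's proof is the remark that, by Lemma~\ref{lemma:implementer_fermionic}, the implementers are even, so every graded commutation relation involving them is an ordinary one and the spin-case proof of Theorem~\ref{thm:equi_def} carries over; your Klein-twist form of exactness ($\eta_\gamma(\cM_\psi')'=\eta_\gamma(\cM_\psi)$, with $\eta_\gamma$ fixing even operators) and your evenness argument for the odd part of $\cB$ in $1\Rightarrow3$ simply make that remark explicit. One inaccuracy, which the paper's spin-case proof shares when it calls $V$ unitary, is your identification of the two GNS spaces: the space on which $J$ acts and $\tilde\psi$ is represented is the standard-form space, and it equals the GNS space of $\psi$ only when $|\psi\ra$ is separating for $\cM_\psi$ (false, e.g., for pure $\psi$ with $\dim\cH_\psi>1$, since $\B(\cH_\psi)$ then has no separating vector), so in $2\Rightarrow1$ do not set $V=\id$ but instead pull $\tilde U$ back through the normal isomorphism $f:\cM_\psi\to\tilde\pi(\bar 1\hat\otimes\A)''$ of Lemma~\ref{lemma:purification} and use $\la\psi|f^{-1}(\tilde U)|\psi\ra=\la\tilde\psi|\tilde U|\tilde\psi\ra=1$ with Cauchy--Schwarz, exactly as in your $1\Rightarrow3$ step.
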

By Lemma.~\ref{lemma:implementer_fermionic}, the unitary operator $U_{\alpha}$ must be even, all graded commutations involving $U_{\alpha}$ reduces to ordinary commutations. Thus, the proof is identical to the spin case. We will omit it here.

We now formulate the charge oherence criterion for fermionic systems.
Consider a continuous map
\[
\alpha\colon G\to\Aut(\A)
\]
be an action of a compact group by grading-preserving automorphisms. Recall here $\Aut(\A)$ is equipped with the strong topology.

\begin{proposition}
The following are equivalent:
\begin{enumerate}
    \item $\psi$ is strongly symmetric under $\alpha$.
    \item $\psi$ satisfies charge coherence: for every charged operator
    $O\in\A$,
    \[
    F(\psi,\psi_O)=0,
    \qquad
    \psi_O(\,\cdot\,):=\psi(O^*\,\cdot\,O).
    \]
\end{enumerate}
\end{proposition}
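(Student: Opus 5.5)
The plan is to transcribe the proof of Theorem~\ref{thm:strong_sym=cha_coh} into the graded setting, using the ordinary GNS representation $(\pi_\psi,\cH_\psi,|\psi\ra)$ of the parity-invariant state $\psi$ throughout. Two facts established above make this possible. First, $\cM_\psi^{\mathrm{gr}}=\pi_\psi(\A)''$, so the fermionic definition of strong symmetry ($U_\alpha\in\cM_\psi$) refers to the same concrete von Neumann algebra as before. Second, by Lemma~\ref{lemma:implementer_fermionic} each implementer $U_g$ is even. The fidelity is defined in Def.~\ref{def:fidelity} purely through positive functionals on the $C^*$-algebra $\A$ and never uses the grading. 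Hence Lemma~\ref{lemma:Uhlmann} applies verbatim: $F(\psi,\psi_O)=\sup_{R\in\cM_\psi',\|R\|=1}|\la\psi|R\,\pi_\psi(O)|\psi\ra|^2$, where the commutant is the ordinary one. With these in hand, the graded structure plays no role in the argument itself.

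For $1\Rightarrow 2$, fix $R\in\cM_\psi'$ and a charged $O$. Since $U_g\in\cM_\psi$, we have $U_g^\dagger R U_g=R$. Using also $U_g|\psi\ra=|\psi\ra$, this gives $\la\psi|R\pi_\psi(O)|\psi\ra=\la\psi|R\pi_\psi(\alpha_g(O))|\psi\ra$. Next, integrate over $G$. The integral $\int_G\alpha_g(O)\,\rd g$ is a norm-convergent Bochner integral, because $g\mapsto\alpha_g(O)$ is norm continuous in the strong topology. It can therefore be moved inside the bounded functional $a\mapsto\la\psi|R\pi_\psi(a)|\psi\ra$, and the expression vanishes. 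Taking the supremum over $R$ gives $F(\psi,\psi_O)=0$.

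For $2\Rightarrow 1$, the hypothesis together with Lemma~\ref{lemma:Uhlmann} yields $\la\psi|R\pi_\psi(O)|\psi\ra=0$ for all $R\in\cM_\psi'$ and all charged $O$. Define $B_R$ as in Eq.~\eqref{eq:inv_pairing} and evaluate it on the dense set of vectors $\pi_\psi(a)|\psi\ra$. Decompose $a_1^*a_2=\mathrm{Avr}_G(a_1^*a_2)+(a_1^*a_2-\mathrm{Avr}_G(a_1^*a_2))$; the second summand is charged because $\mathrm{Avr}_G$ is idempotent. This shows $B_R(\pi_\psi(a_1)\psi,\pi_\psi(a_2)\psi)=\la\psi|R\pi_\psi(\mathrm{Avr}_G(a_1^*a_2))|\psi\ra$, which is invariant under $a_i\mapsto\alpha_g(a_i)$. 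Consequently $U_g^\dagger RU_g=R$ on a dense set, and hence everywhere by boundedness. So $U_g\in\cM_\psi''=\cM_\psi$ by the double commutant theorem (Theorem~\ref{theorem:double_commutant}).

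The point I expect to need the most care is confirming that nothing graded sneaks in. One might ask whether the Uhlmann supremum should run over the graded commutant $\cM_\psi^{\mathrm{gr}\prime}$ rather than $\cM_\psi'$, or whether charged odd operators $O$ cause trouble. The answer in both cases is that nothing changes. $F$ is defined intrinsically on the ungraded $C^*$-algebra, and Lemma~\ref{lemma:Uhlmann} uses the ordinary commutant of the representation, so the supremum runs over $\cM_\psi'$. Odd charged $O$ are handled identically, since only linearity and the covariance relation $U_g\pi_\psi(a)U_g^\dagger=\pi_\psi(\alpha_g(a))$ are used. As a consistency check one can note that if $O$ is odd and $\alpha$ commutes with $\theta$, then parity invariance already gives $\la\psi|R\pi_\psi(O)|\psi\ra=0$ for even $R$, but this observation is not needed for the proof.
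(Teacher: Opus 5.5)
Your proposal is correct and follows the paper's route: the paper simply asserts that the proof of Theorem~\ref{thm:strong_sym=cha_coh} carries over verbatim because the implementer $U_\alpha$ is even (Lemma~\ref{lemma:implementer_fermionic}), and you carry out exactly that transcription. Your further observation is a reasonable sharpening of the paper's one-line justification. Def.~\ref{def:fidelity}, Lemma~\ref{lemma:Uhlmann} and the definition of strong symmetry involve only the ordinary $\cM_\psi$ and $\cM_\psi'$, so the argument never actually uses the evenness of $U_g$.
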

For the same reason as above, the proof is identical to the spin case so we will omit it.

\subsection{On the uniqueness of graded tensor product}\label{sec:uniqueness_gr_tensor_prod}
We show that the graded tensor product of $(A,\theta_A)$ and $(B,\theta_B)$ admits a unique $C^*$-norm whenever either $A$ or $B$ is nuclear, in particular, when either is UHF. This applies to the fermionic quasi-local algebra $\A$. Until this uniqueness is established, however, the choice of tensor-product completion must be specified.
\begin{definition}[Algebraic graded tensor product]
As a vector space, $A\odgr B$ is the ordinary algebraic tensor product
$A\odot B$. Its multiplication and involution are prescribed on homogeneous
elementary tensors by the Koszul sign rules
\begin{align*}
 (a\odgr b)(a'\odgr b')
   &=(-1)^{|b||a'|}\,aa'\odgr bb',\\
 (a\odgr b)^*
   &=(-1)^{|a||b|}\,a^*\odgr b^*.
\end{align*}
They are then extended linearly. The tensor $a\odgr b$ has degree
$|a|+|b|$, and the grading automorphism is
\[
  \Theta_0=\theta_A\odot\theta_B,
  \qquad \Theta_0(a\odgr b)=\theta_A(a)\odgr\theta_B(b).
\]
\end{definition}
However, the algebra $A\odgr B$ above is incomplete in general, one has to specify the norm on $A\odgr B$ so that one can complete it.

\begin{definition}[Graded $C^*$-tensor norm]
A graded $C^*$-tensor norm $\alpha$ on $A\odgr B$ is a $C^*$-norm such that
\[
  \|a\odgr b\|_\lambda=\|a\|\,\|b\|
  \quad(a\in A,\ b\in B),
\]
and $\Theta_0$ extends to an isometric order-two automorphism of the
completion $A\grten_\lambda B$, where $\lambda$ labels different tensor-product norms.
Equivalently, the completion is a graded
$C^*$-algebra containing the two factors with their prescribed norms and
graded commutation relation.
\end{definition}

The two extremal norms are the following \cite{murphy2014c}.

\begin{itemize}
\item The \emph{minimal} (or spatial) graded norm $\|\cdot\|_{\Min}$ is
obtained from faithful grading-covariant representations of $A$ and $B$ on
graded Hilbert spaces and their graded Hilbert-space tensor product.

More concretely, if $\pi_A$ and $\pi_B$ are faithful graded representations
and $\gamma_A$ is the grading operator for $\pi_A$, then for homogeneous
$a,b$ the spatial representation can be written
\[
  a\grten b\longmapsto
  \pi_A(a)\gamma_A^{|b|}\otimes\pi_B(b).
\]
The formula is independent of the chosen faithful graded representations up
to equality of the resulting norm.

\item The \emph{maximal} graded norm is the universal norm
\[
  \|x\|_{\Max}=\sup\{\|\pi(x)\|:\pi
  \text{ is a }*\text{-representation of }A\odgr B\}.
\]
Equivalently, one may take the supremum over pairs of representations of
$A$ and $B$ whose homogeneous ranges supercommute:
\[
  \pi_B(b)\pi_A(a)=(-1)^{|a||b|}\pi_A(a)\pi_B(b).
\]
\end{itemize}

Any other tensor-product norm $\|\cdot\|_{\lambda}$ satisfy:
\beq
\|\cdot\|_{\Min}\leqslant \|\cdot\|_{\lambda}\leqslant \|\cdot\|_{\Max}
\eeq
Thus, we only need to show the maximal norm agrees with the minimal norm when either $A$ or $B$ is nuclear.

The key point is to consider the crossed-product algebra
$C:=A\rtimes_{\theta_{A}}\z_{2}$ generated by a
copy of $A$ and a self-adjoint unitary $u$ satisfying $ua=\theta_A(a)u$. We consider another $\z_{2}$-action $\Theta$ on $C$ defined by $\Theta(a)=a$ for $a\in A$ while $\Theta(u)=-u$.

The uniqueness follows by establishing the following lemma:
\begin{lemma}\label{lemma:fixed_point}
    For $\lambda\in \{\max,\min\}$,
    \beq
    A\grten_{\mathrm{gr},\lambda}B
  \simeq
  \bigl(A\rtimes_{\theta_A}\z_{2}\otimes_\lambda B\bigr)^{
  \Theta\otimes\theta_B}
    \eeq
    Note that the tensor product on RHS is ungraded.
\end{lemma}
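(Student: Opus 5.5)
Our plan is to build an explicit $*$-homomorphism from the algebraic graded tensor product into the fixed-point algebra, show that it is isometric for both extremal norms, and show that its image is dense. On homogeneous elementary tensors we set
\[
\Phi(a\odgr b):=a\,u^{|b|}\otimes b\in C\otimes B,\qquad C=A\rtimes_{\theta_A}\z_{2},
\]
and extend linearly. The first task is to verify that $\Phi$ respects the Koszul rules. For multiplication, $(au^{|b|}\otimes b)(a'u^{|b'|}\otimes b')=a\,u^{|b|}a'u^{|b'|}\otimes bb'$, and $u^{|b|}a'=\theta_A^{|b|}(a')u^{|b|}$ gives the sign $(-1)^{|b||a'|}$ when $a'$ is homogeneous. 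For the involution, $(au^{|b|})^*=u^{|b|}a^*=(-1)^{|a||b|}a^*u^{|b|}$. The image is fixed by $\Theta\otimes\theta_B$ because $\Theta(u^{|b|})=(-1)^{|b|}u^{|b|}$ while $\theta_B(b)=(-1)^{|b|}b$. Finally, $\Phi$ intertwines the gradings $\Theta_0$ and $\theta_A\otimes\id$, where the latter is extended to $C$ by $u\mapsto u$.

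\textbf{Density.} Every element of $C$ can be written as $c=c_0+c_1u$ with $c_0,c_1\in A$. The fixed-point subspace of $(C\odot B)$ under $\Theta\otimes\theta_B$ is therefore spanned by $a\otimes b_0$ and $au\otimes b_1$, with $b_0$ even and $b_1$ odd. This is exactly the image of $\Phi$. The conditional expectation $\tfrac12(\id+\Theta\otimes\theta_B)$ is contractive and maps the algebraic tensor product onto this span, so the image of $\Phi$ is dense in the fixed-point algebra for either completion $\lambda$.

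\textbf{Isometry, max case.} We will show that the norm pulled back by $\Phi$ from $C\otimes_{\max}B$ equals the maximal graded norm. Take any pair of graded-commuting representations $\pi_A,\pi_B$ on $\mathcal H$. We first arrange a grading operator: replacing the representation with $\pi\oplus\pi\circ\Theta_0$ on $\mathcal H\oplus\mathcal H$ does not change the norm, and on the doubled space the swap provides a self-adjoint unitary $\gamma$ implementing $\Theta_0$. We then define
\[
\rho_C(a)=\pi_A(a),\qquad \rho_C(u)=\gamma\,\Gamma,
\]
where $\Gamma$ is chosen so that $\gamma\Gamma$ commutes with $\pi_B$ and implements $\theta_A$ on $\pi_A$. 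The natural candidate is $\Gamma=\gamma_B^{\mathrm{tot}}$, so that $\rho_C(u)$ plays the role of a grading operator for the $A$-part. We also set $\rho_B(b)=\pi_B(b)\,\rho_C(u)^{|b|}$. These choices give commuting representations of $C$ and $B$ with $\rho_C\otimes\rho_B\circ\Phi=\pi$ on the even–even sector, and a short sign check will extend this to all sectors. Conversely, any commuting pair $(\rho_C,\rho_B)$ restricted along $\Phi$ produces a graded-commuting pair $a\mapsto\rho_C(a)$, $b\mapsto\rho_C(u)^{|b|}\rho_B(b)$. Taking suprema in both directions shows that the two norms agree.

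\textbf{Isometry, min case.} Here we use the spatial formula $a\grten b\mapsto\pi_A(a)\gamma_A^{|b|}\otimes\pi_B(b)$. We take the faithful representation of $C$ given by $u\mapsto\gamma_A$, after doubling if necessary so that this representation of the crossed product is faithful; this uses the fact that $\z_2$ is amenable, so the full and reduced crossed products coincide. The spatial norms then match directly through $\Phi$.

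The main obstacle is the max case, specifically producing the operator $\rho_C(u)$ from an arbitrary graded-commuting pair that carries no grading operator. We would first try the doubling trick described above and check carefully that the doubling preserves both the norm and the graded-commutation relations.
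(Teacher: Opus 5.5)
Your construction of $\Phi$ (the paper's Klein map $\kappa$), the Koszul-sign checks, the identification of its image with $(A\odot B^{(0)})\oplus(Au\odot B^{(1)})=(C\odot B)^{\Theta\odot\theta_B}$, and the density argument via $\tfrac12(\id+\Theta\otimes\theta_B)$ are exactly the paper's proof. The paper stops there. It never checks that the norm transported by $\Phi$ coincides with the graded $\lambda$-norm, and that check is needed for the two completions to be isomorphic, so you are right to add an isometry step.

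In the max case, however, the one nontrivial ingredient of that step is missing. Your own formulas force $\rho_C(u)$ to be a self-adjoint unitary $U$ with $U\pi_A(a)U=\pi_A(\theta_A(a))$ and $[U,\pi_B(B)]=0$. Doubling along $\Theta_0$ produces only the total grading $\gamma$, which anticommutes with $\pi_B(b)$ for odd $b$, so it cannot serve as $U$. The proposed $\Gamma=\gamma_B^{\mathrm{tot}}$ would have to be a self-adjoint unitary implementing $\theta_B$ on $\pi_B$ while commuting with $\pi_A$ and with $\gamma$. That is an operator of exactly the same kind as $U$, with the roles of $A$ and $B$ swapped, and its existence is precisely what is in question. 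As written, the argument is circular.

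The fix is to double along $\theta_A\grten\id$ rather than $\Theta_0$. On $\cH\oplus\cH$ put $\tilde\pi_A=\pi_A\oplus(\pi_A\circ\theta_A)$ and $\tilde\pi_B=\pi_B\oplus\pi_B$.
\begin{itemize}
\item This pair still graded-commutes, and the associated representation is $\tilde\pi=\pi\oplus\pi\circ(\theta_A\grten\id)$, so $\|\tilde\pi(x)\|\geqslant\|\pi(x)\|$.
\item The swap $S$ satisfies $S\tilde\pi_A(a)S=\tilde\pi_A(\theta_A(a))$ and $[S,\tilde\pi_B(B)]=0$.
\item Then $\rho_C(a)=\tilde\pi_A(a)$, $\rho_C(u)=S$, $\rho_B(b)=S^{|b|}\tilde\pi_B(b)$ are commuting representations of $C$ and $B$ with $(\rho_C\otimes\rho_B)\circ\Phi=\tilde\pi$.
\end{itemize}
This gives $\|x\|_{\Max}\leqslant\|\Phi(x)\|_{\max}$, and your converse direction supplies the reverse inequality.

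In the min case, state the doubling explicitly as $u\mapsto\gamma_A\oplus(-\gamma_A)$. This is unitarily equivalent to the regular representation, hence faithful on $C$ because $\z_2$ is amenable. Then observe that the second block computes $\|(\id\grten\theta_B)(x)\|_{\Min}=\|x\|_{\Min}$, because $\id\grten\theta_B$ is implemented by $1\otimes\gamma_B$ in the spatial picture.

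Finally, there is a small sign slip: $\Phi\circ\Theta_0=(\Ad_u\otimes\theta_B)\circ\Phi$. If you want $\theta_A\otimes\id$ on the right, you must extend $\theta_A$ to $C$ by $u\mapsto-u$, not $u\mapsto u$.
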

The desired uniqueness follows immediately from the preceding lemma. Indeed, if $A$ is nuclear, then so is $A\rtimes_{\theta_A}\z_2$, since $\z_2$ is finite (see Corollary 7.18 of Ref.~\cite{WilliamsCrossedProducts}). Hence the right-hand side of the isomorphism in the lemma is independent of the tensor norm $\lambda$ (hence so is the left-hand side). The case where $B$ is nuclear follows similarly.

\begin{proof}[Proof to Lemma \ref{lemma:fixed_point}]
    For homogeneous $b\in B$, define the Klein transformation
\[
  \kappa(a\grten b)=a\,u^{|b|}\otimes b
\]
and extend linearly to $A\odgr B$. Then we first check that $\kappa$ is an injective $*$-homomorphism from $A\odgr B$ into
$C\odot B$. Indeed, let $a,a'\in A,b,b'\in B$ be homogeneous, then
\[
  u^{|b|}a'=\theta_A^{|b|}(a')u^{|b|}
  =(-1)^{|b||a'|}a'u^{|b|}.
\]
Therefore
\begin{align*}
 \kappa(a\grten b)\kappa(a'\grten b')
 &=a u^{|b|}a'u^{|b'|}\otimes bb'\\
 &=(-1)^{|b||a'|}aa'u^{|b|+|b'|}\otimes bb'\\
 &=\kappa\bigl((a\grten b)(a'\grten b')\bigr).
\end{align*}
Likewise,
\[
  (a u^{|b|})^*=u^{|b|}a^*
  =(-1)^{|a||b|}a^*u^{|b|},
\]
which gives $\kappa((a\grten b)^*)=\kappa(a\grten b)^*$. Finally,
the vector-space decomposition $C=A\oplus Au$ shows that $\kappa$ is
injective.
In fact,
\beq
\kappa(A\odgr B)&=(A\odot B^{(0)})\oplus(Au\odot B^{(1)})\\
  &=(C\odot B)^{\Theta\odot\theta_B}
\eeq
note that $\kappa$ sends $A\odgr B^{(0)}$ into the first term and $A\odgr B^{(1)}$ to the second term.

For the statement after completion, use the contractive conditional
expectation
\[
  \E_\lambda(z)=\tfrac12\bigl(z+(\Theta\otimes_{\lambda}\theta_{B})(z))
  :C\otimes_\lambda B\longrightarrow (C\otimes_\lambda B)^{\Theta\otimes_{\lambda}\theta_{B}}.
\]
The algebraic tensor product is dense and invariant under $\Theta\otimes_{\lambda}\theta_{B}$;
hence applying $\E_\lambda$ to algebraic approximants proves that its
algebraic fixed points are dense in the completed fixed-point algebra.
\end{proof}

\section{A review on group cohomology}\label{sec:group_cohomology}

In this section, we review the basics of group cohomology and differentiable group cohomology. For group cohomology, there are many materials in the literature \cite{brown2012cohomology,Weibel_1994_group,Chen2010,Yang_2017,kobayashi2025projectiverepresentationsbogomolovmultiplier}. See also appendix A.1 of Ref. \cite{kapustin2024anomalous} and Ref. \cite{brylinski2000differentiable} for the version for Lie groups. We will only cover the motivations and basics here.

\subsection{Projective representations in quantum mechanics}

To motivate group cohomology, we start with projective representations in quantum mechanics. Suppose we have a symmetry group $G$ (assumed to be unitary and discrete for simplicity) acting on a Hilbert space $\cH$. Usually this symmetry action is given by a homomorphism $\rho:G\to U(\cH)$, \ie a unitary representation of $\cH$. More explicitly, for each $g\in G$, we assign a unitary operator $\rho(g)$ such that
\beq
\rho(g)\rho(h)=\rho(gh),\quad\forall\,g,h\in G
\eeq
However, in quantum mechanics, states are {\it not} really a vector in $\cH$, but a {\it{ray}}. That means a state $|\psi\ra$ is the same as $e^{i\theta}|\psi\ra$ as a quantum state. Thus, the space of states is not literally $\cH$, but the projective space $P(\cH)$. This for allows more general symmetry actions as
\beq
\rho(g)\rho(h)=\omega(g,h)\rho(gh)
\eeq
where\footnote{In principle, one has to show that the phase $\omega(g,h)$ is the same on each quantum state. This relies the coherence of these states and one can find the proof in Sec. 2.2 of Ref. \cite{weinberg2005quantum}.} $\omega(g,h)\in \U$. This $\rho$ is a representation up to a phase $\omega$ and is called a projective representation.
Moreover, the matrix multiplication is associative, so 
\beq
(\rho(g)\rho(h))\rho(k)=\rho(g)(\rho(h)\rho(k))
\eeq
This imposes the following constraint on $\omega$,
\beq\label{eq:2-cocycle_condition}
\omega(g,h)\omega(gh,k)=\omega(g,hk)\omega(h,k)
\eeq
Any function $G\times G\to \U$ satisfying Eq. \eqref{eq:2-cocycle_condition} is called a 2-cocycle. Furthermore, one can redefine the phase of $\rho(g)\to \tilde{\rho}(g)=\rho(g)\eta(g),\eta(g)\in\U$ (we do not require $\eta:G\to \U$ to be a homomorphism), and the resulting 2-cocycle is
\beq\label{eq:shift_2-coboundary}
\tilde{\omega}(g,h)=\omega(g,h)\eta(g)\eta(h)\eta(gh)^{-1}
\eeq
One can easily check that $\tilde{\omega}$ again satisfies the 2-cocycle condition, Eq. \eqref{eq:2-cocycle_condition}. If there exists $\eta(g)$ such that $\tilde{\omega}(g,h)=1$ for all $g,h\in G$, then we say that $\omega$ is a 2-coboundary or trivial. Any two 2-cocycles $\omega$ and $\tilde{\omega}$ related by Eq. \eqref{eq:shift_2-coboundary} are viewed as equivalent, since they differ only by the artificial choice of phase factors $\eta(g)$ of representation matrix $\rho(g)$. We write $\omega\sim \tilde{\omega}$ if $\omega$ and $\tilde\omega$ are equivalent. The space of 2-cocycles modulo this equivalence $\sim$ is the so-called the degree 2 group cohomology of $G$, denoted by $\rH^{2}(G;\U)$.
\begin{example}
    Let us consider $G=\z_{2}\times\z_{2}$. We write its elements as $(a,b)$ where $a,b=0,1 \mod{2}$. Then we define a projective representation $\rho$ as follows:
    \beq
    \begin{split}
        \rho(0,0)=I,\,\rho(1,0)=\sigma_{x}\\
        \rho(0,1)=\sigma_{y},\,\rho(1,1)=\sigma_{z}
    \end{split}
    \eeq
    Note that $\rho(0,1)\rho(1,0)=-i\rho(1,1)$ hence $\omega((0,1),(1,0))=-i$. Similarly, $\omega((1,0),(0,1))=i$. One can show that this 2-cocycle is not a 2-coboundary and hence defines the nontrivial class in $\rH^{2}(\z_{2}\times\z_{2};\U)\simeq \z_{2}$. In the context of symmetry-protected topological phases, this projective representation describes the boundary of the cluster state \cite{Son_2011}.
     
\end{example}
\begin{example}\label{example:0+1d_SO3}
    Consider the case where $G=SO(3)$, the spin rotation symmetry \footnote{Actually, this a subtler case because $SO(3)$ is a Lie group so it requires more careful treatment, see e.g. Ref.~\cite{brylinski2000differentiable} for a more careful treatment. We omit this subtlety for now.}. One can show that $\rH^{2}(SO(3);\U)\simeq \rHom(\pi_{1}(SO(3)),\U)\simeq \z_{2}$, and this class is trivial if the (total) spin quantum number $S\in \z$ and it is nontrivial if $S\in\z+\frac{1}{2}$.
\end{example}

A projective representation provides the following constraint on quantum states.
\begin{proposition}
        If $G$ acts on the Hilbert space $\cH$ via a projective representation $\rho$ whose associated 2-cocycle $\omega\not =1\in\rH^{2}(G;\U)$, then there cannot be a nonzero $G$-symmetric state.
\end{proposition}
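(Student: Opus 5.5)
The plan is to argue by contradiction. Suppose there is a nonzero vector $|\psi\ra\in\cH$ that is $G$-symmetric in the quantum-mechanical sense, i.e.\ it spans a ray invariant under every $\rho(g)$. Then for each $g\in G$ there is a phase $\chi(g)\in\U$ with
\beq
\rho(g)|\psi\ra=\chi(g)|\psi\ra .
\eeq
The aim is to show that $\chi$ exhibits $\omega$ as a 2-coboundary, which contradicts the hypothesis $\omega\neq 1$ in $\rH^{2}(G;\U)$.

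The key computation applies $\rho(g)\rho(h)$ to $|\psi\ra$ in two ways. Acting one factor at a time gives $\chi(g)\chi(h)|\psi\ra$. Using the projective multiplication law $\rho(g)\rho(h)=\omega(g,h)\rho(gh)$ instead gives $\omega(g,h)\chi(gh)|\psi\ra$. Since $|\psi\ra\neq 0$, we can compare coefficients to get
\beq
\omega(g,h)=\chi(g)\chi(h)\chi(gh)^{-1}\qquad\forall\,g,h\in G .
\eeq
Now choose $\eta(g):=\chi(g)^{-1}$ in the rephasing $\tilde\rho(g)=\rho(g)\eta(g)$ of Eq.~\eqref{eq:shift_2-coboundary}. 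The resulting cocycle is
\beq
\tilde\omega(g,h)=\omega(g,h)\,\chi(g)^{-1}\chi(h)^{-1}\chi(gh)\equiv 1 .
\eeq
So $\omega$ is a 2-coboundary, and its class in $\rH^{2}(G;\U)$ is trivial, which is the desired contradiction.

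The only step needing care is the meaning of ``symmetric state''. It should be read as invariance of the ray, so that $\rho(g)|\psi\ra$ is proportional to $|\psi\ra$ rather than strictly equal to it. We must also check that the proportionality constant is a phase, which follows from unitarity of $\rho(g)$. If instead ``symmetric'' meant $\rho(g)|\psi\ra=|\psi\ra$, the argument would be the special case $\chi\equiv1$, and would give $\omega\equiv1$ on the nose. There is no real obstacle here. The one point to remark on is that this uses a common eigenvector, i.e.\ a one-dimensional invariant subspace. The statement does not claim that higher-dimensional invariant subspaces cannot exist, and indeed they do exist, as the $\z_2\times\z_2$ example on $\bbC^2$ shows.
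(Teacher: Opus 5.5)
Your proof is correct and follows essentially the same route as the paper. The paper rephases $\rho$ by the eigenvalue phases so that $\tilde\rho(g)|\psi\rangle=|\psi\rangle$, then evaluates $\tilde\rho(g)\tilde\rho(h)|\psi\rangle$ in two ways to get $|\psi\rangle=\tilde\omega(g,h)|\psi\rangle$, which forces $|\psi\rangle=0$; this is just the contrapositive of your coefficient comparison exhibiting $\omega$ as the coboundary of $\chi$, and your ray-invariance reading of ``symmetric'' matches the paper's convention $\rho(g)|\psi\rangle=\eta(g)^{-1}|\psi\rangle$.
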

\begin{proof}
    Suppose $|\psi\ra$ is a $G$-symmetric state, that is 
    \beq
    \rho(g)|\psi\ra=\eta(g)^{-1}|\psi\ra
    \eeq
    where $\eta(g)\in \U$ is any $\U$-valued function on $G$. Then one redefines $\tilde{\rho}(g)=\rho(g)\eta(g)$, this shifts $\omega$ by a 2-coboundary and the resulting $\tilde{\omega}$ (see Eq. \eqref{eq:shift_2-coboundary}) is nontrivial, \ie there exists $g,h\in G$ such that $\tilde{\omega}(g,h)\not=1$. Now
    \beq
    \tilde{\rho}(g)|\psi\ra=|\psi\ra,\,\forall \,g\in G
    \eeq
    One can calculate $\tilde{\rho}(g)\tilde{\rho}(h)|\psi\ra$ in 2 different ways
    \beq
    \begin{split}
        \tilde{\rho}(g)(\tilde{\rho}(h)|\psi\ra)&=\tilde{\rho}(g)|\psi\ra=|\psi\ra\\
        (\tilde{\rho}(g)\tilde{\rho}(h))|\psi\ra&=\tilde{\omega}(g,h)\tilde{\rho}(gh)|\psi\ra=\tilde{\omega}(g,h)|\psi\ra
    \end{split}
    \eeq
    By assumption, $\tilde{\omega}(g,h)\not =1$ for some $g,h\in G$. Hence $|\psi\ra=0$, which shows that there is no nonzero $G$-symmetric state.
\end{proof}

As a corollary, consider a $G$-symmetric Hamiltonian $H$ which has a $G$ symmetry that acts projectively. We have

\begin{corollary}
    If $\rho$ is nontrivial projective representation, then a $G$-symmetric Hamiltonian must have degenerate ground states which break the $G$-symmetry.
\end{corollary}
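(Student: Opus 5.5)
The plan is to reduce the corollary to the preceding proposition about the absence of nonzero symmetric vectors. Suppose for contradiction that $H$ is $G$-symmetric, meaning $\rho(g)H\rho(g)^{-1}=H$ for all $g\in G$, and that $\rho$ carries a nontrivial class $[\omega]\neq 1\in\rH^{2}(G;\U)$. We assume $H$ has a ground state, i.e.\ its minimal eigenvalue $E_0$ is attained; this is automatic for finite-dimensional $\cH$ and is the standing assumption otherwise. Let $\cH_0:=\ker(H-E_0)$ denote the ground-state eigenspace.

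First, since each $\rho(g)$ commutes with $H$, it preserves $\cH_0$. The restrictions $\rho(g)|_{\cH_0}$ then satisfy the same relations $\rho(g)\rho(h)=\omega(g,h)\rho(gh)$, so they form a projective representation of $G$ on $\cH_0$ with the same cocycle $\omega$.

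Second, suppose $\dim\cH_0=1$, so that $\cH_0=\bbC|\psi\ra$. Each $\rho(g)$ then acts on $|\psi\ra$ by a phase, $\rho(g)|\psi\ra=\eta(g)^{-1}|\psi\ra$. This means $|\psi\ra$ is a nonzero $G$-symmetric state in exactly the sense of the preceding proposition, which is impossible. Hence $\dim\cH_0\geqslant 2$, so the ground states are degenerate.

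Third, we show the ground states break $G$. Apply the same argument to any ground state $|\phi\ra\in\cH_0$: if $|\phi\ra$ were $G$-symmetric as a ray, the proposition would force $|\phi\ra=0$. So no ground state is invariant, and every ground state breaks $G$.

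The main subtlety is in the first step, namely checking that the restricted cocycle is still cohomologically nontrivial. This holds because the restriction uses literally the same $\omega$, and the coboundary relation Eq.~\eqref{eq:shift_2-coboundary} is independent of the space on which the operators act. One should also make clear that ``symmetric state'' means invariance up to phase, matching the proposition's hypothesis. In infinite dimensions the existence of a ground-state eigenvector must be assumed; without it, the corollary is vacuous.
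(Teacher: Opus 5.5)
Your proposal is correct and is essentially the argument the paper intends. The paper states the corollary without a separate proof, as an immediate consequence of the preceding proposition: the ground-state eigenspace is $\rho(G)$-invariant, so a nondegenerate ground state would be a $G$-symmetric ray, which the proposition rules out, and the same proposition shows that no individual ground state can be symmetric. The ``subtlety'' you flag in your first step is harmless but unnecessary, because the proposition already forbids nonzero symmetric vectors anywhere in $\mathcal{H}$, so you never need to treat the restriction to the ground-state space as a projective representation in its own right.
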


This can be viewed as $(0+1)d$ version of anomaly constraints.

\begin{example}
    Consider a system made of $N$ qubits (or equivalently, spin $\frac{1}{2}$'s), whose Hamiltonian $H$ has a $G=SO(3)$ symmetry encountered in example \ref{example:0+1d_SO3}. If $N=1\mod{2}$, then this system must be at least 2-fold degenerate. For example, consider $N=1$, for the Hamiltonian $H$ to be $SO(3)$-symmetric, it has to commute with all Pauli operators. It is easy to check that $H$ must be $\lambda I$ for some $\lambda\in \bbC$ and $I$ is the identity operator. Hence the ground states are trivially 2-fold degenerate. However, for $N=2$ where the total spin is an integer, one can take 
    \beq
    H=J\vec{S}_{1}\cdot \vec{S}_{2},J>0
    \eeq
    where the ground state is non-degnerate.
\end{example}

\subsection{Group cohomology}

Now we present the definition of group cohomology in general. Let $G$ be a discrete group, one defines a space $BG$ which is a collection of spaces $\{G^{n}\}_{n=1,2,...}$ equipped with a collection of maps $d_{k}:G^{n}\to G^{n-1},\,k=0,1,...,n$ (called face maps). Explicitly,
\beq\label{eq:face_maps}
d_{k}(g_{1},g_{2},...,g_{n})=\begin{cases}
    (g_2,...,g_n),\,k=0\\
    (g_1,...,g_{k}g_{k+1},...,g_{n}),\,0<k<n\\
    (g_1,...,g_{n-1}),k=n
\end{cases}
\eeq
One can check that if $d=\sum_{k=0}^{n}(-1)^{k}d_{k}$, then $d^{2}=0$.
Let $A$ be an Abelian group (with {\it discrete topology}). For example $A$ can be $\z_{2}$, $\z$, $\R$ or $\U$. We denote all $A$-valued functions on $BG$ as $C^{\bullet}(BG,A)$. For example, one writes $\omega\in C^{2}(BG,A)$ if $\omega:G^{2}\to A$.
Consider an $A$-valued function $\omega$ on $G^{n-1}$. The maps $d_{k}:G^{n}\to G^{n-1}$ induces a pullback of $\omega$, \ie $d_{k}^{*}\omega:=\omega\circ d_{k}$ on $G^{n}$. We denote $\delta=d^{*}$ (it follows that $\delta^{2}=0$), thus $C^{\bullet}(BG,A)$ together with $\delta$ becomes a cochain complex.

\begin{definition}\label{def:group_cohomology}
    A function $\omega:G^{n}\to A$ is said to be an $n$-cocycle if $\delta \omega=0$. We denote the space of all $n$-cocycles by $\mathrm{Z}^{n}(G;A)$. Besides, if an $n$-cocycle $\omega$ satisfies $\omega=\delta \eta$ for some $\eta\in C^{n-1}(G;A)$, it is called an $n$-coboundary. The space of all $n$-coboundary is denoted as $\mathrm{B}^{n}(G;A),n>1$. Besides, $\mathrm{B}^{1}(G;A)$ is defined to be 0.
\end{definition}

\begin{definition}
    The degree $n$ group cohomology of $G$ is defined to be
    \beq\label{eq:group_coho}
    \rH^{n}(G;A)=\frac{\mathrm{Z}^{n}(G;A)}{\mathrm{B}^{n}(G;A)}
    \eeq
    In more details, $\rH^{n}(G;A)$ are defined to be  equivalence classes of $n$-cocycles under the equivalence relation $\omega\simeq \omega+\delta\eta$ where $\omega\in\mathrm{Z}^{n}(G;A)$ and $\delta\eta\in \mathrm{B}^{n}(G;A)$.
\end{definition}

\begin{example}
    Let us consider a function $\omega:G\to A$ (where $G$ acts trivially on $A$) or equivalently $\omega$ here is a 1-cochain. Now we compute $\delta\omega$
    \beq
    \delta\omega(g_{1},g_{2})=(d_{0}^{*}\omega-d_{1}^{*}\omega+d^{*}_{2}\omega)(g_{1},g_{2})=\omega(g_{1})+\omega(g_{2})-\omega(g_{1}g_{2})
    \eeq
    where we have used Eq. \eqref{eq:face_maps}, \eg
    \beq
    d_{1}^{*}\omega(g_{1},g_{2})=\omega(d_{1}(g_{1},g_2))=\omega(g_{1}g_{2})
    \eeq
    Then $\omega$ is a 1-cocycle iff it is a homomorphism, \ie $\omega(g_1 g_2)=\omega(g_{1})+\omega(g_{2})$. We conclude
    \beq
    \rH^{1}(G;A)=\rHom(G,A)
    \eeq
\end{example}

\begin{example}
    Now we consider a 2-cochain, again denoted by $\omega:G^{2}\to A$. Then one calculates $\delta\omega$ as follows
    \beq
    \delta\omega(g_{1},g_{2},g_{3})=\omega(g_{2},g_{3})-\omega(g_{1}g_{2},g_{3})+\omega(g_{1},g_{2}g_{3})-\omega(g_{1},g_{2})
    \eeq
    If one writes the group action in $A$ as multiplication rather than addition, one immediately recognizes $\delta\omega=0$ is exactly the 2-cocycle condition Eq. \eqref{eq:2-cocycle_condition} in projective representations. One can shift $\omega$ by a 2-coboundary $\delta\eta$. As we computed in the last example, this corresponds to
    \beq
    \omega(g_{1},g_{2})\to \tilde{\omega}(g_{1},g_{2})=\omega(g_{1},g_{2})+\eta(g_{1})+\eta(g_{2})-\eta(g_{1}g_{2})
    \eeq
    In the context of projective representation, this amounts to redefining our representation matrices by a phase Eq. \eqref{eq:shift_2-coboundary}.
\end{example}
Group cohomology of higher degrees are used to classify 't Hooft anomalies in physics. We will explain this in some more details in Sec. \ref{sec:sym_action}.

\begin{remark}
    The geometry behind Eqs.
    \eqref{eq:face_maps} and \eqref{eq:group_coho} is that we are doing simplicial cohomology on the space $BG$ (which is known as classifying space in mathematics), see, \eg, Ref. \cite{Weibel_1994_simplicial} for more details.
\end{remark}

\section{Defining anomaly indices}\label{sec:anomaly_index}
In this section,  we outline the construction of anomaly index of a symmetry action $\alpha:G\to\G^{\QCA}$ in spin chains therefore \textit{we exclusively work with $\Lambda\simeq\z$ in this appendix.}

Given a symmetry group $G$, by slightly abusing the notations{\footnote{Previously the notation $\alpha$ is used to represent an operation acting on operators, but here we use it to represent a map from the symmetry group $G$ to all possible QCA opetations $\G^{\QCA}$.}}, the symmetry action can be represented by a group homomorphism $\alpha: G\to \G^{\QCA}$. This symmetry may contain internal and/or translation symmetry, and the internal symmetry, which acts as a finite-depth quantum circuit, may be discrete or continuous, on-site or non-on-site. This general type of symmetry actions covers many physically relevant cases.

First, suppose $\alpha$ is an internal symmetry action (\ie it contains no translation). For an arbitrary site, say, the origin, it can be shown that $\alpha$ can be decomposed as
\beq \label{eq: decomposition}
    \alpha=\alpha^{L} \, \alpha_0 \, \alpha^{R}
\eeq
where $\alpha^{R}$ (resp. $\alpha^{L}$) is an operation of local operators supported on $[0, \infty)$ (resp. $(-\infty, 0)$), and $\alpha_0$ is the conjugation by a local unitary. Although $\alpha$ is a group homomorphism, in general $\alpha^{R}$ is not. In fact, for any $g, h\in G$,
\beq\label{eq:composition_half_chain}
\alpha^{R}_{g} \, \alpha^{R}_{h}=\Ad_{V_{g,h}} \, \alpha^{R}_{gh}
\eeq
where $V:G\times G\to \cU^{\ell}$ with $\cU^\ell$ the group of local unitaries is \textbf{not} necessarily a homomorphism, and $\Ad_V(a):=VaV^{*}$ for any $a\in\A$. The associativity of $\alpha^{R}$, \ie $\left( \alpha^{R}_{g} \, \alpha^{R}_{h} \right) \, \alpha^{R}_{k} = \alpha^{R}_{g} \, \left( \alpha^{R}_{h} \, \alpha^{R}_{k} \right)$ with $g, h, k\in G$, puts further constraints on $V$
\beq\label{eq:anomaly_index}
\Ad_{\omega_{g,h,k}}=1,\quad
    \omega_{g,h,k}
    :=V_{g,h}V_{gh,k}V_{g,hk}^{-1}\alpha^{R}_{g}(V_{h,k})^{-1}
\eeq
This means the above $\omega$ is actually a phase since it commutes with all local operators. It can be checked that $\omega$ satisfies the 3-cocycle condition, and multiplying $V_{g,h}$ by a phase $\rho_{g,h}\in \U$ shifts $\omega$ by a 3-coboundary $\delta\rho$. Therefore, $\omega$ specifies an element in $\rH^3(G, \U)$, and this element is defined as the anomaly index associated with the symmetry action $\alpha$, see appendix \ref{sec:group_cohomology} for a review of group cohomology. See also Ref.~\cite{Else_2014} for more detailed discussions.

If $\alpha$ contains translation, one can stack the system with another copy on which the translation acts oppositely \cite{kapustin2024anomalous}. The symmetry action on this composite system (denoted by $\alpha_{\otimes}$) contains no translation, and the anomaly index of $\alpha$ is defined to be the anomaly index of $\alpha_{\otimes}$.

We comment on some possible generalizations of the anomaly index. Firstly, one can allow tails by considering the locality-preserving automorphisms introduced in Ref.~\cite{Ranard_2022,kapustin2024anomalous}. In more details,
\begin{definition}\label{def:LPA}
    An automorphism $\alpha\in\Aut(\A)$ is called a locality-preserving automorphism (LPA) if there exists a non-negative decreasing function $f_{\alpha}\searrow 0$ such that for any local operator $x\in\A_{X}$, there exists $x^{(r)}\in\A_{B(X,r)}$ such that
    \beq
    \|\alpha(x)-x^{(r)}\|\leqslant f_{\alpha}(r)\|x\|
    \eeq
    The function $f_{\alpha}$ is called the tail of $\alpha$. The group of LPA is denoted by $\G^{lp}$.
\end{definition}
Clearly, QCA's are special cases of LPA's, corresponding to the strictly local situation in which the tail function satisfies $f_{\alpha}(r)=0$, $\text{for all } r>r_{\alpha}$.
For this reason, we work exclusively with LPAs in the remainder of this paper.

When $G$ is a Lie group, it is natural to require the symmetry action to be smooth while $\G^{lp}$ does not carry a smooth structure. This requirement can be met by restricting to \emph{almost-local} LPAs, namely those whose tail functions satisfy $f_{\alpha}(r)=O(r^{-\infty})$.
The subgroup of almost local LPA can be equipped with a smooth structure, see Ref.~\cite{kapustin2024anomalous} for a detailed discussion. Throughout this work, whenever $G$ is a Lie group, we implicitly assume we are using almost local LPA without further comment.

In all of the above generalizations, the anomaly indices are defined in a similar manner.

\bibliography{lib}

@article{Alberti1983,
  journal={Lett. Math. Phys.},
  pages={25},
  year={1983},
  volume={7},
  doi={10.1007/BF00398708},
  title={{A note on the transition probability over C*-algebras}},
  author={Alberti, Peter M.}
}

@article{Weinstein2025,
  title = {Efficient Detection of Strong-to-Weak Spontaneous Symmetry Breaking via the R\'enyi-1 Correlator},
  author = {Weinstein, Zack},
  journal = {Phys. Rev. Lett.},
  volume = {134},
  issue = {15},
  pages = {150405},
  numpages = {7},
  year = {2025},
  month = {Apr},
  publisher = {American Physical Society},
  doi = {10.1103/PhysRevLett.134.150405},
  url = {https://link.aps.org/doi/10.1103/PhysRevLett.134.150405}
}

@article{Sang_2404,
  archiveprefix={arXiv},
  eprint={2404.07251},
  journal={Phys. Rev. Lett.},
  pages={070403},
  year={2025},
  volume={134},
  doi={10.1103/PhysRevLett.134.070403},
  title={Stability of Mixed-State Quantum Phases via Finite Markov Length},
  author={Sang, Shengqi and Hsieh, Timothy H.}
}

@ARTICLE{Tu2025,
       author = {{Tu}, Yi-Ting and {Long}, David M. and {Else}, Dominic V.},
        title = "{Anomalies of global symmetries on the lattice}",
      journal = {arXiv e-prints},
         year = 2025,
        month = jul,
          eid = {arXiv:2507.21209},
        pages = {arXiv:2507.21209},
          doi = {10.48550/arXiv.2507.21209},
archivePrefix = {arXiv},
       eprint = {2507.21209},
 primaryClass = {cond-mat.str-el},
       adsurl = {https://ui.adsabs.harvard.edu/abs/2025arXiv250721209T}
}

@ARTICLE{Kawagoe2025,
       author = {{Kawagoe}, Kyle and {Shirley}, Wilbur},
        title = "{Anomaly diagnosis via symmetry restriction in two-dimensional lattice systems}",
      journal = {arXiv e-prints},
         year = 2025,
        month = jul,
          eid = {arXiv:2507.07430},
        pages = {arXiv:2507.07430},
          doi = {10.48550/arXiv.2507.07430},
archivePrefix = {arXiv},
       eprint = {2507.07430},
 primaryClass = {cond-mat.str-el},
       adsurl = {https://ui.adsabs.harvard.edu/abs/2025arXiv250707430K}
}

@ARTICLE{Aksoy2023,
       author = {{Aksoy}, {\"O}mer Mert and {Mudry}, Christopher and {Furusaki}, Akira and {Tiwari}, Apoorv},
        title = "{Lieb-Schultz-Mattis anomalies and web of dualities induced by gauging in quantum spin chains}",
      journal = {SciPost Physics},
         year = 2024,
        month = jan,
       volume = {16},
       number = {1},
          eid = {022},
        pages = {022},
          doi = {10.21468/SciPostPhys.16.1.022},
archivePrefix = {arXiv},
       eprint = {2308.00743},
 primaryClass = {cond-mat.str-el},
       adsurl = {https://ui.adsabs.harvard.edu/abs/2024ScPP...16...22A}
}

@ARTICLE{Aksoy2021,
       author = {{Aksoy}, {\"O}mer M. and {Tiwari}, Apoorv and {Mudry}, Christopher},
        title = "{Lieb-Schultz-Mattis type theorems for Majorana models with discrete symmetries}",
      journal = {\prb},
         year = 2021,
        month = aug,
       volume = {104},
       number = {7},
          eid = {075146},
        pages = {075146},
          doi = {10.1103/PhysRevB.104.075146},
archivePrefix = {arXiv},
       eprint = {2102.08389},
 primaryClass = {cond-mat.str-el},
       adsurl = {https://ui.adsabs.harvard.edu/abs/2021PhRvB.104g5146A}
}

@ARTICLE{Pace2024,
       author = {{Pace}, Salvatore D. and {Lam}, Ho Tat and {Aksoy}, {\"O}mer Mert},
        title = "{(SPT-)LSM theorems from projective non-invertible symmetries}",
      journal = {SciPost Physics},
         year = 2025,
        month = jan,
       volume = {18},
       number = {1},
          eid = {028},
        pages = {028},
          doi = {10.21468/SciPostPhys.18.1.028},
archivePrefix = {arXiv},
       eprint = {2409.18113},
 primaryClass = {cond-mat.str-el},
       adsurl = {https://ui.adsabs.harvard.edu/abs/2025ScPP...18...28P}
}

@ARTICLE{Seifnashri2023,
       author = {{Seifnashri}, Sahand},
        title = "{Lieb-Schultz-Mattis anomalies as obstructions to gauging (non-on-site) symmetries}",
      journal = {SciPost Physics},
         year = 2024,
        month = apr,
       volume = {16},
       number = {4},
          eid = {098},
        pages = {098},
          doi = {10.21468/SciPostPhys.16.4.098},
archivePrefix = {arXiv},
       eprint = {2308.05151},
 primaryClass = {cond-mat.str-el},
       adsurl = {https://ui.adsabs.harvard.edu/abs/2024ScPP...16...98S}
}

@ARTICLE{Cho2017,
       author = {{Cho}, Gil Young and {Hsieh}, Chang-Tse and {Ryu}, Shinsei},
        title = "{Anomaly manifestation of Lieb-Schultz-Mattis theorem and topological phases}",
      journal = {\prb},
         year = 2017,
        month = nov,
       volume = {96},
       number = {19},
          eid = {195105},
        pages = {195105},
          doi = {10.1103/PhysRevB.96.195105},
archivePrefix = {arXiv},
       eprint = {1705.03892},
 primaryClass = {cond-mat.str-el},
       adsurl = {https://ui.adsabs.harvard.edu/abs/2017PhRvB..96s5105C}
}

@ARTICLE{kapustin2024anomalous,
       author = {{Kapustin}, Anton and {Sopenko}, Nikita},
        title = "{Anomalous symmetries of quantum spin chains and a generalization of the Lieb-Schultz-Mattis theorem}",
      journal = {arXiv e-prints},
         year = 2024,
        month = jan,
          eid = {arXiv:2401.02533},
        pages = {arXiv:2401.02533},
          doi = {10.48550/arXiv.2401.02533},
archivePrefix = {arXiv},
       eprint = {2401.02533},
 primaryClass = {math-ph},
       adsurl = {https://ui.adsabs.harvard.edu/abs/2024arXiv240102533K}
}

@book{brown2012cohomology,
  title={Cohomology of Groups},
  author={Brown, K.S.},
  isbn={9781468493276},
  lccn={82000733},
  series={Graduate Texts in Mathematics},
  url={https://books.google.ca/books?id=2fzlBwAAQBAJ},
  year={2012},
  publisher={Springer New York}
}

@ARTICLE{Chen2010,
       author = {{Chen}, Xie and {Gu}, Zheng-Cheng and {Wen}, Xiao-Gang},
        title = "{Classification of gapped symmetric phases in one-dimensional spin systems}",
      journal = {\prb},
         year = 2011,
        month = jan,
       volume = {83},
       number = {3},
          eid = {035107},
        pages = {035107},
          doi = {10.1103/PhysRevB.83.035107},
archivePrefix = {arXiv},
       eprint = {1008.3745},
 primaryClass = {cond-mat.str-el},
       adsurl = {https://ui.adsabs.harvard.edu/abs/2011PhRvB..83c5107C}
}

@ARTICLE{Hastings2005,
       author = {{Hastings}, M.~B. and {Wen}, Xiao-Gang},
        title = "{Quasiadiabatic continuation of quantum states: The stability of topological ground-state degeneracy and emergent gauge invariance}",
      journal = {\prb},
         year = 2005,
        month = jul,
       volume = {72},
       number = {4},
          eid = {045141},
        pages = {045141},
          doi = {10.1103/PhysRevB.72.045141},
archivePrefix = {arXiv},
       eprint = {cond-mat/0503554},
 primaryClass = {cond-mat.str-el},
       adsurl = {https://ui.adsabs.harvard.edu/abs/2005PhRvB..72d5141H}
}

@ARTICLE{Chen2013,
       author = {{Chen}, Xie and {Gu}, Zheng-Cheng and {Liu}, Zheng-Xin and
         {Wen}, Xiao-Gang},
        title = "{Symmetry protected topological orders and the group cohomology of their symmetry group}",
      journal = {\prb},
         year = 2013,
        month = apr,
       volume = {87},
       number = {15},
          eid = {155114},
        pages = {155114},
          doi = {10.1103/PhysRevB.87.155114},
archivePrefix = {arXiv},
       eprint = {1106.4772},
 primaryClass = {cond-mat.str-el},
       adsurl = {https://ui.adsabs.harvard.edu/abs/2013PhRvB..87o5114C}
}

@ARTICLE{Gaiotto2017,
       author = {{Gaiotto}, Davide and {Johnson-Freyd}, Theo},
        title = "{Symmetry protected topological phases and generalized cohomology}",
      journal = {Journal of High Energy Physics},
         year = 2019,
        month = may,
       volume = {2019},
       number = {5},
          eid = {7},
        pages = {7},
          doi = {10.1007/JHEP05(2019)007},
archivePrefix = {arXiv},
       eprint = {1712.07950},
 primaryClass = {hep-th},
       adsurl = {https://ui.adsabs.harvard.edu/abs/2019JHEP...05..007G}
}

@ARTICLE{Cheng2018a,
       author = {{Cheng}, Meng},
        title = "{Fermionic Lieb-Schultz-Mattis theorems and weak symmetry-protected phases}",
      journal = {\prb},
         year = 2019,
        month = feb,
       volume = {99},
       number = {7},
          eid = {075143},
        pages = {075143},
          doi = {10.1103/PhysRevB.99.075143},
archivePrefix = {arXiv},
       eprint = {1804.10122},
 primaryClass = {cond-mat.str-el},
       adsurl = {https://ui.adsabs.harvard.edu/abs/2019PhRvB..99g5143C}
}

@ARTICLE{Metlitski2018,
       author = {{Metlitski}, Max A. and {Thorngren}, Ryan},
        title = "{Intrinsic and emergent anomalies at deconfined critical points}",
      journal = {\prb},
         year = 2018,
        month = aug,
       volume = {98},
       number = {8},
          eid = {085140},
        pages = {085140},
          doi = {10.1103/PhysRevB.98.085140},
archivePrefix = {arXiv},
       eprint = {1707.07686},
 primaryClass = {cond-mat.str-el},
       adsurl = {https://ui.adsabs.harvard.edu/abs/2018PhRvB..98h5140M}
}

@ARTICLE{Cheng2015,
       author = {{Cheng}, Meng and {Zaletel}, Michael and {Barkeshli}, Maissam and
         {Vishwanath}, Ashvin and {Bonderson}, Parsa},
        title = "{Translational Symmetry and Microscopic Constraints on Symmetry-Enriched Topological Phases: A View from the Surface}",
      journal = {Physical Review X},
         year = 2016,
        month = dec,
       volume = {6},
       number = {4},
          eid = {041068},
        pages = {041068},
          doi = {10.1103/PhysRevX.6.041068},
archivePrefix = {arXiv},
       eprint = {1511.02263},
 primaryClass = {cond-mat.str-el},
       adsurl = {https://ui.adsabs.harvard.edu/abs/2016PhRvX...6d1068C}
}

@ARTICLE{Hastings2003,
       author = {{Hastings}, M.~B.},
        title = "{Lieb-Schultz-Mattis in higher dimensions}",
      journal = {\prb},
         year = "2004",
        month = "Mar",
       volume = {69},
       number = {10},
          eid = {104431},
        pages = {104431},
          doi = {10.1103/PhysRevB.69.104431},
archivePrefix = {arXiv},
       eprint = {cond-mat/0305505},
 primaryClass = {cond-mat.str-el},
       adsurl = {https://ui.adsabs.harvard.edu/abs/2004PhRvB..69j4431H}
}

@ARTICLE{Oshikawa1999,
       author = {{Oshikawa}, Masaki},
        title = "{Commensurability, Excitation Gap, and Topology in Quantum Many-Particle Systems on a Periodic Lattice}",
      journal = {\prl},
         year = "2000",
        month = "Feb",
       volume = {84},
       number = {7},
        pages = {1535-1538},
          doi = {10.1103/PhysRevLett.84.1535},
archivePrefix = {arXiv},
       eprint = {cond-mat/9911137},
 primaryClass = {cond-mat.str-el},
       adsurl = {https://ui.adsabs.harvard.edu/abs/2000PhRvL..84.1535O}
}

@article{Lieb1961,
title = "Two soluble models of an antiferromagnetic chain",
journal = "Annals of Physics",
volume = "16",
number = "3",
pages = "407 - 466",
year = "1961",
issn = "0003-4916",
doi = "https://doi.org/10.1016/0003-4916(61)90115-4",
url = "http://www.sciencedirect.com/science/article/pii/0003491661901154",
author = "Elliott Lieb and Theodore Schultz and Daniel Mattis"
}

@ARTICLE{Po2017,
       author = {{Po}, Hoi Chun and {Watanabe}, Haruki and {Jian}, Chao-Ming and {Zaletel}, Michael P.},
        title = "{Lattice Homotopy Constraints on Phases of Quantum Magnets}",
      journal = {\prl},
         year = 2017,
        month = sep,
       volume = {119},
       number = {12},
          eid = {127202},
        pages = {127202},
          doi = {10.1103/PhysRevLett.119.127202},
archivePrefix = {arXiv},
       eprint = {1703.06882},
 primaryClass = {cond-mat.str-el},
       adsurl = {https://ui.adsabs.harvard.edu/abs/2017PhRvL.119l7202P}
}

@ARTICLE{Else2020,
       author = {{Else}, Dominic V. and {Thorngren}, Ryan},
        title = "{Topological theory of Lieb-Schultz-Mattis theorems in quantum spin systems}",
      journal = {\prb},
         year = 2020,
        month = jun,
       volume = {101},
       number = {22},
          eid = {224437},
        pages = {224437},
          doi = {10.1103/PhysRevB.101.224437},
archivePrefix = {arXiv},
       eprint = {1907.08204},
 primaryClass = {cond-mat.str-el},
       adsurl = {https://ui.adsabs.harvard.edu/abs/2020PhRvB.101v4437E}
}

@ARTICLE{Ye2021a,
	author = {{Ye}, Weicheng and {Guo}, Meng and {He}, Yin-Chen and {Wang}, Chong and {Zou}, Liujun},
	title = "{Topological characterization of Lieb-Schultz-Mattis constraints and applications to symmetry-enriched quantum criticality}",
	journal = {SciPost Physics},
	year = 2022,
	month = sep,
	volume = {13},
	number = {3},
	eid = {066},
	pages = {066},
	doi = {10.21468/SciPostPhys.13.3.066},
	archivePrefix = {arXiv},
	eprint = {2111.12097},
	primaryClass = {cond-mat.str-el},
	adsurl = {https://ui.adsabs.harvard.edu/abs/2021arXiv211112097Y}
}

@book{weinberg2005quantum,
  title={The Quantum Theory of Fields: Volume 1, Foundations},
  author={Weinberg, S.},
  isbn={9780521670531},
  url={https://books.google.ca/books?id=MSdywQEACAAJ},
  year={2005},
  publisher={Cambridge University Press}
}

@ARTICLE{Kobayashi2018,
       author = {{Kobayashi}, Ryohei and {Shiozaki}, Ken and {Kikuchi}, Yuta and {Ryu}, Shinsei},
        title = "{Lieb-Schultz-Mattis type theorem with higher-form symmetry and the quantum dimer models}",
      journal = {\prb},
         year = 2019,
        month = jan,
       volume = {99},
       number = {1},
          eid = {014402},
        pages = {014402},
          doi = {10.1103/PhysRevB.99.014402},
archivePrefix = {arXiv},
       eprint = {1805.05367},
 primaryClass = {cond-mat.stat-mech},
       adsurl = {https://ui.adsabs.harvard.edu/abs/2019PhRvB..99a4402K}
}

@ARTICLE{Cheng2022,
       author = {{Cheng}, Meng and {Seiberg}, Nathan},
        title = "{Lieb-Schultz-Mattis, Luttinger, and 't Hooft - anomaly matching in lattice systems}",
      journal = {SciPost Physics},
         year = 2023,
        month = aug,
       volume = {15},
       number = {2},
          eid = {051},
        pages = {051},
          doi = {10.21468/SciPostPhys.15.2.051},
archivePrefix = {arXiv},
       eprint = {2211.12543},
 primaryClass = {cond-mat.str-el},
       adsurl = {https://ui.adsabs.harvard.edu/abs/2023ScPP...15...51C}
}

@Article{Jiang2019,
	title={{Generalized Lieb-Schultz-Mattis theorem on bosonic symmetry protected topological phases}},
	author={Shenghan Jiang and Meng Cheng and Yang Qi and Yuan-Ming Lu},
	journal={SciPost Phys.},
	volume={11},
	pages={024},
	year={2021},
	publisher={SciPost},
	doi={10.21468/SciPostPhys.11.2.024},
	url={https://scipost.org/10.21468/SciPostPhys.11.2.024},
archivePrefix = {arXiv},
       eprint = {1907.08596},
 primaryClass = {cond-mat.str-el},
       adsurl = {https://ui.adsabs.harvard.edu/abs/2019arXiv190708596J}
}

@article{Li:2026qhc,
    author = "Li, Zhi and Firanko, Raz and Hsieh, Timothy H.",
    title = "{A Unified Framework for Locally Stable Phases}",
    eprint = "2605.00088",
    archivePrefix = "arXiv",
    primaryClass = "quant-ph",
    month = "4",
    year = "2026"
}

@article{Yi_2026,
   title={Universal Decay of Mutual Information and Conditional Mutual Information in Gapped Pure- and Mixed-State Quantum Matter},
   volume={136},
   ISSN={1079-7114},
   url={http://dx.doi.org/10.1103/mqp8-y1m7},
   DOI={10.1103/mqp8-y1m7},
   number={11},
   journal={Physical Review Letters},
   publisher={American Physical Society (APS)},
   author={Yi, Jinmin and Li, Kangle and Liu, Chuan and Li, Zixuan and Zou, Liujun},
   year={2026},
   month=Mar }

@article{Huang_2407,
  archiveprefix={arXiv},
  eprint={2407.08760},
  journal={Phys. Rev. B},
  pages={125147},
  year={2025},
  volume={111},
  doi={10.1103/PhysRevB.111.125147},
  title={{Hydrodynamics as the effective field theory of strong-to-weak spontaneous symmetry breaking}},
  author={Huang, Xiaoyang and Qi, Marvin and Zhang, Jian-Hao and Lucas, Andrew}
}

@article{matsui2011boundedness,
author = {Matsui, Taku},
title = {BOUNDEDNESS OF ENTANGLEMENT ENTROPY AND SPLIT PROPERTY OF QUANTUM SPIN CHAINS},
journal = {Reviews in Mathematical Physics},
volume = {25},
number = {09},
pages = {1350017},
year = {2013},
doi = {10.1142/S0129055X13500177},

URL = { 
    
        https://doi.org/10.1142/S0129055X13500177
    
    

},
archivePrefix = {arXiv},
       eprint = {1109.5778},
 primaryClass = {math-ph},
       adsurl = {https://ui.adsabs.harvard.edu/abs/2011arXiv1109.5778M}
}

@ARTICLE{Tasaki2022topological,
       author = {{Tasaki}, Hal},
        title = "{The Lieb-Schultz-Mattis Theorem: A Topological Point of View}",
      journal = {arXiv e-prints},
         year = 2022,
        month = feb,
          eid = {arXiv:2202.06243},
        pages = {arXiv:2202.06243},
          doi = {10.48550/arXiv.2202.06243},
archivePrefix = {arXiv},
       eprint = {2202.06243},
 primaryClass = {cond-mat.stat-mech},
       adsurl = {https://ui.adsabs.harvard.edu/abs/2022arXiv220206243T}
}

@ARTICLE{Ogata_2021,
       author = {{Ogata}, Yoshiko and {Tachikawa}, Yuji and {Tasaki}, Hal},
        title = "{General Lieb-Schultz-Mattis Type Theorems for Quantum Spin Chains}",
      journal = {Communications in Mathematical Physics},
         year = 2021,
        month = jul,
       volume = {385},
       number = {1},
        pages = {79-99},
          doi = {10.1007/s00220-021-04116-9},
archivePrefix = {arXiv},
       eprint = {2004.06458},
 primaryClass = {math-ph},
       adsurl = {https://ui.adsabs.harvard.edu/abs/2021CMaPh.385...79O}
}

@ARTICLE{brylinski2000differentiable,
       author = {{Brylinski}, Jean-Luc},
        title = "{Differentiable Cohomology of Gauge Groups}",
      journal = {arXiv Mathematics e-prints},
         year = 2000,
        month = nov,
          eid = {math/0011069},
        pages = {math/0011069},
          doi = {10.48550/arXiv.math/0011069},
archivePrefix = {arXiv},
       eprint = {math/0011069},
 primaryClass = {math.DG},
       adsurl = {https://ui.adsabs.harvard.edu/abs/2000math.....11069B}
}

@inbook{Weibel_1994_group, place={Cambridge}, series={Cambridge Studies in Advanced Mathematics}, title={Group Homology and Cohomology}, booktitle={An Introduction to Homological Algebra}, publisher={Cambridge University Press}, author={Weibel, Charles A.}, year={1994}, pages={160–215}, collection={Cambridge Studies in Advanced Mathematics}}

@ARTICLE{Kawabata2023,
       author = {{Kawabata}, Kohei and {Sohal}, Ramanjit and {Ryu}, Shinsei},
        title = "{Lieb-Schultz-Mattis Theorem in Open Quantum Systems}",
      journal = {\prl},
         year = 2024,
        month = feb,
       volume = {132},
       number = {7},
          eid = {070402},
        pages = {070402},
          doi = {10.1103/PhysRevLett.132.070402},
archivePrefix = {arXiv},
       eprint = {2305.16496},
 primaryClass = {cond-mat.stat-mech},
       adsurl = {https://ui.adsabs.harvard.edu/abs/2024PhRvL.132g0402K}
}

@ARTICLE{Zhou2023,
       author = {{Zhou}, Yi-Neng and {Li}, Xingyu and {Zhai}, Hui and {Li}, Chengshu and {Gu}, Yingfei},
        title = "{Reviving the Lieb-Schultz-Mattis Theorem in Open Quantum Systems}",
      journal = {arXiv e-prints},
         year = 2023,
        month = oct,
          eid = {arXiv:2310.01475},
        pages = {arXiv:2310.01475},
          doi = {10.48550/arXiv.2310.01475},
archivePrefix = {arXiv},
       eprint = {2310.01475},
 primaryClass = {cond-mat.str-el},
       adsurl = {https://ui.adsabs.harvard.edu/abs/2023arXiv231001475Z}
}

@ARTICLE{Ma2022a,
       author = {{Ma}, Ruochen and {Wang}, Chong},
        title = "{Average Symmetry-Protected Topological Phases}",
      journal = {Physical Review X},
         year = 2023,
        month = jul,
       volume = {13},
       number = {3},
          eid = {031016},
        pages = {031016},
          doi = {10.1103/PhysRevX.13.031016},
archivePrefix = {arXiv},
       eprint = {2209.02723},
 primaryClass = {cond-mat.str-el},
       adsurl = {https://ui.adsabs.harvard.edu/abs/2023PhRvX..13c1016M}
}

@ARTICLE{Jian2017,
       author = {{Jian}, Chao-Ming and {Bi}, Zhen and {Xu}, Cenke},
        title = "{Lieb-Schultz-Mattis theorem and its generalizations from the perspective of the symmetry-protected topological phase}",
      journal = {\prb},
         year = 2018,
        month = feb,
       volume = {97},
       number = {5},
          eid = {054412},
        pages = {054412},
          doi = {10.1103/PhysRevB.97.054412},
archivePrefix = {arXiv},
       eprint = {1705.00012},
 primaryClass = {cond-mat.str-el},
       adsurl = {https://ui.adsabs.harvard.edu/abs/2018PhRvB..97e4412J}
}

@book{bratteli2013operator1,
  title={Operator Algebras and Quantum Statistical Mechanics 1: C*- and W*-Algebras. Symmetry Groups. Decomposition of States},
  author={Bratteli, O. and Robinson, D.W.},
  isbn={9783662025215},
  series={Theoretical and Mathematical Physics},
  url={https://books.google.ca/books?id=3R8nswEACAAJ},
  year={2013},
  publisher={Springer Berlin Heidelberg}
}

@book{bratteli2013operator2,
  title={Operator Algebras and Quantum Statistical Mechanics 2: Equilibrium States. Models in Quantum Statistical Mechanics},
  author={Bratteli, O. and Robinson, D.W.},
  isbn={9783662034446},
  series={Theoretical and Mathematical Physics},
  url={https://books.google.ca/books?id=SV3oCAAAQBAJ},
  year={2013},
  publisher={Springer Berlin Heidelberg}
}

@book{Naaijkens_2017,
   title={Quantum Spin Systems on Infinite Lattices: A Concise Introduction},
   ISBN={9783319514581},
   ISSN={1616-6361},
   url={http://dx.doi.org/10.1007/978-3-319-51458-1},
   DOI={10.1007/978-3-319-51458-1},
archivePrefix = {arXiv},
       eprint = {1311.2717},
 primaryClass = {math-ph},
   journal={Lecture Notes in Physics},
   publisher={Springer International Publishing},
   author={Naaijkens, Pieter},
   year={2017} }

@Inbook{Tao2022,
author="Tao, Terence",
title="The Real Numbers",
bookTitle="Analysis I",
year="2022",
publisher="Springer Nature Singapore",
address="Singapore",
pages="81--107",
isbn="978-981-19-7261-4",
doi="10.1007/978-981-19-7261-4_5",
url="https://doi.org/10.1007/978-981-19-7261-4_5"
}

@book{Landsman:2017hpa,
    author = "Landsman, Klaas",
    title = "{Foundations of Quantum Theory}: {From Classical Concepts to Operator Algebras}",
    doi = "10.1007/978-3-319-51777-3",
    isbn = "978-3-319-51777-3",
    publisher = "Springer International Publishing",
    address = "Cham",
    year = "2017"
}

@book{murphy2014c,
  title={C*-Algebras and Operator Theory},
  author={Murphy, G.J.},
  isbn={9780080924960},
  url={https://books.google.ca/books?id=omviBQAAQBAJ},
  year={2014},
  publisher={Elsevier Science}
}

@ARTICLE{Lessa:2024wcw,
       author = {{Lessa}, Leonardo A. and {Cheng}, Meng and {Wang}, Chong},
        title = "{Mixed-state quantum anomaly and multipartite entanglement}",
      journal = {arXiv e-prints},
         year = 2024,
        month = jan,
          eid = {arXiv:2401.17357},
        pages = {arXiv:2401.17357},
          doi = {10.48550/arXiv.2401.17357},
archivePrefix = {arXiv},
       eprint = {2401.17357},
 primaryClass = {cond-mat.str-el},
       adsurl = {https://ui.adsabs.harvard.edu/abs/2024arXiv240117357L}
}

@ARTICLE{Else_2014,
       author = {{Else}, Dominic V. and {Nayak}, Chetan},
        title = "{Classifying symmetry-protected topological phases through the anomalous action of the symmetry on the edge}",
      journal = {\prb},
         year = 2014,
        month = dec,
       volume = {90},
       number = {23},
          eid = {235137},
        pages = {235137},
          doi = {10.1103/PhysRevB.90.235137},
archivePrefix = {arXiv},
       eprint = {1409.5436},
 primaryClass = {cond-mat.str-el},
       adsurl = {https://ui.adsabs.harvard.edu/abs/2014PhRvB..90w5137E}
}

@article{Farrelly_2020,
   title={A review of Quantum Cellular Automata},
   volume={4},
   ISSN={2521-327X},
   url={http://dx.doi.org/10.22331/q-2020-11-30-368},
   DOI={10.22331/q-2020-11-30-368},
   journal={Quantum},
   publisher={Verein zur Forderung des Open Access Publizierens in den Quantenwissenschaften},
   author={Farrelly, Terry},
   year={2020},
   month=nov, pages={368},
archivePrefix = {arXiv},
       eprint = {1904.13318},
 primaryClass = {quant-ph},
       adsurl = {https://ui.adsabs.harvard.edu/abs/2020Quant...4..368F} }

@ARTICLE{Son_2011,
       author = {{Son}, W. and {Amico}, L. and {Vedral}, V.},
        title = "{Topological order in 1D Cluster state protected by symmetry}",
      journal = {Quantum Information Processing},
         year = 2012,
        month = dec,
       volume = {11},
       number = {6},
        pages = {1961-1968},
          doi = {10.1007/s11128-011-0346-7},
archivePrefix = {arXiv},
       eprint = {1111.7173},
 primaryClass = {quant-ph},
       adsurl = {https://ui.adsabs.harvard.edu/abs/2012QuIP...11.1961S}
}

@article{Gross_2012,
	author = {Gross, D. and Nesme, V. and Vogts, H. and Werner, R. F.},
	date = {2012/03/01},
	doi = {10.1007/s00220-012-1423-1},
	id = {Gross2012},
	isbn = {1432-0916},
	journal = {Communications in Mathematical Physics},
	number = {2},
	pages = {419--454},
	title = {Index Theory of One Dimensional Quantum Walks and Cellular Automata},
	url = {https://doi.org/10.1007/s00220-012-1423-1},
	volume = {310},
	year = {2012},
archivePrefix = {arXiv},
       eprint = {0910.3675},
 primaryClass = {quant-ph},
       adsurl = {https://ui.adsabs.harvard.edu/abs/2009arXiv0910.3675G}}

@inbook{Weibel_1994_simplicial, place={Cambridge}, series={Cambridge Studies in Advanced Mathematics}, title={Simplicial Methods in Homological Algebra}, booktitle={An Introduction to Homological Algebra}, publisher={Cambridge University Press}, author={Weibel, Charles A.}, year={1994}, pages={254–299}, collection={Cambridge Studies in Advanced Mathematics}}

@ARTICLE{Nachtergaele_2006,
       author = {{Nachtergaele}, Bruno and {Ogata}, Yoshiko and {Sims}, Robert},
        title = "{Propagation of Correlations in Quantum Lattice Systems}",
      journal = {Journal of Statistical Physics},
         year = 2006,
        month = jul,
       volume = {124},
       number = {1},
        pages = {1-13},
          doi = {10.1007/s10955-006-9143-6},
archivePrefix = {arXiv},
       eprint = {math-ph/0603064},
 primaryClass = {math-ph},
       adsurl = {https://ui.adsabs.harvard.edu/abs/2006JSP...124....1N}
}

@ARTICLE{Ranard_2022,
       author = {{Ranard}, Daniel and {Walter}, Michael and {Witteveen}, Freek},
        title = "{A Converse to Lieb-Robinson Bounds in One Dimension Using Index Theory}",
      journal = {Annales Henri Poincar\&eacute;},
         year = 2022,
        month = nov,
       volume = {23},
       number = {11},
        pages = {3905-3979},
          doi = {10.1007/s00023-022-01193-x},
archivePrefix = {arXiv},
       eprint = {2012.00741},
 primaryClass = {quant-ph},
       adsurl = {https://ui.adsabs.harvard.edu/abs/2022AnHP...23.3905R}
}

@ARTICLE{Kapustin2022Noether,
       author = {{Kapustin}, Anton and {Sopenko}, Nikita},
        title = "{Local Noether theorem for quantum lattice systems and topological invariants of gapped states}",
      journal = {Journal of Mathematical Physics},
         year = 2022,
        month = sep,
       volume = {63},
       number = {9},
          eid = {091903},
        pages = {091903},
          doi = {10.1063/5.0085964},
archivePrefix = {arXiv},
       eprint = {2201.01327},
 primaryClass = {math-ph},
       adsurl = {https://ui.adsabs.harvard.edu/abs/2022JMP....63i1903K}
}

@ARTICLE{Kapustin2020invertible,
       author = {{Kapustin}, Anton and {Sopenko}, Nikita and {Yang}, Bowen},
        title = "{A classification of invertible phases of bosonic quantum lattice systems in one dimension}",
      journal = {Journal of Mathematical Physics},
         year = 2021,
        month = aug,
       volume = {62},
       number = {8},
          eid = {081901},
        pages = {081901},
          doi = {10.1063/5.0055996},
archivePrefix = {arXiv},
       eprint = {2012.15491},
 primaryClass = {quant-ph},
       adsurl = {https://ui.adsabs.harvard.edu/abs/2021JMP....62h1901K}
}

@ARTICLE{Ogata2019split,
       author = {{Ogata}, Yoshiko},
        title = "{A classification of pure states on quantum spin chains satisfying the split property with on-site finite group symmetries}",
      journal = {arXiv e-prints},
         year = 2019,
        month = aug,
          eid = {arXiv:1908.08621},
        pages = {arXiv:1908.08621},
          doi = {10.48550/arXiv.1908.08621},
archivePrefix = {arXiv},
       eprint = {1908.08621},
 primaryClass = {math.OA},
       adsurl = {https://ui.adsabs.harvard.edu/abs/2019arXiv190808621O}
}

@ARTICLE{Garre_Rubio2024anomalous,
       author = {{Garre Rubio}, Jose and {Molnar}, Andras and {Ogata}, Yoshiko},
        title = "{Classifying symmetric and symmetry-broken spin chain phases with anomalous group actions}",
      journal = {arXiv e-prints},
         year = 2024,
        month = mar,
          eid = {arXiv:2403.18573},
        pages = {arXiv:2403.18573},
          doi = {10.48550/arXiv.2403.18573},
archivePrefix = {arXiv},
       eprint = {2403.18573},
 primaryClass = {quant-ph},
       adsurl = {https://ui.adsabs.harvard.edu/abs/2024arXiv240318573G}
}

@ARTICLE{Yao2021twisted,
       author = {{Yao}, Yuan and {Oshikawa}, Masaki},
        title = "{Twisted Boundary Condition and Lieb-Schultz-Mattis Ingappability for Discrete Symmetries}",
      journal = {\prl},
         year = 2021,
        month = may,
       volume = {126},
       number = {21},
          eid = {217201},
        pages = {217201},
          doi = {10.1103/PhysRevLett.126.217201},
archivePrefix = {arXiv},
       eprint = {2010.09244},
 primaryClass = {cond-mat.str-el},
       adsurl = {https://ui.adsabs.harvard.edu/abs/2021PhRvL.126u7201Y}
}

@ARTICLE{Ogata2019LSM,
       author = {{Ogata}, Yoshiko and {Tasaki}, Hal},
        title = "{Lieb-Schultz-Mattis Type Theorems for Quantum Spin Chains Without Continuous Symmetry}",
      journal = {Communications in Mathematical Physics},
         year = 2019,
        month = dec,
       volume = {372},
       number = {3},
        pages = {951-962},
          doi = {10.1007/s00220-019-03343-5},
archivePrefix = {arXiv},
       eprint = {1808.08740},
 primaryClass = {math-ph},
       adsurl = {https://ui.adsabs.harvard.edu/abs/2019CMaPh.372..951O}
}

@book{kadison1997fundamentals2,
  title={Fundamentals of the Theory of Operator Algebras. Volume II},
  author={Kadison, R.V. and Ringrose, J.R.},
  isbn={9780821808207},
  lccn={97020916},
  series={Fundamentals of the Theory of Operator Algebras},
  url={https://books.google.ca/books?id=h5bMkZTnowAC},
  year={1997},
  publisher={American Mathematical Society}
}

@ARTICLE{Bachmann2012automorphic,
       author = {{Bachmann}, Sven and {Michalakis}, Spyridon and {Nachtergaele}, Bruno and {Sims}, Robert},
        title = "{Automorphic Equivalence within Gapped Phases of Quantum Lattice Systems}",
      journal = {Communications in Mathematical Physics},
         year = 2012,
        month = feb,
       volume = {309},
       number = {3},
        pages = {835-871},
          doi = {10.1007/s00220-011-1380-0},
archivePrefix = {arXiv},
       eprint = {1102.0842},
 primaryClass = {math-ph},
       adsurl = {https://ui.adsabs.harvard.edu/abs/2012CMaPh.309..835B}
}

@book{Blackadar:2006OA,
    author = "Blackadar, Bruce",
    title = "{Operator algebras: Theory of C*-algebras and von Neumann algebras}",
    year = "2006"
}

@book{ohya2004quantum,
  title={Quantum Entropy and Its Use},
  author={Ohya, M. and Petz, D.},
  isbn={9783540208068},
  lccn={2004042924},
  series={Theoretical and Mathematical Physics},
  url={https://books.google.ca/books?id=r2ullNVyESQC},
  year={2004},
  publisher={Springer Berlin Heidelberg}
}

@article{Araki1976entropyI,
    author = "Araki, H.",
    title = "{Relative Entropy of States of Von Neumann Algebras}",
    journal = "Publ. Res. Inst. Math. Sci. Kyoto",
    volume = "1976",
    pages = "809--833",
    year = "1976"
}

@article{Araki1977entropyII,
    author = "Araki, H.",
    title = "{Relative Entropy of States of Von Neumann Algebras II}",
    journal = "Publ. Res. Inst. Math. Sci. Kyoto",
    volume = "1977",
    pages = "173-192",
    year = "1977"
}

@ARTICLE{Kapustin2025higher,
       author = {{Kapustin}, Anton},
        title = "{Higher symmetries and anomalies in quantum lattice systems}",
      journal = {arXiv e-prints},
         year = 2025,
        month = may,
          eid = {arXiv:2505.04719},
        pages = {arXiv:2505.04719},
          doi = {10.48550/arXiv.2505.04719},
archivePrefix = {arXiv},
       eprint = {2505.04719},
 primaryClass = {math-ph},
       adsurl = {https://ui.adsabs.harvard.edu/abs/2025arXiv250504719K}
}

@ARTICLE{Ogata2019reflection,
       author = {{Ogata}, Yoshiko},
        title = "{A ${\mathbb Z}_2$-index of symmetry protected topological phases with reflection symmetry for quantum spin chains}",
      journal = {arXiv e-prints},
         year = 2019,
        month = apr,
          eid = {arXiv:1904.01669},
        pages = {arXiv:1904.01669},
          doi = {10.48550/arXiv.1904.01669},
archivePrefix = {arXiv},
       eprint = {1904.01669},
 primaryClass = {math-ph},
       adsurl = {https://ui.adsabs.harvard.edu/abs/2019arXiv190401669O}
}

@ARTICLE{Liu2024LRLSM,
       author = {{Liu}, Ruizhi and {Yi}, Jinmin and {Zhou}, Shiyu and {Zou}, Liujun},
        title = "{Entanglement area law and Lieb-Schultz-Mattis theorem in long-range interacting systems, and symmetry-enforced long-range entanglement}",
      journal = {arXiv e-prints},
         year = 2024,
        month = may,
          eid = {arXiv:2405.14929},
        pages = {arXiv:2405.14929},
          doi = {10.48550/arXiv.2405.14929},
archivePrefix = {arXiv},
       eprint = {2405.14929},
 primaryClass = {cond-mat.str-el},
       adsurl = {https://ui.adsabs.harvard.edu/abs/2024arXiv240514929L},
      DOI={10.1103/2jgh-nrj1},
   number={21},
   journal={Physical Review B},
   publisher={American Physical Society (APS)},
   author={Liu, Ruizhi and Yi, Jinmin and Zhou, Shiyu and Zou, Liujun},
   year={2025}
}

@ARTICLE{Ogata2019TRS,
       author = {{Ogata}, Yoshiko},
        title = "{A Z$_{2}$-Index of Symmetry Protected Topological Phases with Time Reversal Symmetry for Quantum Spin Chains}",
      journal = {Communications in Mathematical Physics},
         year = 2019,
        month = jul,
       volume = {374},
       number = {2},
        pages = {705-734},
          doi = {10.1007/s00220-019-03521-5},
archivePrefix = {arXiv},
       eprint = {1810.01045},
 primaryClass = {math-ph},
       adsurl = {https://ui.adsabs.harvard.edu/abs/2019CMaPh.374..705O}
}

@ARTICLE{Watanabe2018LSM,
       author = {{Watanabe}, Haruki},
        title = "{Lieb-Schultz-Mattis-type filling constraints in the 1651 magnetic space groups}",
      journal = {\prb},
         year = 2018,
        month = apr,
       volume = {97},
       number = {16},
          eid = {165117},
        pages = {165117},
          doi = {10.1103/PhysRevB.97.165117},
archivePrefix = {arXiv},
       eprint = {1802.00587},
 primaryClass = {cond-mat.str-el},
       adsurl = {https://ui.adsabs.harvard.edu/abs/2018PhRvB..97p5117W}
}

@ARTICLE{Bourne2020fermion,
       author = {{Bourne}, Chris and {Ogata}, Yoshiko},
        title = "{The classification of symmetry protected topological phases of one-dimensional fermion systems}",
      journal = {arXiv e-prints},
         year = 2020,
        month = jun,
          eid = {arXiv:2006.15232},
        pages = {arXiv:2006.15232},
          doi = {10.48550/arXiv.2006.15232},
archivePrefix = {arXiv},
       eprint = {2006.15232},
 primaryClass = {math-ph},
       adsurl = {https://ui.adsabs.harvard.edu/abs/2020arXiv200615232B}
}

@ARTICLE{Griseta2024Z2_gradedC*,
       author = {{Griseta}, Maria Elena and {Zurlo}, Paola},
        title = "{$C^*$-independence for $\mathbb{Z}_2$-graded $C^*$-algebras}",
      journal = {arXiv e-prints},
         year = 2024,
        month = jan,
          eid = {arXiv:2401.08293},
        pages = {arXiv:2401.08293},
          doi = {10.48550/arXiv.2401.08293},
archivePrefix = {arXiv},
       eprint = {2401.08293},
 primaryClass = {math.OA},
       adsurl = {https://ui.adsabs.harvard.edu/abs/2024arXiv240108293G}
}

@misc{liu2025twistedlocalitypreservingautomorphismsanomaly,
      title={Twisted locality-preserving automorphisms, anomaly index, and generalized Lieb-Schultz-Mattis theorems with anti-unitary symmetries}, 
      author={Ruizhi Liu and Jinmin Yi and Liujun Zou},
      year={2025},
      eprint={2510.06555},
      archivePrefix={arXiv},
      primaryClass={cond-mat.str-el},
      url={https://arxiv.org/abs/2510.06555}, 
}

@misc{kapustin2025highersymmetriesanomaliesquantum,
      title={Higher symmetries and anomalies in quantum lattice systems}, 
      author={Anton Kapustin and Shixiong Xu},
      year={2025},
      eprint={2505.04719},
      archivePrefix={arXiv},
      primaryClass={math-ph},
      url={https://arxiv.org/abs/2505.04719}, 
}

@article{KISHIMOTO1996100,
title = {The Rohlin Property for Shifts on UHF Algebras and Automorphisms of Cuntz Algebras},
journal = {Journal of Functional Analysis},
volume = {140},
number = {1},
pages = {100-123},
year = {1996},
issn = {0022-1236},
doi = {https://doi.org/10.1006/jfan.1996.0100},
url = {https://www.sciencedirect.com/science/article/pii/S0022123696901007},
author = {Akitaka Kishimoto}
}

@misc{sun2014stronglyouterproducttype,
      title={Strongly outer product type actions}, 
      author={Michael Y. Sun},
      year={2014},
      eprint={1403.5357},
      archivePrefix={arXiv},
      primaryClass={math.OA},
      url={https://arxiv.org/abs/1403.5357}, 
}

@article{Woronowicz:1972fr,
    author = "Woronowicz, S. L.",
    title = "{On the purification of factor states}",
    doi = "10.1007/BF01645776",
    journal = "Commun. Math. Phys.",
    volume = "28",
    pages = "221--235",
    year = "1972"
}

@misc{wang2024anomalyopenquantumsystems,
      title={Anomaly in open quantum systems and its implications on mixed-state quantum phases}, 
      author={Zijian Wang and Linhao Li},
      year={2024},
      eprint={2403.14533},
      archivePrefix={arXiv},
      primaryClass={quant-ph},
      url={https://arxiv.org/abs/2403.14533}, 
}

@article{Woronowicz1973,
  author    = {S. L. Woronowicz},
  title     = {On the purification map},
  journal   = {Communications in Mathematical Physics},
  year      = {1973},
  volume    = {30},
  number    = {1},
  pages     = {55--67},
  doi       = {10.1007/BF01646688},
  url       = {https://doi.org/10.1007/BF01646688},
  issn      = {1432-0916}
}

@misc{sopenko2025reflectionpositivityrefinedindex,
      title={Reflection positivity and a refined index for 2d invertible phases}, 
      author={Nikita Sopenko},
      year={2025},
      eprint={2509.01711},
      archivePrefix={arXiv},
      primaryClass={math-ph},
      url={https://arxiv.org/abs/2509.01711}, 
}

@book{Azcárraga_Izquierdo_1995_cohomology, place={Cambridge}, series={Cambridge Monographs on Mathematical Physics}, title={Lie Groups, Lie Algebras, Cohomology and some Applications in Physics}, publisher={Cambridge University Press}, author={Azcárraga, Josi A. de and Izquierdo, Josi M.}, year={1995}, collection={Cambridge Monographs on Mathematical Physics}}

@article{vonNeumann1939,
author = {von Neumann, J.},
journal = {Compositio Mathematica},
pages = {1-77},
publisher = {Johnson Reprint Corporation},
title = {On infinite direct products},
url = {http://eudml.org/doc/88704},
volume = {6},
year = {1939},
}

@article{KEYL_2008,
   title={ON HAAG DUALITY FOR PURE STATES OF QUANTUM SPIN CHAINS},
   volume={20},
   ISSN={1793-6659},
   url={http://dx.doi.org/10.1142/S0129055X08003377},
   DOI={10.1142/s0129055x08003377},
   number={06},
   journal={Reviews in Mathematical Physics},
   publisher={World Scientific Pub Co Pte Lt},
   author={KEYL, M. and MATSUI, TAKU and SCHLINGEMANN, D. and WERNER, R. F.},
   year={2008},
   month=jul, pages={707–724} }

@article{KEYL_2006,
   title={ENTANGLEMENT, HAAG-DUALITY AND TYPE PROPERTIES OF INFINITE QUANTUM SPIN CHAINS},
   volume={18},
   ISSN={1793-6659},
   url={http://dx.doi.org/10.1142/S0129055X0600284X},
   DOI={10.1142/s0129055x0600284x},
   number={09},
   journal={Reviews in Mathematical Physics},
   publisher={World Scientific Pub Co Pte Lt},
   author={KEYL, M. and MATSUI, T. and SCHLINGEMANN, D. and WERNER, R. F.},
   year={2006},
   month=oct, pages={935–970} }

@misc{tao2009notes11,
  author = {Tao, Terence},
  title = {{245B}, Notes 11: The strong and weak topologies},
  year = {2009},
  month = feb,
  howpublished = {\url{https://terrytao.wordpress.com/2009/02/21/245b-notes-11-the-strong-and-weak-topologies/}},
  note = {Accessed: 2024-05-21}
}

@book{haag2012local,
  title={Local Quantum Physics: Fields, Particles, Algebras},
  author={Haag, R.},
  isbn={9783642614583},
  lccn={96018937},
  series={Theoretical and Mathematical Physics},
  url={https://books.google.ca/books?id=OlLmCAAAQBAJ},
  year={2012},
  publisher={Springer Berlin Heidelberg}
}

@article{Lessa_2025,
   title={Strong-to-Weak Spontaneous Symmetry Breaking in Mixed Quantum States},
   volume={6},
   ISSN={2691-3399},
   url={http://dx.doi.org/10.1103/PRXQuantum.6.010344},
   DOI={10.1103/prxquantum.6.010344},
   number={1},
   journal={PRX Quantum},
   publisher={American Physical Society (APS)},
   author={Lessa, Leonardo A. and Ma, Ruochen and Zhang, Jian-Hao and Bi, Zhen and Cheng, Meng and Wang, Chong},
   year={2025},
   month=mar }

@article{Ma_2025,
   title={Topological Phases with Average Symmetries: The Decohered, the Disordered, and the Intrinsic},
   volume={15},
   ISSN={2160-3308},
   url={http://dx.doi.org/10.1103/PhysRevX.15.021062},
   DOI={10.1103/physrevx.15.021062},
   number={2},
   journal={Physical Review X},
   publisher={American Physical Society (APS)},
   author={Ma, Ruochen and Zhang, Jian-Hao and Bi, Zhen and Cheng, Meng and Wang, Chong},
   year={2025},
   month=may }

@article{Lee_2023,
   title={Quantum Criticality Under Decoherence or Weak Measurement},
   volume={4},
   ISSN={2691-3399},
   url={http://dx.doi.org/10.1103/PRXQuantum.4.030317},
   DOI={10.1103/prxquantum.4.030317},
   number={3},
   journal={PRX Quantum},
   publisher={American Physical Society (APS)},
   author={Lee, Jong Yeon and Jian, Chao-Ming and Xu, Cenke},
   year={2023},
   month=aug }

@misc{rubio2024classifyingsymmetricsymmetrybrokenspin,
      title={Classifying symmetric and symmetry-broken spin chain phases with anomalous group actions}, 
      author={Jose Garre Rubio and Andras Molnar and Yoshiko Ogata},
      year={2024},
      eprint={2403.18573},
      archivePrefix={arXiv},
      primaryClass={quant-ph},
      url={https://arxiv.org/abs/2403.18573}, 
}

@misc{moon2019automorphicequivalencegappedphases,
      title={Automorphic equivalence within gapped phases in the bulk}, 
      author={Alvin Moon and Yoshiko Ogata},
      year={2019},
      eprint={1906.05479},
      archivePrefix={arXiv},
      primaryClass={math-ph},
      url={https://arxiv.org/abs/1906.05479}, 
}

@misc{deroeck2025puregappedgroundstates,
      title={Pure gapped ground states of spin chains are short-range entangled}, 
      author={Wojciech De Roeck and Martin Fraas and Bruno de O. Carvalho},
      year={2025},
      eprint={2511.14699},
      archivePrefix={arXiv},
      primaryClass={math-ph},
      url={https://arxiv.org/abs/2511.14699}, 
}

@misc{carvalho2024classificationsymmetryprotectedstates,
      title={Classification of symmetry protected states of quantum spin chains for continuous symmetry groups}, 
      author={Bruno de Oliveira Carvalho and Wojciech De Roeck and Tijl Jappens},
      year={2024},
      eprint={2409.01112},
      archivePrefix={arXiv},
      primaryClass={math-ph},
      url={https://arxiv.org/abs/2409.01112}, 
}

@article{Senthil_2015,
   title={Symmetry-Protected Topological Phases of Quantum Matter},
   volume={6},
   ISSN={1947-5462},
   url={http://dx.doi.org/10.1146/annurev-conmatphys-031214-014740},
   DOI={10.1146/annurev-conmatphys-031214-014740},
   number={1},
   journal={Annual Review of Condensed Matter Physics},
   publisher={Annual Reviews},
   author={Senthil, T.},
   year={2015},
   month=mar, pages={299–324} }

@article{Xiong_2018,
   title={Minimalist approach to the classification of symmetry protected topological phases},
   volume={51},
   ISSN={1751-8121},
   url={http://dx.doi.org/10.1088/1751-8121/aae0b1},
   DOI={10.1088/1751-8121/aae0b1},
   number={44},
   journal={Journal of Physics A: Mathematical and Theoretical},
   publisher={IOP Publishing},
   author={Xiong, Charles Zhaoxi},
   year={2018},
   month=oct, pages={445001} }

@misc{xue2024tensornetworkformulationsymmetry,
      title={Tensor network formulation of symmetry protected topological phases in mixed states}, 
      author={Hanyu Xue and Jong Yeon Lee and Yimu Bao},
      year={2024},
      eprint={2403.17069},
      archivePrefix={arXiv},
      primaryClass={cond-mat.str-el},
      url={https://arxiv.org/abs/2403.17069}, 
}

@article{de_Groot_2022,
   title={Symmetry Protected Topological Order in Open Quantum Systems},
   volume={6},
   ISSN={2521-327X},
   url={http://dx.doi.org/10.22331/q-2022-11-10-856},
   DOI={10.22331/q-2022-11-10-856},
   journal={Quantum},
   publisher={Verein zur Forderung des Open Access Publizierens in den Quantenwissenschaften},
   author={de Groot, Caroline and Turzillo, Alex and Schuch, Norbert},
   year={2022},
   month=nov, pages={856} }

@article{Buča_2012,
doi = {10.1088/1367-2630/14/7/073007},
url = {https://doi.org/10.1088/1367-2630/14/7/073007},
year = {2012},
month = {jul},
publisher = {IOP Publishing},
volume = {14},
number = {7},
pages = {073007},
author = {Buča, Berislav and Prosen, Tomaž},
title = {A note on symmetry reductions of the Lindblad equation: transport in constrained open spin chains},
journal = {New Journal of Physics}
}

@article{Lee_2025,
   title={Symmetry protected topological phases under decoherence},
   volume={9},
   ISSN={2521-327X},
   url={http://dx.doi.org/10.22331/q-2025-01-23-1607},
   DOI={10.22331/q-2025-01-23-1607},
   journal={Quantum},
   publisher={Verein zur Forderung des Open Access Publizierens in den Quantenwissenschaften},
   author={Lee, Jong Yeon and You, Yi-Zhuang and Xu, Cenke},
   year={2025},
   month=jan, pages={1607} }

@misc{sala2024spontaneousstrongsymmetrybreaking,
      title={Spontaneous Strong Symmetry Breaking in Open Systems: Purification Perspective}, 
      author={Pablo Sala and Sarang Gopalakrishnan and Masaki Oshikawa and Yizhi You},
      year={2024},
      eprint={2405.02402},
      archivePrefix={arXiv},
      primaryClass={quant-ph},
      url={https://arxiv.org/abs/2405.02402}, 
}

@article{Xu_2025,
   title={Average-exact mixed anomalies and compatible phases},
   volume={111},
   ISSN={2469-9969},
   url={http://dx.doi.org/10.1103/PhysRevB.111.125128},
   DOI={10.1103/physrevb.111.125128},
   number={12},
   journal={Physical Review B},
   publisher={American Physical Society (APS)},
   author={Xu, Yichen and Jian, Chao-Ming},
   year={2025},
   month=mar }

@misc{lu2024bilayerconstructionmixedstate,
      title={Bilayer construction for mixed state phenomena with strong, weak symmetries and symmetry breakings}, 
      author={Shuangyuan Lu and Penghao Zhu and Yuan-Ming Lu},
      year={2024},
      eprint={2411.07174},
      archivePrefix={arXiv},
      primaryClass={cond-mat.str-el},
      url={https://arxiv.org/abs/2411.07174}, 
}

@misc{sang2025mixedstatephaseslocalreversibility,
      title={Mixed-state phases from local reversibility}, 
      author={Shengqi Sang and Leonardo A. Lessa and Roger S. K. Mong and Tarun Grover and Chong Wang and Timothy H. Hsieh},
      year={2025},
      eprint={2507.02292},
      archivePrefix={arXiv},
      primaryClass={quant-ph},
      url={https://arxiv.org/abs/2507.02292}, 
}

@misc{ziereis2025strongtoweaksymmetrybreakingphases,
      title={Strong-to-Weak Symmetry Breaking Phases in Steady States of Quantum Operations}, 
      author={Niklas Ziereis and Sanjay Moudgalya and Michael Knap},
      year={2025},
      eprint={2509.09669},
      archivePrefix={arXiv},
      primaryClass={cond-mat.stat-mech},
      url={https://arxiv.org/abs/2509.09669}, 
}

@article{Chen:2024ofr,
    author = "Chen, Langxuan and Sun, Ning and Zhang, Pengfei",
    title = "{Strong-to-weak symmetry breaking and entanglement transitions}",
    eprint = "2411.05364",
    archivePrefix = "arXiv",
    primaryClass = "quant-ph",
    doi = "10.1103/PhysRevB.111.L060304",
    journal = "Phys. Rev. B",
    volume = "111",
    number = "6",
    pages = "L060304",
    year = "2025"
}

@misc{kawagoe2025anomalydiagnosissymmetryrestriction,
      title={Anomaly diagnosis via symmetry restriction in two-dimensional lattice systems}, 
      author={Kyle Kawagoe and Wilbur Shirley},
      year={2025},
      eprint={2507.07430},
      archivePrefix={arXiv},
      primaryClass={cond-mat.str-el},
      url={https://arxiv.org/abs/2507.07430}, 
}

@misc{kobayashi2025projectiverepresentationsbogomolovmultiplier,
      title={Projective Representations, Bogomolov Multiplier, and Their Applications in Physics}, 
      author={Ryohei Kobayashi and Haruki Watanabe},
      year={2025},
      eprint={2507.12515},
      archivePrefix={arXiv},
      primaryClass={cond-mat.str-el},
      url={https://arxiv.org/abs/2507.12515}, 
}

@article{Yang_2017,
   title={Irreducible projective representations and their physical applications},
   volume={51},
   ISSN={1751-8121},
   url={http://dx.doi.org/10.1088/1751-8121/aa971a},
   DOI={10.1088/1751-8121/aa971a},
   number={2},
   journal={Journal of Physics A: Mathematical and Theoretical},
   publisher={IOP Publishing},
   author={Yang, Jian and Liu, Zheng-Xin},
   year={2017},
   month=dec, pages={025207} }

@misc{ogata2025mixedstatetopologicalorder,
      title={Mixed state topological order: operator algebraic approach}, 
      author={Yoshiko Ogata},
      year={2025},
      eprint={2501.02398},
      archivePrefix={arXiv},
      primaryClass={math-ph},
      url={https://arxiv.org/abs/2501.02398}, 
}

@misc{ogata2026noteinvariantsmixedstatetopological,
      title={A note on invariants of mixed-state topological order in 2D}, 
      author={Yoshiko Ogata},
      year={2026},
      eprint={2601.09909},
      archivePrefix={arXiv},
      primaryClass={math-ph},
      url={https://arxiv.org/abs/2601.09909}, 
}

@article{Feng_2025,
   title={Hardness of Observing Strong-to-Weak Symmetry Breaking},
   volume={135},
   ISSN={1079-7114},
   url={http://dx.doi.org/10.1103/1xzd-g9s5},
   DOI={10.1103/1xzd-g9s5},
   number={20},
   journal={Physical Review Letters},
   publisher={American Physical Society (APS)},
   author={Feng, Xiaozhou and Cheng, Zihan and Ippoliti, Matteo},
   year={2025},
   month=nov }

@article{Gu:2024wgc,
    author = "Gu, Ding and Wang, Zijian and Wang, Zhong",
    title = "{Spontaneous symmetry breaking in open quantum systems: Strong, weak, and strong-to-weak}",
    eprint = "2406.19381",
    archivePrefix = "arXiv",
    primaryClass = "quant-ph",
    doi = "10.1103/3g6d-gn7b",
    journal = "Phys. Rev. B",
    volume = "112",
    number = "24",
    pages = "245123",
    year = "2025"
}

@book{eilers2018c,
  title={C*-Algebras and Their Automorphism Groups},
  author={Eilers, S. and Olesen, D.},
  isbn={9780128141236},
  series={Pure and Applied Mathematics},
  url={https://books.google.ca/books?id=0U84DwAAQBAJ},
  year={2018},
  publisher={Academic Press}
}

@misc{evington2022anomaloussymmetriesclassifiablecalgebras,
      title={Anomalous symmetries of classifiable C*-algebras}, 
      author={Samuel Evington and Sergio Girón Pacheco},
      year={2022},
      eprint={2105.05587},
      archivePrefix={arXiv},
      primaryClass={math.OA},
      url={https://arxiv.org/abs/2105.05587}, 
}

@article{Ogata_2022_tensorcat,
   title={A derivation of braided C*-tensor categories from gapped ground states satisfying the approximate Haag duality},
   volume={63},
   ISSN={1089-7658},
   url={http://dx.doi.org/10.1063/5.0061785},
   DOI={10.1063/5.0061785},
   number={1},
   journal={Journal of Mathematical Physics},
   publisher={AIP Publishing},
   author={Ogata, Yoshiko},
   year={2022},
   month=jan }

@article{Shi_2020,
   title={Fusion rules from entanglement},
   volume={418},
   ISSN={0003-4916},
   url={http://dx.doi.org/10.1016/j.aop.2020.168164},
   DOI={10.1016/j.aop.2020.168164},
   journal={Annals of Physics},
   publisher={Elsevier BV},
   author={Shi, Bowen and Kato, Kohtaro and Kim, Isaac H.},
   year={2020},
   month=jul, pages={168164} }

@article{Jones_2021,
   title={Remarks on Anomalous Symmetries of C*-Algebras},
   volume={388},
   ISSN={1432-0916},
   url={http://dx.doi.org/10.1007/s00220-021-04234-4},
   DOI={10.1007/s00220-021-04234-4},
   number={1},
   journal={Communications in Mathematical Physics},
   publisher={Springer Science and Business Media LLC},
   author={Jones, Corey},
   year={2021},
   month=oct, pages={385–417} }

@article{Doplicher1967,
  author    = {Doplicher, S. and Kadison, R. V. and Kastler, D. and Robinson, Derek W.},
  title     = {Asymptotically abelian systems},
  journal   = {Communications in Mathematical Physics},
  year      = {1967},
  month     = jun,
  volume    = {6},
  number    = {2},
  pages      = {101--120},
  doi       = {10.1007/BF01654127},
  url       = {https://doi.org/10.1007/BF01654127},
  issn      = {1432-0916}
}

@misc{QChannelLecture.pdf,
	author = {Michael M. Wolf},
	title = {Quantum Channels and Operations - Guided Tour},
	year = {2012},
    note = {Graue Literatur}
}

@article{Sang_2025,
   title={Stability of Mixed-State Quantum Phases via Finite Markov Length},
   volume={134},
   ISSN={1079-7114},
   url={http://dx.doi.org/10.1103/PhysRevLett.134.070403},
   DOI={10.1103/physrevlett.134.070403},
   number={7},
   journal={Physical Review Letters},
   publisher={American Physical Society (APS)},
   author={Sang, Shengqi and Hsieh, Timothy H.},
   year={2025},
   month=feb }

@misc{zou2026liebschultzmattisanomaliesanomalymatching,
      title={Lieb-Schultz-Mattis Anomalies and Anomaly Matching}, 
      author={Liujun Zou and Meng Cheng},
      year={2026},
      eprint={2604.00347},
      archivePrefix={arXiv},
      primaryClass={cond-mat.str-el},
      url={https://arxiv.org/abs/2604.00347}, 
}

@article{Crismale_2020,
   title={$C^*$-fermi systems and detailed balance},
   volume={11},
   ISSN={1664-235X},
   url={http://dx.doi.org/10.1007/s13324-020-00412-0},
   DOI={10.1007/s13324-020-00412-0},
   number={1},
   journal={Analysis and Mathematical Physics},
   publisher={Springer Science and Business Media LLC},
   author={Crismale, Vitonofrio and Duvenhage, Rocco and Fidaleo, Francesco},
   year={2020},
   month=Nov }

@misc{matsui2020splitpropertyfermionicstring,
      title={Split Property and Fermionic String Order}, 
      author={Taku Matsui},
      year={2020},
      eprint={2003.13778},
      archivePrefix={arXiv},
      primaryClass={math.OA},
      url={https://arxiv.org/abs/2003.13778}, 
}

@book{WilliamsCrossedProducts,
  author    = {Williams, Dana P.},
  title     = {Crossed Products of {$C^*$}-Algebras},
  series    = {Mathematical Surveys and Monographs},
  volume    = {134},
  publisher = {American Mathematical Society},
  address   = {Providence, RI},
  year      = {2007},
  doi       = {10.1090/surv/134}
}

@misc{divi2026localstrongtoweakspontaneoussymmetry,
      title={Local Strong-to-Weak Spontaneous Symmetry Breaking}, 
      author={Francisco Divi and Leonardo A. Lessa and Chong Wang},
      year={2026},
      eprint={2605.28967},
      archivePrefix={arXiv},
      primaryClass={quant-ph},
      url={https://arxiv.org/abs/2605.28967}, 
}

@misc{zhang2026localdiagnosticsstrongtoweakspontaneous,
      title={Local diagnostics for strong-to-weak spontaneous symmetry breaking and non-equilibrium phase transitions}, 
      author={Carolyn Zhang},
      year={2026},
      eprint={2605.29113},
      archivePrefix={arXiv},
      primaryClass={quant-ph},
      url={https://arxiv.org/abs/2605.29113}, 
}
\printnomenclature
\end{document}